\newcommand{\vect}[1]{\boldsymbol{#1}}
\newtheorem{mytheorem}{Theorem}
\newtheorem{mylemma}{Lemma}
\newtheorem{mycorollary}{Corollary}
\newtheorem{myproposition}{Proposition}
\newtheorem{mydefinition}{Definition}
\newcommand{\txg}[1]{\textcolor{gray}{#1}}
\newcommand{\mbf}[1]{\mathbf{#1}}
\newcommand{\fM}{\mathsf{M}}
\newcommand{\fN}{\mathsf{N}}
\newcommand{\eps}{\ensuremath{\epsilon}} 
\newcommand{\mc}{\mathcal}
\newcommand{\bes} {\begin{subequations}}
\newcommand{\ees} {\end{subequations}}
\newcommand{\beq}{\begin{equation}}
\newcommand{\eeq}{\end{equation}}
\def\ox{\otimes}
\def\>{\rangle}
\def\<{\langle}
\newcommand{\eff}{\overset{\mathrm{g}}{=}}
\definecolor{red}{RGB}{192, 0, 0}
\definecolor{blue}{RGB}{46, 95, 127}
\definecolor{green}{RGB}{52, 105, 40}
\DeclareMathOperator{\rank}{rank}
\begin{document}

\title{Families of $d=2$ 2D subsystem stabilizer codes for universal Hamiltonian quantum computation with two-body interactions}

\author{Phattharaporn Singkanipa}
\affiliation{Center for Quantum Information Science \& Technology}
\affiliation{Department of Physics \& Astronomy}
\orcid{0009-0006-3612-4293}

\author{Zihan Xia}
\affiliation{Center for Quantum Information Science \& Technology}
\affiliation{Department of Electrical \& Computer Engineering}
\orcid{0000-0001-7575-9423}

\author{Daniel A. Lidar}
\affiliation{Center for Quantum Information Science \& Technology}
\affiliation{Department of Physics \& Astronomy}
\affiliation{Department of Electrical \& Computer Engineering}
\affiliation{Department of Chemistry\\
University of Southern California, Los Angeles,
California 90089, USA}
\orcid{0000-0002-1671-1515}

\maketitle

\begin{abstract}
  In the absence of fault tolerant quantum error correction for analog, Hamiltonian quantum computation, error suppression via energy penalties is an effective alternative. We construct families of distance-$2$ stabilizer subsystem codes we call ``trapezoid codes'', that are tailored for energy-penalty schemes. We identify a family of codes achieving the maximum code rate, and by slightly relaxing this constraint, uncover a broader range of codes with enhanced physical locality, thus increasing their practical applicability. Additionally, we provide an algorithm to map the required qubit connectivity graph into graphs compatible with the locality constraints of quantum hardware. Finally, we provide a systematic framework to evaluate the performance of these codes in terms of code rate, physical locality, graph properties, and penalty gap, enabling an informed selection of error-suppression codes for specific quantum computing applications. We identify the $[[4k+2,2k,g,2]]$ family of subsystem codes as optimal in terms of code rate and penalty gap scaling. 
\end{abstract}

\section{Introduction}

Hamiltonian quantum computation, which encompasses approaches such as adiabatic quantum computing (AQC)~\cite{Farhi:00,aharonov_adiabatic_2007,Albash-Lidar:RMP}, quantum annealing~\cite{kadowaki_quantum_1998,Hauke:2019aa,crosson2020prospects}, holonomic quantum computing~\cite{HQC,Duan:2001ff,Zhang:2023aa}, quantum simulation~\cite{Bernien:2017aa,king2022coherent,altmanQuantumSimulatorsArchitectures2019}, and continuous-time quantum walks~\cite{Farhi:1998aa,Childs:2013kx,Mulken:2011aa}, operates by continuously modifying the parameters of a Hamiltonian. In the adiabatic case, for example, the goal is to maintain a quantum system in its ground state while evolving the system’s Hamiltonian from a simple initial form to a complex final one. The solution to a given problem is encoded in the ground state of the final Hamiltonian by design. However, integrating fault tolerant quantum error correction into Hamiltonian-based computation poses a significant challenge due to its analog nature, which complicates the introduction of low-entropy ancilla qubits throughout the computation, as well as syndrome measurements, among other issues. Solutions have been proposed in some cases, e.g., holonomic quantum computation~\cite{Oreshkov:2009bl}, but this involves introducing elements of the circuit model.

To protect Hamiltonian computation, in particular the adiabatic model, Ref.~\cite{jordan2006error} introduced a method that employs stabilizer subspace codes~\cite{Gottesman:1996fk,Calderbank:98} to suppress decoherence. In this approach, the Hamiltonian is encoded by replacing each of its Pauli operators with the corresponding logical Pauli operator of the chosen code, and the code stabilizers are used to construct a penalty Hamiltonian that is added and suppresses the errors the code is designed to detect. Since the penalty Hamiltonian is a sum of commuting terms, it is automatically gapped~\cite{Bravyi:2003tx}.
This results in the suppression of thermal excitations out of the ground subspace, similar to the error suppression provided by topological quantum computing~\cite{Dennis:02,Alicki:07a}. The approach pioneered in Ref.~\cite{jordan2006error} (see also Refs.~\cite{Young:2013fk,Young:13,Sarovar:2013kx,Ganti:13,Bookatz:2014uq,Marvian:2016kb,Marvian:2017aa,Lidar:2019ab}) involves $4$-local interactions, where by `$n$-local' we mean an $n$-body (or weight-$n$) interaction, regardless of the geometry of the interaction graph; later we refer to the latter as `geometric locality.' Since $4$-local interactions are not naturally available (though possible to generate~\cite{Mizel:2004ve,Gurian:2012aa,Dai:2017aa,Chirolli:2024aa}), a practical implementation becomes difficult and an alternative based on naturally available $2$-local interactions is desirable. Unfortunately, a purely $2$-local scheme based on subspace codes that suppresses arbitrary $1$-local errors is impossible~\cite{Marvian:2014nr}.

Circumventing this no-go theorem, it was shown~\cite{Jiang:2015kx,Marvian-Lidar:16} that one can construct non-commuting $2$-local Hamiltonians capable of suppressing general $1$-local errors using stabilizer \emph{subsystem} codes \cite{poulin_stabilizer_2005}. However, the Hamiltonians found in Refs.~\cite{Jiang:2015kx,Marvian-Lidar:16} are insufficient for universality. This obstacle was overcome in Ref.~\cite{marvian2019robust}, which presented a $2$-local construction of a subsystem code whose error-suppressing gauge operators, single-qubit logical operators, and products of two-qubit logical operators (of the same type) are all $2$-local, and which showed how this code can be used to encode universal AQC while suppressing all $1$-local errors. This universality result still leaves several important unanswered questions related primarily to the optimal achievable code rate, geometric locality, and the embedding of the interaction graph in physical hardware connectivity graphs. 

To properly explain these issues, let us first provide some more of the technical background. We wish to protect Hamiltonian quantum computation performed by a system Hamiltonian $H_S(t)$ against the system-bath interaction. The encoded Hamiltonian, $H^{\text{enc}}_S(t)$, is constructed by replacing every operator in $H_S(t)$ with the corresponding logical operators. To protect the computational space, a penalty term $\eps_P H_P$ is added to the encoded Hamiltonian, where $\eps_P>0$ is the energy penalty. The penalty Hamiltonian, $H_P$, is constructed using the elements of the gauge group of the subsystem code, $H_P=-\sum_{g_i\in \mathcal{G}}g_i$. In the large penalty limit, the system becomes decoupled from the bath and performs the desired encoded computation~\cite{Marvian-Lidar:16}, even though in the subsystem case $[H^{\text{enc}}_S(t),H_P]\neq 0$ (unlike the subspace case, where the encoded Hamiltonian commutes with the penalty term). Since the penalty Hamiltonian is now a sum of non-commuting gauge operators, an important disadvantage of subsystem codes is that (unlike in the subspace codes case) their penalty Hamiltonians are no longer necessarily gapped, a fact that complicates their analysis.

Of particular interest is a family of $[[6k,2k,2]]$ stabilizer codes introduced as subspace codes in Ref.~\cite{Ganti:13} and studied as subsystem codes by Marvian and Lloyd~\cite{marvian2019robust}, which enable universal encoded computation within this energy-penalty framework. 
The $[[6k,2k,2]]$ code family studied in \cite{marvian2019robust} can encode a geometrically local Hamiltonian into a new, geometrically local encoded Hamiltonian using all $2$-local operators with a code rate of $2k$ logical qubit per $6k$ physical qubits (i.e., a rate of $1/3$). A natural question that arises is whether this rate can be improved while maintaining the highly desirable properties of $2$-local interactions and geometric locality. Various additional tradeoffs arise that contribute to the practical implementation of subsystem codes.

Here, we study subsystem codes within the energy-penalty framework and systematically study these tradeoffs in terms of a set of criteria outlined below. A key tool in our study is Bravyi's `$A$ matrix' construction of subsystem codes~\cite{Bravyi2011}. As our central result, we generalize the $[[6k,2k,2]]$ code to a new family of $[[4k+2l,2k,g,2]]$ codes, for $k\in\mathbb{Z}^+$, $l\in\{1,\cdots,k\}$, for which the the best rate achievable is $k/(2k+1) \ge 1/3$.  
We demonstrate that every code from this family can have all $2$-local one-qubit logical operators and two-qubit logical operators. Additionally, we introduce an optimization algorithm to map the graph derived from logical operators and gauges to one aligned with hardware connectivity, employing the total Manhattan distance as a performance metric for the resulting graphs. 

An interesting question is whether the methods we develop here can be extended to more general codes, with distance greater than $2$. Unfortunately, codes with distance $d > 2$ would make it impossible to work exclusively with $2$-local Hamiltonians, which is a key requirement that we impose and center our work around. Hence, we do not pursue such an extension here.

The structure of this paper is as follows: \cref{sec:desiderata} gives an overview of our construction. To establish terminology and notation, \cref{sec:background} provides a brief introduction to subspace stabilizer codes, subsystem stabilizer codes and their construction using Bravyi's $A$ matrix. In \cref{sec:extended}, we construct the new family of codes mentioned above along with the corresponding set of optimal set of logical operators, and calculate the energy gap of the penalty Hamiltonian. \cref{sec:optimiza} presents our optimization algorithm for alignment with hardware connectivity. \cref{sec:example} provides two detailed examples illustrating the entire pipeline of our construction and results. \cref{sec:discussion} discusses the various criteria and metrics guiding our code selection. Finally, \cref{sec:conclusion} summarizes our findings. The Appendix contains technical details that complement the main text.

\section{Construction and Desiderata}
\label{sec:desiderata}

We first briefly describe the construction we employ in this work in very general terms, with details and full definitions provided later. 

We start with a system Hamiltonian $H_S(t)$ defined in terms of qubits occupying the vertices $V_S$ of a graph $G_S(V_S,E_S)$ whose edges $E_S$ support the interactions between the qubits:
\begin{align}
\label{eq:H-s}
    H_S(t)&=\sum_i a_i X_i +\sum_i b_i Z_i \notag\\
    &+\sum_{(i,j)\in E^{XX}_S} c_{ij} X_iX_j +\sum_{(i,j)\in E^{ZZ}_S} d_{ij} Z_iZ_j ,
\end{align}
where $X_i$ and $Z_i$ are the Pauli $X$ and $Z$ matrices acting on the $i$'th qubit, with identity acting on all other qubits. The sets $E_S^{XX}$ and $E_S^{ZZ}$ denote edges supporting the $XX$ and $ZZ$ interactions, respectively. Hamiltonians of the form given in \cref{eq:H-s} are universal for AQC, where the coefficients $a_i$, $b_i$, $c_{ij}$, and $d_{ij}$ are all assumed to be smoothly time-dependent and fully controllable~\cite{Biamonte:07,Albash-Lidar:RMP}. A more restricted form with all $c_{ij}=0$ (transverse field Ising model) is commonly used in quantum annealing, where the ground state at the final time is the solution to an optimization problem, typically involving an adiabatic evolution from large initial $a_i$ and small $b_i,d_{ij}$ to small final $a_i$ and large $b_i,d_{ij}$~\cite{kadowaki_quantum_1998,Dwave,q108,Boixo:2014yu,Albash:2017aa,Mandra:2017ab,kowalsky20213regular}.

Second, to protect the computation, this system Hamiltonian is replaced with an encoded Hamiltonian $\hat{H}_S(t)$ where every physical Pauli operator $\sigma_i^\alpha$ is replaced with a dressed logical operator $\hat{\sigma}_i^\alpha$ while preserving the same graph $G_S$ (we clarify the `dressed' terminology in \cref{sec:stab_code}). The vertices of $G_{S}$ are logical qubits. 

Third, the dressed logical operators and gauge group elements are all replaced by $2$-local physical operators. The latter occupy the vertices $V$ of a new graph we call the `induced graph' $G(V,E)$ [We use the standard notation $G(V,E)$ to denote a graph $G$ with vertices $V$ and edge set $E$.]. Unlike in the second step, the vertices of $G$ are physical qubits. To this is added a penalty term $\eps_P H_P$, where the penalty Hamiltonian $H_P$, consisting of a sum of gauge group elements, also lives on the induced graph $G$. 

Finally, we map the induced graph to a new graph we call the `mapped graph' $G_{\text{m}}$, which represents the hardware graph of a physical device. The vertices of $G_{\text{m}}$ are physical qubits. This step can be considered as a permutation of physical qubits on a device such that the connectivity of the device aligns more with that of $G$.

Next, we state five criteria for code selection: code rate, physical locality, induced graph properties, mapped graph properties, and penalty gap. These criteria motivate this work in terms of selecting subsystem codes that optimally balance the various desiderata that guide penalty-protected Hamiltonian computing. 

In stating these criteria we need two technical terms: (1) the \emph{degree} of a qubit is the number of edges from this qubit to other qubits in the same graph; (2) the \emph{geometric locality} of a pair of qubits is $1$ plus the shortest path between them, i.e., $1$ plus the smallest number of edges separating them. E.g., two qubits separated by a single edge are geometrically $2$-local.

\begin{enumerate}

\item Code rate:
With a fixed code distance $d=2$ and a 2D geometry, we aim to maximize the code rate $k/n$ of $[[n,k,d]]$ (subsystem) stabilizer codes. 

\item Physical Locality:
Ideally, at most two-body interactions should be used, i.e., the Hamiltonian should have at most $2$-local terms even after encoding. If this criterion turns out to be impossible to satisfy, then the physical locality should be minimized, i.e., the number of $k$-local terms with $k>2$ should be as small as possible. 

\item Induced graph properties: The induced graph is geometrically $2$-local and should be of fixed and low degree. These requirements arise from the typical constraints of solid-state implementations of quantum computers; the degree requirement can be relaxed for some systems such as trapped atoms or ions, which can support all-to-all qubit connectivity.

The induced graph should ideally be balanced. Here, by `balanced' we mean that each qubit has roughly the same degree.

\item Mapped graph properties:
In general, the induced graph will be quite different from the qubit connectivity graphs of actual devices. To account for this, we map the induced graph to a variety of more practical graphs that correspond to feasible connectivity graphs, in some cases corresponding to actual experimental implementations. Like the induced graph, the mapped graph should be geometrically local, of low degree, and balanced.

\item Penalty Gap:
The `penalty gap' is the energy difference between the ground state and the first excited state of the penalty Hamiltonian $H_P$. Since the encoded Hamiltonian performs the same desired computation as the original system Hamiltonian only in the large penalty limit, a larger penalty gap is preferred. The size of the penalty gap will depend on the specific code.

\end{enumerate}

We revisit these criteria in \cref{sec:discussion} and comment on how the different codes we study satisfy them. First, we provide the required background on subsystem codes.

\section{Background}
\label{sec:background}

Consider a system of $n$ physical qubits with Hilbert space $\mathcal{H}=(\mathbb{C}^{2})^{\otimes n}$, and the Pauli group $\mathcal{P}_n$, generated by all $n$-fold tensor products of the $2\times 2$ identity operator $I$ and the Pauli matrices
\begin{align}
    X=\left[\begin{array}{cc}
       0  &  1\\
       1  & 0
    \end{array}\right],Y=\left[\begin{array}{cc}
       0  &  -i\\
       i  & 0
    \end{array}\right],Z=\left[\begin{array}{cc}
       1  &  0\\
       0  & -1
    \end{array}\right] .
\end{align}
The elements of $\mathcal{P}_n$ can be expressed as $\alpha P_1\otimes P_2\otimes\cdots\otimes P_n$, with $P_j\in\{I,X,Y,Z\}$ and $\alpha\in\{\pm 1, \pm i\}$. The `weight' $|P|$ of a Pauli operator $P$ is the number of non-identity one-qubit Pauli matrices comprising $P$. 

An $[[n,k,d]]$ subspace code encodes $k$ logical qubits into $n>k$ physical qubits~\cite{Knill:1997kx}. The parameter $d$ is the code distance: the code can correct $\lfloor \frac{d-1}{2} \rfloor$ errors and detect $d-1$ errors.

\subsection{Stabilizer Subspace Codes}
\label{sec:stab_code_space}

A stabilizer subspace code is defined by $n-k$ independent, mutually commuting stabilizers $S_j\in \mathcal{P}_n$, $j\in\{1,\cdots,n-k\}$~\cite{Gottesman:1996fk,Calderbank:98}. 
The stabilizers generate an order $2^{n-k}$ Abelian stabilizer group $\mathcal{S}=\langle S_1, \cdots, S_{n-k}\rangle$, where throughout we use angular brackets to denote a generating set of a group. An important subspace is the codespace $\mathcal{C}=\{\ket{\psi}\in\mathcal{H}: S\ket{\psi}=\ket{\psi}~\forall S\in \mathcal{S}\}$, consisting of all states that remain invariant under the action of all elements of the stabilizer group. Consequently, the Hilbert space can be expressed as a direct sum of $\mathcal{C}$ and its orthogonal complement, $\mathcal{C}^\perp$, i.e., $\mathcal{H}=\mathcal{C}\oplus \mathcal{C}^\perp$. 

The stabilizer group $\mathcal{S}$ excludes $-I$ and forms an Abelian subgroup of $\mathcal{P}_n$. The codespace is the simultaneous $+1$ eigenspace of $\mathcal{S}$. The normalizer (also the centralizer $C_{\mathcal{P}_n}(\mathcal{S})$ in this setting\footnote{The centralizer group of a subset ${V}\subset {G}$ in a group ${G}$ is $C_{{G}}({V})=\{ g\in {G}: [g,v_i]=0\ \ \forall v_i\in {V}\}$. The normalizer is $N_{{G}}({V})=\{ g\in {G}: [g,{V}]=0\}$.}) 
$N_{\mathcal{P}_n}(\mathcal{S})$ preserves the codespace $\mathcal{C}$ and encompasses all stabilizers and logical operators. The latter enact non-identity transformations between code states.
Thus, we can decompose the normalizer as $N_{\mathcal{P}_n}(\mathcal{S})=\langle S_1, \cdots, S_{n-k},\bar{X}_1,\bar{Z}_1,\cdots,\bar{X}_k,\bar{Z}_k\rangle$, where $\bar{X}_j$ and $\bar{Z}_j$ are, respectively, logical $X$ and $Z$ operators on logical qubit $j$. 

The code distance $d$ is the minimum weight of a logical operator, i.e., $d=\min\limits_{P\in N_{\mathcal{P}_n}(\mathcal{S})\setminus\mathcal{S}}|P|$. Note that logical operators are equivalent under multiplication by stabilizer elements, so the minimization includes this operation.

\subsection{Stabilizer Subsystem Codes}
\label{sec:stab_code}

An $[[n,k,g,d]]$ stabilizer subsystem code~\cite{poulin_stabilizer_2005} can be understood as an $[[n,k+g,d]]$ stabilizer subspace code where $g$ of the logical qubits and the corresponding logical operators are not used. More precisely, in the subsystem case, we select a subset of $k$ logical qubits from a total of $k+g$ logical qubits of a subspace code, while the remaining $g$ qubits become gauge qubits. This results in a division of the codespace $\mathcal{C}$ into two distinct subsystems $\mathcal{A}$ and $\mathcal{B}$, corresponding to the logical and gauge qubits, respectively. The system Hilbert space decomposes accordingly as $\mathcal{H}=\mathcal{A}\otimes\mathcal{B}\oplus\mathcal{C}^\perp$. 

The logical operators pertaining to the logical qubits used to encode information act as the identity on $\mathcal{B}$ (i.e., on the gauge qubits) and form a non-Abelian group $\mc{L} = \< \bar{X}_{1},\bar{Z}_{1},\cdots, \bar{X}_{k},\bar{Z}_{k}\>$, where $\bar{X}_{i}$ and $\bar{Z}_{i}$ are the logical Pauli operators on the $i$'th logical qubit. These are known as \emph{bare logical operators}; they preserve the code space $\mc{C}$ and act trivially on the gauge qubits. Likewise, the logical operators pertaining to the gauge qubits act trivially on $\mc{A}$ (i.e., on the logical qubits) and form a non-Abelian group $\mc{K} = \< \tilde{X}_{1},\tilde{Z}_{1},\cdots, \tilde{X}_{g},\tilde{Z}_{g}\> \equiv \< \bar{X}_{k+1},\bar{Z}_{k+1},\cdots, \bar{X}_{k+g},\bar{Z}_{k+g}\>$. From here on, we use the bar and tilde decorations to denote the logical and gauge operators, respectively.

The sets $\{\mc{S},\mc{L},\mc{K}\}$ mutually commute.
Subsystem codes are further characterized by their \emph{gauge group} $\mathcal{G} = \<S_1, \cdots, S_{n-k},\tilde{X}_{1},\tilde{Z}_{1},\cdots, \tilde{X}_{g},\tilde{Z}_{g}\>$, which is non-Abelian unless $g=0$. Thus, subsystem codes with an Abelian gauge group are equivalent to subspace stabilizer codes.

Following~\cite{Bravyi2011}, given the gauge group, we may characterize the stabilizers as the elements of $\mc{P}_n$ that commute with every element in $\mc{G}$ and are also in $\mc{G}$, i.e., 
\begin{align}
\label{eq:stabs}
    \mathcal{S} = \mathcal{G} \cap C_{\mathcal{P}_n}(\mathcal{G}) .
\end{align}
We may likewise characterize the bare logical operators as nontrivial logical operators that are elements of $C_{\mathcal{P}_n}(\mc{G})$ but not in $\mc{G}$, i.e., $\mc{L} = C_{\mathcal{P}_n}(\mathcal{G})\setminus\mathcal{G}$. Note that the centralizer of the stabilizer group is $C_{\mathcal{P}_n}(\mathcal{S})=\langle\mathcal{G},\bar{X}_1, \bar{Z}_1, \cdots, \bar{X}_k,\bar{Z}_k\rangle$. The elements of $C_{\mathcal{P}_n}(\mathcal{S})\setminus\mathcal{G}$ are called \textit{dressed logical operators}, as they may act non-trivially on the gauge qubits: dressed logical operators can be written as a product of a bare logical operator and a gauge operator. From here on, we use the bar and hat decorations to denote the bare logical and dressed logical operators, respectively.
The code distance parameter $d$ is the minimum weight of all nontrivial dressed logical operators.

\subsection{Bravyi's $A$ matrix}
\label{sec:bravyi}

In this section, we briefly review a generalization of the two-dimensional (2D) Bacon-Shor code~\cite{Bacon:05,Aliferis:07} by closely following the method introduced in Ref.~\cite{Bravyi2011}. In Bravyi's approach, a square binary $A$ matrix is mapped to a stabilizer subsystem code, where 1-entries in the matrix represent physical qubits. Let $|A|$ denote the Hamming weight (number of $1$'s) in $A$, and let $d_{\mathrm{row}}$ and $d_{\mathrm{col}}$ denote the minimum weight of the non-zero vectors in the row space and column space of $A$, respectively, where linear subspaces are defined over the binary field $\mathbb{F}_2$.

\begin{mytheorem}[Rephrased Theorem 2 in \cite{Bravyi2011}]
\label{th:th2-ref1}

Let $A$ be an arbitrary $m\times m$ binary matrix and $\mathcal{A}$ be the subsystem code associated with $A$. Then $\mathcal{A}$ encodes $k = \rank (A)$ qubits into $n = |A|$ qubits with minimum distance $d = \min (d_{\mathrm{row}},d_{\mathrm{col}})$.
\end{mytheorem}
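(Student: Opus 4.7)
The plan is to first make precise Bravyi's construction: place one physical qubit at each $1$-entry of $A$, so that $n=|A|$ follows immediately, and take the gauge generators to be $X_iX_j$ on every pair of qubits lying in a common row of $A$ and $Z_iZ_j$ on every pair lying in a common column. Once this dictionary is in place, the entire theorem reduces to dimension counts and a classical-code argument over $\mathbb{F}_2$.

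Next, I would extract the stabilizer group using $\mathcal{S}=\mathcal{G}\cap C_{\mathcal{P}_n}(\mathcal{G})$ from \cref{eq:stabs}. Writing a general Pauli as $X^aZ^b$, commutation with the row $XX$ gauges forces $b$ to be constant on each row-support of $A$, while commutation with the column $ZZ$ gauges forces $a$ to be constant on each column-support; membership in $\mathcal{G}$ additionally imposes even row-parity on $a$ and even column-parity on $b$. Encoding the column-constant $a$ as $u\in\mathbb{F}_2^m$ and the row-constant $b$ as $v\in\mathbb{F}_2^m$, the joint conditions become $Au=0$ and $A^{T}v=0$, so the independent $X$-type and $Z$-type stabilizer generators are in bijection with $\ker A$ and $\ker A^{T}$, giving stabilizer dimension $2(m-\rank A)$. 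A parallel count of the gauge-group generators (taking, WLOG, $A$ with no zero rows or columns, since otherwise we may pass to the reduced matrix giving the same code) yields gauge dimension $2(n-m)$, and solving the identities $|\mathcal{S}|=2^{n-k-g}$, $|\mathcal{G}|=2^{n-k+g}$ then delivers $k=\rank A$.

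For the distance claim, I would characterize the dressed logicals as those $X^aZ^b$ that commute with $\mathcal{S}$ but do not lie in $\mathcal{G}$. Translating the commutation conditions, the row-parity vector $\beta\in\mathbb{F}_2^m$ of $a$ must lie in the column space of $A$ and the column-parity vector $\gamma\in\mathbb{F}_2^m$ of $b$ must lie in the row space, while being a dressed (rather than gauge) logical forces $(\beta,\gamma)\neq(0,0)$. Both $\beta$ and $\gamma$ are invariant under multiplication by gauge elements, so they label the dressed-logical cosets of $\mathcal{G}$. For any target $\beta$, the minimum number of $X$-positions needed to realize it is exactly $|\beta|$ — one qubit in each flagged row — and similarly for $\gamma$; combined with the trivial weight bound $|X^aZ^b|\ge\max(|a|,|b|)$, this lower-bounds the weight of any dressed logical by $\min(d_{\mathrm{row}},d_{\mathrm{col}})$. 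Matching pure-$X$ and pure-$Z$ minimum-weight representatives supply the upper bound, yielding $d=\min(d_{\mathrm{row}},d_{\mathrm{col}})$.

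The main obstacle I anticipate is the interface between the combinatorial structure of $\mathcal{G}$ and its linear-algebraic description in terms of $A$. The bookkeeping is most delicate in the gauge-group dimension count, where rows and columns of Hamming weight $0$ or $1$ must be handled gracefully (either directly, or by passing to a reduced matrix as above), and in verifying that $(\beta,\gamma)$ genuinely distinguishes the dressed-logical cosets of $\mathcal{G}$. One also has to be careful to rule out that a mixed-type representative with many $Y$-type coincidences could beat the pure-$X$/pure-$Z$ bound, although this ultimately collapses to the innocuous inequality $|X^aZ^b|\ge\max(|a|,|b|)$.
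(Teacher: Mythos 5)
Your proof is sound, but be aware that the paper itself never proves this statement: it is imported as ``Theorem 2 of Ref.~\cite{Bravyi2011}'', and the surrounding text only restates the supporting machinery (the row and column operators, the commutation relation \cref{eq:row_col_comm}, and \cref{prop:stabs}) while deferring all proofs to Bravyi. Your reconstruction follows the same conceptual route as Bravyi's original argument --- stabilizers from $\ker A$ and $\ker A^{T}$, logical classes labelled by row/column parity vectors constrained to lie in the column and row spaces of $A$, and the distance obtained from minimum-weight vectors in those spaces --- but packages the qubit count differently: Bravyi reads $k=\rank A$ off the rank of the symplectic form $\vect{z}^{T}A\vect{x}$ on $C_{\mathcal{P}_n}(\mathcal{G})/\mathcal{G}$, whereas you compute $\dim\mathcal{S}=2(m-\rank A)$ and $\dim\mathcal{G}=2(n-m)$ and eliminate $g$ from $|\mathcal{S}|=2^{n-k-g}$, $|\mathcal{G}|=2^{n-k+g}$. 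The two are equivalent, but yours has the minor bonus of yielding $g=n-2m+\rank A$ explicitly, which reproduces \cref{eq:gvalues} for the trapezoid codes. The two bookkeeping points you flag are the only places needing care, and both close cleanly: zero rows and columns can be stripped without changing $|A|$, $\rank A$, $d_{\mathrm{row}}$, or $d_{\mathrm{col}}$; and in the distance upper bound every row flagged by a nonzero $\beta$ in the column space is automatically nonempty (a zero row forces a zero coordinate in every column-space vector), so the weight-$|\beta|$ pure-$X$ representative always exists.
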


Using \cref{th:th2-ref1}, we can relate the parameters $[[n,k,d]]$ of an error correcting code as in \cref{tab:nkd}.

\begin{table*}[ht]
    \centering
    \begin{tabular}{ |c|c|c| } 
     \hline
     Code parameters & Properties of $A$ & Parameter meaning \\ 
      \hline
     $n$& $|A|$ & Number of physical qubits \\ 
     $k$ & $\rank(A)$ & Number of logical qubits \\ 
     $d$ & $\min(d_{\mathrm{row}},d_{\mathrm{col}})$ & Distance of the code \\
     \hline
    \end{tabular}
    \caption{Properties of the $A$ matrix and their relation to the parameters of the corresponding error correcting code $[[n,k,d]]$.}
    \label{tab:nkd}
\end{table*}

We also need to identify logical operators and stabilizers for the code, in order to perform calculations and detect errors.

\begin{mydefinition}[Gauge group]
\label{def:gauge}
A subsystem code $\mathcal{A}$ associated with an $A$ matrix has a gauge group $\mathcal{G}$ generated by the  following operators:
    \begin{enumerate}
    \item Every pair of qubits $j,j'$ located in the same row of $A$
contributes a generator $X_jX_{j'}$.
    \item Every pair of qubits $i,i'$ located in the same column of $A$ contributes a generator $Z_iZ_{i'}$.
\end{enumerate}
\end{mydefinition}

In anticipation of the construction of $X$-type and $Z$-type logical operators below, we define:

\begin{mydefinition}
\label{th:row-col-op}
Row and column operators:
 \begin{enumerate}
    \item $R_i=\prod_{j:A_{i,j}=1}Z_{i,j}$ is a \emph{row operator} acting via $Z$ on every qubit in the $i$'th row.
    \item $C_j=\prod_{i:A_{i,j}=1}X_{i,j}$ is a \emph{column operator} acting via $X$ on every qubit in the $j$'th column.
\end{enumerate}
\end{mydefinition}

Having identified the gauge group $\mathcal{G}$, the stabilizer group $\mathcal{S}$ is determined through \cref{eq:stabs}, where the centralizer $C_{\mathcal{P}_n}(\mathcal{G})$ is generated by the row and column operators, as shown in \cite[Prop.~1]{Bravyi2011}:
\begin{align}
    C_{\mathcal{P}_n}(\mathcal{G}) &= \langle R_1, \cdots, R_m, C_1, \cdots, C_m \rangle.
\end{align}

The $A$ matrix represents the commutation relations of the row and column operators. We have $A_{ij}=1$ if $R_i$ and $C_j$ overlap on one qubit (anticommute) and $A_{ij}=0$ if they do not overlap (commute), namely
\begin{align} \label{eq:row_col_comm}
    R_i C_j = (-1)^{A_{ij}} C_j R_i.
\end{align}

With row and column operators, we can represent the $X$-type and $Z$-type operators by bitstrings of length $m$ as follows:

\begin{mydefinition}
\label{th:x-z-op}
    $X$-type operators and $Z$-type operators:
\begin{align}
\label{eq:prod}
    P^X(\vect{x}) = \prod_{j=0}^{m} C_j^{x_j},~~ P^Z(\vect{z}) = \prod_{j=0}^{m} R_j^{z_j},
\end{align}
where $\vect{x}$ and $\vect{z}$ are bitstrings of length $m$. 
\end{mydefinition}
Thus, a 1-entry in the $j$'th position of the bitstring means the $j$'th column or row is involved in the corresponding $X$-type or $Z$-type operator.

Using \cref{eq:row_col_comm}, we can obtain the commutation relations between $X$-type operators and $Z$-type operators:
\begin{align}
    P^X(\vect{x}) P^Z(\vect{z}) &= (-1)^{\vect{z}^T A \vect{x}} P^Z(\vect{z}) P^X(\vect{x}) .
\end{align}

Any stabilizer or logical operator can be expressed using the form in \cref{th:x-z-op}. Specifically, using \cref{eq:stabs} we can find stabilizers by calculating the kernels of $A$ and $A^T$, i.e.:

\begin{myproposition}[\cite{Bravyi2011}]
\label{prop:stabs}
$P^X(\vect{x})$ is an $X$-type stabilizer iff $\vect{x}\in\mathrm{Ker}(A)$ and $P^Z(\vect{z})$ is a $Z$-type stabilizer iff $\vect{z}\in\mathrm{Ker}(A^T)$. Consequently, the number of $X$- and $Z$-type stabilizers is the nullity (dimension of the kernel) of $A$ and $A^T$, respectively.
\end{myproposition}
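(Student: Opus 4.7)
The plan is to invoke \cref{eq:stabs}, $\mathcal{S} = \mathcal{G} \cap C_{\mathcal{P}_n}(\mathcal{G})$, and split the stabilizer condition on $P^X(\vect{x})$ into two independent sub-conditions: membership in the centralizer $C_{\mathcal{P}_n}(\mathcal{G})$ and membership in the gauge group $\mathcal{G}$. The first sub-condition is automatic for any operator of the form given in \cref{th:x-z-op}: since $P^X(\vect{x}) = \prod_j C_j^{x_j}$ is a word in the column operators and Bravyi's identification $C_{\mathcal{P}_n}(\mathcal{G}) = \langle R_1,\ldots,R_m,C_1,\ldots,C_m\rangle$ already places every such word in the centralizer, there is nothing to check. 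Hence only the gauge-membership condition carries content.

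The core step is then to show that $P^X(\vect{x}) \in \mathcal{G}$ iff $A\vect{x} = 0$ over $\mathbb{F}_2$. By \cref{def:gauge}, the $X$-type part of $\mathcal{G}$ is generated by the weight-$2$ operators $X_j X_{j'}$ with $j,j'$ in a common row of $A$, so an $X$-type Pauli lies in $\mathcal{G}$ precisely when its restriction to each row has even Hamming weight. The forward direction is immediate; for the converse I would simply pair up the $X$'s within each row to realize any even-weight row-restriction as a product of within-row transpositions. The support of $P^X(\vect{x})$ on row $i$ is then easy to count: since $C_j$ contributes an $X$ on qubit $(i,j)$ iff $A_{i,j} = 1$, the number of $X$'s there is $\sum_j A_{i,j}\, x_j = (A\vect{x})_i$. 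Demanding this be even for every $i$ is exactly $A\vect{x} = 0$ in $\mathbb{F}_2$, i.e., $\vect{x} \in \mathrm{Ker}(A)$.

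The $Z$-type statement follows by a symmetric argument in which rows and columns, $X$ and $Z$, and $A$ and $A^T$ are swapped, using the column-type gauge generators $Z_i Z_{i'}$. The counting claim is then a direct corollary: $\vect{x} \mapsto P^X(\vect{x})$ is a group homomorphism from $\mathbb{F}_2^m$ into the $X$-type Paulis, and the characterization above sets up a bijection between $\mathrm{Ker}(A)$ and the independent $X$-type stabilizers, yielding $\dim \mathrm{Ker}(A)$ of them; the $Z$-type count follows identically with $A^T$. The only slightly delicate point will be the converse direction of the gauge-membership criterion — verifying that every $X$-operator with even row-restricted weight really is expressible as a product of within-row pairs — but this is a routine combinatorial pairing argument, so I do not expect any serious obstacle.
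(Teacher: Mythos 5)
Your proof is correct. Note that the paper does not actually supply its own proof of \cref{prop:stabs} --- it imports the statement from Bravyi's paper --- so there is no in-text argument to diverge from; what you have written is a self-contained derivation from the ingredients the paper does set up, namely \cref{eq:stabs}, \cref{def:gauge}, and the identification of $C_{\mathcal{P}_n}(\mathcal{G})$ with the group generated by row and column operators. Your route isolates the $\mathcal{G}$-membership half of $\mathcal{S}=\mathcal{G}\cap C_{\mathcal{P}_n}(\mathcal{G})$ and reduces it to the even-row-overlap condition; the pairing argument you flag as the delicate step does go through (the $X$-type gauge generators within a row generate exactly the even-weight $X$-patterns on that row, and distinct columns act on disjoint qubits, so the support of $P^X(\vect{x})$ in row $i$ has size $(A\vect{x})_i$ with no cancellation). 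The more common route, and the one closest to Bravyi's, instead uses the other half of the intersection: since $\mathcal{S}$ consists of the elements of $C_{\mathcal{P}_n}(\mathcal{G})$ that commute with all of $C_{\mathcal{P}_n}(\mathcal{G})$, one checks commutation of $P^X(\vect{x})$ with each $R_j$ via \cref{eq:row_col_comm}, giving $(A\vect{x})_j=0$ for all $j$. Both characterizations land on the same condition $A\vect{x}=\vect{0}$ (the overlap of $P^X(\vect{x})$ with $R_j$ and its support in row $j$ are the same set of qubits), so your argument is an equally valid and essentially equally short derivation; the commutation route just avoids having to argue the converse of the gauge-membership criterion. The counting claim is fine modulo the standard implicit assumption that no row or column of $A$ is empty, which makes $\vect{x}\mapsto P^X(\vect{x})$ injective and holds for all trapezoid codes.
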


Using standard Gram-Schmidt orthogonalization one can choose $k$ pairs $\{P_a^X,P_a^Z\}_{a=1}^k$ as bare logical operators subject to the following conditions:
    \begin{enumerate}
    \item $P^X_a P^Z_b = (-1)^{\delta_{ab}} P^Z_b P^X_a$
    \item $C_{\mathcal{P}_n}(\mathcal{G})=\langle\mathcal{S},P^X_1,\cdots,P^X_k,P^Z_1,\cdots,P^Z_k\rangle$
\end{enumerate}
Proofs can be found in Ref.~\cite{Bravyi2011}.

This construction only works for the bare logical operators. Going beyond \cite{Bravyi2011}, we next show how to modify it for the dressed logical operators. 

\begin{mydefinition}
\label{th:row-col-op-dressed}
Reversed row and column operators:
 \begin{enumerate}
    \item $\bar{R}_i=\Pi_{j:A_{i,j}=1} X_{i,j}$ is a \emph{row operator} acting via $X$ on every qubit in the $i$'th row.
    \item $\bar{C}_j=\Pi_{i:A_{i,j}=1} Z_{i,j}$ is a \emph{column operator} acting via $Z$ on every qubit in the $j$'th column.
\end{enumerate}
\end{mydefinition}

These are simply the row and column operators with the roles of $X$ and $Z$ exchanged compared to \cref{th:row-col-op}. These operators are gauge operators which can be generated by the gauge group given in \cref{def:gauge}. Similar to \cref{eq:row_col_comm}, it follows that
\begin{align} 
\label{eq:row_col_comm-bar}
    \bar{R}_i \bar{C}_j = (-1)^{A_{ij}} \bar{C}_j \bar{R}_i.
\end{align}

To establish a commutation relation for dressed logical operators, we make one assumption: Each row and column of the $A$ matrix contains an even number of qubits, which holds for the $A$ matrix in our construction (see \cref{lem:define-A}). Under this assumption, we have $[\bar{R}_i,R_j]=[\bar{C}_i,C_j]=0$, as their overlaps occur on an even number of qubits.
Also, $[\bar{R}_i,C_j]=[\bar{C}_i,R_j]=0$ because they are the same type of operators.

In analogy to \cref{th:x-z-op}, let us define a gauge-multiplied version of these operators.

\begin{mydefinition}
\label{th:x-z-op-dressed}
    $X$-type operators and $Z$-type operators with gauge multiplication:
\begin{align}
\label{eq:prod2}
    Q^X(\vect{x},\bar{\vect{x}}) = \prod_{j=0}^{m} C_j^{x_j}\bar{R}_j^{\bar{x}_j}, Q^Z(\vect{z},\bar{\vect{z}}) = \prod_{j=0}^{m} R_j^{z_j} \bar{C}_j^{\bar{z}_j},
\end{align}
where $\bar{\vect{x}}$ and $\bar{\vect{z}}$ are bitstrings of length $m$. 
\end{mydefinition}

We can obtain the commutation relations between $X$-type and $Z$-type operators with gauge multiplication.

\begin{myproposition}
$X$-type and $Z$-type operators with gauge multiplication obey the following commutation relations:
    \begin{align}
    &Q^X(\vect{x},\bar{\vect{x}}) Q^Z(\vect{z},\bar{\vect{z}}) \\
    &\quad= (-1)^{\vect{z}^T A \vect{x}+\bar{\vect{z}}^T A^T \bar{\vect{x}}} Q^Z(\vect{z},\bar{\vect{z}}) Q^X(\vect{x},\bar{\vect{x}}) .\notag
\end{align}
\label{prop:dressed-commutation}
\end{myproposition}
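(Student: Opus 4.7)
The plan is to expand $Q^X Q^Z$ as an ordered product of the four types of elementary operators $C_j, \bar R_j, R_i, \bar C_i$, then move every factor coming from $Q^Z$ (the $R_i^{z_i}$'s and $\bar C_i^{\bar z_i}$'s) to the left of every factor coming from $Q^X$ (the $C_j^{x_j}$'s and $\bar R_j^{\bar x_j}$'s), collecting a sign from each swap. Since all factors within $Q^X$ are $X$-type Paulis, they mutually commute; likewise all factors within $Q^Z$ are $Z$-type and mutually commute, so the internal ordering of each product is immaterial.

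The first step is to catalogue the four pairwise commutation relations between the two kinds of row operators and the two kinds of column operators. Two of the four pairs trivially commute because they are of the same Pauli type: $[C_j,\bar R_{j'}]=0$ (both $X$-type) and $[R_i,\bar C_{i'}]=0$ (both $Z$-type). The remaining two mixed pairs, $[\bar R_j,R_i]$ and $[\bar C_j,C_i]$, act on disjoint supports when $i\neq j$, and when $i=j$ their overlap is an entire row (resp. column) of $A$; here I invoke the standing hypothesis that every row and column of $A$ has even Hamming weight, so each pair $XZ\mapsto -ZX$ is incurred an even number of times and these commutators vanish as well. The only nontrivial anticommutations left are therefore $R_iC_j=(-1)^{A_{ij}}C_jR_i$ from \cref{eq:row_col_comm} and $\bar R_i\bar C_j=(-1)^{A_{ij}}\bar C_j\bar R_i$ from \cref{eq:row_col_comm-bar}.

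The second step is to compute, for each $R_i^{z_i}$ being pushed past the entire $Q^X$, the accumulated sign. It commutes with every $\bar R_j^{\bar x_j}$ by step~1, and anticommutes with $C_j$ exactly when $A_{ij}=1$, contributing a factor $(-1)^{z_i x_j A_{ij}}$ for each $j$; summing over $j$ yields $(-1)^{z_i(A\vect x)_i}$. Summing over $i$ gives the total sign $(-1)^{\vect z^T A \vect x}$. Analogously, pushing each $\bar C_i^{\bar z_i}$ past $Q^X$ yields no sign against the $C_j$'s but a factor $(-1)^{A_{ji}}$ for each $\bar R_j^{\bar x_j}$ it crosses, producing $(-1)^{\bar z_i(A^T\bar{\vect x})_i}$, and summing over $i$ gives $(-1)^{\bar{\vect z}^T A^T \bar{\vect x}}$. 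Multiplying the two independent signs yields the claimed overall factor.

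Nothing in the argument is genuinely deep; the only mildly delicate step is keeping track of the four pairwise cases and, in particular, verifying that the two mixed-type commutators $[\bar R_j,R_i]$ and $[\bar C_j,C_i]$ really do vanish, which is where the even row/column-weight assumption enters essentially. Once that is nailed down, the two nonzero contributions decouple cleanly into an $A$-piece and an $A^T$-piece, giving the stated commutation relation.
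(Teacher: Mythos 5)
Your proof is correct and follows essentially the same route as the paper's: expand the products, use the two elementary sign relations $R_iC_j=(-1)^{A_{ij}}C_jR_i$ and $\bar R_i\bar C_j=(-1)^{A_{ij}}\bar C_j\bar R_i$ together with the vanishing of the remaining cross-commutators (which the paper establishes in the paragraph preceding \cref{th:x-z-op-dressed} from the even row/column-weight assumption), and accumulate the signs into $(-1)^{\vect z^TA\vect x+\bar{\vect z}^TA^T\bar{\vect x}}$. Your write-up is merely a bit more explicit in cataloguing the pairwise cases that the paper's displayed computation uses implicitly.
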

\begin{proof}
Note that using \cref{eq:row_col_comm} we have $C_i^{x_i}  R_j^{z_j} =(-1)^{z_jA_{ji}x_i}R_j^{z_j}C_i^{x_i}$, and using \cref{eq:row_col_comm-bar} we have $\bar{R}_i^{\bar{x}_i}\bar{C}_j^{\bar{z}_j}=(-1)^{\bar{x}_iA_{ij}\bar{z}_j}\bar{C}_j^{\bar{z}_j}\bar{R}_i^{\bar{x}_i}$. Therefore:
\bes
        \begin{align}
            &Q^X(\vect{x},\bar{\vect{x}})Q^Z(\vect{z},\bar{\vect{z}}) 
            = \prod_{i=0}^{m} C_i^{x_i}\bar{R}_i^{\bar{x}_i} \prod_{j=0}^{m} R_j^{z_j} \bar{C}_j^{\bar{z}_j}\\
            &\qquad= \prod_{i,j=0}^{m} C_i^{x_i}  R_j^{z_j} \bar{R}_i^{\bar{x}_i}\bar{C}_j^{\bar{z}_j}\\
            &\qquad= \prod_{i,j=0}^{m} (-1)^{z_jA_{ji}x_i}R_j^{z_j}C_i^{x_i}  (-1)^{\bar{z}_jA^T_{ji}\bar{x}_i} \bar{C}_j^{\bar{z}_j}\bar{R}_i^{\bar{x}_i}\\
            &\qquad=(-1)^{\vect{z}^T A \vect{x}+\bar{\vect{z}}^T A^T \bar{\vect{x}}}  \prod_{j=0}^{m} R_j^{z_j}\bar{C}_j^{\bar{z}_j}\prod_{i=0}^{m}C_i^{x_i}   \bar{R}_i^{\bar{x}_i}\\
            &\qquad=(-1)^{\vect{z}^T A \vect{x}+\bar{\vect{z}}^T A^T \bar{\vect{x}}} Q^Z(\vect{z},\bar{\vect{z}})Q^X(\vect{x},\bar{\vect{x}}).
        \end{align}
\ees
\end{proof}

Thus, one can choose $k$ pairs $\{Q_a^X,Q_a^Z\}_{a=1}^k$ as dressed logical operators subject to the following conditions:
    \begin{enumerate}
    \item $Q_a^X Q_b^Z = (-1)^{\delta_{ab}} Q^Z_b Q^X_a$
    \item $C_{\mathcal{P}_n}(\mathcal{S})=\langle\mathcal{G},Q^X_1,\cdots,Q^X_k,Q^Z_1,\cdots,Q^Z_k\rangle$
\end{enumerate}

Moreover, row and column permutations of the $A$ matrix leave the subsystem code $\mc{A}$ invariant. Specifically, we show in \cref{app:A-perm} that while the gauge generators are modified under permutations, the gauge group and the stabilizers remain invariant.

\section{
Trapezoid subsystem codes}
\label{sec:extended}

In this section, we introduce a family of $[[4k+2l,2k,g,2]]$ and $[[4k+2l-2,2k-1,g,2]]$ codes for even and odd numbers of logical qubits, respectively. We call these `trapezoid' codes due to the structure of their $A$ matrices, which are defined as follows:

\begin{mydefinition}[$A$ matrix for trapezoid codes]
\label{lem:define-A}
    The $m\times m$ $A$ matrix of the family of trapezoid codes is parameterized by the integers $m$ and $l$, where $l\le \lceil \frac{m-1}{2}\rceil$. Its entries are zero unless specified otherwise below. Its matrix elements are given by: 
    \begin{itemize}
        \item first column: $A_{i,1}=1$ for $1 \le i\le 2l$,
        \item last row: $A_{m,i}=1$ for $m-2l+1 \le i\le m$,
        \item superdiagonal: $A_{i,i+1}=1$ for $ 1\le i \le m-1$,
        \item lower diagonal: $A_{i,i-2l+1}=1$ for $ 2l+1 \le i \le m-1$,
    \end{itemize}
    where $A_{i,j}$ is the element in the $i$'th row and $j$'th column, $i,j\in \{1,\cdots,m\}$.
 \end{mydefinition}

This results in the rotated isosceles trapezoidal structure illustrated in \cref{eq:general-A} (the trapezoidal shape outlined by 1-entries), in which we have indicated the row and column indices where the top of the trapezoid forms. The bottom side of the trapezoid aligns with the superdiagonal of the $A$ matrix and its legs are on the left and the bottom side of the $A$ matrix with length $2l$. 
\begin{align}
\label{eq:general-A}
&\qquad\quad A= \notag\\ 
&\begin{array}{llcccccllllll}
    &\vline&1&1&&&&&&&&&\\
    &\vline&1 &&1&&&&&&\\
    &\vline&\vdots &&&\ddots&&&&&\\
    \scriptstyle{2l}&\vline&1&&&&1&&&&&\\ 
    \scriptstyle{2l+1}&\vline&&1&&&&1&&&\\
    &\vline&&&\ddots&&&&\ddots&&&\\
    &\vline&&&&1&&&&1&\\
    \scriptstyle{m-1}&\vline&&&&&1&&&&1\\
    \scriptstyle{m}&\vline&&&&&&1&\cdots&1&1\\
    \hline
    &\vline&&&&&&\hspace{-.5cm}\scriptstyle{m-2l+1}&\hspace{.5cm}&&\hspace{-.03cm}\scriptstyle{m}&\\
\end{array}
\end{align}

To identify parameters of the trapezoid codes, we use \cref{tab:nkd} together with the following observations:
\begin{itemize}
\item The last row is the sum of all the previous rows and all other rows are linearly independent, thus we have $\rank(A)=m-1$. 
\item $|A| = 2*(m-1)+ m-(m-2l+1)+1 = 2m+2l-2$. 
\item Let $d^{(r)}_{i}$ and $d^{(c)}_{i}$ denote the Hamming weights of the $i$'th row and column, respectively. \cref{lem:define-A} implies that $d^c_i=d^r_i=2$ for every $i$ except $d^c_1=d^r_m=2l$, hence, $\min(d_{\mathrm{row}},d_{\mathrm{col}})=2$. 
\end{itemize}
Thus, the $m\times m$ $A$ matrix corresponds to codes with parameters $[[4k+2l,2k,g,2]]$ for $m=2k+1$ and $[[4k+2l-2,2k-1,g,2]]$ for $m=2k$. 

We refer to the $m=2k+1$ and $m=2k$ cases as the odd-$m$ and even-$m$ codes, respectively. This is not to be confused with the even and odd numbers of logical qubits, which is $m-1$. We show below that 
\bes
\label{eq:gvalues}
\begin{align}
    g&=2k+2l-2\quad \text{for odd }m\\
    g&=2k+2l-3\quad \text{for even }m .
\end{align}
\ees

We note that our construction generalizes the $[[6k,2k,2]]$ code introduced in Ref.~\cite{marvian2019robust}, which is the special case $l=k$ of the odd-$m$ trapezoid code, with $g=4k-2$. 

\begin{table*}[!ht]
    \centering
    \begin{tabular}{|l|l|l|}
    \hline
    Notation & Meaning & Example\\
    \hline
         $A$ matrix & $m \times m$ Bravyi's $A$ matrix in \cite{Bravyi2011} & \cref{eq:general-A} \\
         $\bar{X}_i, \bar{Z}_i$ & Bare logical operators $X,Z$ for logical qubit $i$ &\\
         $\hat{X}_i, \hat{Z}_i$ & Dressed logical operators $X,Z$ for logical qubit $i$&\\
         $\vect{X}, \vect{Z}$& $(m-1) \times m$ operator matrix parameterizing bare logical operators & \cref{eq:log-op}\\
         $\bar{\vect{X}}, \bar{\vect{Z}}$& \makecell[l]{$(m-1) \times m$ gauge matrix parameterizing gauges \\ \qquad\qquad\qquad\qquad\qquad\quad to reduce bare to dressed logical operators} & \cref{eq:log-op-bar}\\
    \hline
    \end{tabular}
    \caption{Summary of notations used in this work and their examples.}
    \label{tab:notations}
\end{table*}

\subsection{Properties of trapezoid codes}
\label{sec:props}

\subsubsection{Rate}
\label{sec:props-rate}

The odd-$m$ and even-$m$ codes have different code rates, i.e.,
\begin{equation}
r_{\mathrm{odd}}=\frac{k}{2k+l}\ ,\quad r_{\mathrm{even}}=\frac{2k-1}{4k+2l-2}.
    \label{eq:rate}
\end{equation}
The code rate is maximized for $l=1$ in both cases. Since the $[[6k,2k,2]]$ code corresponds to $l=k$, its rate is always surpassed by any trapezoid code with $l<k$.

For the same $l$, $r_{\mathrm{odd}}>r_{\mathrm{even}}$. For the same number of physical qubits, also, $r_{\mathrm{odd}}>r_{\mathrm{even}}$. 
Thus, the odd-$m$ codes are usually desirable since they have a higher code rate. In the following discussion, we focus our attention on the odd-$m$ codes unless explicitly noted otherwise. The proofs for the even-$m$ cases follow the same reasoning.

\subsubsection{Stabilizers}

From the rank-nullity theorem, $\rank(A)+\mathrm{nullity}(A)=m$. Hence, using \cref{prop:stabs} and $\rank(A)=m-1=2k$, the number of $X$ and $Z$ stabilizers is $\mathrm{nullity}(A)=m-(m-1)=1$. This shows that each of the codes in the trapezoid family has $|\mathcal{S}|=2$ stabilizers (one $S_X$ and one $S_Z$). 

\begin{mylemma}
\label{lem:stab}
    $S_X=\prod_{i=1}^m C_i$ and $S_Z=\prod_{i=1}^m R_i$ are valid stabilizers of the $A$ matrix of the trapezoid codes family.
\end{mylemma}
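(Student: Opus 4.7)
The plan is to reduce the claim directly to \cref{prop:stabs}, which characterizes $X$- and $Z$-type stabilizers via the kernels of $A$ and $A^T$. Under the parameterization of \cref{th:x-z-op}, the operators in question are exactly $S_X = P^X(\vect{1})$ and $S_Z = P^Z(\vect{1})$, where $\vect{1}=(1,1,\ldots,1)^T \in \mathbb{F}_2^m$ is the all-ones vector. So it suffices to show that $\vect{1} \in \mathrm{Ker}(A)$ and $\vect{1} \in \mathrm{Ker}(A^T)$ (over $\mathbb{F}_2$).

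Next, I would observe that these two kernel conditions translate into a purely combinatorial statement about the $A$ matrix: $A\vect{1}=\vect{0}$ iff every row of $A$ has even Hamming weight, and $A^T\vect{1}=\vect{0}$ iff every column of $A$ has even Hamming weight. This step is just the definition of matrix-vector multiplication in $\mathbb{F}_2$.

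Then I would invoke the row/column weight count already carried out in the discussion preceding \cref{sec:props}: from \cref{lem:define-A}, $d^{(r)}_i = d^{(c)}_i = 2$ for all $i$ except $d^{(c)}_1 = d^{(r)}_m = 2l$. Concretely, each interior row $i$ (with $1 < i < m$) picks up one $1$ from the superdiagonal and one from the lower diagonal, and each interior column analogously gets contributions from the superdiagonal and lower diagonal; the first column and last row are the exceptions with weight $2l$. Every row and column weight is therefore even, so both $\vect{1} \in \mathrm{Ker}(A)$ and $\vect{1} \in \mathrm{Ker}(A^T)$, and \cref{prop:stabs} immediately gives that $S_X$ and $S_Z$ are stabilizers.

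There is no real obstacle here; the lemma is a direct corollary of the even row/column weight property of the trapezoidal $A$. The only thing worth emphasizing is that this same even-weight property is exactly the assumption underlying \cref{prop:dressed-commutation} and the dressed-operator construction, so verifying it here has the side benefit of justifying the consistency of the whole framework applied to trapezoid codes.
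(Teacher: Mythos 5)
Your proof is correct and is essentially identical to the paper's: both parameterize $S_X$ and $S_Z$ by the all-ones bitstring, reduce the claim via \cref{prop:stabs} to membership of $1^m$ in $\mathrm{Ker}(A)$ and $\mathrm{Ker}(A^T)$, and conclude from the fact that every row and column of $A$ has even Hamming weight ($2$ or $2l$). The only nitpick is that your ``concretely'' aside misattributes the second $1$-entry in rows $1\le i\le 2l$ to the lower diagonal rather than to the first column, but this is not load-bearing since you correctly invoke the weight count from \cref{lem:define-A}.
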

\begin{proof}
Using \cref{prop:stabs}, the stabilizers of $X$-type ($Z$-type) can be identified with the right (left) null vectors of $A$. Since $S_X$ ($S_Z$) is a product of all columns (rows), it can be parameterized using a bitstring $\vect{x}=1^m$ ($\vect{z}=1^m$), as in \cref{eq:prod}. Since every row of $A$ has an even number of 1-entries, $A\vect{x}=\vect{0}$. Since every column of $A$ has an even number of entries of $1$, $\vect{z}^TA=\vect{0}^T$.
\end{proof}

Recall that the $1$-entries in an $A$ matrix represent all the physical qubits. It follows that the two stabilizers are the all-$X$ and all-$Z$ Pauli operators. This holds for both odd-$m$ and even-$m$ codes.
\cref{lem:stab} shows that the stabilizers are the same as those of the family of $[[n,n-2,3]]$ subspace stabilizer codes, whose two stabilizers are also the all-$X$ and all-$Z$ Pauli operators.  
Our family of subsystem trapezoid codes generalizes this family of subspace codes by discarding some gauge qubits and imposing a 2D geometry. 

Note that the weight of the stabilizers grows linearly as the number of physical qubits increases. Therefore, using the stabilizers as penalty terms~\cite{jordan2006error} would require at least $n$-body interactions in this case, which is impractical for $n>2$. One of the major advantages of using subsystem codes is that we can replace these high weight stabilizers by $2$-local gauge operators as penalty terms.

\subsubsection{Gauge group}
\label{sec:gauge-group}

For a subspace code with parameters $[[n,k'+g,d]]$ we have $|\mathcal{S}| = n-(k'+g)$~\cite{Gottesman:1996fk}. The same equality holds for the corresponding subsystem code with parameters $[[n,k',g,d]]$ (where $g$ is the number of unused logical qubits or gauge qubits). Thus, for the odd-$m$ trapezoid family, for which $n=4k+2l$ and $k'=2k$, we have $g=2k+2l-2$. Likewise, for even-$m$ codes we have $n=4k+2l-2$ and $k'=2k-1$, so $g=2k+2l-3$. This confirms \cref{eq:gvalues}.

The gauge group $\mathcal{G}$ is generated by the logical operators of the gauge qubits (i.e., $2g$ generators) and the two stabilizers. Hence, $|\mathcal{G}|=2g+2=4(k+l)-2$ (odd-$m$) or $|\mathcal{G}|=4(k+l)-4$ (even-$m$).

\subsection{Example: the $k=3$ family}
For the odd-$m$ codes, we have three $m\times m$ $A$ matrices where $m=2k+1=7$ for $1\le l\le 3$ as shown in \cref{eq:odd-m-A}. 
\begin{widetext}
\begin{align}
\label{eq:odd-m-A}
&\left[\begin{array}{ccccccc}
      1&1&&&&&\\
    1&&1&&&&\\
    &1&&1&&&\\
    &&1&&1&&\\
    &&&1&&1&\\
    &&&&1&&1\\
    &&&&&1&1\\
\end{array}
   \right]
&&\left[\begin{array}{ccccccc}
    1&1&&&&&\\
    1&&1&&&&\\
    1&&&1&&&\\
    1&&&&1&&\\
    &1&&&&1&\\
    &&1&&&&1\\
    &&&1&1&1&1\\
    \end{array}    \right]
    &&\left[\begin{array}{ccccccc}
    1&1&&&&&\\
    1&&1&&&&\\
    1&&&1&&&\\
    1&&&&1&&\\
    1&&&&&1&\\
    1&&&&&&1\\
    &1&1&1&1&1&1\\
    \end{array}    \right]\notag \\
    &\quad l=1:[[14,6,6,2]]&&\quad l=2:[[16,6,8,2]] &&\quad l=3:[[18,6,10,2]]
\end{align}
\end{widetext}

For the even-$m$ codes, we have another three $m\times m$ $A$ matrices where $m=2k=6$ for $1\le l\le 3$ as shown in \cref{eq:even-m-A}.
\begin{align}
\label{eq:even-m-A}
&\left[\begin{array}{cccccc}
    1&1&&&&\\
    1&&1&&&\\
    &1&&1&&\\
    &&1&&1&\\
    &&&1&&1\\
    &&&&1&1\\
    \end{array}    \right]
&&\left[\begin{array}{cccccc}
    1&1&&&&\\
    1&&1&&&\\
    1&&&1&&\\
    1&&&&1&\\
    &1&&&&1\\
    &&1&1&1&1\\
    \end{array}    \right]\notag\\
&\quad l=1: [[12,5,5,2]]&& \quad l=2:  [[14,5,7,2]]\notag\\
& \left[\begin{array}{cccccc}
    1&1&&&&\\
    1&&1&&&\\
    1&&&1&&\\
    1&&&&1&\\
    1&&&&&1\\
    1&1&1&1&1&1\\
    \end{array}    \right]\\
    &\quad l=3: [[16,5,9,2]]\notag
\end{align}

\subsection{Construction of logical operators}\label{sec:optimal_logi}

The choice of logical operators is not unique. Here, we give a set of logical operators as two $(m-1)\times m$ matrices $\vect{X}$ and $\vect{Z}$, where each row of the matrices represents the bitstring corresponding to a logical operator. Let 
\beq
\label{eq:J_i}
J_i=\lfloor\frac{m-i-1}{2l} \rfloor\ , \quad \bar{J}_i=\lfloor\frac{i-1}{2l}\rfloor .
\eeq 
We define:
\bes
\label{eq:logicalXZ}
\begin{align}
\label{eq:logicalX}
    \vect{X}_{i,i+1+2lj}&=1 \notag \\
    &\text{ for } 1\leq i\leq m-1 \text{ and } 0\leq j \leq J_i, \\
\label{eq:logicalZ}
    \vect{Z}_{i,i-2lj}&=1 \notag \\
    & \text{ for } 1\leq i\leq m-1 \text{ and } 0\leq j \leq \bar{J}_i.
\end{align}
\ees

For the three $m=7$ codes in \cref{eq:odd-m-A}, the logical operators constructed from \cref{eq:logicalXZ} are given in the \emph{operator matrices} below:
\begin{widetext}
\bes
\label{eq:log-op}
    \begin{align}
        \vect{X} &= \left[\begin{array}{ccccccc}
        \txg{0}&\mbf{1}&\txg{0}&\mbf{1}&\txg{0}&\mbf{1}&\txg{0}  \\
        \txg{0}&\txg{0}&\mbf{1}&\txg{0}&\mbf{1}&\txg{0}&\mbf{1}  \\
        \txg{0}&\txg{0}&\txg{0}&\mbf{1}&\txg{0}&\mbf{1}&\txg{0}  \\
        \txg{0}&\txg{0}&\txg{0}&\txg{0}&\mbf{1}&\txg{0}&\mbf{1}  \\
        \txg{0}&\txg{0}&\txg{0}&\txg{0}&\txg{0}&\mbf{1}&\txg{0}  \\
        \txg{0}&\txg{0}&\txg{0}&\txg{0}&\txg{0}&\txg{0}&\mbf{1}
    \end{array}\right],~ \vect{Z} = \left[\begin{array}{ccccccc}
         \mbf{1}&\txg{0}&\txg{0}&\txg{0}&\txg{0}&\txg{0}&\txg{0} \\
         \txg{0}&\mbf{1}&\txg{0}&\txg{0}&\txg{0}&\txg{0}&\txg{0} \\
         \mbf{1}&\txg{0}&\mbf{1}&\txg{0}&\txg{0}&\txg{0}&\txg{0} \\
         \txg{0}&\mbf{1}&\txg{0}&\mbf{1}&\txg{0}&\txg{0}&\txg{0} \\
         \mbf{1}&\txg{0}&\mbf{1}&\txg{0}&\mbf{1}&\txg{0}&\txg{0} \\
         \txg{0}&\mbf{1}&\txg{0}&\mbf{1}&\txg{0}&\mbf{1}&\txg{0}
    \end{array}\right] \notag\\
    \label{eq:log-op-a}
    &\qquad\qquad\qquad\qquad\qquad\quad l=1\\
    \vect{X} &= \left[\begin{array}{ccccccc}
        \txg{0}&\mbf{1}&\txg{0}&\txg{0}&\txg{0}&\mbf{1}&\txg{0}  \\
        \txg{0}&\txg{0}&\mbf{1}&\txg{0}&\txg{0}&\txg{0}&\mbf{1}  \\
        \txg{0}&\txg{0}&\txg{0}&\mbf{1}&\txg{0}&\txg{0}&\txg{0}  \\
        \txg{0}&\txg{0}&\txg{0}&\txg{0}&\mbf{1}&\txg{0}&\txg{0}  \\
        \txg{0}&\txg{0}&\txg{0}&\txg{0}&\txg{0}&\mbf{1}&\txg{0}  \\
        \txg{0}&\txg{0}&\txg{0}&\txg{0}&\txg{0}&\txg{0}&\mbf{1}
    \end{array}\right],~ \vect{Z} = \left[\begin{array}{ccccccc}
         \mbf{1}&\txg{0}&\txg{0}&\txg{0}&\txg{0}&\txg{0}&\txg{0} \\
         \txg{0}&\mbf{1}&\txg{0}&\txg{0}&\txg{0}&\txg{0}&\txg{0} \\
         \txg{0}&\txg{0}&\mbf{1}&\txg{0}&\txg{0}&\txg{0}&\txg{0} \\
         \txg{0}&\txg{0}&\txg{0}&\mbf{1}&\txg{0}&\txg{0}&\txg{0} \\
         \mbf{1}&\txg{0}&\txg{0}&\txg{0}&\mbf{1}&\txg{0}&\txg{0} \\
         \txg{0}&\mbf{1}&\txg{0}&\txg{0}&\txg{0}&\mbf{1}&\txg{0}
    \end{array}\right] \notag\\
    \label{eq:log-op-b}
    &\qquad\qquad\qquad\qquad\qquad\quad l=2\\   
    \vect{X} &= \left[\begin{array}{ccccccc}
        \txg{0}&\mbf{1}&\txg{0}&\txg{0}&\txg{0}&\txg{0}&\txg{0}  \\
        \txg{0}&\txg{0}&\mbf{1}&\txg{0}&\txg{0}&\txg{0}&\txg{0}  \\
        \txg{0}&\txg{0}&\txg{0}&\mbf{1}&\txg{0}&\txg{0}&\txg{0}  \\
        \txg{0}&\txg{0}&\txg{0}&\txg{0}&\mbf{1}&\txg{0}&\txg{0}  \\
        \txg{0}&\txg{0}&\txg{0}&\txg{0}&\txg{0}&\mbf{1}&\txg{0}  \\
        \txg{0}&\txg{0}&\txg{0}&\txg{0}&\txg{0}&\txg{0}&\mbf{1}
    \end{array}\right],~ \vect{Z} = \left[\begin{array}{ccccccc}
         \mbf{1}&\txg{0}&\txg{0}&\txg{0}&\txg{0}&\txg{0}&\txg{0} \\
         \txg{0}&\mbf{1}&\txg{0}&\txg{0}&\txg{0}&\txg{0}&\txg{0} \\
         \txg{0}&\txg{0}&\mbf{1}&\txg{0}&\txg{0}&\txg{0}&\txg{0} \\
         \txg{0}&\txg{0}&\txg{0}&\mbf{1}&\txg{0}&\txg{0}&\txg{0} \\
         \txg{0}&\txg{0}&\txg{0}&\txg{0}&\mbf{1}&\txg{0}&\txg{0} \\
         \txg{0}&\txg{0}&\txg{0}&\txg{0}&\txg{0}&\mbf{1}&\txg{0}
    \end{array}\right]\notag\\
    &\qquad\qquad\qquad\qquad\qquad\quad l=3
    \label{eq:log-op-c}
\end{align}
\ees
\end{widetext}
and gauge operators to reduce the weights of these logical operators can be represented in another two $(m-1)\times m$ \emph{gauge matrices} below:
\begin{widetext}
\bes
\label{eq:log-op-bar}
    \begin{align}
        \bar{\vect{X}} &= \left[\begin{array}{ccccccc}
        \txg{0}&\txg{0}&\mbf{1}&\txg{0}&\mbf{1}&\txg{0}&\txg{0}  \\
        \txg{0}&\txg{0}&\txg{0}&\mbf{1}&\txg{0}&\mbf{1}&\txg{0} \\
        \txg{0}&\txg{0}&\txg{0}&\txg{0}&\mbf{1}&\txg{0}&\txg{0}  \\
        \txg{0}&\txg{0}&\txg{0}&\txg{0}&\txg{0}&\mbf{1}&\txg{0}  \\
        \txg{0}&\txg{0}&\txg{0}&\txg{0}&\txg{0}&\txg{0}&\txg{0}  \\
        \txg{0}&\txg{0}&\txg{0}&\txg{0}&\txg{0}&\txg{0}&\txg{0}
    \end{array}\right],~ \bar{\vect{Z}} = \left[\begin{array}{ccccccc}
         \txg{0}&\txg{0}&\txg{0}&\txg{0}&\txg{0}&\txg{0}&\txg{0} \\
         \txg{0}&\txg{0}&\txg{0}&\txg{0}&\txg{0}&\txg{0}&\txg{0} \\
         \txg{0}&\mbf{1}&\txg{0}&\txg{0}&\txg{0}&\txg{0}&\txg{0} \\
         \txg{0}&\txg{0}&\mbf{1}&\txg{0}&\txg{0}&\txg{0}&\txg{0} \\
         \txg{0}&\mbf{1}&\txg{0}&\mbf{1}&\txg{0}&\txg{0}&\txg{0} \\
         \txg{0}&\txg{0}&\mbf{1}&\txg{0}&\mbf{1}&\txg{0}&\txg{0}
    \end{array}\right] \notag\\
    \label{eq:log-op-a-bar}
    &\qquad\qquad\qquad\qquad\qquad\quad l=1\\
    \bar{\vect{X}} &= \left[\begin{array}{ccccccc}
        \txg{0}&\txg{0}&\txg{0}&\txg{0}&\mbf{1}&\txg{0}&\txg{0}  \\
        \txg{0}&\txg{0}&\txg{0}&\txg{0}&\txg{0}&\mbf{1}&\txg{0} \\
        \txg{0}&\txg{0}&\txg{0}&\txg{0}&\txg{0}&\txg{0}&\txg{0}  \\
        \txg{0}&\txg{0}&\txg{0}&\txg{0}&\txg{0}&\txg{0}&\txg{0}  \\
        \txg{0}&\txg{0}&\txg{0}&\txg{0}&\txg{0}&\txg{0}&\txg{0}  \\
        \txg{0}&\txg{0}&\txg{0}&\txg{0}&\txg{0}&\txg{0}&\txg{0}
    \end{array}\right],~ \bar{\vect{Z}} = \left[\begin{array}{ccccccc}
         \txg{0}&\txg{0}&\txg{0}&\txg{0}&\txg{0}&\txg{0}&\txg{0} \\
         \txg{0}&\txg{0}&\txg{0}&\txg{0}&\txg{0}&\txg{0}&\txg{0} \\
         \txg{0}&\txg{0}&\txg{0}&\txg{0}&\txg{0}&\txg{0}&\txg{0} \\
         \txg{0}&\txg{0}&\txg{0}&\txg{0}&\txg{0}&\txg{0}&\txg{0} \\
         \txg{0}&\mbf{1}&\txg{0}&\txg{0}&\txg{0}&\txg{0}&\txg{0} \\
         \txg{0}&\txg{0}&\mbf{1}&\txg{0}&\txg{0}&\txg{0}&\txg{0}
    \end{array}\right] \notag\\
    &\qquad\qquad\qquad\qquad\qquad\quad l=2
    \label{eq:log-op-b-bar}
\end{align}
\ees
\end{widetext}
The $\bar{\vect{X}}$ and $\bar{\vect{Z}}$ for $l=3$ code are two zero matrices since all bare logical operators are $2$-local by construction.

We can construct a set of dressed, $2$-local logical operators using \cref{th:x-z-op-dressed}, where the bitstrings $\vect{x},\bar{\vect{x}}$ and $\vect{z},\bar{\vect{z}}$ are taken from each row of $\vect{X},\bar{\vect{X}}$ and $\vect{Z},\bar{\vect{Z}}$, respectively. An example is illustrated in \cref{fig:example-dressed}. 

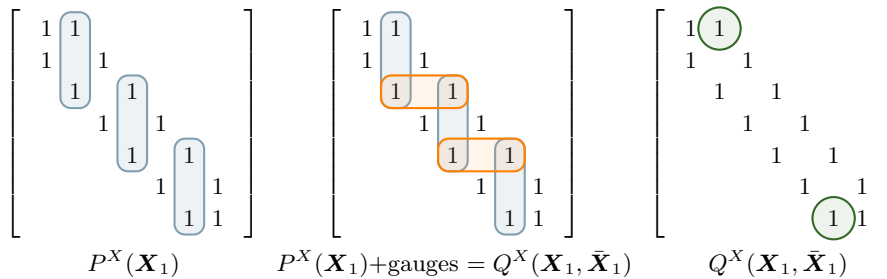
\begin{figure*}[t]
\centering
\begin{tikzpicture}
    \matrix (m)[
    matrix of math nodes,
    nodes in empty cells,
    left delimiter=\lbrack,
    right delimiter=\rbrack
    ] {
     1&1&&&&&\\
    1&&1&&&&\\
    &1&&1&&&\\
    &&1&&1&&\\
    &&&1&&1&\\
    &&&&1&&1\\
    &&&&&1&1\\
    } ;
    \node [below of= m-7-3, xshift=0.4cm, node distance = 0.5cm] {$P^X(\vect{X}_1)$}; 
    \begin{scope}[rounded corners,fill=blue!80!white,fill opacity=0.1,draw=blue,draw opacity=0.6,thick]
    \filldraw (m-1-2.north west)  rectangle (m-3-2.south east);
    \filldraw (m-3-4.north west)  rectangle (m-5-4.south east);
    \filldraw (m-5-6.north west)  rectangle (m-7-6.south east);
   \end{scope}
    \end{tikzpicture}
    \begin{tikzpicture}
    \matrix (m)[
    matrix of math nodes,
    nodes in empty cells,
    left delimiter=\lbrack,
    right delimiter=\rbrack
    ] {
     1&1&&&&&\\
    1&&1&&&&\\
    &1&&1&&&\\
    &&1&&1&&\\
    &&&1&&1&\\
    &&&&1&&1\\
    &&&&&1&1\\
    } ;
    \node [below of= m-7-3, xshift=0.4cm, node distance = 0.5cm] {$P^X(\vect{X}_1)+$gauges $=Q^X(\vect{X}_1,\bar{\vect{X}}_1)$}; 
    \begin{scope}[rounded corners,fill=blue!80!white,fill opacity=0.1,draw=blue,draw opacity=0.6,thick]
    \filldraw (m-1-2.north west)  rectangle (m-3-2.south east);
    \filldraw (m-3-4.north west)  rectangle (m-5-4.south east);
    \filldraw (m-5-6.north west)  rectangle (m-7-6.south east);
   \end{scope}
    \begin{scope}[rounded corners,fill=orange!80!white,fill opacity=0.1,draw=orange,thick]
    \filldraw (m-3-2.north west)  rectangle (m-3-4.south east);
    \filldraw (m-5-4.north west)  rectangle (m-5-6.south east);
   \end{scope}
    \end{tikzpicture}
    \begin{tikzpicture}
    \matrix (m)[
    matrix of math nodes,
    nodes in empty cells,
    left delimiter=\lbrack,
    right delimiter=\rbrack
    ] {
     1&1&&&&&\\
    1&&1&&&&\\
    &1&&1&&&\\
    &&1&&1&&\\
    &&&1&&1&\\
    &&&&1&&1\\
    &&&&&1&1\\
    } ;
    \node [below of= m-7-3, xshift=0.4cm, node distance = 0.5cm] { $Q^X(\vect{X}_1,\bar{\vect{X}}_1)$}; 
   \begin{scope}[fill=green!80!white,fill opacity=0.1,draw=green,thick]
    \filldraw (m-1-2)  circle (8pt);
    \filldraw (m-7-6)  circle (8pt);
   \end{scope}
    \end{tikzpicture}
    \caption{Construction of the dressed logical operator $\hat{X}_1$ for the $[[14,6,6,2]]$ code. $\vect{X}_1$ from \cref{eq:log-op-a} has 1-entries in the 2, 4, and 6 positions, corresponding to those columns of $X$ operators in the $A$ matrix (blue boxes). This is $P^X(\vect{X}_1)$, a six-local operator. Next, use $\bar{\vect{X}}_1$ from \cref{eq:log-op-a-bar}, which has 1-entries in the 3 and 5 positions, corresponding to those rows of $X$ gauges in the $A$ matrix (orange boxes). The blue and orange operators make up the $Q^X(\vect{X}_1,\bar{\vect{X}}_1)$ operator.  After multiplying all the $X$ operators, double operators on the same qubit cancel. Hence, it simplifies to $2$-local dressed logical operator $\hat{X}_1$ (green circles).}
    \label{fig:example-dressed}
\end{figure*}

\begin{mydefinition}[Dressed, $2$-local logical operators for trapezoid codes]
\label{con:dressed2local}
Dressed logical operators are constructed according to \cref{th:x-z-op-dressed}, i.e., $\hat{X}_i=Q^X(\vect{X}_i,\bar{\vect{X}}_i)$ and $\hat{Z}_i=Q^Z(\vect{Z}_i,\bar{\vect{Z}}_i)$ where $\vect{X}_i,\bar{\vect{X}}_i,\vect{Z}_i$ and $\bar{\vect{Z}}_i$ are the $i$'th row of $\vect{X},\bar{\vect{X}},\vect{Z}$ and $\bar{\vect{Z}}$, respectively. By construction, $Q^X(\vect{X}_i,\bar{\vect{X}}_i)$ ($Q^Z(\vect{Z}_i,\bar{\vect{Z}}_i)$) is $P^X(\vect{X}_i)$ ($P^Z(\vect{Z}_i)$) multiplied by appropriate gauge operators that reduce it to $2$-local.
\end{mydefinition}

\cref{tab:notations} summarizes frequently occurring notations we use in this work. Note that the reduction to a $2$-local operator is possible as stated since, by \cref{def:gauge}, every pair of qubits $c,c'$ located in the same row of $A$ contributes a $2$-local gauge generator $X_cX_{c'}$. For $l=k$, the logical operator is $2$-local by construction, hence, no further reduction is needed. Without loss of generality, the indexing in the proofs below will be for odd-$m$ codes. 

\begin{mylemma}
\label{lemma:valid_logical}
    The set of logical operators given in \cref{con:dressed2local} is valid, i.e., for all $i,j$, (1) $\hat{X}_i$ and $\hat{Z}_i$ commute with the two stabilizers, and (2) $\{\hat{X}_i,\hat{Z}_i\}=0$ and $[\hat{X}_i,\hat{Z}_j]=0$ when $i\neq j$.
\end{mylemma}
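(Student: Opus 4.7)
The plan is to invoke \cref{prop:dressed-commutation}, which reduces every commutation check between dressed operators to evaluating an $\mathbb{F}_2$ scalar of the form $\vect{z}^T A \vect{x} + \bar{\vect{z}}^T A^T \bar{\vect{x}}$; I handle claims (1) and (2) in turn. For (1), observe from \cref{lem:stab} that the two stabilizers are pure column/row products with trivial gauge parts, $S_X = Q^X(\vect{1},\vect{0})$ and $S_Z = Q^Z(\vect{1},\vect{0})$. Since $\hat{X}_i$ and $S_X$ are both purely of $X$-type, and $\hat{Z}_i$ and $S_Z$ both purely of $Z$-type, those two commutators vanish trivially. The remaining checks $[\hat{X}_i,S_Z]=0$ and $[\hat{Z}_i,S_X]=0$ amount by \cref{prop:dressed-commutation} to $\vect{1}^T A\vect{X}_i \equiv 0$ and $\vect{Z}_i^T A\vect{1}\equiv 0 \pmod 2$, both of which hold because every row and every column of $A$ carries an even number of $1$-entries, as one reads off directly from \cref{lem:define-A}.

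For (2), the task reduces to showing
\begin{equation}
\vect{Z}_j^T A\, \vect{X}_i + \bar{\vect{Z}}_j^T A^T\, \bar{\vect{X}}_i \equiv \delta_{ij} \pmod 2 .
\end{equation}
I would decompose $A$ into its four structural pieces listed in \cref{lem:define-A}---the superdiagonal, the lower diagonal at offset $2l$, the first column, and the last row---and then unpack $A\vect{X}_i$ and $A^T\bar{\vect{X}}_i$ piece by piece. Because $\vect{X}_i$ is supported on the arithmetic progression $\{i+1+2lr\}_{r=0}^{J_i}$ from \cref{eq:logicalX} and $\vect{Z}_j$ on $\{j-2lr\}_{r=0}^{\bar{J}_j}$ from \cref{eq:logicalZ}, the superdiagonal and lower-diagonal contributions generate a telescoping cancellation along each progression, so the bare-bare term $\vect{Z}_j^T A\vect{X}_i$ collapses to a single surviving residue that is nonzero precisely when the two progressions meet, i.e.\ when $j=i$. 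The role of the gauge term $\bar{\vect{Z}}_j^T A^T\bar{\vect{X}}_i$ is to cancel the spurious residues produced by the remaining first-column and last-row pieces of $A$; accordingly, the $\bar{\vect{X}}$ and $\bar{\vect{Z}}$ entries are populated only in the rows where such a boundary correction is required, and vanish altogether when $l=k$ (in which case the bare operators are already $2$-local, as in \cref{eq:log-op-c}).

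The main obstacle is the boundary bookkeeping: one must verify that the telescoping is exact at both ends of each arithmetic progression, and that the gauge entries exemplified in \cref{eq:log-op-bar} exactly absorb the residues stemming from the first column of $A$ (for the $Z$-side) and from the last row of $A$ (for the $X$-side). The odd-$m$ and even-$m$ cases require the same argument with indices shifted by one, so, as announced in \cref{sec:props}, it suffices to carry out the analysis in the odd-$m$ case. A useful parallel check that I would also include is to write $\hat{X}_i$ and $\hat{Z}_j$ explicitly in $2$-local form by cancelling doubled Paulis (in the manner of \cref{fig:example-dressed}) and then simply count the overlap of their physical qubit supports: this overlap turns out to be exactly one qubit when $i=j$ and zero otherwise, from which the desired (anti)commutations follow immediately by parity of Pauli overlap.
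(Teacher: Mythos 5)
Your treatment of part (1) is correct and is a genuinely different (and arguably cleaner) route than the paper's: you reduce everything to $A\vect{1}=\vect{0}$ and $\vect{1}^{T}A=\vect{0}^{T}$ over $\mathbb{F}_2$, i.e.\ to the even row/column weights of $A$, whereas the paper instead uses the explicit $2$-local form of the dressed operators and the fact that an $X$-pair and a $Z$-pair overlapping on two qubits commute. Both work.

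For part (2), however, your main argument rests on a false structural claim. It is not true that the superdiagonal and lower-diagonal pieces of $A$ make the bare term $\vect{Z}_j^{T}A\vect{X}_i$ telescope to $\delta_{ij}$, nor that the gauge term only absorbs residues from the first column and last row. Concretely, take $l=1$, $m=7$, $i=1$, $j=3$: from \cref{eq:log-op-a}, $\vect{X}_1=0101010$ and $\vect{Z}_3=1010000$, so $P^{X}(\vect{X}_1)=C_2C_4C_6$ and $P^{Z}(\vect{Z}_3)=R_1R_3$ overlap on the three qubits $(1,2)$, $(3,4)$ and $(3,2)$; hence $\vect{Z}_3^{T}A\vect{X}_1=1\neq\delta_{13}$, and the surviving odd overlap comes from the \emph{lower-diagonal} qubit $(3,2)$ (the two superdiagonal overlaps cancel mod $2$), not from the first column or last row. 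It is the gauge term $\bar{\vect{Z}}_3^{T}A^{T}\bar{\vect{X}}_1=1$ (with $\bar{\vect{X}}_1=0010100$, $\bar{\vect{Z}}_3=0100000$ from \cref{eq:log-op-a-bar}) that restores commutation. So the bookkeeping you propose would not close as described: the bare operators of \cref{eq:logicalXZ} do not by themselves satisfy the canonical commutation relations, and the gauge corrections are doing real work on the diagonal pieces of $A$. The ``parallel check'' you relegate to the end---cancel doubled Paulis to get $\hat{X}_i=X_{i,i+1}X_{m,q_i}$ and $\hat{Z}_j=Z_{j,j+1}Z_{\bar{q}_j,1}$, then observe the supports meet exactly on $(i,i+1)$ when $i=j$ and are disjoint otherwise---is in fact the paper's entire proof of (2), and it is the argument that actually works; it should be promoted from a sanity check to the main line, at which point the telescoping machinery can be dropped.
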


\begin{proof}
    To prove (1), recall from \cref{lem:stab} that $S_X$ is a product of all $4k+2l$ physical $X$ operators. Every dressed $\hat{Z}_i$ has two physical $Z$ operators. Let $\hat{Z}_i=Z_{i_1}Z_{i_2}$. We have $[S_X,\hat{Z}_i]=[X_{i_1}X_{i_2},Z_{i_1}Z_{i_2}]=0$. Using a similar argument for $S_Z$, it follows that $[S_X,\hat{Z}_i]=[S_Z,\hat{X}_i]=0$.

    To prove (2), consider an operator $P^X(\vect{X}_i)$ and the corresponding $A$ matrix. Each pair of columns of $P^X(\vect{X}_i)$, $(a+1)$'th column and $(b+1)$'s column with $|a-b|=2l$ has two $X$ operators in the same row. Multiplying these rows by $X$-type gauge operators and canceling out $X$ operators on the same qubits, we are left with one $X$ operator in the $(i+1)$'th column, $X_{i,i+1}$, and another $X$ operator in the $q_i$'th column, $X_{m,q_i}$, where $q_i=i+1+2lJ_i$. Thus, $\hat{X}_i = X_{i,i+1}X_{m,q_i}$. 
    
    A similar argument holds for $\hat{Z}_i$. Multiplying $Z$-type gauge (column) operators and canceling out $Z$ operators on the same qubits, we are left with one $Z$ operator in the $i$'th row, $Z_{i,i+1}$, and another $Z$ operator in the $\bar{q}_i$'th row, $Z_{\bar{q}_i,1}$, where $\bar{q}_i = i-2l\bar{J}_i$. Thus, $\hat{Z}_i = Z_{i,i+1}Z_{\bar{q}_i,1}$. 
    
    It is straightforward to see that $\{\hat{X}_i,\hat{Z}_i\}=0$ since they overlap on exactly one qubit, $(i,i+1)$, and that $[\hat{X}_i,\hat{Z}_j]=0$ when $i\neq j$ since the two operators do not overlap.
\end{proof}

We let $A\eff B$ denote $A=B$ up to multiplication by gauge operators, i.e., $A$ and $B$ have the same operation on the logical qubits. 

\begin{mylemma}
\label{lem:2_local_logical}
    The set of dressed logical operators given above is optimal in terms of physical locality, i.e., for all $i$ and $j$, the operators $\hat{X}_i,\hat{Z}_i,\hat{X}_i\hat{X}_j$, and $\hat{Z}_i\hat{Z}_j$ are $2$-local.
\end{mylemma}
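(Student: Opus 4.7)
The plan is to leverage the explicit two-qubit forms of the dressed logical operators already derived in the proof of \cref{lemma:valid_logical}, namely $\hat{X}_i \eff X_{i,i+1}X_{m,q_i}$ and $\hat{Z}_i \eff Z_{i,i+1}Z_{\bar{q}_i,1}$ with $q_i = i+1+2lJ_i$ and $\bar{q}_i = i - 2l\bar{J}_i$. These formulas immediately settle the single-operator claim: each of $\hat{X}_i$ and $\hat{Z}_i$ acts as a product of exactly two physical Pauli operators and is therefore $2$-local. So the only real content of this lemma lies in the two-operator products $\hat{X}_i\hat{X}_j$ and $\hat{Z}_i\hat{Z}_j$ for $i \neq j$.

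For $\hat{X}_i\hat{X}_j$, I would expand as $X_{i,i+1} X_{m,q_i} X_{j,j+1} X_{m,q_j}$ and observe that the two ``anchor'' qubits $(m,q_i)$ and $(m,q_j)$ both live in row $m$ of $A$. By \cref{def:gauge}, any such pair in a common row contributes an $X$-type gauge generator (reducing to the identity when $q_i = q_j$), so this pair can be absorbed modulo $\mathcal{G}$. One then concludes $\hat{X}_i\hat{X}_j \eff X_{i,i+1}X_{j,j+1}$, which is genuinely $2$-local because these two remaining qubits sit in distinct rows $i$ and $j$ of $A$. The argument for $\hat{Z}_i\hat{Z}_j$ is symmetric: the anchors $(\bar{q}_i,1)$ and $(\bar{q}_j,1)$ both live in column $1$ of $A$, so by \cref{def:gauge} their product is a $Z$-type gauge generator, yielding $\hat{Z}_i\hat{Z}_j \eff Z_{i,i+1}Z_{j,j+1}$.

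The main conceptual step, rather than the main obstacle, is recognizing that the choices of $\bar{\vect{X}}$ and $\bar{\vect{Z}}$ in \cref{eq:log-op-bar} were designed precisely so that every dressed $\hat{X}_i$ has its second ``tail'' anchored in the last row of $A$, and every dressed $\hat{Z}_i$ has its tail anchored in the first column. That is what guarantees that any two such anchors of the same type lie on a common line of $A$ and hence cancel modulo a single gauge generator. With this observation, the even-$m$ case follows by the same indexing convention flagged before \cref{lemma:valid_logical}, and the $l=k$ case is trivial: the bare logical operators are already $2$-local, $\bar{\vect{X}}=\bar{\vect{Z}}=0$, and no gauge reduction is needed at all.
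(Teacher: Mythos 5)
Your proposal is correct and follows essentially the same route as the paper: it reuses the explicit forms $\hat{X}_i = X_{i,i+1}X_{m,q_i}$ and $\hat{Z}_i = Z_{i,i+1}Z_{\bar{q}_i,1}$ from \cref{lemma:valid_logical} and then absorbs the two anchor operators into a single gauge generator because they lie in a common row (respectively column) of $A$, per \cref{def:gauge}. If anything, your phrasing is slightly more robust than the paper's, which argues via the parity-specific values $q_i \in \{m-1, m\}$ (literally valid only for $l=1$), whereas your observation that every anchor lands somewhere in the last row (or first column) covers all $l$ uniformly.
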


Informally, the reason is that any operation on the gauge qubits can be ignored or discarded. Thus, even though the representation of a logical operator such as $\hat{X}_i\hat{X}_j$ is a $4$-local physical operator, two of the physical operators act on the gauge qubits and can hence be ignored. The formal proof follows.

\begin{proof}
    In \cref{lemma:valid_logical}, we showed that for all $i, \hat{X}_i$ and $\hat{Z}_i$ are $2$-local. Now, we  show that for all $i$ and $j, \hat{X}_i\hat{X}_j$ and $\hat{Z}_i\hat{Z}_j$ are also $2$-local. First, using \cref{eq:J_i} note that $q_i=i+1+2lJ_i=m-1$ when $i$ is odd and $q_i=m$ when $i$ is even. When $i$ and $j$ have the same parity, $\hat{X}_i\hat{X}_j=X_{i,i+1}X_{j,j+1}$ since $\hat{X}_i = X_{i,i+1}X_{m,q_i}$ and the $X_{m,q_i}$'s from the two terms cancel. When $i$ and $j$ have a different parity, we have $\hat{X}_i\hat{X}_j=X_{i,i+1}X_{j,j+1}X_{m,q_i}X_{m,q_j}\eff X_{i,i+1}X_{j,j+1}$ since $X_{m,q_i}X_{m,q_j}$ constitutes a gauge operator (as in \cref{def:gauge}). A similar argument holds for $\hat{Z}_i\hat{Z}_j$. Hence, $\hat{Z}_i\hat{Z}_j\eff Z_{i,i+1}Z_{j,j+1}$. 
\end{proof}

The logical operators constructed above are dressed logical operators. Encoding the problem Hamiltonian using dressed logical operators is sufficient to protect the computational space in the large penalty limit~\cite{marvian2019robust}. We show in \cref{lem:bare-not-2} that bare logical operators cannot all be $2$-local.

\begin{mylemma}
\label{lem:bare-not-2}
    Bare logical operators $\bar{X}$ and $\bar{Z}$ cannot all be $2$-local for a code in the trapezoid family with $l<k$.
\end{mylemma}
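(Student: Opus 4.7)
The plan is to reduce the statement to a counting problem on the matrix $A$. I will show that any putative symplectic basis of $2$-local bare logicals corresponds to an $(m-1)\times(m-1)$ permutation submatrix of $A$, then show by a simple inclusion-exclusion count that no such submatrix exists when $l<k$.

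First, I would characterize the $2$-local bare logical operators. Writing any $X$-type bare logical as $P^X(\vect{x})=\prod_j C_j^{x_j}$ and noting that the columns of $A$ have disjoint qubit supports, the Pauli weight of $P^X(\vect{x})$ equals $\sum_{j:x_j=1}d_j^c$. Since all column weights are $\ge 2$, the weight-$2$ bare logicals must correspond to $\vect{x}=e_j$ with $d_j^c=2$, i.e., single-column operators $C_j$ ($j\ne 1$ when $l\ge 2$; any $j$ when $l=1$). Multiplying by $S_X=\prod_j C_j$ from \cref{lem:stab} yields the equivalent logical $\prod_{j'\ne j}C_{j'}$ with weight $n-d_j^c\ge 4k>2$, so this is never $2$-local. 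A symmetric argument shows the $2$-local $Z$-type bare logicals are exactly the single-row operators $R_i$ with $d_i^r=2$.

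Second, I would argue that a basis of $2$-local bare $X$-logicals must consist of $m-1=2k$ distinct single-column operators $\{C_{j_a}\}_{a=1}^{2k}$ for some $(m-1)$-subset $J\subset\{1,\dots,m\}$, and similarly for $\{R_{i_a}\}$ on the $Z$ side with some index set $I$. Any nontrivial $\mathbb{F}_2$-combination of distinct $C_j$'s has weight at least $4$ (again by disjointness of column supports), so no nontrivial basis change can preserve $2$-locality. The canonical anticommutation $\bar{X}_a\bar{Z}_b=(-1)^{\delta_{ab}}\bar{Z}_b\bar{X}_a$ together with the commutation rule $C_{j_a}R_{i_b}=(-1)^{A_{i_b,j_a}}R_{i_b}C_{j_a}$ from \cref{eq:row_col_comm} therefore forces $A[I,J]$ to be a permutation matrix, in particular containing exactly $m-1$ ones.

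Third, I would count the ones in any $(m-1)\times(m-1)$ submatrix. Let $r_0$ and $c_0$ be the unique omitted row and column. By inclusion-exclusion, the count equals $|A|-d_{r_0}^r-d_{c_0}^c+A_{r_0,c_0}$. With $|A|=2m+2l-2$, $d_{r_0}^r,d_{c_0}^c\le 2l$, and $A_{r_0,c_0}\ge 0$, this is at least $2m+2l-2-4l=2m-2l-2$. The inequality $2m-2l-2>m-1$ simplifies to $m>2l+1$, which for both $m=2k+1$ and $m=2k$ holds precisely when $l<k$ (in integers). Hence for $l<k$, any $(m-1)\times(m-1)$ submatrix of $A$ contains strictly more than $m-1$ ones and cannot be a permutation matrix, ruling out any symplectic basis of $2$-local bare logicals. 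The main subtlety is step two, where one must use disjointness of the column (and row) supports to preclude nontrivial basis changes from maintaining $2$-locality; once that is in place, the counting argument in step three closes the proof cleanly.
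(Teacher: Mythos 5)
Your proposal is correct, and its first two steps coincide in substance with the paper's: you both show that a $2$-local bare logical must be a single row or column operator of weight $2$ (no gauge cancellation being available, and multiplication by the weight-$n$ stabilizers only making things worse), and that a full set of $m-1$ independent $\bar X$'s and $\bar Z$'s would therefore have to be $m-1$ distinct columns paired with $m-1$ distinct rows, forcing the corresponding $(m-1)\times(m-1)$ submatrix $A[I,J]$ to be a permutation matrix. Where you diverge is the final step. The paper closes by exhibiting a concrete commutation violation — with the identity-padded operator matrix, $\bar{X}_1$ and $\bar{X}_{2l+1}$ both anticommute with $\bar{Z}_{2l+1}$ — pointing to \cref{eq:log-op-c} for illustration. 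You instead count: by inclusion–exclusion the omitted row and column remove at most $4l$ ones from $|A|=2m+2l-2$, leaving at least $2m-2l-2$ ones, which exceeds $m-1$ exactly when $l<k$ (for both parities of $m$), so $A[I,J]$ can never be a permutation matrix. Both arguments are valid; yours is somewhat more uniform and self-contained (it does not single out a particular pair of indices, and it makes the role of the hypothesis $l<k$ quantitatively explicit), and you are also more careful than the paper about two points the paper leaves implicit — that multiplying by $S_X$ or $S_Z$ cannot rescue $2$-locality, and that no nontrivial $\mathbb{F}_2$-recombination of single-column operators stays $2$-local because column supports are disjoint. The paper's version, in exchange, gives the reader an explicit witness of the failure that connects to its displayed examples.
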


\begin{proof}
First, we show that an operator is $2$-local if the corresponding bitstring has Hamming weight $1$, which forms rows in the operator matrices of size $(m-1)\times m$. Next, we show that the operator matrices of rank $m-1$ consisting of only two local operators must be an $(m-1)\times (m-1)$ identity padded by an all-zero column up to permutation, as in \cref{eq:log-op-c}. Finally, we show that the operators found in these matrices do not satisfy the commutation condition for logical operators of codes with $l<k$.

First, a $1$-entry in the bitstring in the operator matrix means that a column or a row of the $A$ matrix contributes to the corresponding logical operator. From our construction, each row or column of the $A$ matrix has at least two physical qubits. Since no gauge cancellation is allowed for bare logical operators, a bitstring in 
the operator matrix corresponding to a $2$-local operator can only have Hamming weight $1$.

Next, as shown in the previous paragraph, there is one $1$-entry in each row of the $(m-1)\times m$ operator matrix. Since we need $m-1$ logical operators, the rank of the operator matrix must be $m-1$, which implies that there is one $1$-entry in each column of the operator matrix. Thus, the only way to construct the operator matrix is to have an $(m-1)\times(m-1)$ identity padded by a $(m-1)\times 1$ column of zeros up to permutations. This is because each of the $m-1$ rows and columns must be linearly independent and have Hamming weight $1$.

Finally, if we were to use the operator matrix found in the previous paragraph to construct the logical operators for $l<k$ codes, we could always find two $\bar{X}$ operators that anticommute with the same $\bar{Z}$ operator, which is a contradiction. For example, $\bar{X}_1$ and $\bar{X}_{2l+1}$ both anticommute with $\bar{Z}_{2l+1}$, as seen from using logical operators in \cref{eq:log-op-c} for $l=1$ and $l=2$ codes in \cref{eq:odd-m-A}.

With these conditions, one or more bare logical operators cannot be $2$-local for any $l<k$ code that satisfies the commutation relation.
\end{proof}

In order to find the dressed logical operators, a brute-force approach is to search the entire space of the logical operators multiplied by gauge operators. However, this incurs an exponential cost in $m$. We provide a method for reducing the search space dimension to polynomial in $m$ in \cref{app:reduce_log}.

\subsection{Hamiltonian penalty}
\label{sec:ham-penalty}

The goal is to protect a computation performed by a system Hamiltonian $H_S(t)$ in \cref{eq:H-s} 
against the system-bath interaction described by $I_S\ox H_B+H_{SB}$ where $H_{SB}=\sum_\alpha \sigma_\alpha \otimes B_{\alpha}$ and $\sigma_\alpha$ is a $1$-local system operator (i.e., the $\alpha$ index runs over the physical qubits and $x,y,z$ in the Pauli basis). $H_B$ is the pure-bath Hamiltonian.

The encoded Hamiltonian, $\bar{H}_S(t)$, is constructed by replacing every operator in $H_S(t)$ with the corresponding bare logical operators, i.e.,
\begin{align}
\label{eq:H-bar}
    \bar{H}_S(t)&=\sum_i a_i \bar{X}_i +\sum_i b_i \bar{Z}_i \notag\\
    &+\sum_{(i,j)\in E^{XX}_S} c_{ij} \bar{X}_i\bar{X}_j +\sum_{(i,j)\in E^{ZZ}_S} d_{ij} \bar{Z}_i\bar{Z}_j ,
\end{align}
where $\bar{X}_i$ and $\bar{Z}_i$ are bare logical operators. In this manner, $\bar{H}_S(t)$ is universal for AQC in the code subspace.

We refer to the reducible logical operators as those whose interactions can be reduced to $2$-local upon multiplication by one or more gauge operators. We multiply these operators in $\bar{H}_S(t)$ by appropriate gauge operators to reduce the many-body interactions down to $2$-local and write the new Hamiltonian as $\hat{H}_S(t)$:
\bes
\label{eq:H-hat}
\begin{align}
    &\hat{H}_S(t)
    =\sum_i a_i \bar{X}_i g_i^x +\sum_i b_i \bar{Z}_i g_i^z\notag\\
    \label{eq:H-hat-a}
    &+\sum_{(i,j)\in E^{XX}_S} c_{ij} \bar{X}_i\bar{X}_j g^x_{ij} +\sum_{(i,j)\in E^{ZZ}_S} d_{ij} \bar{Z}_i\bar{Z}_j g^z_{ij} ,\\
        &=\sum_{(i,j)\in E^{X}} a'_{(i,j)} X_{i}X_{j} +\sum_{(i,j)\in E^{Z}} b'_{(i,j)} Z_{i}Z_{j}  \notag\\
        \label{eq:H-hat-b}
        & +\sum_{(i,j)\in E^{XX}} c'_{(i,j)} X_iX_j+\sum_{(i,j)\in E^{ZZ}} d'_{(i,j)} Z_iZ_j.
\end{align}
\ees
In \cref{eq:H-hat-b}, $E^{X}$, $E^{Z}$ , $E^{XX}$, and $E^{ZZ}$ represent the sets of edges in the induced graph $G$ supporting the $\hat{X}$, $\hat{Z}$, $\hat{X}\hat{X}$, and $\hat{Z}\hat{Z}$ interactions, respectively. The operators $X$ and $Z$ are physical operators. This means that $\hat{H}_S(t)$ can be implemented using only $2$-local interactions. The primed coefficients ($a'_i$, etc.) are related to the unprimed ones ($a_i$, etc.) via $\alpha_i$ in $\Pi_0g_i\Pi_0=\alpha_i\Pi_0$, where $\Pi_0$ is the projector to the ground subspace of $H_P$, as explained in detail in \cref{app:penalty}. 

A penalty term $\eps_PH_P$ is added to the encoded Hamiltonian to protect the computational space. The penalty Hamiltonian, $H_P$, is constructed using the generators $\mathcal{G'}$ of the gauge group $\mathcal{G}$, i.e.,
\begin{equation}
    H_P=-\sum_{g_i\in \mathcal{G'}}g_i.
\end{equation} 
The ground subspace of the penalty Hamiltonian is in the codespace~\cite{burton2018spectra}. Thus, the penalty Hamiltonian penalizes transitions into or out of the codespace. In general, $[\hat{H}_S(t),H_P]\neq 0$ due to the gauge terms in \cref{eq:H-hat}, which may interfere with the computation in the codespace. However, Ref.~\cite{marvian2019robust} showed that by rescaling the coefficients, the coupling effect can be compensated. We use a slightly modified version; details can be found in \cref{app:penalty}.
The total Hamiltonian now is:
\begin{equation}
\label{eq:H_total}
    H(t)=[\hat{H}_S(t)+\eps_P H_P]\ox I_B+I_S\ox H_B+H_{SB}.
\end{equation}

The `penalty gap' is the energy difference between the ground state and the first excited state of $H_P$. The separation between the codespace and the lowest excited subspace depends on this gap, the energy penalty $\eps_P$, and the geometry of the induced graph. We calculate the penalty gap for various code parameters for the induced graphs shown in \cref{tikz:gauge_graphs-xz-color}, as well as for additional induced graphs with other values of $k$ and $l$. The calculations and detailed proofs for the numerics in this section are provided in \cref{app:penalty}. 

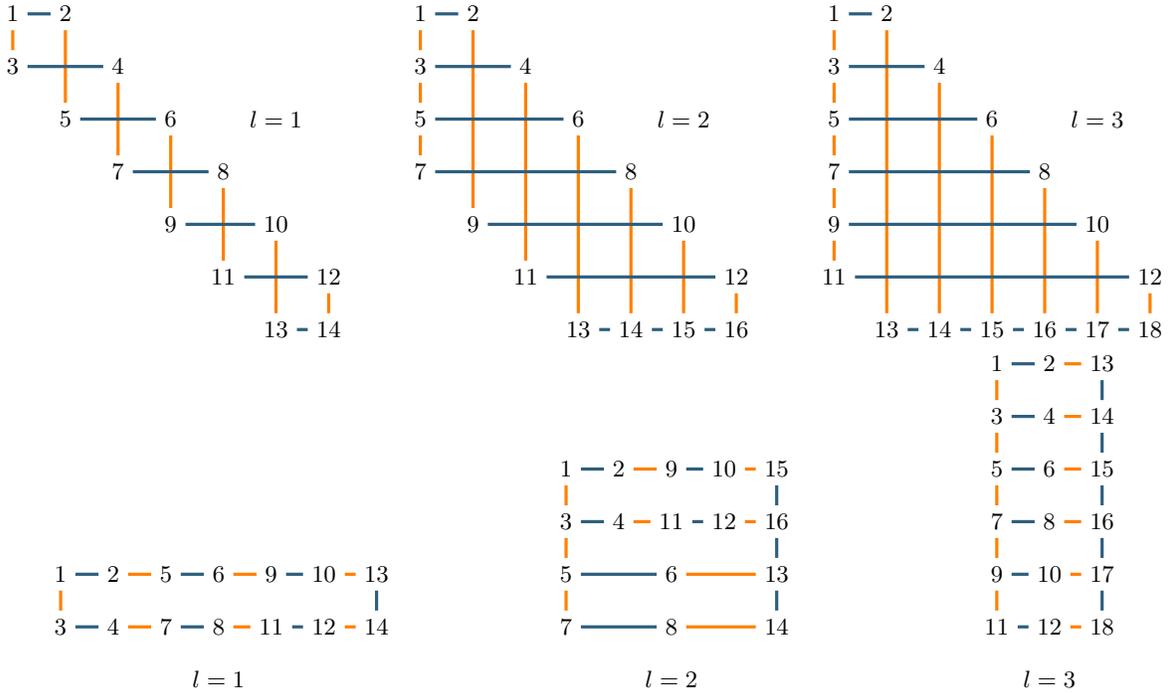
\begin{figure*}[t]
\centering
\begin{tikzpicture}[scale=0.7]
        \begin{scope}
            \node (1) at (0,0) {1};
            \node (2) at (1,0) {2};
            \node (3) at (0,-1) {3};
            \node (4) at (2,-1) {4};
            \node (5) at (1,-2) {5};
            \node (6) at (3,-2) {6};
            \node (7) at (2,-3) {7};
            \node (8) at (4,-3) {8};
            \node (9) at (3,-4) {9};
            \node (10) at (5,-4) {10};
            \node (11) at (4,-5) {11};
            \node (12) at (6,-5) {12};
            \node (13) at (5,-6) {13};
            \node (14) at (6,-6) {14};
        \end{scope}

        \begin{scope}[draw=orange, very thick]
            \draw[-] (10) -- (13);
            \draw[-] (12) -- (14);
            \draw[-] (2) -- (5);
            \draw[-] (4) -- (7);
            \draw[-] (6) -- (9);
            \draw[-] (8) -- (11);
            \draw[-] (1) -- (3);
        \end{scope}

        \begin{scope}[draw=blue, very thick]
            \draw[-] (1) -- (2);
            \draw[-] (3) -- (4);
            \draw[-] (5) -- (6);
            \draw[-] (7) -- (8);
            \draw[-] (9) -- (10);
            \draw[-] (11) -- (12);
            \draw[-] (13) -- (14);
        \end{scope}
        
        \node (l) at (5,-2) {$l=1$} ;
    \end{tikzpicture}
    \qquad
    \begin{tikzpicture}[scale=0.7]
        \begin{scope}
            \node (1) at (0,0) {1};
            \node (2) at (1,0) {2};
            \node (3) at (0,-1) {3};
            \node (4) at (2,-1) {4};
            \node (5) at (0,-2) {5};
            \node (6) at (3,-2) {6};
            \node (7) at (0,-3) {7};
            \node (8) at (4,-3) {8};
            \node (9) at (1,-4) {9};
            \node (10) at (5,-4) {10};
            \node (11) at (2,-5) {11};
            \node (12) at (6,-5) {12};
            \node (13) at (3,-6) {13};
            \node (14) at (4,-6) {14};
            \node (15) at (5,-6) {15};
            \node (16) at (6,-6) {16};
        \end{scope}

        \begin{scope}[draw=orange, very thick]
            \draw[-] (2) -- (9);
            \draw[-] (4) -- (11);
            \draw[-] (10) -- (15);
            \draw[-] (12) -- (16);
            \draw[-] (6) -- (13);
            \draw[-] (8) -- (14);
            \draw[-] (1) -- (3) -- (5) -- (7);
        \end{scope}

        \begin{scope}[draw=blue, very thick]
            \draw[-] (1) -- (2);
            \draw[-] (3) -- (4);
            \draw[-] (5) -- (6);
            \draw[-] (7) -- (8);
            \draw[-] (9) -- (10);
            \draw[-] (11) -- (12);
            \draw[-] (13) -- (14) -- (15) -- (16);
        \end{scope}
        
        \node (l) at (5, -2) {$l=2$} ;
    \end{tikzpicture}
    \qquad
    \begin{tikzpicture}[scale=0.7]
        \begin{scope}
            \node (1) at (0,0) {1};
            \node (2) at (1,0) {2};
            \node (3) at (0,-1) {3};
            \node (4) at (2,-1) {4};
            \node (5) at (0,-2) {5};
            \node (6) at (3,-2) {6};
            \node (7) at (0,-3) {7};
            \node (8) at (4,-3) {8};
            \node (9) at (0,-4) {9};
            \node (10) at (5,-4) {10};
            \node (11) at (0,-5) {11};
            \node (12) at (6,-5) {12};
            \node (13) at (1,-6) {13};
            \node (14) at (2,-6) {14};
            \node (15) at (3,-6) {15};
            \node (16) at (4,-6) {16};
            \node (17) at (5,-6) {17};
            \node (18) at (6,-6) {18};
        \end{scope}

        \begin{scope}[draw=orange, very thick]
            \draw[-] (2) -- (13);
            \draw[-] (4) -- (14);
            \draw[-] (6) -- (15);
            \draw[-] (8) -- (16);
            \draw[-] (10) -- (17);
            \draw[-] (12) -- (18);
            \draw[-] (1) -- (3) -- (5) -- (7) -- (9) -- (11);
        \end{scope}

        \begin{scope}[draw=blue, very thick]
            \draw[-] (1) -- (2);
            \draw[-] (3) -- (4);
            \draw[-] (5) -- (6);
            \draw[-] (7) -- (8);
            \draw[-] (9) -- (10);
            \draw[-] (11) -- (12);
            \draw[-] (13) -- (14) -- (15) -- (16) -- (17) -- (18);
        \end{scope}
        
        \node (l) at (5,-2) {$l=3$} ;
    \end{tikzpicture}
    \qquad
\begin{tikzpicture}[scale=0.7]
        \begin{scope}
            \node (1) at (0,0) {1};
            \node (2) at (1,0) {2};
            \node (5) at (2,0) {5};
            \node (6) at (3,0) {6};
            \node (9) at (4,0) {9};
            \node (10) at (5,0) {10};
            \node (13) at (6,0) {13};
            
            \node (3) at (0,-1) {3};
            \node (4) at (1,-1) {4};
            \node (7) at (2,-1) {7};
            \node (8) at (3,-1) {8};
            \node (11) at (4,-1) {11};
            \node (12) at (5,-1) {12};
            \node (14) at (6,-1) {14};
        \end{scope}

        \begin{scope}[draw=orange, very thick]
            \draw[-] (10) -- (13);
            \draw[-] (12) -- (14);
            \draw[-] (2) -- (5);
            \draw[-] (4) -- (7);
            \draw[-] (6) -- (9);
            \draw[-] (8) -- (11);
            \draw[-] (1) -- (3);
        \end{scope}

        \begin{scope}[draw=blue, very thick]
            \draw[-] (1) -- (2);
            \draw[-] (3) -- (4);
            \draw[-] (5) -- (6);
            \draw[-] (7) -- (8);
            \draw[-] (9) -- (10);
            \draw[-] (11) -- (12);
            \draw[-] (13) -- (14);
        \end{scope}
        
        \node (l) at (3,-2) {$l=1$} ;
    \end{tikzpicture}
    \qquad\qquad\qquad
    \begin{tikzpicture}[scale=0.7]
        \begin{scope}
            \node (1) at (0,1) {1};
            \node (2) at (1,1) {2};
            \node (9) at (2,1) {9};
            \node (10) at (3,1) {10};
            \node (15) at (4,1) {15};
            
            \node (3) at (0,0) {3};
            \node (4) at (1,0) {4};
            \node (11) at (2,0) {11};
            \node (12) at (3,0) {12};
            \node (16) at (4,0) {16};
            
            \node (5) at (0,-1) {5};
            \node (6) at (2,-1) {6};
            \node (13) at (4,-1) {13};
            
            \node (7) at (0,-2) {7};
            \node (8) at (2,-2) {8};
            \node (14) at (4,-2) {14};
        \end{scope}

        \begin{scope}[draw=orange, very thick]
            \draw[-] (2) -- (9);
            \draw[-] (4) -- (11);
            \draw[-] (10) -- (15);
            \draw[-] (12) -- (16);
            \draw[-] (6) -- (13);
            \draw[-] (8) -- (14);
            \draw[-] (1) -- (3) -- (5) -- (7);
        \end{scope}

        \begin{scope}[draw=blue, very thick]
            \draw[-] (1) -- (2);
            \draw[-] (3) -- (4);
            \draw[-] (5) -- (6);
            \draw[-] (7) -- (8);
            \draw[-] (9) -- (10);
            \draw[-] (11) -- (12);
            \draw[-] (15) -- (16) -- (13) -- (14);
        \end{scope}
        
        \node (l) at (2,-3) {$l=2$} ;
    \end{tikzpicture}
    \qquad\qquad\qquad\quad
    \begin{tikzpicture}[scale=0.7]
        \begin{scope}
            \node (1) at (0,2) {1};
            \node (2) at (1,2) {2};
            \node (13) at (2,2) {13};
            
            \node (3) at (0,1) {3};
            \node (4) at (1,1) {4};
            \node (14) at (2,1) {14};
            
            \node (5) at (0,0) {5};
            \node (6) at (1,0) {6};
            \node (15) at (2,0) {15};
            
            \node (7) at (0,-1) {7};
            \node (8) at (1,-1) {8};
            \node (16) at (2,-1) {16};

            \node (9) at (0,-2) {9};
            \node (10) at (1,-2) {10};
            \node (17) at (2,-2) {17};
            
            \node (11) at (0,-3) {11};
            \node (12) at (1,-3) {12};
            \node (18) at (2,-3) {18};
        \end{scope}

        \begin{scope}[draw=orange, very thick]
            \draw[-] (2) -- (13);
            \draw[-] (4) -- (14);
            \draw[-] (6) -- (15);
            \draw[-] (8) -- (16);
            \draw[-] (10) -- (17);
            \draw[-] (12) -- (18);
            \draw[-] (1) -- (3) -- (5) -- (7) -- (9) -- (11);
        \end{scope}

        \begin{scope}[draw=blue, very thick]
            \draw[-] (1) -- (2);
            \draw[-] (3) -- (4);
            \draw[-] (5) -- (6);
            \draw[-] (7) -- (8);
            \draw[-] (9) -- (10);
            \draw[-] (11) -- (12);
            \draw[-] (13) -- (14) -- (15) -- (16) -- (17) -- (18);
        \end{scope}
        
        \node (l) at (1,-4) {$l=3$} ;
    \end{tikzpicture}
    \caption{Gauge connectivity for $k=3$ codes $(l=1,2,3)$. 
    Blue (orange) lines are $XX$ ($ZZ$) gauge operators. The top row arranges qubits in 2D in the form of the $A$ matrix. The bottom row is a layout that minimizes the geometric distance between qubits. The $l=1$ case is always a chain for any $k$.}
    \label{tikz:gauge_graphs-xz-color}
\end{figure*}

\cref{fig:gap} shows the results of numerical calculations of the penalty gap.
We plot the penalty gap for codes with $l=1,2,\dots,7$ and $m=2,3,\dots,20$, where $m\times m$ is the size of the $A$ matrix. The parameters $m$ and $k$ are related via $m=2k+1$ for odd-$m$ codes and $m=2k$ for even-$m$ codes. Hence, $m$ scales with the number of logical qubits $k$ while the number of physical qubits scales with $l$. 

\cref{fig:gap} (left) (supported by a careful analysis of fitting function in \cref{app:scaling}) shows that for a fixed $l$, the gap closes as a power-law for $l=1,2$ with respect to the number of logical qubits (recall that $k=\lfloor m/2\rfloor$), and exponentially for $l\ge 3$. \cref{fig:gap} (middle) shows the scaling as a function of the number of logical qubits for $l(k)=k-l'$ where $l'=0,1,\dots,7$; the gap closes exponentially for all $l(k)$. This gap closure behavior can be explained on the basis of the qubit connectivity of $H_P$ in \cref{tikz:gauge_graphs-xz-color}. 

The connectivity for $l=1$ is chain-like for every $m$ (see \cref{tikz:gauge_graphs-xz-color}). The penalty Hamiltonian then reduces to the (critical) one-dimensional $90^\circ$ quantum compass model: $H_P = -\sum_{i=1}^{m/2} X_{2i-1}X_{2i} + Z_{2i}Z_{2i+1}$ (identifying qubit $m+1$ with qubit $1$), which is dual to the one-dimensional transverse
field Ising model~\cite{Nussinov:2015aa}, whose gap is well known to scale as $1/m$~\cite{PFEUTY197079}. Fitting the $l=1$ results, we find that the gap scales as $1/m^\nu$, where $\nu=1.032\pm 0.004$ (see \cref{tab:params-by-m}). 
Our numerical result is in agreement with the analytical results of Ref.~\cite{Brzezicki:2007aa} for the one-dimensional compass model at the critical point ($\alpha=1$ in their notation), who showed that the gap between the two lowest energy states in the ground state sector scales as $1/m$. This power-law dependence is desirable because it means that the penalty gap ensures protection for all problems whose gap closes faster than $1/m$, as is typically the case in AQC~\cite{Albash-Lidar:RMP}.

For $l=2$ we find two one-dimensional compass models connected via two links (see \cref{tikz:gauge_graphs-xz-color} for $k=3$ and \cref{tikz:gauge_graphs} for $k=4$). Despite a slight curvature at small $m$ visible in \cref{fig:gap} (left), the gap scaling is best fit by a power law (see \cref{tab:each-l-by-m}).

On the other hand, starting from $l=k$, we observe a ladder-like connectivity similar to the compass model \cite{PhysRevB.87.214421,PhysRevB.80.014405}, whose gap decays exponentially with system size \cite{PhysRevB.72.024448}. The difference between the $l=k$ connectivity and that of the standard 2D compass model is the missing middle leg of the ladder (see $l=k=3$ in \cref{tikz:gauge_graphs-xz-color} and $l=k=4$ in \cref{tikz:gauge_graphs}). It is reasonable to expect that despite this difference, $H_P$'s spectrum is closer to the 2D than the 1D compass model, which leads to the penalty gap decaying exponentially rather than as a power law. This is confirmed numerically in \cref{fig:gap} (middle), and we find that the gap scales as $e^{(-0.45\pm 0.01)m}$ (for full details of the fitting parameters see \cref{tab:params-by-m-lp}). We note that this conclusion differs from \cite{marvian2019robust}, who claimed that the penalty gap closes polynomially.

For $2<l<k$ we find a connectivity that is intermediate between the ladder-like structure of the $l=k$ case and the double one-dimensional compass model of the $l=2$ case (see $l=3$ in \cref{tikz:gauge_graphs}). This structure is closer to a 2D rather than 1D connectivity, suggesting that the gap closure for $2<l<k$ is closer to the exponentially decaying gap of the 2D compass model than the power law decay of the 1D model. This is borne out by our numerics, which exhibit a more exponential-like decay the higher the value of $l$ (see \cref{tab:each-l-by-m}).

\cref{fig:gap} (right) shows that the gap decreases exponentially as a function of $l$ when $m$ is fixed. This suggests that codes with smaller $l$ (smaller number of physical qubits) provide an exponentially larger penalty gap. These codes with smaller $l$ also have better code rates, which is desirable. However, as we discuss in the next section, there is a tradeoff between these advantages offered by small-$l$ codes and the higher degree of physical qubit connectivity they require when we also incorporate the encoded system Hamiltonian $\hat{H}_S$.

\begin{figure*}[t]
\centering
     \includegraphics[width=0.32\linewidth]{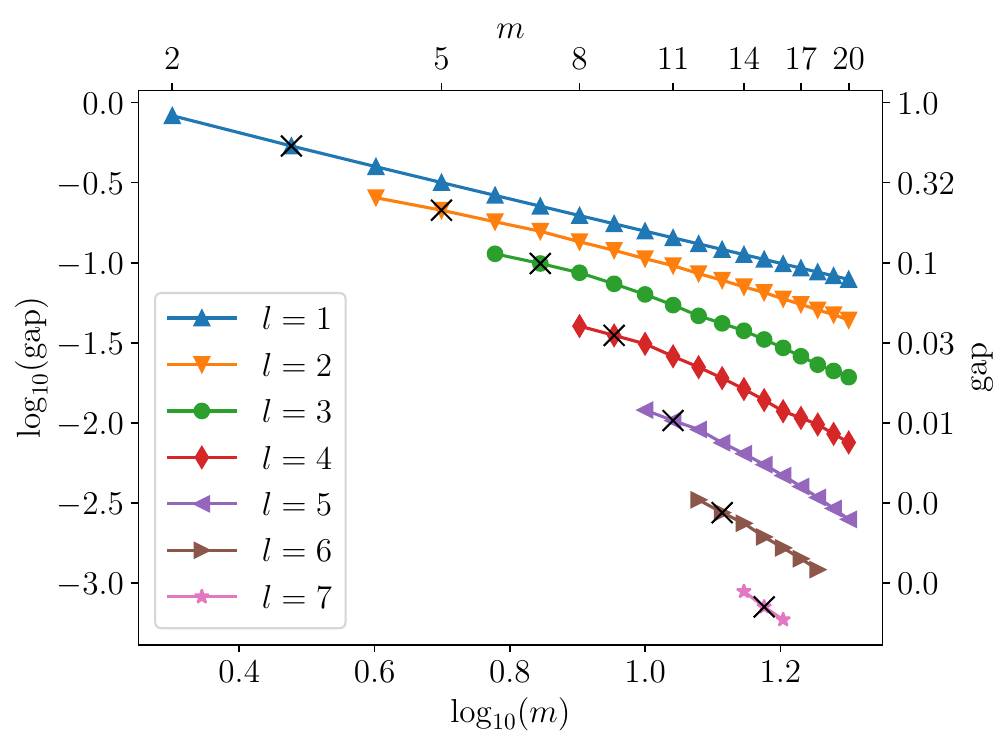}
     \includegraphics[width=0.33\linewidth]{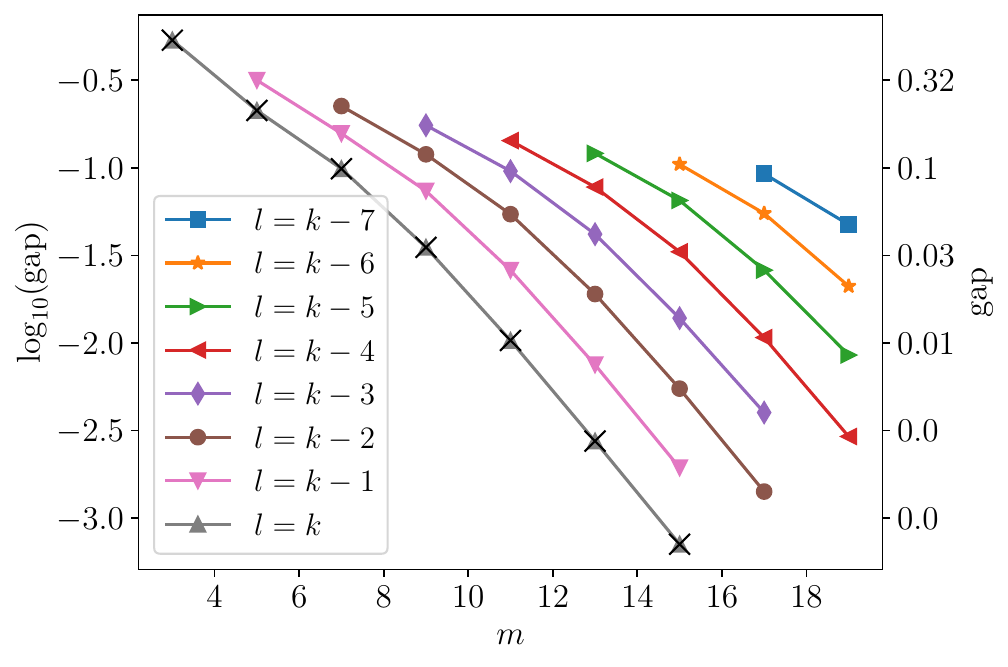}
        \includegraphics[width=0.33\linewidth]{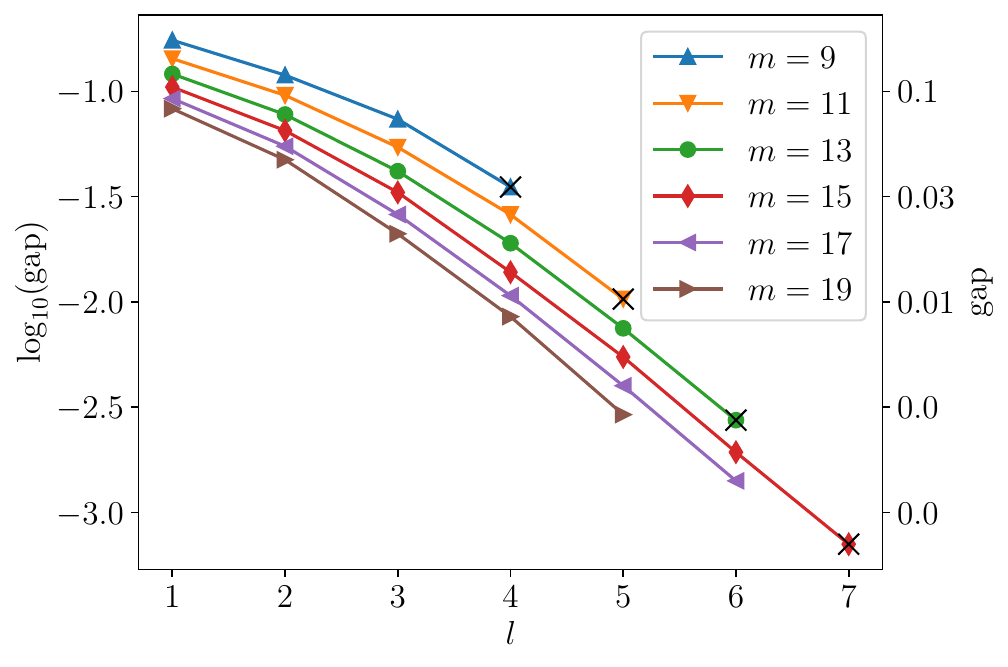}
\caption{Penalty Hamiltonian gap as a function of the code matrix size $m$ for different values of the code parameter $l$ (left, middle) and as a function of $l$ for different values of $m$ (right). The number of logical qubits is $k=\lfloor m/2 \rfloor$ and the number of physical qubits is $4k+2l$ for odd $m$ or $4k+2(l-1)$ for even $m$. The black crosses are the results for the odd-$m$ codes with $l=k$~\cite{marvian2019robust}. The gap is largest for $l=1$ (which also maximizes the code rate) and decays more slowly than for $l>1$. The decay with $m$ at fixed $l$ follows a power-law for $l=1,2$ and is exponential for $l>2$ (left). The decay with $m$ at fixed $l=k-l'$ is exponential in $m=2k+1$ (middle). The decay as a function of $l$ at fixed $m$ is exponential in all cases (right). See \cref{app:scaling} for details. Missing data points for higher $l$ and $m$ are due to insufficient memory (we used 256GB GPUs).}
\label{fig:gap}
\end{figure*}

\section{Qubit Mapping}
\label{sec:optimiza}

The induced graphs $G$ that arise from the logical operators and gauges constructed in \cref{sec:optimal_logi}, have certain features in common. The central part of the induced graph forms a fully connected subgraph with $m-1$ vertices (corresponds to the physical qubits on the superdiagonal in $A$ matrix), representing the connectivity needed for two-qubit logical operators $\hat{X}\hat{X}$ and $\hat{Z}\hat{Z}$, as shown by the yellow edges in \cref{fig:dressed-gauge-connect}. Additionally, qubits located in the first column and the last row of the $A$ matrix are each coupled to $\lfloor\frac{m-1}{2l}\rfloor$ or $\lceil\frac{m-1}{2l}\rceil$ vertices in the fully connected subgraph in an alternating manner, accounting for the single-qubit logical operators $\hat{X}$ and $\hat{Z}$, as illustrated by the blue edges in \cref{fig:dressed-gauge-connect}. Finally, the qubits in the first column and the last row form two cyclic components, representing the connectivity required for gauge generators, corresponding to part of the green edges in \cref{fig:dressed-gauge-connect}.

The cyclic couplings among the qubits in the first column and the last row exist since we select nearest-neighbor gauge generators. The structure of these two components depends on the choice of gauge generators. For example, if the penalty Hamiltonian includes the entire gauge group, these two components will be fully connected subgraphs, each with $2l$ vertices.

For the case $l=1$ (which maximizes the code rate), $G$ consists of $m-1$ vertices (blue nodes in \cref{fig:dressed-gauge-connect}) with degrees of $m+1$ or $m+2$, $m-3$ vertices (green nodes in \cref{fig:dressed-gauge-connect}) with degrees of $2$, and $4$ vertices (light and dark brown nodes in \cref{fig:dressed-gauge-connect}) with degrees of $(m+1)/2$. At the other extreme, when $l=k$, $G$ includes one fully connected subgraph and two cyclic graphs, each with $m-1$ vertices. In this case, the induced graph consists of $m-1$ vertices (blue nodes in \cref{fig:dressed-gauge-connect}) with degrees of $m+1$ and $2m-2$ vertices (light and dark brown nodes in \cref{fig:dressed-gauge-connect}) with degrees of $3$, resulting in a more balanced induced graph. 

Moving from $l=k$ to $l=1$, more gauge cancellation is involved to ensure all logical operators become $2$-local. Consequently, the connectivity induced by the logical operators overlaps less with that induced by the gauges. (As seen in \cref{fig:dressed-gauge-connect}, the number of green nodes and green edges excluding those connecting brown nodes increases from right to left.) In the $l=1$ case, while the number of physical qubits required for the connectivity induced by the logical operators is smaller than in the $l=k$ case, additional qubits and edges are needed for the gauges, forming a graph with one degree more than that of the $l=k$ case. On the other hand, when $l=k$, the logical operators and gauges require the same connectivity, making the graph more balanced. 

\begin{figure*}
    \centering
     \includegraphics[width=0.31\linewidth]{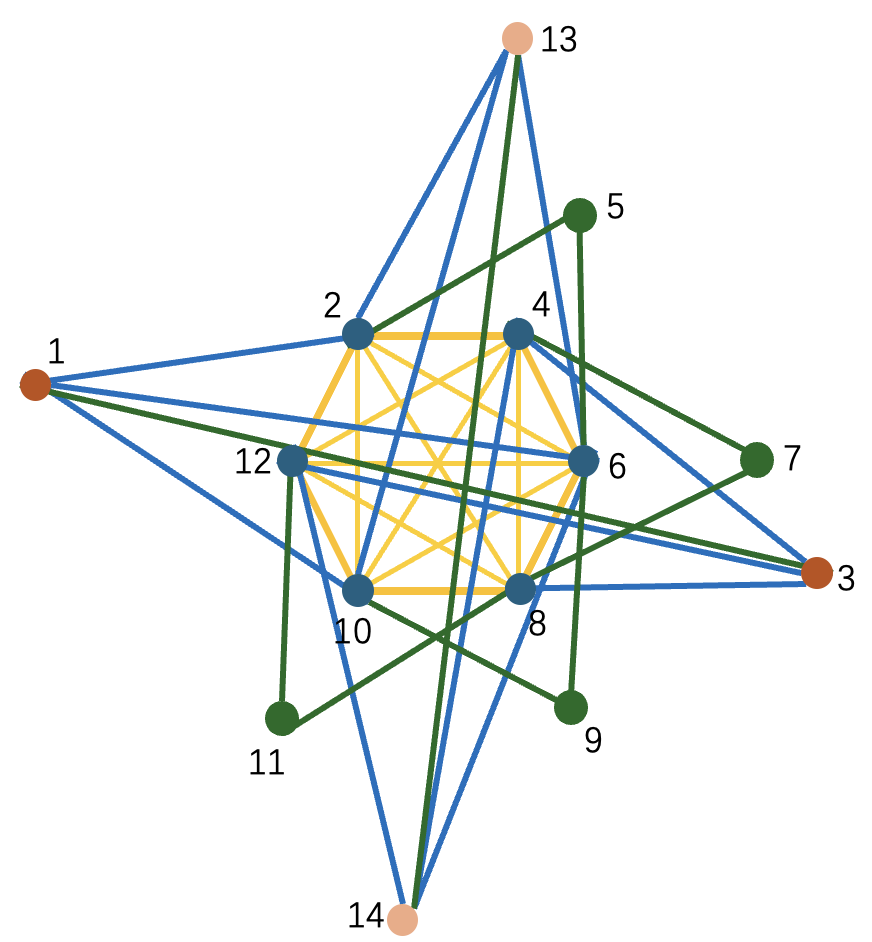}
         \includegraphics[width=0.31\linewidth]{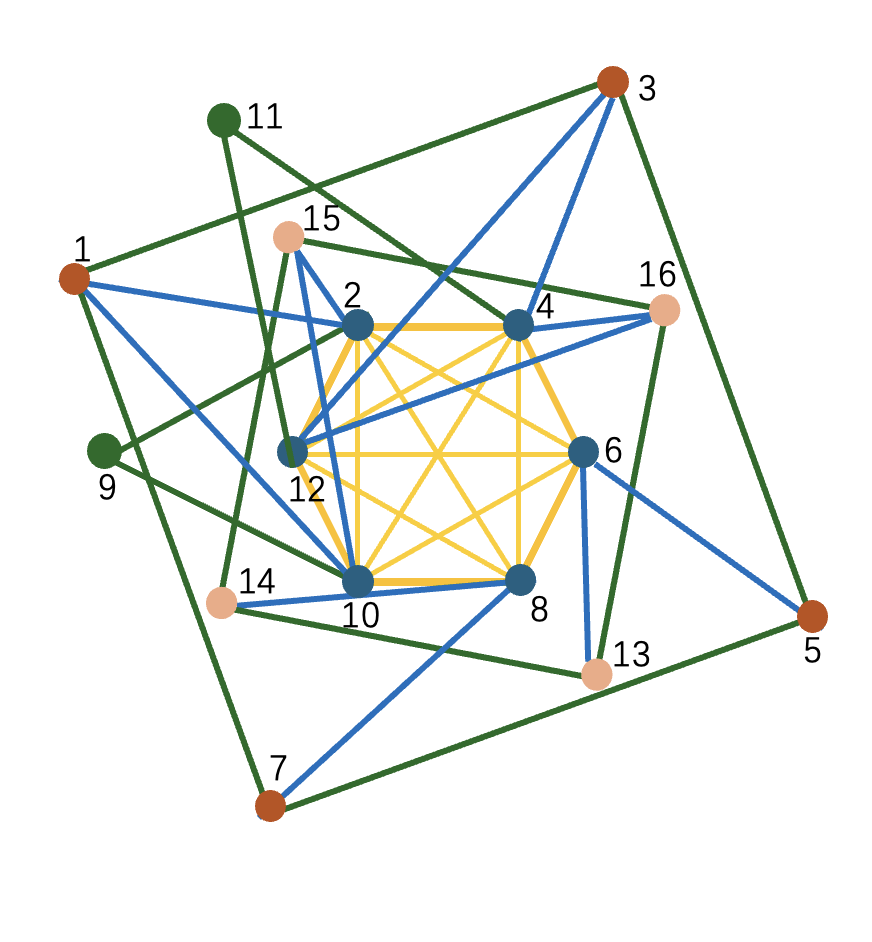}
         \includegraphics[width=0.31\linewidth]{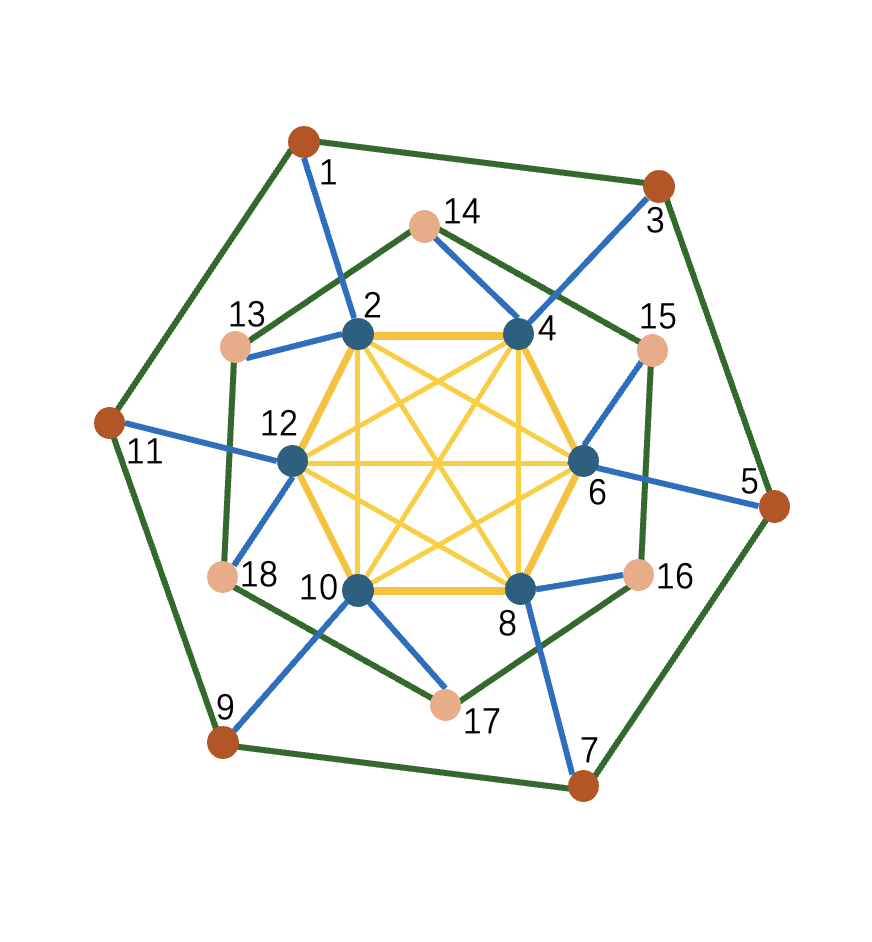}
        \caption{The induced graphs required to support the single-qubit logical operators, two-qubit logical operators, and gauge operators are shown for codes with $k=3$ and $l=1,2,3$, whose matrix representations are given in \cref{eq:log-op}. Each node represents a physical qubit, whose index [as in \cref{eq:A-matrix-number-1}] is a node label. The dark blue nodes correspond to qubits located on the superdiagonal, which must be coupled to support the two-qubit logical operators $\hat{X}\hat{X}$ and $\hat{Z}\hat{Z}$, with the yellow edges representing these couplings. The light brown nodes represent the qubits in the last row, and the dark brown nodes represent the qubits in the first column. The dark blue edges indicate the connectivity required for $\hat{X}$ and $\hat{Z}$. The green nodes represent qubits along the lower diagonal, with green edges showing the additional couplings required for gauge operators. The graphs shown in \cref{tikz:gauge_graphs-xz-color} are subgraphs of those shown here, representing just the connectivity required for the penalty Hamiltonians.}
        \label{fig:dressed-gauge-connect}
\end{figure*}

The degree of the induced graph increases linearly with the number of logical qubits, which is undesirable for practical implementation. Therefore, we aim to map it to a new graph that is more feasible or aligns well with the connectivity of existing hardware. 

\subsection{Mapping induced graph to mapped graph}
Denote the set of physical qubits as $V$. Connect the physical qubits to support all logical operators $\hat{X}$, $\hat{Z}$, $\hat{X}\hat{X}$, $\hat{Z}\hat{Z}$, and gauge generators, resulting in the induced graph $G$ illustrated in \cref{fig:dressed-gauge-connect}. The objective is to map the induced graph $G(V,E)$ to another graph $G_{\mathrm{m}}(V_{\mathrm{m}},E_{\mathrm{m}})$, where $G_{\mathrm{m}}$ represents an existing hardware connectivity graph or a more feasible hardware graph, we call it the mapped graph. The distance between two vertices in $G_{\mathrm{m}}$ is quantified using the Manhattan distance.

\begin{mydefinition}[Manhattan Distance] 
    Consider a graph $G(V,E)$. The Manhattan distance between $x,y\in V$, denoted as $m(x,y)$, is the length of the shortest path from $x$ to $y$. In other words, $m(x,y)$ is the smallest number of edges in $E$ necessary to connect $x$ and $y$.
\end{mydefinition}

Since we use undirected graphs, we have $m(x,y)=m(y,x)$. In addition, we define the total mapped Manhattan distance, or just the total Manhattan distance for short, as

\begin{mydefinition}[Total Manhattan Distance] \label{def:total-manhattan}
    Given two graphs $G(V,E)$ and $G_{\mathrm{m}}(V_{\mathrm{m}},E_{\mathrm{m}})$ and a map from $V$ to $V_{\mathrm{m}}$, the total Manhattan distance of $G_{\mathrm{m}}$ given $G$ is defined as: 
 \beq
    m^E_{\mathrm{total}}(E_{\mathrm{m}})=\sum_{(x,y)\in E} m_{E_{\mathrm{m}}}(x',y'), 
\eeq
where $x',y'\in V_{\mathrm{m}}$ are the mapped vertices corresponding to $x,y\in V$, and $m_{E_{\mathrm{m}}}(x',y')$ denotes the Manhattan distance between $x'$ and $y'$ in $G_{\mathrm{m}}$.
\end{mydefinition}

In the analog quantum computation model, since SWAP gates are not explicitly applied, the interactions in \cref{eq:H-hat-b} are implemented simultaneously. Therefore, although some edges are shared by both the logical operators and the gauges, they are counted only once.
The goal is to find a mapping from $V$ to $V_{\mathrm{m}}$ that minimizes the total Manhattan distance, equivalent to minimizing the number of SWAP gates required in the circuit model. 

Another useful metric is the following:
\begin{mydefinition}[Average Manhattan distance] The average Manhattan distance is the average over the original edges
\beq
\tilde{m}=\frac{m^E_{\text{total}}(E')}{|E|}=\frac{\sum_{(x,y)\in E} m_{E'}(x',y')}{|E|} .
\eeq
\end{mydefinition}

A smaller average Manhattan distance is preferred. 

\subsection{Optimal mappings from a brute-force search}
\label{subsec:brute-map}

We present the mapping results for four cases with $m\in\{4,5\}$ and $l\in\{1,2\}$, and explore seven potential hardware layouts: Union Jack, square (Rigetti's Ankaa), triangular, IBM's heavy-hex, hexagonal, kagome, and Rigetti's Aspen, as depicted in \cref{tikz:hardware_graphs}.
Given the NP-hardness of the mapping problem, we employ a brute-force search to find the optimal mappings for modest problem sizes. For larger problem sizes, the mixed-integer programming method provided in \cref{app:MIPO} can be employed. Detailed optimal mapping results obtained through brute-force search are tabulated in \cref{table:mapping_results}.

\begin{figure}[h!]
\centering
\begin{tikzpicture}[scale=0.7]
        \begin{scope}
            \node (3) at (-1,0) {3};
            \node (4) at (0,0) {4};
            \node (5) at (1,0) {5};
            
            \node (6) at (-1,-1) {6};
            \node (7) at (0,-1) {7};
            \node (8) at (1,-1) {8};

            \node (0) at (-1,1) {0};
            \node (1) at (0,1) {1};
            \node (2) at (1,1) {2};
            
            \node (9) at (-1,-2) {9};
            \node (10) at (0,-2) {10};
            \node (11) at (1,-2) {11};
        \end{scope}

        \begin{scope}[draw=gray]
            \draw[-] (0) -- (1) -- (2);
            \draw[-] (3) -- (4) -- (5);
            \draw[-] (6) -- (7) -- (8);
            \draw[-] (9) -- (10) -- (11);
            
            \draw[-] (0) -- (3) -- (6) -- (9);
            \draw[-] (1) -- (4) -- (7) -- (10);
            \draw[-] (2) -- (5) -- (8) -- (11);

            \draw[-] (0) -- (4) -- (8);
            \draw[-] (1) -- (5);
            \draw[-] (3) -- (7) -- (11);
            \draw[-] (6) -- (10);
            
            \draw[-] (1) -- (3);
            \draw[-] (2) -- (4) -- (6);
            \draw[-] (5) -- (7) -- (9);
            \draw[-] (8) -- (10);
        \end{scope}
        \node (l) at (0,2) {Union Jack} ;
    \end{tikzpicture}
    \qquad
    \begin{tikzpicture}[scale=0.7]
        \begin{scope}
            \node (3) at (-1,0) {3};
            \node (4) at (0,0) {4};
            \node (5) at (1,0) {5};
            
            \node (6) at (-1,-1) {6};
            \node (7) at (0,-1) {7};
            \node (8) at (1,-1) {8};

            \node (0) at (-1,1) {0};
            \node (1) at (0,1) {1};
            \node (2) at (1,1) {2};
            
            \node (9) at (-1,-2) {9};
            \node (10) at (0,-2) {10};
            \node (11) at (1,-2) {11};
        \end{scope}

        \begin{scope}[draw=gray]
            \draw[-] (0) -- (1) -- (2);
            \draw[-] (3) -- (4) -- (5);
            \draw[-] (6) -- (7) -- (8);
            \draw[-] (9) -- (10) -- (11);
            
            \draw[-] (0) -- (3) -- (6) -- (9);
            \draw[-] (1) -- (4) -- (7) -- (10);
            \draw[-] (2) -- (5) -- (8) -- (11);
        \end{scope}
        \node (l) at (0,2) {Square} ;
    \end{tikzpicture}
    \qquad
    \begin{tikzpicture}[scale=0.7]
        \begin{scope}
            \node (0) at (-1.1547,1) {0};
            \node (1) at (0,1) {1};
            \node (2) at (1.1547,1) {2};
            \node (3) at (2.3094,1) {3};

            \node (4) at (-1.73205,0) {4};
            \node (5) at (-0.57735,0) {5};
            \node (6) at (0.57735,0) {6};
            \node (7) at (1.73205,0) {7};

            \node (8) at (-2.3094,-1) {8};
            \node (9) at (-1.1547,-1) {9};
            \node (10) at (0,-1) {10};
            \node (11) at (1.1547,-1) {11};
        \end{scope}

        \begin{scope}[draw=gray]
            \draw[-] (0) -- (1) -- (2) -- (3);
            \draw[-] (4) -- (5) -- (6) -- (7);
            \draw[-] (8) -- (9) -- (10) -- (11);
            
            \draw[-] (0) -- (4) -- (8);
            \draw[-] (1) -- (5)-- (9);
            \draw[-] (2) -- (6)-- (10);
            \draw[-] (3) -- (7)-- (11);

            \draw[-] (0) -- (5) -- (10);
            \draw[-] (1) -- (6) -- (11);
            \draw[-] (4) -- (9);
            \draw[-] (2) -- (7);
        \end{scope}
        \node (l) at (0,2) {Triangular} ;
    \end{tikzpicture}
    \quad
    \begin{tikzpicture}[scale=0.7]
        \begin{scope}
            \node (0) at (0,1) {0};
            \node (1) at (-0.5,0.866025) {1};
            \node (2) at (-1,0.57735) {2};
            \node (3) at (-1,0) {3};

            \node (4) at (-1,-0.57735) {4};
            \node (5) at (-0.5,-0.866025) {5};
            
            \node (6) at (1,0) {6};
            \node (7) at (1,0.57735) {7};
            \node (8) at (0.5,0.866025) {8};
            
            \node (9) at (0,1.5) {9};
            \node (10) at (-1.5,0.866025) {10};
            \node (11) at (-1.5,-0.866025) {11};
            \node (12) at (1.5,0.866025) {12};
        \end{scope}

        \begin{scope}[draw=gray]
            \draw[-] (6) -- (7) -- (8) -- (0) -- (1) -- (2) -- (3) -- (4) -- (5);
            \draw[-] (0) -- (9);
            \draw[-] (2) -- (10);
            \draw[-] (4) -- (11);
            \draw[-] (7) -- (12);
        \end{scope}
        \node (l) at (0,2.5) {Heavy-hex} ;
    \end{tikzpicture} \\
    
    \begin{tikzpicture}[scale=0.7]
        \begin{scope}
            \node (0) at (-0.5, 0.866025) {0};
            \node (1) at (0.5, 0.866025) {1};
            \node (2) at (1, 0) {2};
            \node (3) at (0.5, -0.866025) {3};
            \node (4) at (-0.5, -0.866025) {4};
            \node (5) at (-1, 0) {5};

            \node (6) at (-1, 1.73205) {6};
            \node (7) at (1, 1.73205) {7};
            \node (8) at (2, 0) {8};
            \node (9) at (1, -1.73205) {9};
            \node (10) at (-1, -1.73205) {10};
            \node (11) at (-2,0) {11};
        \end{scope}

        \begin{scope}[draw=gray]
            \draw[-] (0) -- (1) -- (2) -- (3) -- (4) -- (5) -- (0);
            \draw[-] (0) -- (6);
            \draw[-] (1) -- (7);
            \draw[-] (2) -- (8);
            \draw[-] (3) -- (9);
            \draw[-] (4) -- (10);
            \draw[-] (5) -- (11);
        \end{scope}
        \node (l) at (0,2.5) {Hexagonal} ;
    \end{tikzpicture}
    \quad
    \begin{tikzpicture}[scale=0.7]
        \begin{scope}
            \node (0) at (-0.5, 0.866025) {0};
            \node (1) at (0.5, 0.866025) {1};
            \node (2) at (1, 0) {2};
            \node (3) at (0.5, -0.866025) {3};
            \node (4) at (-0.5, -0.866025) {4};
            \node (5) at (-1, 0) {5};

            \node (6) at (0, 1.73205) {6};
            \node (7) at (1.5, 0.866025) {7};
            \node (8) at (1.5, -0.866025) {8};
            \node (9) at (0, -1.73205) {9};
            \node (10) at (-1.5, -0.866025) {10};
            \node (11) at (-1.5, 0.866025) {11};
        \end{scope}

        \begin{scope}[draw=gray]
            \draw[-] (0) -- (1) -- (2) -- (3) -- (4) -- (5) -- (0);
            \draw[-] (0) -- (6) -- (1);
            \draw[-] (1) -- (7) -- (2);
            \draw[-] (2) -- (8) -- (3);
            \draw[-] (3) -- (9) -- (4);
            \draw[-] (4) -- (10) -- (5);
            \draw[-] (5) -- (11) -- (0);
        \end{scope}
        \node (l) at (0,2.5) {Kagome} ;
    \end{tikzpicture}
    \quad
    \begin{tikzpicture}[scale=0.7]
        \begin{scope}
            \node (0) at (-2, 1.866025) {0};
            \node (1) at (-1, 1.866025) {1};
            \node (2) at (-2, 0.866025) {2};
            \node (3) at (-1, 0.866025) {3};
            
            \node (4) at (-0.5, 0.5) {4};
            \node (5) at (0.5, 0.5) {5};
            \node (10) at (-0.5, -0.5) {10};
            \node (11) at (0.5, -0.5) {11};

            \node (6) at (1, 1.866025) {6};
            \node (7) at (2, 1.866025) {7};
            \node (8) at (1, 0.866025) {8};
            \node (9) at (2, 0.866025) {9};

            \node (12) at (-1, -0.866025) {12};
            \node (13) at (1, -0.866025) {13};
            
        \end{scope}

        \begin{scope}[draw=gray]
            \draw[-] (3) -- (1) -- (0) -- (2) -- (3) -- (4);
            \draw[-] (8) -- (6) -- (7) -- (9) -- (8) -- (5);
            \draw[-] (4) -- (5) -- (11) -- (10) -- (4);
            \draw[-] (10) -- (12);
            \draw[-] (11) -- (13);
        \end{scope}
        \node (l) at (0,3) {Rigetti's Aspen} ;
    \end{tikzpicture}
    \caption{The seven mapped graphs ($G_{\mathrm{m}}$) we consider. Numbers denote indices of vertices, which are used in the mapping. Solid lines denote enabled connectivity. We choose small portions of periodically repeating patterns in each graph type due to computational limitations.}
    \label{tikz:hardware_graphs}
\end{figure}

\begin{table*}
    \begin{tabular}{ |p{1.6cm}||p{2.7cm}|p{3.4cm}|p{3.3cm}| p{3.9cm}| }
         \hline
         \multicolumn{5}{|c|}{Optimal mapping results} \\
         \hline
         Mapped graph& $[[8,3,3,2]]$ & $[[10,3,5,2]]$ &$[[10,4,4,2]]$ & $[[12,4,6,2]]$\\
         \hline
         Union Jack & [1,3,5,7,0,4,10,6]   & [0,3,1,4,2,6,5,7,8,10] &  [1,3,5,7,0,4,11,8,6,10] & [0,3,1,4,2,6,5,7,11,8,10,9]\\
         \hline
         Square &   [1,3,2,5,0,4,8,7]  & [0,1,3,4,6,5,9,10,7,8]   &[0,1,3,6,2,4,10,7,5,8]& [0,1,3,4,6,7,9,10,2,5,8,11]\\
         \hline
         Heavy-hex    &[0,1,8,3,10,2,5,4] & [4,0,3,1,2,8,10,9,7,6]&  [0,1,6,7,3,2,12,8,10,9]& [3,0,4,1,5,2,11,10,6,7,8,9]\\
         \hline
         Triangular &[0,1,4,5,2,6,9,10] & [0,1,4,5,7,6,3,2,10,11]&  [0,1,4,5,2,6,8,9,11,10]& [0,4,1,5,2,6,3,7,8,9,10,11]\\
         \hline
         Hexagonal &   [0,1,4,3,7,2,9,8]  & [6,0,7,1,8,2,4,5,3,9]&[0,1,5,3,7,2,10,4,8,9]& [6,0,7,1,8,2,9,3,11,5,4,10]\\
         \hline
         Kagome & [0,1,4,3,6,2,8,7]  & [6,0,1,5,2,11,3,9,4,10]  &[0,1,5,3,6,2,9,4,7,8]& [0,1,11,2,5,3,10,4,6,7,8,9]\\
         \hline
         Rigetti & [1,3,0,2,5,4,11,10]  & [0,1,2,3,5,4,8,11,10,12]&[1,3,6,5,2,4,9,8,10,11]& [0,1,2,3,10,4,11,5,6,7,9,8]\\
         \hline
        \end{tabular}
    \caption{The optimal mapping results. The position in $[~\cdot~]$ is the index of physical qubits. The number inside $[~\cdot~]$ is the index of a vertex in the mapped graph as in \cref{tikz:hardware_graphs}. For example, for a $[[10,3,5,2]]$ code mapped to the square layout, the mapping result is $[0,1,3,4,6,5,9,10,7,8]$. The $6$ in the fifth entry means that the physical qubit in row $2$ and column $0$ is mapped to the vertex labeled $6$ in the square layout in \cref{tikz:hardware_graphs}.}
    \label{table:mapping_results}
\end{table*}

\begin{table*}[ht]
\centering
    \begin{tabular}{ |p{2.8cm}||p{1.1cm}|p{1.1cm}|p{1.1cm}|p{1.1cm}|p{1.1cm}|p{1.1cm}|p{1.1cm}|p{1.1cm}| }
     \hline
     \multicolumn{9}{|c|}{Graph properties from the mapping results} \\
     \hline
      Mapped graph & \multicolumn{2}{c|}{$[[8,3,3,2]]$} & \multicolumn{2}{c|}{$[[10,3,5,2]]$} & \multicolumn{2}{c|}{$[[10,4,4,2]]$} & \multicolumn{2}{c|}{$[[12,4,6,2]]$}\\
     \hline
      & total & avg &total & avg &total & avg &total & avg  \\
     \hline
     Union Jack   &  13 & 1   & 16 & 1.067  &  21 & 1.05  & 22 & 1.1\\
     \hline
     Triangular &  15 & 1.154  & 18 & 1.2  &25 & 1.25  & 24 & 1.2\\
     \hline
     Square &  17 & 1.308   & 20 & 1.333  &30 & 1.5  & 24 & 1.2 \\
     \hline
     Kagome & 18 & 1.385  &  21 & 1.4 &  29 & 1.45 & 30 &1.5\\
     \hline
     Hexagonal &  20 & 1.538   & 25 & 1.667 &  32 & 1.6 &34&1.7\\
     \hline
     Rigetti Aspen &  20 & 1.538  &  24 & 1.6 & 34& 1.7 & 36 & 1.8 \\
     \hline
     IBM Heavy-hex  &23 &1.769  & 30 & 2 & 41 & 2.05 & 44 & 2.2\\
     \hline
    \end{tabular}
    \caption{Graph properties from the mapping results. The ``total" columns show the total Manhattan distances, the ``avg'' shows the average Manhattan distances. The mapped graphs appear in order of performance ranking, with the Union Jack graph exhibiting the lowest (i.e., best) total and average Manhattan distance.}
\label{table:mapping_property}
\end{table*}

In \cref{table:mapping_results}, the position in $[~\cdot~]$ represents the index of the physical qubits. We label the physical qubits from left to right and top to bottom, incrementing the index by one at each step. The number inside $[~\cdot~]$ corresponds to the index of a vertex in the mapped graph. For example, for the $[[10,3,5,2]]$ code, when we map it to the square layout, the resulting vector representing the mapping is $[0,1,3,4,6,5,9,10,7,8]$, and the $6$ in the fifth entry of the vector means the physical qubit indexed as $5$ is mapped to the vertex labeled $6$ in the square layout in \cref{tikz:hardware_graphs}.

The properties of the mapped graphs are presented in \cref{table:mapping_property}. The first column lists the total Manhattan distances, while the second column provides the average Manhattan distances. Based on the criteria discussed in \cref{sec:discussion} below, the Union Jack graph outperforms all other layouts across all seven graphs, which can be attributed to its high degree and balanced structure. Despite the fact that the example is relatively small and the induced graph's degree aligns well with the Union Jack layout, larger examples may exhibit slight variations. However, due to its higher degree, the Union Jack graph is better suited for accommodating more complex graphs. At the other end of the spectrum, the heavy-hex graph performs the worst, due to its low degree and lack of balance. 

\section{Examples}
\label{sec:example}

We use the $[[14,6,6,2]]$ and $[[12,4,6,2]]$ codes as examples to provide a step-by-step explanation of the entire pipeline, which includes obtaining the $2$-local logical operators from \cref{eq:log-op-a} using gauge cancellations, constructing the induced graph $G$ based on the obtained logical operators and gauge generators, and mapping the induced graph $G$ to a mapped graph $G_{\mathrm{m}}$.

\subsection{$[[14,6,6,2]]$ code} 

The $[[14,6,6,2]]$ is the $l=1$, $k=3$ case of the $[[4k+2,2k,2k+2l-2,2]]$ subfamily of trapezoid codes. Recall that $l=1$ yields the largest penalty gap and the highest code rate, so this is an example of particular interest.

\subsubsection{Logical operators}\label{subsec:example-logi}
Each entry $1$ in \cref{eq:odd-m-A} represents a physical qubit. These qubits are indexed from left to right and from top to bottom in \cref{eq:A-matrix-number-1} as follows:
\begin{align}
\label{eq:A-matrix-number-1}
&\left[\begin{array}{ccccccc}
      1&2&&&&&\\
    3&&4&&&&\\
    &5&&6&&&\\
    &&7&&8&&\\
    &&&9&&10&\\
    &&&&11&&12\\
    &&&&&13&14\\
\end{array}
   \right].
\end{align}

Logical operators can be represented by the operator matrices in \cref{eq:log-op-a}. Each row is the bitstring representation of a $\hat{X}$ or $\hat{Z}$. A $1$ in the $i$'th position of the bitstring indicates that the $i$'th row or column in \cref{eq:A-matrix-number-1} is involved in the corresponding logical operator. For the $[[14,6,6,2]]$ code, the logical operators are as follows:
\begin{align}
    \hat{X}_1 &= 0101010 = X_2X_5X_6X_9X_{10}X_{13} = X_2X_{13}, \notag\\
    \hat{X}_2 &= 0010101 = X_4X_7X_8X_{11}X_{12}X_{14} = X_4X_{14}, \notag\\
    \hat{X}_3 &= 0001010 = X_6X_9X_{10}X_{13} = X_6X_{13}, \notag\\
    \hat{X}_4 &= 0000101 = X_8X_{11}X_{12}X_{14} = X_8X_{14}, \notag\\
    \hat{X}_5 &= 0000010 = X_{10}X_{13}, \notag\\
    \hat{X}_6 &= 0000001 = X_{12}X_{14}, \notag\\
    \hat{Z}_1 &= 1000000 = Z_1Z_2,\notag\\
    \hat{Z}_2 &= 0100000 = Z_3Z_4,\notag\\
    \hat{Z}_3 &= 1010000 = Z_1Z_2Z_5Z_6 = Z_1Z_6,\notag\\
    \hat{Z}_4 &= 0101000 = Z_3Z_4Z_7Z_8 = Z_3Z_8,\notag\\
    \hat{Z}_5 &= 1010100 = Z_1Z_2Z_5Z_6Z_9Z_{10} = Z_1Z_{10},\notag\\
    \hat{Z}_6 &= 0101010 = Z_3Z_4Z_7Z_8Z_{11}Z_{12} = Z_3Z_{12}.
\end{align}

\begin{equation}\label{tikz:exp-map}
\begin{aligned}
    &\begin{tikzpicture}
    \matrix (m)[
    matrix of math nodes,
    nodes in empty cells,
    left delimiter=\lbrack,
    right delimiter=\rbrack
    ] {
     1&2&&&&&\\
    3&&4&&&&\\
    &5&&6&&&\\
    &&7&&8&&\\
    &&&9&&10&\\
    &&&&11&&12\\
    &&&&&13&14\\
    } ;
   \begin{scope}[fill=green!80!white,fill opacity=0.1,draw=green,thick]
    \filldraw (m-1-2)  circle (10pt);
    \filldraw (m-7-6)  circle (10pt);
   \end{scope}
   \begin{scope}[fill=red!80!white,fill opacity=0.1,draw=red,thick]
    \filldraw (m-2-3)  circle (10pt);
    \filldraw (m-7-7)  circle (10pt);
   \end{scope}
   \begin{scope}[fill=orange!80!white,fill opacity=0.1,draw=orange,thick]
    \filldraw (m-3-4)  circle (8pt);
    \filldraw (m-7-6)  circle (8pt);
   \end{scope}
   \begin{scope}[fill=blue!80!white,fill opacity=0.1,draw=blue,thick]
    \filldraw (m-4-5)  circle (8pt);
    \filldraw (m-7-7)  circle (8pt);
   \end{scope}
   \begin{scope}[fill=black!80!white,fill opacity=0.1,draw=black,thick]
    \filldraw (m-5-6)  circle (6pt);
    \filldraw (m-7-6)  circle (6pt);
   \end{scope}
   \begin{scope}[fill=brown!80!white,fill opacity=0.1,draw=brown,thick]
    \filldraw (m-6-7)  circle (6pt);
    \filldraw (m-7-7)  circle (6pt);
   \end{scope}
    \end{tikzpicture}
    \\
    &\begin{tikzpicture}
    \matrix (m)[
    matrix of math nodes,
    nodes in empty cells,
    left delimiter=\lbrack,
    right delimiter=\rbrack
    ] {
     1&2&&&&&\\
    3&&4&&&&\\
    &5&&6&&&\\
    &&7&&8&&\\
    &&&9&&10&\\
    &&&&11&&12\\
    &&&&&13&14\\
    } ;
   \begin{scope}[fill=green!80!white,fill opacity=0.1,draw=green,thick]
    \filldraw (m-1-1)  circle (10pt);
    \filldraw (m-1-2)  circle (10pt);
   \end{scope}
   \begin{scope}[fill=red!80!white,fill opacity=0.1,draw=red,thick]
    \filldraw (m-2-1)  circle (10pt);
    \filldraw (m-2-3)  circle (10pt);
   \end{scope}
   \begin{scope}[fill=orange!80!white,fill opacity=0.1,draw=orange,thick]
    \filldraw (m-1-1)  circle (8pt);
    \filldraw (m-3-4)  circle (8pt);
   \end{scope}
   \begin{scope}[fill=blue!80!white,fill opacity=0.1,draw=blue,thick]
    \filldraw (m-2-1)  circle (8pt);
    \filldraw (m-4-5)  circle (8pt);
   \end{scope}
   \begin{scope}[fill=black!80!white,fill opacity=0.1,draw=black,thick]
    \filldraw (m-1-1)  circle (6pt);
    \filldraw (m-5-6)  circle (6pt);
   \end{scope}
   \begin{scope}[fill=brown!80!white,fill opacity=0.1,draw=brown,thick]
    \filldraw (m-2-1)  circle (6pt);
    \filldraw (m-6-7)  circle (6pt);
   \end{scope}
    \end{tikzpicture}
\end{aligned}
\end{equation} 

The upper matrix in~\cref{tikz:exp-map} illustrates the connectivity for $\hat{X}$, while the lower matrix corresponds to $\hat{Z}$. Different colors are used to represent distinct logical qubits. Circles of the same color in both matrices represent logical operators associated with the same logical qubit $(\hat{X}_i,\hat{Z}_i)$. For instance, the green circles denote physical operators required to execute $\hat{X}_1$ and $\hat{Z}_1$. Note that circles of the same color from the left and right matrices overlap exactly once, and circles of different colors do not overlap. This arrangement reflects the commutation relation of the logical operators, i.e., $\{\hat{X}_i,\hat{Z}_i\}=0$ and $[\hat{X}_i,\hat{Z}_j]=0$ for $i\ne j$.

To construct $\hat{X}\hat{X}$, we multiply two $\hat{X}$ operators. For example, executing $\hat{X}_1\hat{X}_2$ corresponds to $X_2X_{13}X_4X_{14}$ (represented by two large red and two large green circles). Since $X_{13}X_{14}$ is a gauge we can discard it, allowing us to execute $\hat{X}_1\hat{X}_2$ using just the $2$-local operator $X_2X_4$ [recall \cref{lem:2_local_logical}]. Following similar reasoning, we can derive the following two-body logical operators:
\begin{align}\label{eq:example_XX_hat}
    \hat{X}_1\hat{X}_2 &= X_2X_4,\hat{X}_1\hat{X}_3 = X_2X_6,\hat{X}_1\hat{X}_4 = X_2X_8, \notag\\
    \hat{X}_1\hat{X}_5 &= X_2X_{10},\hat{X}_1\hat{X}_5 = X_2X_{12},\hat{X}_2\hat{X}_3 = X_4X_{6}, \notag\\
    \hat{X}_2\hat{X}_4 &= X_4X_{8},\hat{X}_2\hat{X}_5 = X_4X_{10},\hat{X}_2\hat{X}_6 = X_4X_{12}, \notag\\
    \hat{X}_3\hat{X}_4 &= X_6X_{8},\hat{X}_3\hat{X}_5 = X_6X_{10},\hat{X}_3\hat{X}_6 = X_6X_{12},\notag\\
    \hat{X}_4\hat{X}_5 &= X_8X_{10},\hat{X}_4\hat{X}_6 = X_8X_{12},\hat{X}_5\hat{X}_6 = X_{10}X_{12}.
\end{align}
The $\hat{Z}\hat{Z}$ operators exhibit a similar structure; we simply replace $X$ with $Z$, noting that $Z_1Z_2$ is also a gauge operator. 

\subsubsection{Induced graph $G$}
To implement the universal encoded Hamiltonian~\cref{eq:H-hat-b} on hardware, we need to ensure that the physical qubits supporting all the aforementioned logical operators and gauge operators are coupled. 

We begin by examining the connectivity induced by $\hat{X}$ and $\hat{Z}$. As illustrated in \cref{tikz:exp-map}, for both $\hat{X}$ and $\hat{Z}$, half of the qubits on the superdiagonal must be coupled to qubit $13$ and qubit $1$, while the other half must connect to qubit $14$ and qubit $3$. The resulting connectivity is depicted in the top left panel of \cref{fig:induced-steps}. 

Next, for $\hat{X}\hat{X}$ and $\hat{Z}\hat{Z}$, we observe that they lead to the same connectivity. As shown in \cref{eq:example_XX_hat}, the qubits on the superdiagonal must be coupled. This results in a fully connected graph with $6$ vertices, as illustrated in the top middle panel of \cref{fig:induced-steps}.

Finally, since the penalty Hamiltonian consists of gauge generators, we need to incorporate additional qubits and edges. Specifically, we need to connect qubit $1$ to qubit $3$ and qubit $13$ to qubit $14$. We then add the qubits on the lower diagonal, ensuring that each connects to two other qubits, one in the same row and the other in the same column. The result is shown in the top right panel of \cref{fig:induced-steps}.

Thus, we have the following edge set:
\begin{align}\label{eq:edge-set-example}
    E = \{&(2,13),(4,14),(6,13),(8,14),(10,13), \notag\\
    &(12,14),(1,2),(3,4),(1,6),(3,8),(1,10), \notag\\
    &(3,12),(2,4),(2,6),(2,8),(2,10),(2,12),\notag\\
    &(4,6),(4,8),(4,10),(4,12),(6,8),(6,10),\notag\\
    &(6,12),(8,10),(8,12),(10,12),(1,3),\notag\\
    &(2,5),(4,7),(6,9),(8,11),(5,6),(7,8),\notag\\
    &(9,10),(11,12),(13,14))\}.
\end{align}

\begin{figure*}[ht]
    \centering
\includegraphics[width=0.32\linewidth]{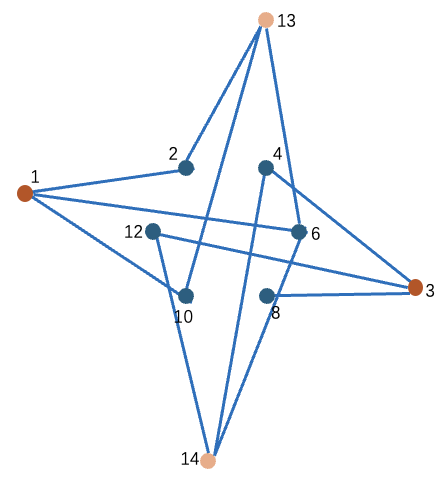}
\includegraphics[width=0.32\linewidth]{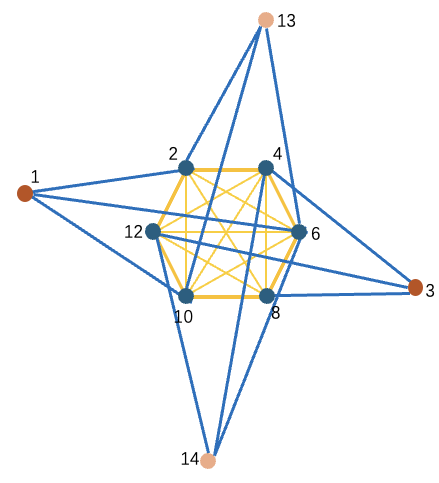}
\includegraphics[width=0.32\linewidth]{figs/induced_l1.png}
\includegraphics[width=0.32\linewidth]{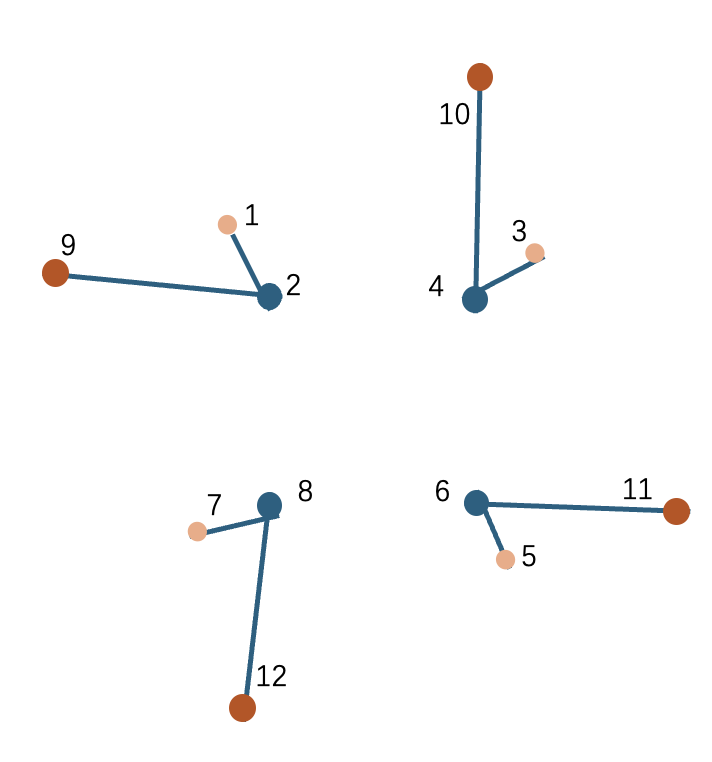}
\includegraphics[width=0.32\linewidth]{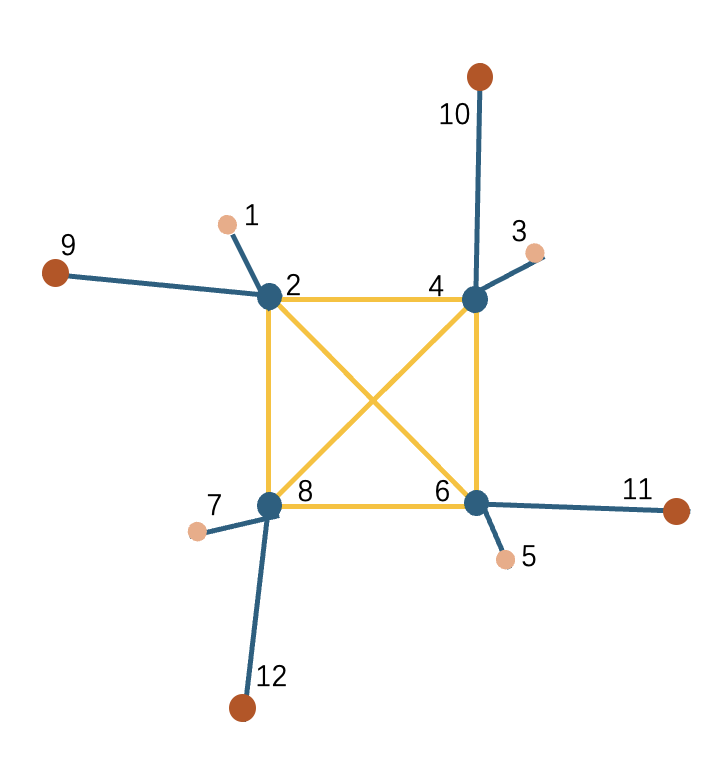}
\includegraphics[width=0.32\linewidth]{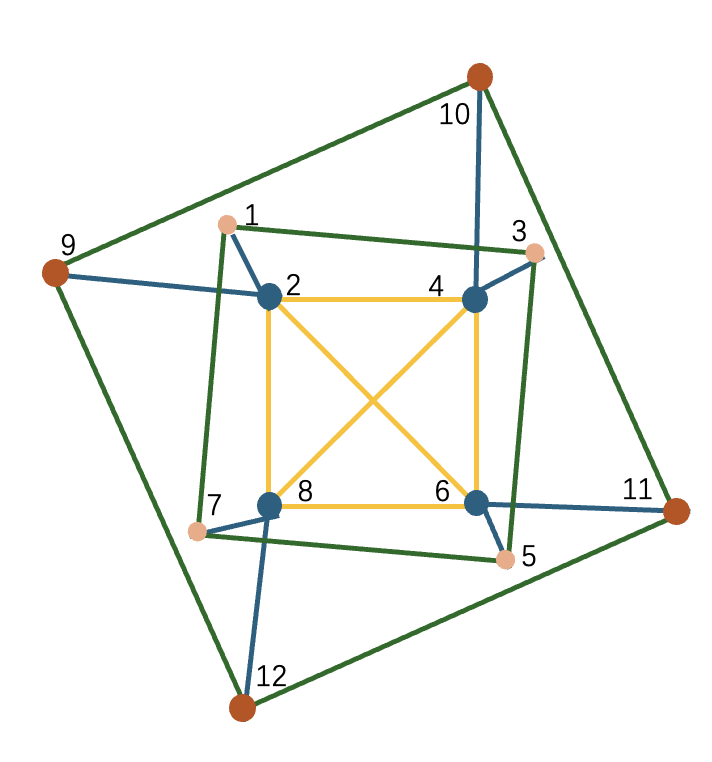}
     \caption{Top row: steps to construct the induced graph $G$ for the $[[14,6,6,2]]$ code. The left panel shows the connectivity required for $\hat{X}$ and $\hat{Z}$, the middle panel for $\hat{X}\hat{X}$ and $\hat{Z}\hat{Z}$, and the right panel for the gauge generators. Bottom row: the same for the $[[12,4,6,2]]$ code.}
     \label{fig:induced-steps}
\end{figure*}

\subsubsection{Mapped graph $G_{\mathrm{m}}$ with Union Jack}

The induced graph (\cref{fig:induced-steps}, right) does not align with any existing hardware connectivity graphs. To illustrate the mapping to a more feasible graph, we select the Union Jack as the target graph and demonstrate how to interpret the results.

The Union Jack layout is shown in the left panel of \cref{tikz:example_mapped_union_jack}. Given the layout, we can obtain the edge set as:
\begin{align}
    E_{\mathrm{m}} = \{&(0,1),(0,4),(0,5),(1,2),(1,4),(1,5),\notag\\
    &(1,6),(2,3),(2,5),(2,6),(2,7),(3,6),\notag\\
    &(3,7),(4,5),(4,8),(4,9),(5,6),(5,8),\notag\\
    &(5,9),(5,10),(6,7),(6,9),(6,10),(6,11),\notag\\
    &(7,10),(7,11),(8,9),(8,12),(8,13),\notag\\
    &(9,10),(9,12),(9,13),(9,14),(10,11),\notag\\
    &(10,13),(10,14),(10,15),(11,14),\notag\\
    &(11,15),(12,13),(13,14),(14,15)\}.
\end{align}
The mapping results are presented by the vector $[8, 6, 13, 14, 1, 5, 15, 10, 0, 4, 11, 9, 2, 7]$, which is of the same format as in \cref{table:mapping_results}. As explained in \cref{subsec:brute-map}, the position $i$ in the vector denotes the index of the physical qubit in the induced graph, while the value of the $i$'th entry denotes the location of that qubit on the mapped graph. Thus, we have the following correspondence:
\begin{widetext}
\begin{center}
\begin{tabular}{ c c c c c c c c c c c c c c}
\centering
 8 & 6 & 13 & 14 & 1 & 5 & 15 & 10 & 0 & 4 & 11 & 9 & 2 & 7 \\ 
 \vline & \vline & \vline & \vline & \vline & \vline & \vline & \vline & \vline & \vline & \vline & \vline & \vline & \vline \\ 
  \textcolor{red}{1} & \textcolor{red}{2} & \textcolor{red}{3} & \textcolor{red}{4} & \textcolor{red}{5} & \textcolor{red}{6} & \textcolor{red}{7} & \textcolor{red}{8} & \textcolor{red}{9} & \textcolor{red}{10} & \textcolor{red}{11} & \textcolor{red}{12} & \textcolor{red}{13}    & \textcolor{red}{14}
\end{tabular}
\end{center}
\end{widetext}
The first row is the mapping result vector, which contains the position of the physical qubits, while the second row shows the indices of physical qubits. We can then label the physical qubits on the mapped graph, as shown in the right panel of \cref{tikz:example_mapped_union_jack}.

\begin{figure*}[ht!]
    \centering

        \includegraphics[width=0.26\linewidth]{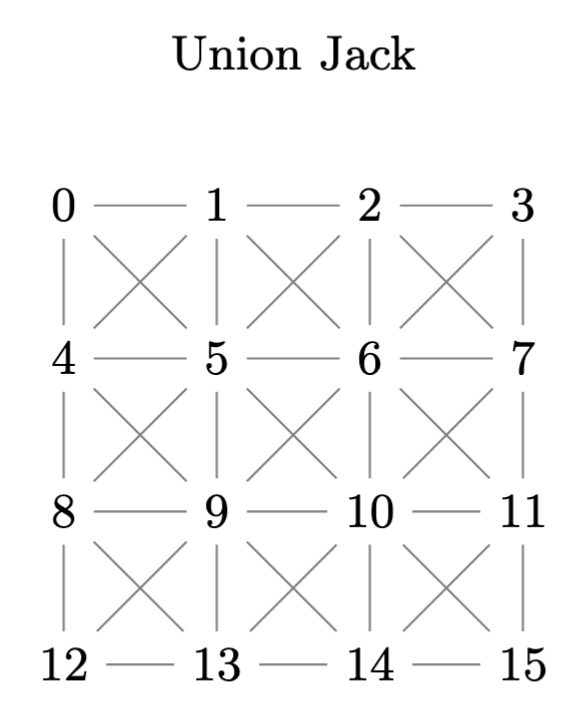}
\includegraphics[width=0.26\linewidth]{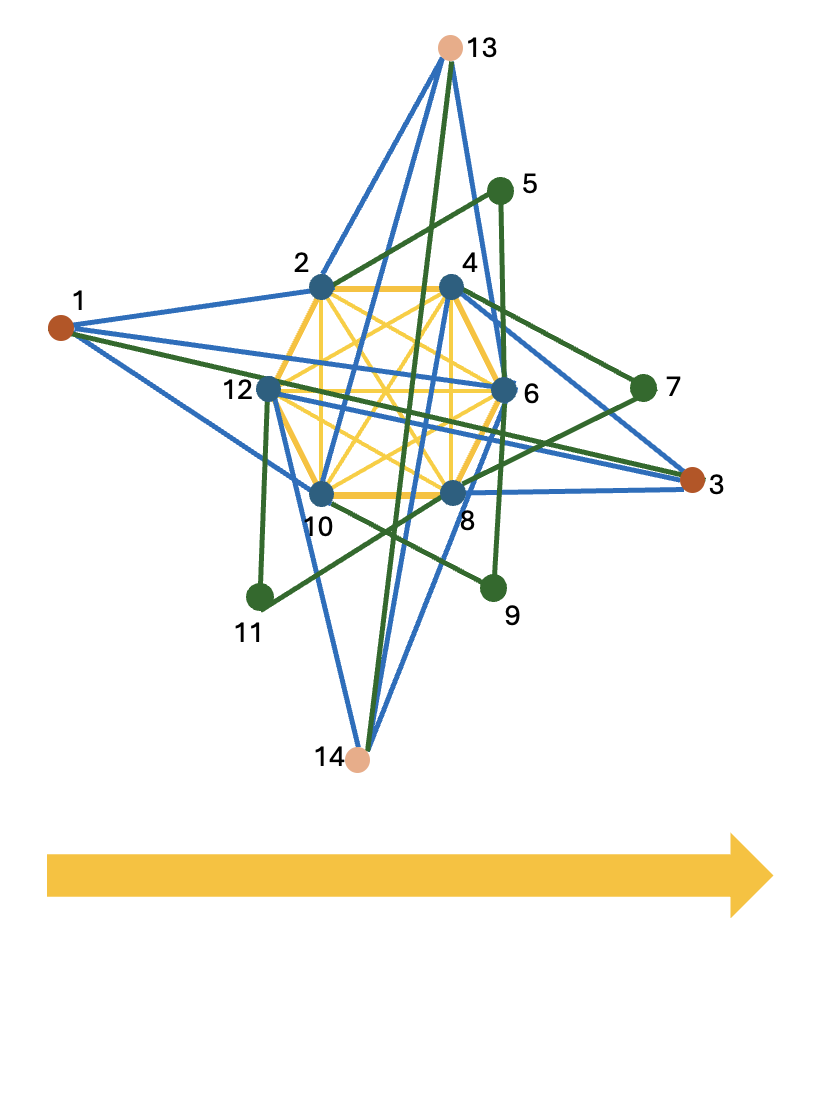}
 \includegraphics[width=0.26\linewidth]{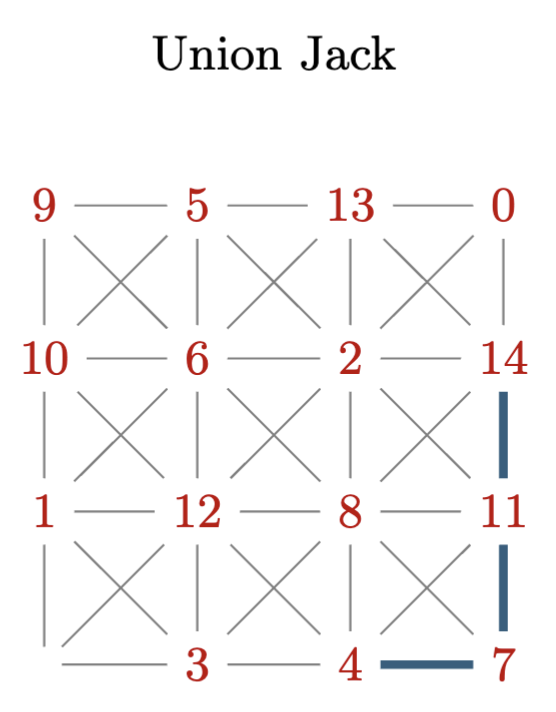}
    \caption{Interpretation of the mapping result for the $[[14,6,6,2]]$ code. The mapping result is $[8,6,13,14,1,5,15,10,0,4,11,9,2,7]$. The red numbers represent the indices of the physical qubits, corresponding to the indices in the left panel, while the black numbers represent the positions of these physical qubits in the mapped graph. Blue edges denote the shortest path to connect qubits $4$ and $14$ in $G_{\mathrm{m}}$.}
    \label{tikz:example_mapped_union_jack}
\end{figure*}

To illustrate how to calculate the Manhattan distance, consider, e.g., b $\hat{X}_2 = X_4X_{14}$. Qubit $4$ is mapped to vertex $14$, and qubit $14$ is mapped to vertex $7$. The induced map contains the edge $4-14$, which support $\hat{X}_2$. In the mapped graph, the Manhattan distance between vertices $7$ and $14$ is used as the metric for the geometric locality of qubits $4$ and $14$ on $G_{\mathrm{m}}$. One of the shortest paths from $7$ to $14$ is $7-11-15-14$, resulting in $m(7,14)=3$. To compute the total Manhattan distance of $G_{\mathrm{m}}$ given $E$, this process is repeated for all the edges $e$ in the set $E$, and the distances are summed.

\subsection{$[[12,4,6,2]]$ code}

We use the $[[12,4,6,2]]$ code as another example to illustrate a different case when $l=k$. We focus on the induced graph $G$ and the mapped graph $G_{\mathrm{m}}$ in this example. The physical qubits in the $A$ matrix for the $[[12,4,6,2]]$ code are indexed as follows:
\begin{align}
\label{eq:A-matrix-number-2}
&\left[\begin{array}{ccccc}
      1&2&&&\\
    3&&4&&\\
    5&&&6&\\
    7&&&&8\\
    &9&10&11&12
\end{array}
   \right].
\end{align}

Following the same steps as \cref{subsec:example-logi}, we obtain the logical operators as:
\begin{align}
    &\hat{X}_1 = X_2X_9, \hat{X}_2 = X_4X_{10},\hat{X}_3 = X_6X_{11}, \notag\\
    &\hat{X}_4 = X_8X_{12}, \hat{Z}_1 = Z_1Z_2, \hat{Z}_2 = Z_3Z_4, \notag\\
    &\hat{Z}_3 = Z_5Z_6, \hat{Z}_4 = Z_7Z_8, \notag \\ 
    &\hat{X}_1\hat{X}_2 = X_2X_4, \hat{X}_1\hat{X}_3 = X_2X_6, \hat{X}_1\hat{X}_4=X_2X_8,\notag\\
    &\hat{X}_2\hat{X}_3 = X_4X_6, \hat{X}_2\hat{X}_4 = X_4X_8, \hat{X}_3\hat{X}_4 = X_6X_8, \notag\\
    &\hat{Z}_1\hat{Z}_2 = Z_2Z_4, \hat{Z}_1\hat{Z}_3 = Z_2Z_6, \hat{Z}_1\hat{Z}_4=Z_2Z_8,\notag\\
    &\hat{Z}_2\hat{Z}_3 = Z_4Z_6, \hat{Z}_2\hat{Z}_4 = Z_4Z_8, \hat{Z}_3\hat{Z}_4 = Z_6Z_8.
\end{align}

In the induced graph $G$, we need to couple pairs of qubits that are necessary to support logical operators and gauge operators. For $\hat{X}$ and $\hat{Z}$, a qubit on the superdiagonal must be coupled to a qubit in the first column and a qubit in the last row, which results in the connectivity shown in the bottom left panel of \cref{fig:induced-steps}. For $\hat{X}\hat{X}$ and $\hat{Z}\hat{Z}$, the qubits on the superdiagonal must be coupled, which leads to a fully connected graph with $4$ vertices, as illustrated in the bottom middle panel of \cref{fig:induced-steps}. Finally, the qubits in the first column and the last row form two cyclic squares to support the gauge operators, as shown in the bottom right panel of \cref{fig:induced-steps}. Thus, the edge set is:
\begin{align}\label{eq:edge-set-example-2}
    E = \{&(2,9),(4,10),(6,11),(8,12),(1,2),(3,4), \notag\\
    &(5,6),(7,8),(2,4),(2,6),(2,8),(4,6),\notag\\
    &(4,8),(6,8),(1,3),(3,5),(5,7),(7,1),\notag\\
    &(9,10),(10,11),(11,12),(12,9)\}.
\end{align}

This $G$ is of low degree and is balanced, which is implementable on current devices. However, to demonstrate the results of larger system size, we map it to a kagome layout. The kagome layout is shown in the left panel of \cref{tikz:example_mapped_kagome}. Given the layout, we can obtain the edge set as:
\begin{align}
    E_{\mathrm{m}} = \{&(0,1),(1,2),(2,3),(3,4),(4,5),\notag\\
    &(5,0),(0,6),(1,6),(1,7),(2,7),\notag\\
    &(2,8),(3,8),(3,9),(4,9),(4,10),\notag\\
    &(5,10),(5,11),(0,11)\}.
\end{align}
The optimal mapping result is give by the vector $[0, 1, 11, 2, 5, 3, 10, 4, 6, 7, 8, 9]$. Thus, we have the following correspondence:
\begin{center}
\begin{tabular}{ c c c c c c c c c c c c}
\centering
 0 & 1 & 11 & 2 & 5 & 3 & 10 & 4 & 6 & 7 & 8 & 9 \\ 
 \vline & \vline & \vline & \vline & \vline & \vline & \vline & \vline & \vline & \vline & \vline & \vline \\ 
  \textcolor{red}{1} & \textcolor{red}{2} & \textcolor{red}{3} & \textcolor{red}{4} & \textcolor{red}{5} & \textcolor{red}{6} & \textcolor{red}{7} & \textcolor{red}{8} & \textcolor{red}{9} & \textcolor{red}{10} & \textcolor{red}{11} & \textcolor{red}{12} 
\end{tabular}
\end{center}

\begin{figure*}[ht!]
    \centering
\includegraphics[width=0.26\linewidth]{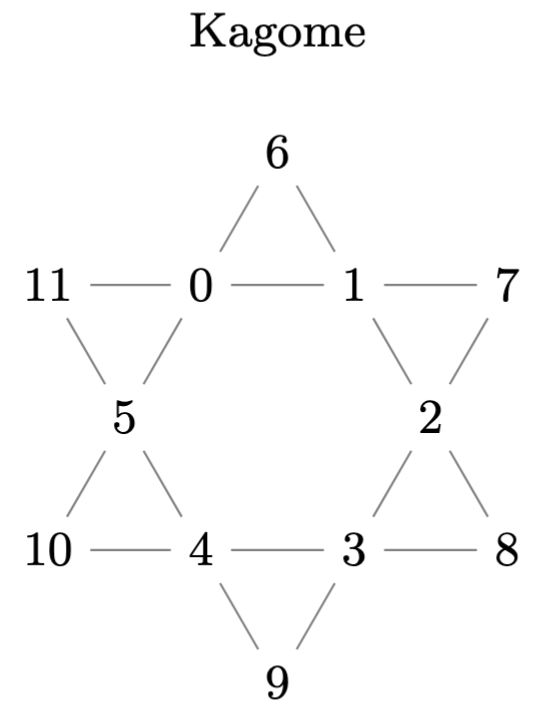}
        \includegraphics[width=0.26\linewidth]{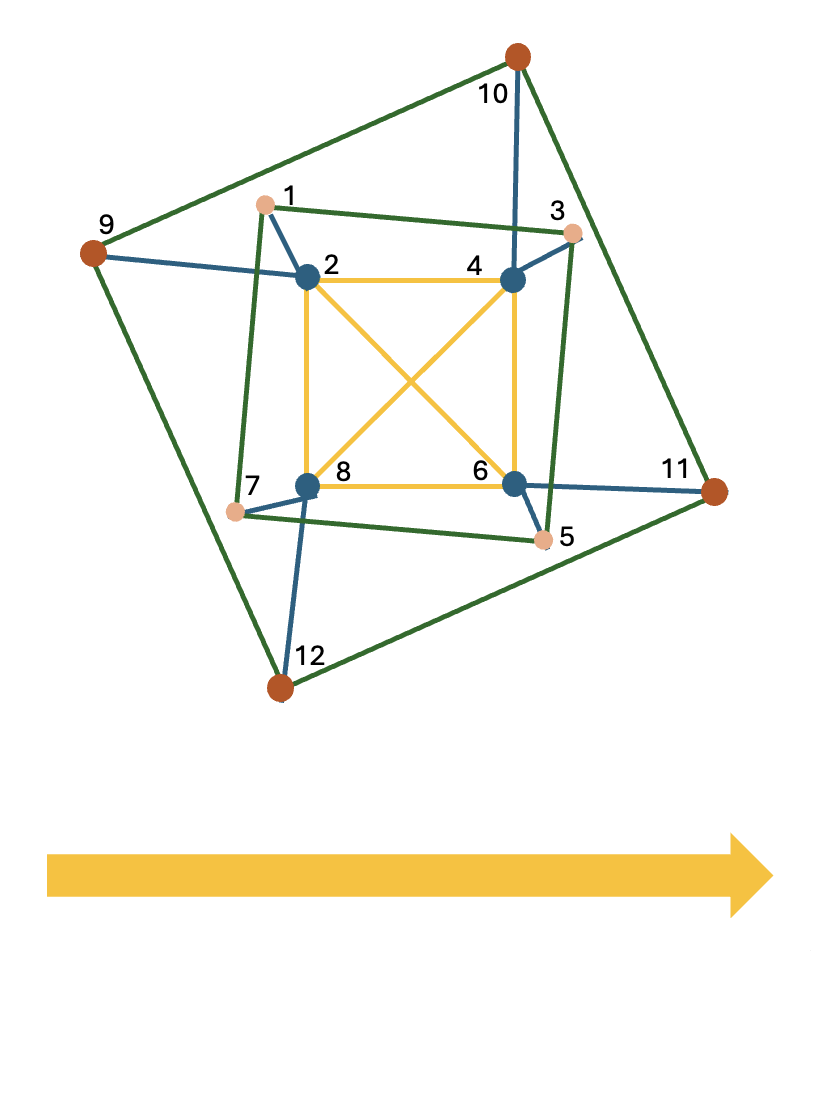}
   \includegraphics[width=0.26\linewidth]{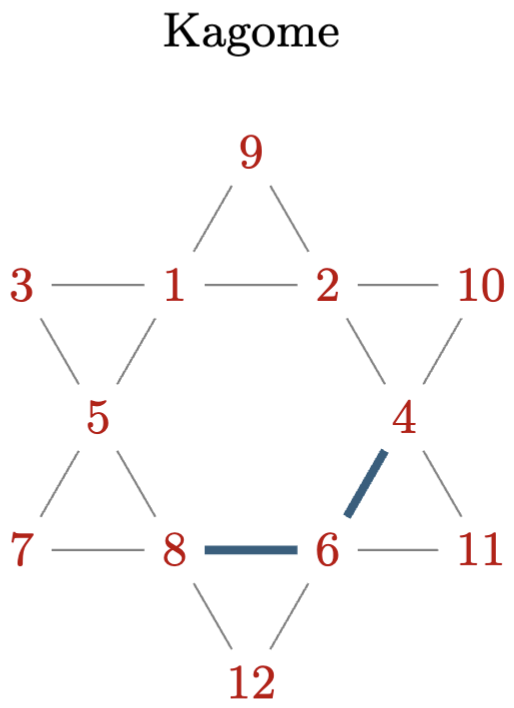}
    \caption{The mapping results. The mapping result is $[0,1,11,2,5,3,10,4,6,7,8,9]$. The red numbers represent the indices of the physical qubits, corresponding to the indices in the left panel, while the black numbers represent the positions of these physical qubits in the mapped graph. Blue edges denote the shortest path to connect qubits $4$ and $8$ in $G_{\mathrm{m}}$.}
    \label{tikz:example_mapped_kagome}
\end{figure*}

To illustrate how to calculate the Manhattan distance, consider, e.g., $\hat{X}_2\hat{X}_4 = X_4X_{8}$. Qubit $4$ is mapped to vertex $2$, and qubit $8$ is mapped to vertex $4$. The induced map contains the edge $4-8$, which support $\hat{X}_2\hat{X}_4$. One of the shortest paths from $2$ to $4$ on $G_{m}$ is $2-3-4$, resulting in $m(2,4)=2$. 

\section{Discussion}
\label{sec:discussion}

We now revisit the five criteria we outlined in \cref{sec:desiderata} and discuss them in light of our results.

\begin{enumerate}

\item Code rate:
For $[[4k+2l,2k,g,2]]$ trapezoid codes, we aim to maximize the code rate $k/(2k+l)$. We showed in \cref{sec:props-rate} that codes with $l=1$ have the maximum code rate of $r=k/(2k+1)$.  

\item Physical Locality:
With our choice of optimal dressed logical operators, all the logical operators required are $2$-body, as detailed in \cref{lem:2_local_logical}. We have also demonstrated that bare logical operators cannot all be $2$-local in \cref{lem:bare-not-2}. 

\item Induced graph properties:
The induced graph is $2$-local. However, the degree of this graph grows linearly with the number of logical qubits, complicating implementation. Ideally, every logical operator and gauge is geometrically $2$-local in a fixed degree mapped graph. However, none of the 2D geometries we explored satisfies this ideal for $l<k$ codes. Codes with $l=k$ ($[[6k,2k,2]]$ codes) inherently have almost-constant degree induced graphs (if we ignore the fully connected component needed for logical $XX$ and $ZZ$ interactions), as shown in~\cite{marvian2019robust}. 
Instead, we use the total and average Manhattan distances as metrics to measure the geometrical locality of the code. The total Manhattan distance can be interpreted as the number of SWAPs that would be needed in the circuit model if we were to apply all dressed logical operators exactly once. 

\item Mapped graph properties:
We provide an explicit mapping scheme to feasible hardware graphs in \cref{sec:optimiza} and illustrate it in detail in \cref{sec:example}. Another approach involves constructing a possibly optimal new graph based on the induced graph with constraints such as having a constant degree, but this method is harder to implement.

\item Penalty Gap:
We numerically calculate the penalty gaps of a subset of trapezoid codes (\cref{fig:gap}). Fixing $m$, for $l=1,2$ the gap closes as a power law; for $l=1$ the penalty Hamiltonian is equivalent to the 1D compass model, for which the gap is known to close as the inverse of the number of (logical) qubits, which is optimal. For $l>2$ the gap closes exponentially in $l$. It also closes exponentially in $m$ at fixed $l$. Recall that the number of logical qubits is $k=\lfloor m/2 \rfloor$ and the number of physical qubits is $4k+2l$ for odd $m$ or $4k+2(l-1)$ for even $m$. Thus, codes with smaller $l$ provide exponentially larger penalty gaps in the number of physical qubits. Trapezoid codes with $l<k$ therefore have larger penalty gaps than the $[[6k,2k,2]]$ code family in Ref.~\cite{marvian2019robust}.

\end{enumerate}

\section{Conclusions}
\label{sec:conclusion}

Analog, Hamiltonian quantum computation can be effectively protected against decoherence via quantum error detection codes implementing energy penalties. Our goal in this work was to identify a family of codes that accomplish this goal subject to common constraints imposed by hardware limitations. This includes no higher than $2$-local interactions, geometric locality, and compatibility with existing qubit connectivity graphs. It is also crucial to maintain as large an energy gap as possible above the ground subspace of the penalty Hamiltonian, in order to prevent deviations from the intended quantum evolution. 

In light of these goals, we studied subsystem codes derived from Bravyi's $A$ matrices, which are naturally suitable for 2D qubit interaction graphs. This led to the identification of a code family we call `trapezoid codes,' which exhibit favorable characteristics compatible with our criteria.  We examined the trade-offs associated with various factors affecting the practical implementation of trapezoid codes, as detailed in \cref{sec:discussion}. In particular, we demonstrated that every code within the trapezoid family possesses all $2$-local one-qubit logical operators and two-qubit dressed logical operators. 

We found that the optimal code rate achievable within this family is $k/(2k+1)$, a rate manifested by the $[[4k+2,2k,g,2]]$ code for any positive integer $k$ (here $g=2k+2l-2$). This code can be mapped to feasible hardware connectivity graphs, as discussed in \cref{sec:optimiza}. Moreover, the $[[4k+2,2k,g,2]]$ code also has the largest penalty gap among the codes in the trapezoid family, as can be seen in \cref{fig:gap}, and this gap scales as $1/(2k)$, ensuring a slower scaling with problem size $k$ than for most problems of interest in AQC.

We used the total Manhattan distance as a performance metric to evaluate the geometric locality of the mapping results and examined a set of common  native hardware graph geometries. Within this set, we found that the Union Jack lattice exhibits the lowest Manhattan distance overall (\cref{table:mapping_property}). 

In conclusion, the $[[4k+2,2k,g,2]]$ code implemented on a native Union Jack lattice presents an optimal combination of favorable properties, and appears to be a promising choice for future implementations.

Our framework is flexible enough to allow each code within the trapezoid family to be evaluated based on other criteria. We hope that the results reported here will contribute to the advancement of error-suppressed, analog Hamiltonian quantum computing.

\acknowledgments
This material is based upon work supported by, or in part by, the U. S. Army Research Laboratory and the U.S. Army Research Office under contract/grant number W911NF2310255, and by the Office of the Director of National Intelligence (ODNI), Intelligence Advanced Research Projects Activity (IARPA) and the Army Research Office, under the Entangled Logical Qubits program through Cooperative Agreement Number W911NF-23-2-0216. The views and conclusions contained in this document are those of the authors and should not be interpreted as representing the official policies, either expressed or implied, of IARPA, the Army Research Office, or the U.S. Government. The U.S. Government is authorized to reproduce and distribute reprints for Government purposes notwithstanding any copyright notation herein.
We thank Seolhwa Kim, Yanlin Cheng, Adam Bistagne, and Larry Gu for providing an early analysis that inspired some of the results presented here. 

\bibliographystyle{apsrev4-2-titles}
\bibliography{refs}

\begin{thebibliography}{58}%
\makeatletter
\providecommand \@ifxundefined [1]{%
 \@ifx{#1\undefined}
}%
\providecommand \@ifnum [1]{%
 \ifnum #1\expandafter \@firstoftwo
 \else \expandafter \@secondoftwo
 \fi
}%
\providecommand \@ifx [1]{%
 \ifx #1\expandafter \@firstoftwo
 \else \expandafter \@secondoftwo
 \fi
}%
\providecommand \natexlab [1]{#1}%
\providecommand \enquote  [1]{``#1''}%
\providecommand \bibnamefont  [1]{#1}%
\providecommand \bibfnamefont [1]{#1}%
\providecommand \citenamefont [1]{#1}%
\providecommand \href@noop [0]{\@secondoftwo}%
\providecommand \href [0]{\begingroup \@sanitize@url \@href}%
\providecommand \@href[1]{\@@startlink{#1}\@@href}%
\providecommand \@@href[1]{\endgroup#1\@@endlink}%
\providecommand \@sanitize@url [0]{\catcode `\\12\catcode `\$12\catcode
  `\&12\catcode `\#12\catcode `\^12\catcode `\_12\catcode `\%12\relax}%
\providecommand \@@startlink[1]{}%
\providecommand \@@endlink[0]{}%
\providecommand \url  [0]{\begingroup\@sanitize@url \@url }%
\providecommand \@url [1]{\endgroup\@href {#1}{\urlprefix }}%
\providecommand \urlprefix  [0]{URL }%
\providecommand \Eprint [0]{\href }%
\providecommand \doibase [0]{https://doi.org/}%
\providecommand \selectlanguage [0]{\@gobble}%
\providecommand \bibinfo  [0]{\@secondoftwo}%
\providecommand \bibfield  [0]{\@secondoftwo}%
\providecommand \translation [1]{[#1]}%
\providecommand \BibitemOpen [0]{}%
\providecommand \bibitemStop [0]{}%
\providecommand \bibitemNoStop [0]{.\EOS\space}%
\providecommand \EOS [0]{\spacefactor3000\relax}%
\providecommand \BibitemShut  [1]{\csname bibitem#1\endcsname}%
\let\auto@bib@innerbib\@empty
\bibitem [{\citenamefont {Farhi}\ \emph {et~al.}(2000)\citenamefont {Farhi},
  \citenamefont {Goldstone}, \citenamefont {Gutmann},\ and\ \citenamefont
  {Sipser}}]{Farhi:00}%
  \BibitemOpen
  \bibfield  {author} {\bibinfo {author} {\bibfnamefont {E.}~\bibnamefont
  {Farhi}}, \bibinfo {author} {\bibfnamefont {J.}~\bibnamefont {Goldstone}},
  \bibinfo {author} {\bibfnamefont {S.}~\bibnamefont {Gutmann}},\ and\ \bibinfo
  {author} {\bibfnamefont {M.}~\bibnamefont {Sipser}},\ }\bibfield  {title}
  {\emph {\bibinfo {title} {Quantum {Computation} by {Adiabatic}
  {Evolution}}},\ }\href {http://arxiv.org/abs/quant-ph/0001106} {\bibfield
  {journal} {\bibinfo  {journal} {arXiv:quant-ph/0001106}\ } (\bibinfo {year}
  {2000})}\BibitemShut {NoStop}%
\bibitem [{\citenamefont {Aharonov}\ \emph {et~al.}(2007)\citenamefont
  {Aharonov}, \citenamefont {van Dam}, \citenamefont {Kempe}, \citenamefont
  {Landau}, \citenamefont {Lloyd},\ and\ \citenamefont
  {Regev}}]{aharonov_adiabatic_2007}%
  \BibitemOpen
  \bibfield  {author} {\bibinfo {author} {\bibfnamefont {D.}~\bibnamefont
  {Aharonov}}, \bibinfo {author} {\bibfnamefont {W.}~\bibnamefont {van Dam}},
  \bibinfo {author} {\bibfnamefont {J.}~\bibnamefont {Kempe}}, \bibinfo
  {author} {\bibfnamefont {Z.}~\bibnamefont {Landau}}, \bibinfo {author}
  {\bibfnamefont {S.}~\bibnamefont {Lloyd}},\ and\ \bibinfo {author}
  {\bibfnamefont {O.}~\bibnamefont {Regev}},\ }\bibfield  {title} {\emph
  {\bibinfo {title} {Adiabatic quantum computation is equivalent to standard
  quantum computation}},\ }\href {https://doi.org/10.1137/S0097539705447323}
  {\bibfield  {journal} {\bibinfo  {journal} {SIAM Journal on Computing}\
  }\textbf {\bibinfo {volume} {37}},\ \bibinfo {pages} {166--194} (\bibinfo
  {year} {2007})}\BibitemShut {NoStop}%
\bibitem [{\citenamefont {Albash}\ and\ \citenamefont
  {Lidar}(2018{\natexlab{a}})}]{Albash-Lidar:RMP}%
  \BibitemOpen
  \bibfield  {author} {\bibinfo {author} {\bibfnamefont {T.}~\bibnamefont
  {Albash}}\ and\ \bibinfo {author} {\bibfnamefont {D.~A.}\ \bibnamefont
  {Lidar}},\ }\bibfield  {title} {\emph {\bibinfo {title} {Adiabatic quantum
  computation}},\ }\href
  {https://link.aps.org/doi/10.1103/RevModPhys.90.015002} {\bibfield  {journal}
  {\bibinfo  {journal} {Reviews of Modern Physics}\ }\textbf {\bibinfo {volume}
  {90}},\ \bibinfo {pages} {015002--} (\bibinfo {year}
  {2018}{\natexlab{a}})}\BibitemShut {NoStop}%
\bibitem [{\citenamefont {Kadowaki}\ and\ \citenamefont
  {Nishimori}(1998)}]{kadowaki_quantum_1998}%
  \BibitemOpen
  \bibfield  {author} {\bibinfo {author} {\bibfnamefont {T.}~\bibnamefont
  {Kadowaki}}\ and\ \bibinfo {author} {\bibfnamefont {H.}~\bibnamefont
  {Nishimori}},\ }\bibfield  {title} {\emph {\bibinfo {title} {Quantum
  annealing in the transverse ising model}},\ }\href
  {https://doi.org/10.1103/PhysRevE.58.5355} {\bibfield  {journal} {\bibinfo
  {journal} {Phys. Rev. E}\ }\textbf {\bibinfo {volume} {58}},\ \bibinfo
  {pages} {5355--5363} (\bibinfo {year} {1998})}\BibitemShut {NoStop}%
\bibitem [{\citenamefont {Hauke}\ \emph {et~al.}(2020)\citenamefont {Hauke},
  \citenamefont {Katzgraber}, \citenamefont {Lechner}, \citenamefont
  {Nishimori},\ and\ \citenamefont {Oliver}}]{Hauke:2019aa}%
  \BibitemOpen
  \bibfield  {author} {\bibinfo {author} {\bibfnamefont {P.}~\bibnamefont
  {Hauke}}, \bibinfo {author} {\bibfnamefont {H.~G.}\ \bibnamefont
  {Katzgraber}}, \bibinfo {author} {\bibfnamefont {W.}~\bibnamefont {Lechner}},
  \bibinfo {author} {\bibfnamefont {H.}~\bibnamefont {Nishimori}},\ and\
  \bibinfo {author} {\bibfnamefont {W.~D.}\ \bibnamefont {Oliver}},\ }\bibfield
   {title} {\emph {\bibinfo {title} {Perspectives of quantum annealing: Methods
  and implementations}},\ }\href
  {https://iopscience.iop.org/article/10.1088/1361-6633/ab85b8} {\bibfield
  {journal} {\bibinfo  {journal} {Reports on Progress in Physics}\ } (\bibinfo
  {year} {2020})}\BibitemShut {NoStop}%
\bibitem [{\citenamefont {Crosson}\ and\ \citenamefont
  {Lidar}(2021)}]{crosson2020prospects}%
  \BibitemOpen
  \bibfield  {author} {\bibinfo {author} {\bibfnamefont {E.~J.}\ \bibnamefont
  {Crosson}}\ and\ \bibinfo {author} {\bibfnamefont {D.~A.}\ \bibnamefont
  {Lidar}},\ }\bibfield  {title} {\emph {\bibinfo {title} {Prospects for
  quantum enhancement with diabatic quantum annealing}},\ }\href
  {https://www.nature.com/articles/s42254-021-00313-6} {\bibfield  {journal}
  {\bibinfo  {journal} {Nature Reviews Physics}\ }\textbf {\bibinfo {volume}
  {3}},\ \bibinfo {pages} {466--489} (\bibinfo {year} {2021})}\BibitemShut
  {NoStop}%
\bibitem [{\citenamefont {Zanardi}\ and\ \citenamefont {Rasetti}(1999)}]{HQC}%
  \BibitemOpen
  \bibfield  {author} {\bibinfo {author} {\bibfnamefont {P.}~\bibnamefont
  {Zanardi}}\ and\ \bibinfo {author} {\bibfnamefont {M.}~\bibnamefont
  {Rasetti}},\ }\bibfield  {title} {\emph {\bibinfo {title} {Holonomic quantum
  computation}},\ }\href
  {http://www.sciencedirect.com/science/article/pii/S0375960199008038}
  {\bibfield  {journal} {\bibinfo  {journal} {Physics Letters A}\ }\textbf
  {\bibinfo {volume} {264}},\ \bibinfo {pages} {94--99} (\bibinfo {year}
  {1999})}\BibitemShut {NoStop}%
\bibitem [{\citenamefont {Duan}\ \emph {et~al.}(2001)\citenamefont {Duan},
  \citenamefont {Cirac},\ and\ \citenamefont {Zoller}}]{Duan:2001ff}%
  \BibitemOpen
  \bibfield  {author} {\bibinfo {author} {\bibfnamefont {L.~M.}\ \bibnamefont
  {Duan}}, \bibinfo {author} {\bibfnamefont {J.~I.}\ \bibnamefont {Cirac}},\
  and\ \bibinfo {author} {\bibfnamefont {P.}~\bibnamefont {Zoller}},\
  }\bibfield  {title} {\emph {\bibinfo {title} {Geometric manipulation of
  trapped ions for quantum computation}},\ }\href
  {http://www.sciencemag.org/content/292/5522/1695.abstract} {\bibfield
  {journal} {\bibinfo  {journal} {Science}\ }\textbf {\bibinfo {volume}
  {292}},\ \bibinfo {pages} {1695--1697} (\bibinfo {year} {2001})}\BibitemShut
  {NoStop}%
\bibitem [{\citenamefont {Zhang}\ \emph {et~al.}(2023)\citenamefont {Zhang},
  \citenamefont {Kyaw}, \citenamefont {Filipp}, \citenamefont {Kwek},
  \citenamefont {Sj{\"o}qvist},\ and\ \citenamefont {Tong}}]{Zhang:2023aa}%
  \BibitemOpen
  \bibfield  {author} {\bibinfo {author} {\bibfnamefont {J.}~\bibnamefont
  {Zhang}}, \bibinfo {author} {\bibfnamefont {T.~H.}\ \bibnamefont {Kyaw}},
  \bibinfo {author} {\bibfnamefont {S.}~\bibnamefont {Filipp}}, \bibinfo
  {author} {\bibfnamefont {L.-C.}\ \bibnamefont {Kwek}}, \bibinfo {author}
  {\bibfnamefont {E.}~\bibnamefont {Sj{\"o}qvist}},\ and\ \bibinfo {author}
  {\bibfnamefont {D.}~\bibnamefont {Tong}},\ }\bibfield  {title} {\emph
  {\bibinfo {title} {Geometric and holonomic quantum computation}},\ }\href
  {https://doi.org/https://doi.org/10.1016/j.physrep.2023.07.004} {\bibfield
  {journal} {\bibinfo  {journal} {Physics Reports}\ }\textbf {\bibinfo {volume}
  {1027}},\ \bibinfo {pages} {1--53} (\bibinfo {year} {2023})}\BibitemShut
  {NoStop}%
\bibitem [{\citenamefont {Bernien}\ \emph {et~al.}(2017)\citenamefont
  {Bernien}, \citenamefont {Schwartz}, \citenamefont {Keesling}, \citenamefont
  {Levine}, \citenamefont {Omran}, \citenamefont {Pichler}, \citenamefont
  {Choi}, \citenamefont {Zibrov}, \citenamefont {Endres}, \citenamefont
  {Greiner}, \citenamefont {Vuleti{\'c}},\ and\ \citenamefont
  {Lukin}}]{Bernien:2017aa}%
  \BibitemOpen
  \bibfield  {author} {\bibinfo {author} {\bibfnamefont {H.}~\bibnamefont
  {Bernien}}, \bibinfo {author} {\bibfnamefont {S.}~\bibnamefont {Schwartz}},
  \bibinfo {author} {\bibfnamefont {A.}~\bibnamefont {Keesling}}, \bibinfo
  {author} {\bibfnamefont {H.}~\bibnamefont {Levine}}, \bibinfo {author}
  {\bibfnamefont {A.}~\bibnamefont {Omran}}, \bibinfo {author} {\bibfnamefont
  {H.}~\bibnamefont {Pichler}}, \bibinfo {author} {\bibfnamefont
  {S.}~\bibnamefont {Choi}}, \bibinfo {author} {\bibfnamefont {A.~S.}\
  \bibnamefont {Zibrov}}, \bibinfo {author} {\bibfnamefont {M.}~\bibnamefont
  {Endres}}, \bibinfo {author} {\bibfnamefont {M.}~\bibnamefont {Greiner}},
  \bibinfo {author} {\bibfnamefont {V.}~\bibnamefont {Vuleti{\'c}}},\ and\
  \bibinfo {author} {\bibfnamefont {M.~D.}\ \bibnamefont {Lukin}},\ }\bibfield
  {title} {\emph {\bibinfo {title} {Probing many-body dynamics on a 51-atom
  quantum simulator}},\ }\href {https://doi.org/10.1038/nature24622} {\bibfield
   {journal} {\bibinfo  {journal} {Nature}\ }\textbf {\bibinfo {volume}
  {551}},\ \bibinfo {pages} {579--584} (\bibinfo {year} {2017})}\BibitemShut
  {NoStop}%
\bibitem [{\citenamefont {King}\ \emph {et~al.}(2022)\citenamefont {King},
  \citenamefont {Suzuki}, \citenamefont {Raymond}, \citenamefont {Zucca},
  \citenamefont {Lanting}, \citenamefont {Altomare}, \citenamefont {Berkley},
  \citenamefont {Ejtemaee}, \citenamefont {Hoskinson}, \citenamefont {Huang},
  \citenamefont {Ladizinsky}, \citenamefont {MacDonald}, \citenamefont
  {Marsden}, \citenamefont {Oh}, \citenamefont {Poulin-Lamarre}, \citenamefont
  {Reis}, \citenamefont {Rich}, \citenamefont {Sato}, \citenamefont
  {Whittaker}, \citenamefont {Yao}, \citenamefont {Harris}, \citenamefont
  {Lidar}, \citenamefont {Nishimori},\ and\ \citenamefont
  {Amin}}]{king2022coherent}%
  \BibitemOpen
  \bibfield  {author} {\bibinfo {author} {\bibfnamefont {A.~D.}\ \bibnamefont
  {King}}, \bibinfo {author} {\bibfnamefont {S.}~\bibnamefont {Suzuki}},
  \bibinfo {author} {\bibfnamefont {J.}~\bibnamefont {Raymond}}, \bibinfo
  {author} {\bibfnamefont {A.}~\bibnamefont {Zucca}}, \bibinfo {author}
  {\bibfnamefont {T.}~\bibnamefont {Lanting}}, \bibinfo {author} {\bibfnamefont
  {F.}~\bibnamefont {Altomare}}, \bibinfo {author} {\bibfnamefont {A.~J.}\
  \bibnamefont {Berkley}}, \bibinfo {author} {\bibfnamefont {S.}~\bibnamefont
  {Ejtemaee}}, \bibinfo {author} {\bibfnamefont {E.}~\bibnamefont {Hoskinson}},
  \bibinfo {author} {\bibfnamefont {S.}~\bibnamefont {Huang}}, \bibinfo
  {author} {\bibfnamefont {E.}~\bibnamefont {Ladizinsky}}, \bibinfo {author}
  {\bibfnamefont {A.~J.~R.}\ \bibnamefont {MacDonald}}, \bibinfo {author}
  {\bibfnamefont {G.}~\bibnamefont {Marsden}}, \bibinfo {author} {\bibfnamefont
  {T.}~\bibnamefont {Oh}}, \bibinfo {author} {\bibfnamefont {G.}~\bibnamefont
  {Poulin-Lamarre}}, \bibinfo {author} {\bibfnamefont {M.}~\bibnamefont
  {Reis}}, \bibinfo {author} {\bibfnamefont {C.}~\bibnamefont {Rich}}, \bibinfo
  {author} {\bibfnamefont {Y.}~\bibnamefont {Sato}}, \bibinfo {author}
  {\bibfnamefont {J.~D.}\ \bibnamefont {Whittaker}}, \bibinfo {author}
  {\bibfnamefont {J.}~\bibnamefont {Yao}}, \bibinfo {author} {\bibfnamefont
  {R.}~\bibnamefont {Harris}}, \bibinfo {author} {\bibfnamefont {D.~A.}\
  \bibnamefont {Lidar}}, \bibinfo {author} {\bibfnamefont {H.}~\bibnamefont
  {Nishimori}},\ and\ \bibinfo {author} {\bibfnamefont {M.~H.}\ \bibnamefont
  {Amin}},\ }\bibfield  {title} {\emph {\bibinfo {title} {Coherent quantum
  annealing in a programmable 2,000 qubit ising chain}},\ }\href
  {https://doi.org/10.1038/s41567-022-01741-6} {\bibfield  {journal} {\bibinfo
  {journal} {Nature Physics}\ }\textbf {\bibinfo {volume} {18}},\ \bibinfo
  {pages} {1324--1328} (\bibinfo {year} {2022})}\BibitemShut {NoStop}%
\bibitem [{\citenamefont {Altman}\ \emph {et~al.}(2021)\citenamefont {Altman},
  \citenamefont {Brown}, \citenamefont {Carleo}, \citenamefont {Carr},
  \citenamefont {Demler}, \citenamefont {Chin}, \citenamefont {DeMarco},
  \citenamefont {Economou}, \citenamefont {Eriksson}, \citenamefont {Fu},
  \citenamefont {Greiner}, \citenamefont {Hazzard}, \citenamefont {Hulet},
  \citenamefont {Koll{\'a}r}, \citenamefont {Lev}, \citenamefont {Lukin},
  \citenamefont {Ma}, \citenamefont {Mi}, \citenamefont {Misra}, \citenamefont
  {Monroe}, \citenamefont {Murch}, \citenamefont {Nazario}, \citenamefont {Ni},
  \citenamefont {Potter}, \citenamefont {Roushan}, \citenamefont {Saffman},
  \citenamefont {Schleier-Smith}, \citenamefont {Siddiqi}, \citenamefont
  {Simmonds}, \citenamefont {Singh}, \citenamefont {Spielman}, \citenamefont
  {Temme}, \citenamefont {Weiss}, \citenamefont {Vu{\v c}kovi{\'c}},
  \citenamefont {Vuleti{\'c}}, \citenamefont {Ye},\ and\ \citenamefont
  {Zwierlein}}]{altmanQuantumSimulatorsArchitectures2019}%
  \BibitemOpen
  \bibfield  {author} {\bibinfo {author} {\bibfnamefont {E.}~\bibnamefont
  {Altman}}, \bibinfo {author} {\bibfnamefont {K.~R.}\ \bibnamefont {Brown}},
  \bibinfo {author} {\bibfnamefont {G.}~\bibnamefont {Carleo}}, \bibinfo
  {author} {\bibfnamefont {L.~D.}\ \bibnamefont {Carr}}, \bibinfo {author}
  {\bibfnamefont {E.}~\bibnamefont {Demler}}, \bibinfo {author} {\bibfnamefont
  {C.}~\bibnamefont {Chin}}, \bibinfo {author} {\bibfnamefont {B.}~\bibnamefont
  {DeMarco}}, \bibinfo {author} {\bibfnamefont {S.~E.}\ \bibnamefont
  {Economou}}, \bibinfo {author} {\bibfnamefont {M.~A.}\ \bibnamefont
  {Eriksson}}, \bibinfo {author} {\bibfnamefont {K.-M.~C.}\ \bibnamefont {Fu}},
  \bibinfo {author} {\bibfnamefont {M.}~\bibnamefont {Greiner}}, \bibinfo
  {author} {\bibfnamefont {K.~R.~A.}\ \bibnamefont {Hazzard}}, \bibinfo
  {author} {\bibfnamefont {R.~G.}\ \bibnamefont {Hulet}}, \bibinfo {author}
  {\bibfnamefont {A.~J.}\ \bibnamefont {Koll{\'a}r}}, \bibinfo {author}
  {\bibfnamefont {B.~L.}\ \bibnamefont {Lev}}, \bibinfo {author} {\bibfnamefont
  {M.~D.}\ \bibnamefont {Lukin}}, \bibinfo {author} {\bibfnamefont
  {R.}~\bibnamefont {Ma}}, \bibinfo {author} {\bibfnamefont {X.}~\bibnamefont
  {Mi}}, \bibinfo {author} {\bibfnamefont {S.}~\bibnamefont {Misra}}, \bibinfo
  {author} {\bibfnamefont {C.}~\bibnamefont {Monroe}}, \bibinfo {author}
  {\bibfnamefont {K.}~\bibnamefont {Murch}}, \bibinfo {author} {\bibfnamefont
  {Z.}~\bibnamefont {Nazario}}, \bibinfo {author} {\bibfnamefont {K.-K.}\
  \bibnamefont {Ni}}, \bibinfo {author} {\bibfnamefont {A.~C.}\ \bibnamefont
  {Potter}}, \bibinfo {author} {\bibfnamefont {P.}~\bibnamefont {Roushan}},
  \bibinfo {author} {\bibfnamefont {M.}~\bibnamefont {Saffman}}, \bibinfo
  {author} {\bibfnamefont {M.}~\bibnamefont {Schleier-Smith}}, \bibinfo
  {author} {\bibfnamefont {I.}~\bibnamefont {Siddiqi}}, \bibinfo {author}
  {\bibfnamefont {R.}~\bibnamefont {Simmonds}}, \bibinfo {author}
  {\bibfnamefont {M.}~\bibnamefont {Singh}}, \bibinfo {author} {\bibfnamefont
  {I.~B.}\ \bibnamefont {Spielman}}, \bibinfo {author} {\bibfnamefont
  {K.}~\bibnamefont {Temme}}, \bibinfo {author} {\bibfnamefont {D.~S.}\
  \bibnamefont {Weiss}}, \bibinfo {author} {\bibfnamefont {J.}~\bibnamefont
  {Vu{\v c}kovi{\'c}}}, \bibinfo {author} {\bibfnamefont {V.}~\bibnamefont
  {Vuleti{\'c}}}, \bibinfo {author} {\bibfnamefont {J.}~\bibnamefont {Ye}},\
  and\ \bibinfo {author} {\bibfnamefont {M.}~\bibnamefont {Zwierlein}},\
  }\bibfield  {title} {\emph {\bibinfo {title} {Quantum simulators:
  Architectures and opportunities}},\ }\href
  {https://doi.org/10.1103/PRXQuantum.2.017003} {\bibfield  {journal} {\bibinfo
   {journal} {PRX Quantum}\ }\textbf {\bibinfo {volume} {2}},\ \bibinfo {pages}
  {017003--} (\bibinfo {year} {2021})}\BibitemShut {NoStop}%
\bibitem [{\citenamefont {Farhi}\ and\ \citenamefont
  {Gutmann}(1998)}]{Farhi:1998aa}%
  \BibitemOpen
  \bibfield  {author} {\bibinfo {author} {\bibfnamefont {E.}~\bibnamefont
  {Farhi}}\ and\ \bibinfo {author} {\bibfnamefont {S.}~\bibnamefont
  {Gutmann}},\ }\bibfield  {title} {\emph {\bibinfo {title} {Quantum
  computation and decision trees}},\ }\href
  {https://doi.org/10.1103/PhysRevA.58.915} {\bibfield  {journal} {\bibinfo
  {journal} {Physical Review A}\ }\textbf {\bibinfo {volume} {58}},\ \bibinfo
  {pages} {915--928} (\bibinfo {year} {1998})}\BibitemShut {NoStop}%
\bibitem [{\citenamefont {Childs}\ \emph {et~al.}(2013)\citenamefont {Childs},
  \citenamefont {Gosset},\ and\ \citenamefont {Webb}}]{Childs:2013kx}%
  \BibitemOpen
  \bibfield  {author} {\bibinfo {author} {\bibfnamefont {A.~M.}\ \bibnamefont
  {Childs}}, \bibinfo {author} {\bibfnamefont {D.}~\bibnamefont {Gosset}},\
  and\ \bibinfo {author} {\bibfnamefont {Z.}~\bibnamefont {Webb}},\ }\bibfield
  {title} {\emph {\bibinfo {title} {Universal computation by multiparticle
  quantum walk}},\ }\href
  {http://science.sciencemag.org/content/339/6121/791.abstract} {\bibfield
  {journal} {\bibinfo  {journal} {Science}\ }\textbf {\bibinfo {volume}
  {339}},\ \bibinfo {pages} {791--794} (\bibinfo {year} {2013})}\BibitemShut
  {NoStop}%
\bibitem [{\citenamefont {M{\"u}lken}\ and\ \citenamefont
  {Blumen}(2011)}]{Mulken:2011aa}%
  \BibitemOpen
  \bibfield  {author} {\bibinfo {author} {\bibfnamefont {O.}~\bibnamefont
  {M{\"u}lken}}\ and\ \bibinfo {author} {\bibfnamefont {A.}~\bibnamefont
  {Blumen}},\ }\bibfield  {title} {\emph {\bibinfo {title} {Continuous-time
  quantum walks: Models for coherent transport on complex networks}},\ }\href
  {https://doi.org/https://doi.org/10.1016/j.physrep.2011.01.002} {\bibfield
  {journal} {\bibinfo  {journal} {Physics Reports}\ }\textbf {\bibinfo {volume}
  {502}},\ \bibinfo {pages} {37--87} (\bibinfo {year} {2011})}\BibitemShut
  {NoStop}%
\bibitem [{\citenamefont {Oreshkov}\ \emph {et~al.}(2009)\citenamefont
  {Oreshkov}, \citenamefont {Brun},\ and\ \citenamefont
  {Lidar}}]{Oreshkov:2009bl}%
  \BibitemOpen
  \bibfield  {author} {\bibinfo {author} {\bibfnamefont {O.}~\bibnamefont
  {Oreshkov}}, \bibinfo {author} {\bibfnamefont {T.~A.}\ \bibnamefont {Brun}},\
  and\ \bibinfo {author} {\bibfnamefont {D.~A.}\ \bibnamefont {Lidar}},\
  }\bibfield  {title} {\emph {\bibinfo {title} {Fault-tolerant holonomic
  quantum computation}},\ }\href
  {http://link.aps.org/doi/10.1103/PhysRevLett.102.070502} {\bibfield
  {journal} {\bibinfo  {journal} {Phys. Rev. Lett.}\ }\textbf {\bibinfo
  {volume} {102}},\ \bibinfo {pages} {070502--} (\bibinfo {year}
  {2009})}\BibitemShut {NoStop}%
\bibitem [{\citenamefont {Jordan}\ \emph {et~al.}(2006)\citenamefont {Jordan},
  \citenamefont {Farhi},\ and\ \citenamefont {Shor}}]{jordan2006error}%
  \BibitemOpen
  \bibfield  {author} {\bibinfo {author} {\bibfnamefont {S.~P.}\ \bibnamefont
  {Jordan}}, \bibinfo {author} {\bibfnamefont {E.}~\bibnamefont {Farhi}},\ and\
  \bibinfo {author} {\bibfnamefont {P.~W.}\ \bibnamefont {Shor}},\ }\bibfield
  {title} {\emph {\bibinfo {title} {Error-correcting codes for adiabatic
  quantum computation}},\ }\href
  {http://link.aps.org/doi/10.1103/PhysRevA.74.052322} {\bibfield  {journal}
  {\bibinfo  {journal} {{Phys. Rev. A}}\ }\textbf {\bibinfo {volume} {74}},\
  \bibinfo {pages} {052322} (\bibinfo {year} {2006})}\BibitemShut {NoStop}%
\bibitem [{\citenamefont {Gottesman}(1996)}]{Gottesman:1996fk}%
  \BibitemOpen
  \bibfield  {author} {\bibinfo {author} {\bibfnamefont {D.}~\bibnamefont
  {Gottesman}},\ }\bibfield  {title} {\emph {\bibinfo {title} {Class of quantum
  error-correcting codes saturating the quantum hamming bound}},\ }\href
  {https://doi.org/10.1103/PhysRevA.54.1862} {\bibfield  {journal} {\bibinfo
  {journal} {{Phys. Rev. A}}\ }\textbf {\bibinfo {volume} {54}},\ \bibinfo
  {pages} {1862} (\bibinfo {year} {1996})}\BibitemShut {NoStop}%
\bibitem [{\citenamefont {Calderbank}\ \emph {et~al.}(1998)\citenamefont
  {Calderbank}, \citenamefont {Rains}, \citenamefont {Shor},\ and\
  \citenamefont {Sloane}}]{Calderbank:98}%
  \BibitemOpen
  \bibfield  {author} {\bibinfo {author} {\bibfnamefont {A.~R.}\ \bibnamefont
  {Calderbank}}, \bibinfo {author} {\bibfnamefont {E.~M.}\ \bibnamefont
  {Rains}}, \bibinfo {author} {\bibfnamefont {P.~M.}\ \bibnamefont {Shor}},\
  and\ \bibinfo {author} {\bibfnamefont {N.~J.~A.}\ \bibnamefont {Sloane}},\
  }\bibfield  {title} {\emph {\bibinfo {title} {Quantum error correction via
  codes over $gf(4)$}},\ }\href {https://ieeexplore.ieee.org/document/681315}
  {\bibfield  {journal} {\bibinfo  {journal} {IEEE Transactions on Information
  Theory}\ }\textbf {\bibinfo {volume} {44}},\ \bibinfo {pages} {1369--1387}
  (\bibinfo {year} {1998})}\BibitemShut {NoStop}%
\bibitem [{\citenamefont {Bravyi}\ and\ \citenamefont
  {Vyalyi}(2005)}]{Bravyi:2003tx}%
  \BibitemOpen
  \bibfield  {author} {\bibinfo {author} {\bibfnamefont {S.}~\bibnamefont
  {Bravyi}}\ and\ \bibinfo {author} {\bibfnamefont {M.}~\bibnamefont
  {Vyalyi}},\ }\bibfield  {title} {\emph {\bibinfo {title} {Commutative version
  of the k-local hamiltonian problem and common eigenspace problem}},\ }\href
  {http://arXiv.org/abs/quant-ph/0308021} {\bibfield  {journal} {\bibinfo
  {journal} {Quantum Inf. and Comp.}\ }\textbf {\bibinfo {volume} {5}},\
  \bibinfo {pages} {187--215} (\bibinfo {year} {2005})}\BibitemShut {NoStop}%
\bibitem [{\citenamefont {Dennis}\ \emph {et~al.}(2002)\citenamefont {Dennis},
  \citenamefont {Kitaev}, \citenamefont {Landahl},\ and\ \citenamefont
  {Preskill}}]{Dennis:02}%
  \BibitemOpen
  \bibfield  {author} {\bibinfo {author} {\bibfnamefont {E.}~\bibnamefont
  {Dennis}}, \bibinfo {author} {\bibfnamefont {A.}~\bibnamefont {Kitaev}},
  \bibinfo {author} {\bibfnamefont {A.}~\bibnamefont {Landahl}},\ and\ \bibinfo
  {author} {\bibfnamefont {J.}~\bibnamefont {Preskill}},\ }\bibfield  {title}
  {\emph {\bibinfo {title} {Topological quantum memory}},\ }\href
  {http://scitation.aip.org/content/aip/journal/jmp/43/9/10.1063/1.1499754}
  {\bibfield  {journal} {\bibinfo  {journal} {Journal of Mathematical Physics}\
  }\textbf {\bibinfo {volume} {43}},\ \bibinfo {pages} {4452--4505} (\bibinfo
  {year} {2002})}\BibitemShut {NoStop}%
\bibitem [{\citenamefont {Alicki}\ \emph {et~al.}(2007)\citenamefont {Alicki},
  \citenamefont {Fannes},\ and\ \citenamefont {Horodecki}}]{Alicki:07a}%
  \BibitemOpen
  \bibfield  {author} {\bibinfo {author} {\bibfnamefont {R.}~\bibnamefont
  {Alicki}}, \bibinfo {author} {\bibfnamefont {M.}~\bibnamefont {Fannes}},\
  and\ \bibinfo {author} {\bibfnamefont {M.}~\bibnamefont {Horodecki}},\
  }\bibfield  {title} {\emph {\bibinfo {title} {A statistical mechanics view on
  kitaev's proposal for quantum memories}},\ }\href
  {https://doi.org/10.1088/1751-8113/40/24/012} {\bibfield  {journal} {\bibinfo
   {journal} {Journal of Physics A: Mathematical and Theoretical}\ }\textbf
  {\bibinfo {volume} {40}},\ \bibinfo {pages} {6451} (\bibinfo {year}
  {2007})}\BibitemShut {NoStop}%
\bibitem [{\citenamefont {Young}\ \emph
  {et~al.}(2013{\natexlab{a}})\citenamefont {Young}, \citenamefont
  {Blume-Kohout},\ and\ \citenamefont {Lidar}}]{Young:2013fk}%
  \BibitemOpen
  \bibfield  {author} {\bibinfo {author} {\bibfnamefont {K.~C.}\ \bibnamefont
  {Young}}, \bibinfo {author} {\bibfnamefont {R.}~\bibnamefont
  {Blume-Kohout}},\ and\ \bibinfo {author} {\bibfnamefont {D.~A.}\ \bibnamefont
  {Lidar}},\ }\bibfield  {title} {\emph {\bibinfo {title} {Adiabatic quantum
  optimization with the wrong hamiltonian}},\ }\href
  {http://link.aps.org/doi/10.1103/PhysRevA.88.062314} {\bibfield  {journal}
  {\bibinfo  {journal} {Phys. Rev. A}\ }\textbf {\bibinfo {volume} {88}},\
  \bibinfo {pages} {062314--} (\bibinfo {year}
  {2013}{\natexlab{a}})}\BibitemShut {NoStop}%
\bibitem [{\citenamefont {Young}\ \emph
  {et~al.}(2013{\natexlab{b}})\citenamefont {Young}, \citenamefont {Sarovar},\
  and\ \citenamefont {Blume-Kohout}}]{Young:13}%
  \BibitemOpen
  \bibfield  {author} {\bibinfo {author} {\bibfnamefont {K.~C.}\ \bibnamefont
  {Young}}, \bibinfo {author} {\bibfnamefont {M.}~\bibnamefont {Sarovar}},\
  and\ \bibinfo {author} {\bibfnamefont {R.}~\bibnamefont {Blume-Kohout}},\
  }\bibfield  {title} {\emph {\bibinfo {title} {Error suppression and error
  correction in adiabatic quantum computation: Techniques and challenges}},\
  }\href {http://link.aps.org/doi/10.1103/PhysRevX.3.041013} {\bibfield
  {journal} {\bibinfo  {journal} {Phys. Rev. X}\ }\textbf {\bibinfo {volume}
  {3}},\ \bibinfo {pages} {041013--} (\bibinfo {year}
  {2013}{\natexlab{b}})}\BibitemShut {NoStop}%
\bibitem [{\citenamefont {Sarovar}\ and\ \citenamefont
  {Young}(2013)}]{Sarovar:2013kx}%
  \BibitemOpen
  \bibfield  {author} {\bibinfo {author} {\bibfnamefont {M.}~\bibnamefont
  {Sarovar}}\ and\ \bibinfo {author} {\bibfnamefont {K.~C.}\ \bibnamefont
  {Young}},\ }\bibfield  {title} {\emph {\bibinfo {title} {Error suppression
  and error correction in adiabatic quantum computation: non-equilibrium
  dynamics}},\ }\href {http://stacks.iop.org/1367-2630/15/i=12/a=125032}
  {\bibfield  {journal} {\bibinfo  {journal} {New J. of Phys.}\ }\textbf
  {\bibinfo {volume} {15}},\ \bibinfo {pages} {125032} (\bibinfo {year}
  {2013})}\BibitemShut {NoStop}%
\bibitem [{\citenamefont {Ganti}\ \emph {et~al.}(2014)\citenamefont {Ganti},
  \citenamefont {Onunkwo},\ and\ \citenamefont {Young}}]{Ganti:13}%
  \BibitemOpen
  \bibfield  {author} {\bibinfo {author} {\bibfnamefont {A.}~\bibnamefont
  {Ganti}}, \bibinfo {author} {\bibfnamefont {U.}~\bibnamefont {Onunkwo}},\
  and\ \bibinfo {author} {\bibfnamefont {K.}~\bibnamefont {Young}},\ }\bibfield
   {title} {\emph {\bibinfo {title} {Family of $[[6k,2k,2]]$ codes for
  practical, scalable adiabatic quantum computation}},\ }\href
  {http://link.aps.org/doi/10.1103/PhysRevA.89.042313} {\bibfield  {journal}
  {\bibinfo  {journal} {Phys. Rev. A}\ }\textbf {\bibinfo {volume} {89}},\
  \bibinfo {pages} {042313--} (\bibinfo {year} {2014})}\BibitemShut {NoStop}%
\bibitem [{\citenamefont {Bookatz}\ \emph {et~al.}(2015)\citenamefont
  {Bookatz}, \citenamefont {Farhi},\ and\ \citenamefont
  {Zhou}}]{Bookatz:2014uq}%
  \BibitemOpen
  \bibfield  {author} {\bibinfo {author} {\bibfnamefont {A.~D.}\ \bibnamefont
  {Bookatz}}, \bibinfo {author} {\bibfnamefont {E.}~\bibnamefont {Farhi}},\
  and\ \bibinfo {author} {\bibfnamefont {L.}~\bibnamefont {Zhou}},\ }\bibfield
  {title} {\emph {\bibinfo {title} {Error suppression in hamiltonian-based
  quantum computation using energy penalties}},\ }\href
  {http://link.aps.org/doi/10.1103/PhysRevA.92.022317} {\bibfield  {journal}
  {\bibinfo  {journal} {{Phys. Rev. A}}\ }\textbf {\bibinfo {volume} {92}},\
  \bibinfo {pages} {022317--} (\bibinfo {year} {2015})}\BibitemShut {NoStop}%
\bibitem [{\citenamefont {Marvian}(2016)}]{Marvian:2016kb}%
  \BibitemOpen
  \bibfield  {author} {\bibinfo {author} {\bibfnamefont {I.}~\bibnamefont
  {Marvian}},\ }\bibfield  {title} {\emph {\bibinfo {title} {Exponential
  suppression of decoherence and relaxation of quantum systems using energy
  penalty}},\ }\href {http://arXiv.org/abs/1602.03251} {\bibfield  {journal}
  {\bibinfo  {journal} {arXiv:1602.03251}\ } (\bibinfo {year}
  {2016})}\BibitemShut {NoStop}%
\bibitem [{\citenamefont {Marvian}\ and\ \citenamefont
  {Lidar}(2017{\natexlab{a}})}]{Marvian:2017aa}%
  \BibitemOpen
  \bibfield  {author} {\bibinfo {author} {\bibfnamefont {M.}~\bibnamefont
  {Marvian}}\ and\ \bibinfo {author} {\bibfnamefont {D.~A.}\ \bibnamefont
  {Lidar}},\ }\bibfield  {title} {\emph {\bibinfo {title} {Error suppression
  for hamiltonian quantum computing in markovian environments}},\ }\href
  {http://link.aps.org/doi/10.1103/PhysRevA.95.032302} {\bibfield  {journal}
  {\bibinfo  {journal} {Physical Review A}\ }\textbf {\bibinfo {volume} {95}},\
  \bibinfo {pages} {032302--} (\bibinfo {year}
  {2017}{\natexlab{a}})}\BibitemShut {NoStop}%
\bibitem [{\citenamefont {Lidar}(2019)}]{Lidar:2019ab}%
  \BibitemOpen
  \bibfield  {author} {\bibinfo {author} {\bibfnamefont {D.~A.}\ \bibnamefont
  {Lidar}},\ }\bibfield  {title} {\emph {\bibinfo {title} {Arbitrary-time error
  suppression for markovian adiabatic quantum computing using stabilizer
  subspace codes}},\ }\href {https://doi.org/10.1103/PhysRevA.100.022326}
  {\bibfield  {journal} {\bibinfo  {journal} {Physical Review A}\ }\textbf
  {\bibinfo {volume} {100}},\ \bibinfo {pages} {022326--} (\bibinfo {year}
  {2019})}\BibitemShut {NoStop}%
\bibitem [{\citenamefont {Mizel}\ and\ \citenamefont
  {Lidar}(2004)}]{Mizel:2004ve}%
  \BibitemOpen
  \bibfield  {author} {\bibinfo {author} {\bibfnamefont {A.}~\bibnamefont
  {Mizel}}\ and\ \bibinfo {author} {\bibfnamefont {D.~A.}\ \bibnamefont
  {Lidar}},\ }\bibfield  {title} {\emph {\bibinfo {title} {Three- and four-body
  interactions in spin-based quantum computers}},\ }\href
  {http://link.aps.org/doi/10.1103/PhysRevLett.92.077903} {\bibfield  {journal}
  {\bibinfo  {journal} {Phys. Rev. Lett.}\ }\textbf {\bibinfo {volume} {92}},\
  \bibinfo {pages} {077903--} (\bibinfo {year} {2004})}\BibitemShut {NoStop}%
\bibitem [{\citenamefont {Gurian}\ \emph {et~al.}(2012)\citenamefont {Gurian},
  \citenamefont {Cheinet}, \citenamefont {Huillery}, \citenamefont {Fioretti},
  \citenamefont {Zhao}, \citenamefont {Gould}, \citenamefont {Comparat},\ and\
  \citenamefont {Pillet}}]{Gurian:2012aa}%
  \BibitemOpen
  \bibfield  {author} {\bibinfo {author} {\bibfnamefont {J.~H.}\ \bibnamefont
  {Gurian}}, \bibinfo {author} {\bibfnamefont {P.}~\bibnamefont {Cheinet}},
  \bibinfo {author} {\bibfnamefont {P.}~\bibnamefont {Huillery}}, \bibinfo
  {author} {\bibfnamefont {A.}~\bibnamefont {Fioretti}}, \bibinfo {author}
  {\bibfnamefont {J.}~\bibnamefont {Zhao}}, \bibinfo {author} {\bibfnamefont
  {P.~L.}\ \bibnamefont {Gould}}, \bibinfo {author} {\bibfnamefont
  {D.}~\bibnamefont {Comparat}},\ and\ \bibinfo {author} {\bibfnamefont
  {P.}~\bibnamefont {Pillet}},\ }\bibfield  {title} {\emph {\bibinfo {title}
  {Observation of a resonant four-body interaction in cold cesium rydberg
  atoms}},\ }\href {https://doi.org/10.1103/PhysRevLett.108.023005} {\bibfield
  {journal} {\bibinfo  {journal} {Physical Review Letters}\ }\textbf {\bibinfo
  {volume} {108}},\ \bibinfo {pages} {023005--} (\bibinfo {year}
  {2012})}\BibitemShut {NoStop}%
\bibitem [{\citenamefont {Dai}\ \emph {et~al.}(2017)\citenamefont {Dai},
  \citenamefont {Yang}, \citenamefont {Reingruber}, \citenamefont {Sun},
  \citenamefont {Xu}, \citenamefont {Chen}, \citenamefont {Yuan},\ and\
  \citenamefont {Pan}}]{Dai:2017aa}%
  \BibitemOpen
  \bibfield  {author} {\bibinfo {author} {\bibfnamefont {H.-N.}\ \bibnamefont
  {Dai}}, \bibinfo {author} {\bibfnamefont {B.}~\bibnamefont {Yang}}, \bibinfo
  {author} {\bibfnamefont {A.}~\bibnamefont {Reingruber}}, \bibinfo {author}
  {\bibfnamefont {H.}~\bibnamefont {Sun}}, \bibinfo {author} {\bibfnamefont
  {X.-F.}\ \bibnamefont {Xu}}, \bibinfo {author} {\bibfnamefont {Y.-A.}\
  \bibnamefont {Chen}}, \bibinfo {author} {\bibfnamefont {Z.-S.}\ \bibnamefont
  {Yuan}},\ and\ \bibinfo {author} {\bibfnamefont {J.-W.}\ \bibnamefont
  {Pan}},\ }\bibfield  {title} {\emph {\bibinfo {title} {Four-body
  ring-exchange interactions and anyonic statistics within a minimal toric-code
  hamiltonian}},\ }\href {https://doi.org/10.1038/nphys4243} {\bibfield
  {journal} {\bibinfo  {journal} {Nature Physics}\ }\textbf {\bibinfo {volume}
  {13}},\ \bibinfo {pages} {1195--1200} (\bibinfo {year} {2017})}\BibitemShut
  {NoStop}%
\bibitem [{\citenamefont {Chirolli}\ \emph {et~al.}(2024)\citenamefont
  {Chirolli}, \citenamefont {Braggio},\ and\ \citenamefont
  {Giazotto}}]{Chirolli:2024aa}%
  \BibitemOpen
  \bibfield  {author} {\bibinfo {author} {\bibfnamefont {L.}~\bibnamefont
  {Chirolli}}, \bibinfo {author} {\bibfnamefont {A.}~\bibnamefont {Braggio}},\
  and\ \bibinfo {author} {\bibfnamefont {F.}~\bibnamefont {Giazotto}},\
  }\bibfield  {title} {\emph {\bibinfo {title} {Cooper quartets in interacting
  hybrid superconducting systems}},\ }\href
  {https://doi.org/10.1103/PhysRevResearch.6.033171} {\bibfield  {journal}
  {\bibinfo  {journal} {Physical Review Research}\ }\textbf {\bibinfo {volume}
  {6}},\ \bibinfo {pages} {033171--} (\bibinfo {year} {2024})}\BibitemShut
  {NoStop}%
\bibitem [{\citenamefont {Marvian}\ and\ \citenamefont
  {Lidar}(2014)}]{Marvian:2014nr}%
  \BibitemOpen
  \bibfield  {author} {\bibinfo {author} {\bibfnamefont {I.}~\bibnamefont
  {Marvian}}\ and\ \bibinfo {author} {\bibfnamefont {D.~A.}\ \bibnamefont
  {Lidar}},\ }\bibfield  {title} {\emph {\bibinfo {title} {Quantum error
  suppression with commuting hamiltonians: Two local is too local}},\ }\href
  {http://link.aps.org/doi/10.1103/PhysRevLett.113.260504} {\bibfield
  {journal} {\bibinfo  {journal} {Phys. Rev. Lett.}\ }\textbf {\bibinfo
  {volume} {113}},\ \bibinfo {pages} {260504--} (\bibinfo {year}
  {2014})}\BibitemShut {NoStop}%
\bibitem [{\citenamefont {Jiang}\ and\ \citenamefont
  {Rieffel}(2017)}]{Jiang:2015kx}%
  \BibitemOpen
  \bibfield  {author} {\bibinfo {author} {\bibfnamefont {Z.}~\bibnamefont
  {Jiang}}\ and\ \bibinfo {author} {\bibfnamefont {E.~G.}\ \bibnamefont
  {Rieffel}},\ }\bibfield  {title} {\emph {\bibinfo {title} {Non-commuting
  two-local hamiltonians for quantum error suppression}},\ }\href
  {https://doi.org/10.1007/s11128-017-1527-9} {\bibfield  {journal} {\bibinfo
  {journal} {{Quant. Inf. Proc.}}\ }\textbf {\bibinfo {volume} {16}},\ \bibinfo
  {pages} {89} (\bibinfo {year} {2017})}\BibitemShut {NoStop}%
\bibitem [{\citenamefont {Marvian}\ and\ \citenamefont
  {Lidar}(2017{\natexlab{b}})}]{Marvian-Lidar:16}%
  \BibitemOpen
  \bibfield  {author} {\bibinfo {author} {\bibfnamefont {M.}~\bibnamefont
  {Marvian}}\ and\ \bibinfo {author} {\bibfnamefont {D.~A.}\ \bibnamefont
  {Lidar}},\ }\bibfield  {title} {\emph {\bibinfo {title} {{Error Suppression
  for Hamiltonian-Based Quantum Computation Using Subsystem Codes}}},\ }\href
  {https://link.aps.org/doi/10.1103/PhysRevLett.118.030504} {\bibfield
  {journal} {\bibinfo  {journal} {Phys. Rev. Lett.}\ }\textbf {\bibinfo
  {volume} {118}},\ \bibinfo {pages} {030504--} (\bibinfo {year}
  {2017}{\natexlab{b}})}\BibitemShut {NoStop}%
\bibitem [{\citenamefont {Poulin}(2005)}]{poulin_stabilizer_2005}%
  \BibitemOpen
  \bibfield  {author} {\bibinfo {author} {\bibfnamefont {D.}~\bibnamefont
  {Poulin}},\ }\bibfield  {title} {\emph {\bibinfo {title} {Stabilizer
  formalism for operator quantum error correction}},\ }\href
  {https://doi.org/10.1103/PhysRevLett.95.230504} {\bibfield  {journal}
  {\bibinfo  {journal} {Phys. Rev. Lett.}\ }\textbf {\bibinfo {volume} {95}},\
  \bibinfo {pages} {230504} (\bibinfo {year} {2005})}\BibitemShut {NoStop}%
\bibitem [{\citenamefont {Marvian}\ and\ \citenamefont
  {Lloyd}(2019)}]{marvian2019robust}%
  \BibitemOpen
  \bibfield  {author} {\bibinfo {author} {\bibfnamefont {M.}~\bibnamefont
  {Marvian}}\ and\ \bibinfo {author} {\bibfnamefont {S.}~\bibnamefont
  {Lloyd}},\ }\bibfield  {title} {\emph {\bibinfo {title} {Robust universal
  hamiltonian quantum computing using two-body interactions}},\ }\href
  {https://arxiv.org/abs/1911.01354} {\bibfield  {journal} {\bibinfo  {journal}
  {arXiv:1911.01354}\ } (\bibinfo {year} {2019})}\BibitemShut {NoStop}%
\bibitem [{\citenamefont {Bravyi}(2011)}]{Bravyi2011}%
  \BibitemOpen
  \bibfield  {author} {\bibinfo {author} {\bibfnamefont {S.}~\bibnamefont
  {Bravyi}},\ }\bibfield  {title} {\emph {\bibinfo {title} {Subsystem codes
  with spatially local generators}},\ }\href
  {https://doi.org/10.1103/PhysRevA.83.012320} {\bibfield  {journal} {\bibinfo
  {journal} {Phys. Rev. A}\ }\textbf {\bibinfo {volume} {83}},\ \bibinfo
  {pages} {012320} (\bibinfo {year} {2011})}\BibitemShut {NoStop}%
\bibitem [{\citenamefont {Biamonte}\ and\ \citenamefont
  {Love}(2008)}]{Biamonte:07}%
  \BibitemOpen
  \bibfield  {author} {\bibinfo {author} {\bibfnamefont {J.~D.}\ \bibnamefont
  {Biamonte}}\ and\ \bibinfo {author} {\bibfnamefont {P.~J.}\ \bibnamefont
  {Love}},\ }\bibfield  {title} {\emph {\bibinfo {title} {Realizable
  hamiltonians for universal adiabatic quantum computers}},\ }\href
  {https://doi.org/10.1103/PhysRevA.78.012352} {\bibfield  {journal} {\bibinfo
  {journal} {Phys. Rev. A}\ }\textbf {\bibinfo {volume} {78}},\ \bibinfo
  {pages} {012352} (\bibinfo {year} {2008})}\BibitemShut {NoStop}%
\bibitem [{\citenamefont {Johnson}\ \emph {et~al.}(2011)\citenamefont
  {Johnson}, \citenamefont {Amin}, \citenamefont {Gildert}, \citenamefont
  {Lanting}, \citenamefont {Hamze}, \citenamefont {Dickson}, \citenamefont
  {Harris}, \citenamefont {Berkley}, \citenamefont {Johansson}, \citenamefont
  {Bunyk}, \citenamefont {Chapple}, \citenamefont {Enderud}, \citenamefont
  {Hilton}, \citenamefont {Karimi}, \citenamefont {Ladizinsky}, \citenamefont
  {Ladizinsky}, \citenamefont {Oh}, \citenamefont {Perminov}, \citenamefont
  {Rich}, \citenamefont {Thom}, \citenamefont {Tolkacheva}, \citenamefont
  {Truncik}, \citenamefont {Uchaikin}, \citenamefont {Wang}, \citenamefont
  {Wilson},\ and\ \citenamefont {Rose}}]{Dwave}%
  \BibitemOpen
  \bibfield  {author} {\bibinfo {author} {\bibfnamefont {M.~W.}\ \bibnamefont
  {Johnson}}, \bibinfo {author} {\bibfnamefont {M.~H.~S.}\ \bibnamefont
  {Amin}}, \bibinfo {author} {\bibfnamefont {S.}~\bibnamefont {Gildert}},
  \bibinfo {author} {\bibfnamefont {T.}~\bibnamefont {Lanting}}, \bibinfo
  {author} {\bibfnamefont {F.}~\bibnamefont {Hamze}}, \bibinfo {author}
  {\bibfnamefont {N.}~\bibnamefont {Dickson}}, \bibinfo {author} {\bibfnamefont
  {R.}~\bibnamefont {Harris}}, \bibinfo {author} {\bibfnamefont {A.~J.}\
  \bibnamefont {Berkley}}, \bibinfo {author} {\bibfnamefont {J.}~\bibnamefont
  {Johansson}}, \bibinfo {author} {\bibfnamefont {P.}~\bibnamefont {Bunyk}},
  \bibinfo {author} {\bibfnamefont {E.~M.}\ \bibnamefont {Chapple}}, \bibinfo
  {author} {\bibfnamefont {C.}~\bibnamefont {Enderud}}, \bibinfo {author}
  {\bibfnamefont {J.~P.}\ \bibnamefont {Hilton}}, \bibinfo {author}
  {\bibfnamefont {K.}~\bibnamefont {Karimi}}, \bibinfo {author} {\bibfnamefont
  {E.}~\bibnamefont {Ladizinsky}}, \bibinfo {author} {\bibfnamefont
  {N.}~\bibnamefont {Ladizinsky}}, \bibinfo {author} {\bibfnamefont
  {T.}~\bibnamefont {Oh}}, \bibinfo {author} {\bibfnamefont {I.}~\bibnamefont
  {Perminov}}, \bibinfo {author} {\bibfnamefont {C.}~\bibnamefont {Rich}},
  \bibinfo {author} {\bibfnamefont {M.~C.}\ \bibnamefont {Thom}}, \bibinfo
  {author} {\bibfnamefont {E.}~\bibnamefont {Tolkacheva}}, \bibinfo {author}
  {\bibfnamefont {C.~J.~S.}\ \bibnamefont {Truncik}}, \bibinfo {author}
  {\bibfnamefont {S.}~\bibnamefont {Uchaikin}}, \bibinfo {author}
  {\bibfnamefont {J.}~\bibnamefont {Wang}}, \bibinfo {author} {\bibfnamefont
  {B.}~\bibnamefont {Wilson}},\ and\ \bibinfo {author} {\bibfnamefont
  {G.}~\bibnamefont {Rose}},\ }\bibfield  {title} {\emph {\bibinfo {title}
  {Quantum annealing with manufactured spins}},\ }\href
  {https://www.nature.com/nature/journal/v473/n7346/full/nature10012.html}
  {\bibfield  {journal} {\bibinfo  {journal} {Nature}\ }\textbf {\bibinfo
  {volume} {473}},\ \bibinfo {pages} {194--198} (\bibinfo {year}
  {2011})}\BibitemShut {NoStop}%
\bibitem [{\citenamefont {Boixo}\ \emph {et~al.}(2014)\citenamefont {Boixo},
  \citenamefont {Ronnow}, \citenamefont {Isakov}, \citenamefont {Wang},
  \citenamefont {Wecker}, \citenamefont {Lidar}, \citenamefont {Martinis},\
  and\ \citenamefont {Troyer}}]{q108}%
  \BibitemOpen
  \bibfield  {author} {\bibinfo {author} {\bibfnamefont {S.}~\bibnamefont
  {Boixo}}, \bibinfo {author} {\bibfnamefont {T.~F.}\ \bibnamefont {Ronnow}},
  \bibinfo {author} {\bibfnamefont {S.~V.}\ \bibnamefont {Isakov}}, \bibinfo
  {author} {\bibfnamefont {Z.}~\bibnamefont {Wang}}, \bibinfo {author}
  {\bibfnamefont {D.}~\bibnamefont {Wecker}}, \bibinfo {author} {\bibfnamefont
  {D.~A.}\ \bibnamefont {Lidar}}, \bibinfo {author} {\bibfnamefont {J.~M.}\
  \bibnamefont {Martinis}},\ and\ \bibinfo {author} {\bibfnamefont
  {M.}~\bibnamefont {Troyer}},\ }\bibfield  {title} {\emph {\bibinfo {title}
  {Evidence for quantum annealing with more than one hundred qubits}},\ }\href
  {https://doi.org/10.1038/nphys2900} {\bibfield  {journal} {\bibinfo
  {journal} {Nat. Phys.}\ }\textbf {\bibinfo {volume} {10}},\ \bibinfo {pages}
  {218--224} (\bibinfo {year} {2014})}\BibitemShut {NoStop}%
\bibitem [{\citenamefont {Boixo}\ \emph {et~al.}(2016)\citenamefont {Boixo},
  \citenamefont {Smelyanskiy}, \citenamefont {Shabani}, \citenamefont {Isakov},
  \citenamefont {Dykman}, \citenamefont {Denchev}, \citenamefont {Amin},
  \citenamefont {Smirnov}, \citenamefont {Mohseni},\ and\ \citenamefont
  {Neven}}]{Boixo:2014yu}%
  \BibitemOpen
  \bibfield  {author} {\bibinfo {author} {\bibfnamefont {S.}~\bibnamefont
  {Boixo}}, \bibinfo {author} {\bibfnamefont {V.~N.}\ \bibnamefont
  {Smelyanskiy}}, \bibinfo {author} {\bibfnamefont {A.}~\bibnamefont
  {Shabani}}, \bibinfo {author} {\bibfnamefont {S.~V.}\ \bibnamefont {Isakov}},
  \bibinfo {author} {\bibfnamefont {M.}~\bibnamefont {Dykman}}, \bibinfo
  {author} {\bibfnamefont {V.~S.}\ \bibnamefont {Denchev}}, \bibinfo {author}
  {\bibfnamefont {M.~H.}\ \bibnamefont {Amin}}, \bibinfo {author}
  {\bibfnamefont {A.~Y.}\ \bibnamefont {Smirnov}}, \bibinfo {author}
  {\bibfnamefont {M.}~\bibnamefont {Mohseni}},\ and\ \bibinfo {author}
  {\bibfnamefont {H.}~\bibnamefont {Neven}},\ }\bibfield  {title} {\emph
  {\bibinfo {title} {Computational multiqubit tunnelling in programmable
  quantum annealers}},\ }\href {http://dx.doi.org/10.1038/ncomms10327}
  {\bibfield  {journal} {\bibinfo  {journal} {Nat Commun}\ }\textbf {\bibinfo
  {volume} {7}} (\bibinfo {year} {2016})}\BibitemShut {NoStop}%
\bibitem [{\citenamefont {Albash}\ and\ \citenamefont
  {Lidar}(2018{\natexlab{b}})}]{Albash:2017aa}%
  \BibitemOpen
  \bibfield  {author} {\bibinfo {author} {\bibfnamefont {T.}~\bibnamefont
  {Albash}}\ and\ \bibinfo {author} {\bibfnamefont {D.~A.}\ \bibnamefont
  {Lidar}},\ }\bibfield  {title} {\emph {\bibinfo {title} {Demonstration of a
  scaling advantage for a quantum annealer over simulated annealing}},\ }\href
  {https://doi.org/10.1103/PhysRevX.8.031016} {\bibfield  {journal} {\bibinfo
  {journal} {Physical Review X}\ }\textbf {\bibinfo {volume} {8}},\ \bibinfo
  {pages} {031016--} (\bibinfo {year} {2018}{\natexlab{b}})}\BibitemShut
  {NoStop}%
\bibitem [{\citenamefont {Mandr{\`a}}\ and\ \citenamefont
  {Katzgraber}(2018)}]{Mandra:2017ab}%
  \BibitemOpen
  \bibfield  {author} {\bibinfo {author} {\bibfnamefont {S.}~\bibnamefont
  {Mandr{\`a}}}\ and\ \bibinfo {author} {\bibfnamefont {H.~G.}\ \bibnamefont
  {Katzgraber}},\ }\bibfield  {title} {\emph {\bibinfo {title} {A deceptive
  step towards quantum speedup detection}},\ }\href
  {https://doi.org/10.1088/2058-9565/aac8b2} {\bibfield  {journal} {\bibinfo
  {journal} {Quantum Sci. Technol.}\ }\textbf {\bibinfo {volume} {3}},\
  \bibinfo {pages} {04LT01} (\bibinfo {year} {2018})}\BibitemShut {NoStop}%
\bibitem [{\citenamefont {Kowalsky}\ \emph {et~al.}(2022)\citenamefont
  {Kowalsky}, \citenamefont {Albash}, \citenamefont {Hen},\ and\ \citenamefont
  {Lidar}}]{kowalsky20213regular}%
  \BibitemOpen
  \bibfield  {author} {\bibinfo {author} {\bibfnamefont {M.}~\bibnamefont
  {Kowalsky}}, \bibinfo {author} {\bibfnamefont {T.}~\bibnamefont {Albash}},
  \bibinfo {author} {\bibfnamefont {I.}~\bibnamefont {Hen}},\ and\ \bibinfo
  {author} {\bibfnamefont {D.~A.}\ \bibnamefont {Lidar}},\ }\bibfield  {title}
  {\emph {\bibinfo {title} {3-regular three-xorsat planted solutions benchmark
  of classical and quantum heuristic optimizers}},\ }\href
  {https://doi.org/10.1088/2058-9565/ac4d1b} {\bibfield  {journal} {\bibinfo
  {journal} {Quantum Science and Technology}\ }\textbf {\bibinfo {volume}
  {7}},\ \bibinfo {pages} {025008} (\bibinfo {year} {2022})}\BibitemShut
  {NoStop}%
\bibitem [{\citenamefont {Knill}\ and\ \citenamefont
  {Laflamme}(1997)}]{Knill:1997kx}%
  \BibitemOpen
  \bibfield  {author} {\bibinfo {author} {\bibfnamefont {E.}~\bibnamefont
  {Knill}}\ and\ \bibinfo {author} {\bibfnamefont {R.}~\bibnamefont
  {Laflamme}},\ }\bibfield  {title} {\emph {\bibinfo {title} {Theory of quantum
  error-correcting codes}},\ }\href
  {http://link.aps.org/doi/10.1103/PhysRevA.55.900} {\bibfield  {journal}
  {\bibinfo  {journal} {Phys. Rev. A}\ }\textbf {\bibinfo {volume} {55}},\
  \bibinfo {pages} {900--911} (\bibinfo {year} {1997})}\BibitemShut {NoStop}%
\bibitem [{\citenamefont {Bacon}(2006)}]{Bacon:05}%
  \BibitemOpen
  \bibfield  {author} {\bibinfo {author} {\bibfnamefont {D.}~\bibnamefont
  {Bacon}},\ }\bibfield  {title} {\emph {\bibinfo {title} {Operator quantum
  error-correcting subsystems for self-correcting quantum memories}},\ }\href
  {https://doi.org/10.1103/PhysRevA.73.012340} {\bibfield  {journal} {\bibinfo
  {journal} {Phys. Rev. A}\ }\textbf {\bibinfo {volume} {73}},\ \bibinfo
  {pages} {012340} (\bibinfo {year} {2006})}\BibitemShut {NoStop}%
\bibitem [{\citenamefont {Aliferis}\ and\ \citenamefont
  {Cross}(2007)}]{Aliferis:07}%
  \BibitemOpen
  \bibfield  {author} {\bibinfo {author} {\bibfnamefont {P.}~\bibnamefont
  {Aliferis}}\ and\ \bibinfo {author} {\bibfnamefont {A.~W.}\ \bibnamefont
  {Cross}},\ }\bibfield  {title} {\emph {\bibinfo {title} {Subsystem fault
  tolerance with the bacon-shor code}},\ }\href
  {https://doi.org/10.1103/PhysRevLett.98.220502} {\bibfield  {journal}
  {\bibinfo  {journal} {Phys. Rev. Lett.}\ }\textbf {\bibinfo {volume} {98}},\
  \bibinfo {pages} {220502} (\bibinfo {year} {2007})}\BibitemShut {NoStop}%
\bibitem [{\citenamefont {Burton}(2018)}]{burton2018spectra}%
  \BibitemOpen
  \bibfield  {author} {\bibinfo {author} {\bibfnamefont {S.}~\bibnamefont
  {Burton}},\ }\bibfield  {title} {\emph {\bibinfo {title} {Spectra of gauge
  code hamiltonians}},\ }\href {https://arxiv.org/abs/1801.03243} {\bibfield
  {journal} {\bibinfo  {journal} {arXiv:1801.03243}\ } (\bibinfo {year}
  {2018})}\BibitemShut {NoStop}%
\bibitem [{\citenamefont {Nussinov}\ and\ \citenamefont {van~den
  Brink}(2015)}]{Nussinov:2015aa}%
  \BibitemOpen
  \bibfield  {author} {\bibinfo {author} {\bibfnamefont {Z.}~\bibnamefont
  {Nussinov}}\ and\ \bibinfo {author} {\bibfnamefont {J.}~\bibnamefont {van~den
  Brink}},\ }\bibfield  {title} {\emph {\bibinfo {title} {Compass models:
  Theory and physical motivations}},\ }\href
  {https://doi.org/10.1103/RevModPhys.87.1} {\bibfield  {journal} {\bibinfo
  {journal} {Reviews of Modern Physics}\ }\textbf {\bibinfo {volume} {87}},\
  \bibinfo {pages} {1--59} (\bibinfo {year} {2015})}\BibitemShut {NoStop}%
\bibitem [{\citenamefont {Pfeuty}(1970)}]{PFEUTY197079}%
  \BibitemOpen
  \bibfield  {author} {\bibinfo {author} {\bibfnamefont {P.}~\bibnamefont
  {Pfeuty}},\ }\bibfield  {title} {\emph {\bibinfo {title} {The one-dimensional
  ising model with a transverse field}},\ }\href
  {https://doi.org/https://doi.org/10.1016/0003-4916(70)90270-8} {\bibfield
  {journal} {\bibinfo  {journal} {Annals of Physics}\ }\textbf {\bibinfo
  {volume} {57}},\ \bibinfo {pages} {79--90} (\bibinfo {year}
  {1970})}\BibitemShut {NoStop}%
\bibitem [{\citenamefont {Brzezicki}\ \emph {et~al.}(2007)\citenamefont
  {Brzezicki}, \citenamefont {Dziarmaga},\ and\ \citenamefont
  {Ole{\'s}}}]{Brzezicki:2007aa}%
  \BibitemOpen
  \bibfield  {author} {\bibinfo {author} {\bibfnamefont {W.}~\bibnamefont
  {Brzezicki}}, \bibinfo {author} {\bibfnamefont {J.}~\bibnamefont
  {Dziarmaga}},\ and\ \bibinfo {author} {\bibfnamefont {A.~M.}\ \bibnamefont
  {Ole{\'s}}},\ }\bibfield  {title} {\emph {\bibinfo {title} {Quantum phase
  transition in the one-dimensional compass model}},\ }\href
  {https://doi.org/10.1103/PhysRevB.75.134415} {\bibfield  {journal} {\bibinfo
  {journal} {Physical Review B}\ }\textbf {\bibinfo {volume} {75}},\ \bibinfo
  {pages} {134415--} (\bibinfo {year} {2007})}\BibitemShut {NoStop}%
\bibitem [{\citenamefont {Brzezicki}\ and\ \citenamefont
  {Ole\ifmmode~\acute{s}\else \'{s}\fi{}}(2013)}]{PhysRevB.87.214421}%
  \BibitemOpen
  \bibfield  {author} {\bibinfo {author} {\bibfnamefont {W.}~\bibnamefont
  {Brzezicki}}\ and\ \bibinfo {author} {\bibfnamefont {A.~M.}\ \bibnamefont
  {Ole\ifmmode~\acute{s}\else \'{s}\fi{}}},\ }\bibfield  {title} {\emph
  {\bibinfo {title} {Symmetry properties and spectra of the two-dimensional
  quantum compass model}},\ }\href {https://doi.org/10.1103/PhysRevB.87.214421}
  {\bibfield  {journal} {\bibinfo  {journal} {Phys. Rev. B}\ }\textbf {\bibinfo
  {volume} {87}},\ \bibinfo {pages} {214421} (\bibinfo {year}
  {2013})}\BibitemShut {NoStop}%
\bibitem [{\citenamefont {Brzezicki}\ and\ \citenamefont
  {Ole\ifmmode~\acute{s}\else \'{s}\fi{}}(2009)}]{PhysRevB.80.014405}%
  \BibitemOpen
  \bibfield  {author} {\bibinfo {author} {\bibfnamefont {W.}~\bibnamefont
  {Brzezicki}}\ and\ \bibinfo {author} {\bibfnamefont {A.~M.}\ \bibnamefont
  {Ole\ifmmode~\acute{s}\else \'{s}\fi{}}},\ }\bibfield  {title} {\emph
  {\bibinfo {title} {Exact solution for a quantum compass ladder}},\ }\href
  {https://doi.org/10.1103/PhysRevB.80.014405} {\bibfield  {journal} {\bibinfo
  {journal} {Phys. Rev. B}\ }\textbf {\bibinfo {volume} {80}},\ \bibinfo
  {pages} {014405} (\bibinfo {year} {2009})}\BibitemShut {NoStop}%
\bibitem [{\citenamefont {Dorier}\ \emph {et~al.}(2005)\citenamefont {Dorier},
  \citenamefont {Becca},\ and\ \citenamefont {Mila}}]{PhysRevB.72.024448}%
  \BibitemOpen
  \bibfield  {author} {\bibinfo {author} {\bibfnamefont {J.}~\bibnamefont
  {Dorier}}, \bibinfo {author} {\bibfnamefont {F.}~\bibnamefont {Becca}},\ and\
  \bibinfo {author} {\bibfnamefont {F.}~\bibnamefont {Mila}},\ }\bibfield
  {title} {\emph {\bibinfo {title} {Quantum compass model on the square
  lattice}},\ }\href {https://doi.org/10.1103/PhysRevB.72.024448} {\bibfield
  {journal} {\bibinfo  {journal} {Phys. Rev. B}\ }\textbf {\bibinfo {volume}
  {72}},\ \bibinfo {pages} {024448} (\bibinfo {year} {2005})}\BibitemShut
  {NoStop}%
\bibitem [{\citenamefont {Nannicini}\ \emph {et~al.}(2022)\citenamefont
  {Nannicini}, \citenamefont {Bishop}, \citenamefont {G{\"u}nl{\"u}k},\ and\
  \citenamefont {Jurcevic}}]{nannicini2022optimal}%
  \BibitemOpen
  \bibfield  {author} {\bibinfo {author} {\bibfnamefont {G.}~\bibnamefont
  {Nannicini}}, \bibinfo {author} {\bibfnamefont {L.~S.}\ \bibnamefont
  {Bishop}}, \bibinfo {author} {\bibfnamefont {O.}~\bibnamefont
  {G{\"u}nl{\"u}k}},\ and\ \bibinfo {author} {\bibfnamefont {P.}~\bibnamefont
  {Jurcevic}},\ }\bibfield  {title} {\emph {\bibinfo {title} {Optimal qubit
  assignment and routing via integer programming}},\ }\href
  {https://dl.acm.org/doi/abs/10.1145/3544563} {\bibfield  {journal} {\bibinfo
  {journal} {ACM Transactions on Quantum Computing}\ }\textbf {\bibinfo
  {volume} {4}},\ \bibinfo {pages} {1--31} (\bibinfo {year}
  {2022})}\BibitemShut {NoStop}%
\end{thebibliography}%

\onecolumn\newpage
\appendix

\section{Equivalence of $A$ matrix via permuting rows and columns}\label{app:A-perm}
There exist multiple approaches to constructing Bravyi's $A$ matrix corresponding to a specified set of code parameters $[[n,k,g,d]]$. \cref{lem:equiv-A} below demonstrates that these various forms are equivalent, as the code parameters are preserved under both row and column permutations. Specifically, while gauge generators undergo modifications through permutations, the gauge group and stabilizers remain invariant.

\begin{mylemma}
\label{lem:equiv-A}
A code associated with a given $A$ matrix is invariant under permutations of rows and columns of the same $A$ matrix. I.e., the gauge group is invariant under permutations. 
\end{mylemma}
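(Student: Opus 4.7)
The plan is to exhibit, for each row or column permutation of $A$, a canonical relabeling of the physical qubits under which the gauge group is mapped onto itself. The key observation is that, although the specific generators listed in \cref{def:gauge} are written in terms of the rows and columns of $A$, the only information about $A$ used by the construction is the incidence relation ``two $1$-entries of $A$ lie in the same row'' (resp.\ ``same column''), and this relation is manifestly preserved by any permutation of row indices $\sigma_r \in S_m$ or column indices $\sigma_c \in S_m$.

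First I would identify each of the $n=|A|$ physical qubits with the position $(i,j)$ of the corresponding $1$-entry of $A$. A row permutation $\sigma_r$ then induces a bijection $\phi_{\sigma_r}$ on the qubit set, sending the qubit originally at $(i,j)$ to the qubit at position $(\sigma_r(i),j)$ of the permuted matrix $A'$. Next I would check the incidence-preservation claim: qubits $q,q'$ lie in the same row of $A$ iff $\phi_{\sigma_r}(q),\phi_{\sigma_r}(q')$ lie in the same row of $A'$, because $\sigma_r$ is a bijection on $\{1,\ldots,m\}$; they lie in the same column of $A$ iff they lie in the same column of $A'$, because $\sigma_r$ fixes column indices. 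The analogous statements hold for $\sigma_c$, and hence for arbitrary compositions of row and column permutations.

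Third, I would lift $\phi_{\sigma_r}$ to the Pauli group $\mathcal{P}_n$ in the natural way (permuting tensor factors). Applied to an $X$-type generator $X_jX_{j'}$ of $\mc{G}$, incidence preservation shows the image is precisely the $X$-type generator associated with the corresponding row of $A'$; the analogous fact holds for $Z$-type generators. Thus $\phi_{\sigma_r}$ carries the generating set of $\mc{G}(A)$ onto the generating set of $\mc{G}(A')$, so the two gauge groups coincide under the canonical qubit relabeling. Finally, since \cref{eq:stabs} characterizes $\mathcal{S}$ purely in terms of $\mc{G}$, and since $C_{\mathcal{P}_n}(\mc{G})$ depends only on $\mc{G}$ and the commutation structure of $\mathcal{P}_n$ (which is invariant under any permutation of tensor factors), the stabilizer group, centralizer, and induced set of dressed logical operators are all invariant as well.

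I do not expect a substantive obstacle: the only care required is to distinguish the physical qubits themselves from their matrix-position labels, so that ``invariance of $\mc{G}$'' is correctly understood as equality of gauge groups after the canonical relabeling $\phi_{\sigma_r}$ (or $\phi_{\sigma_c}$), rather than as equality of \emph{labeled} generators. This also clarifies the remark preceding the lemma: individual generators, when indexed by their row/column of origin, are modified by the permutation, but the generating set as a collection of operators on the qubits is unchanged.
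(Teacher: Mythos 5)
Your proposal is correct and rests on the same idea as the paper's proof: a row or column permutation preserves the ``same row''/``same column'' incidence relation among the $1$-entries of $A$, so under the induced relabeling of physical qubits the generating set of $\mathcal{G}$ is carried onto itself, and hence the gauge group (and therefore the stabilizers and centralizer) is unchanged. The only organizational difference is that you handle an arbitrary permutation directly via the bijection $\phi_{\sigma_r}$ on qubit positions, whereas the paper reduces to transpositions and argues case by case; your explicit treatment of the qubit relabeling is, if anything, cleaner than the paper's ``dummy index'' remark.
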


\begin{proof}
    
    Any permutations of rows and columns can be decomposed into a series of transpositions, $\prod_{i,j}(\sigma_i,\sigma_j)$, where $\sigma\in\{r,c\}$ denotes a row or a column. Without loss of generality, consider a transposition of two rows, $(r_i,r_j)$. This does not change $X$-type gauge operators because the matrix structure is preserved within the same row. $Z$-type gauge operators are pairs of $Z$ operators in the same column. Let $Z_{(i,a)}Z_{(k\neq i,a)}$ be a gauge operator such that one of its operators is in the $i$'th row. If $k=j$, the transposition $(r_i,r_j)$ does not change the gauge operator. Otherwise, this gauge operator becomes $Z_{(j,a)}Z_{(k\neq j,a)}$. However, the column indices are dummy variables and can be relabelled without affecting physical operations. We recover the original gauge operators after relabelling the indices $i\rightarrow j,j\rightarrow i$. Since any transposition of any rows and columns preserves the gauge group $\mathcal{G}$, the permutations of any rows and columns also preserve the gauge group $\mathcal{G}$.
\end{proof}

To illustrate the result in \cref{lem:equiv-A}, the three matrices in \cref{eq:equiv} are associated with the same code. In this work, we use the `trapezoid' representation on the left hand side. The 1-entries are indexed to show that the gauge group $\mathcal{G}$ is preserved, e.g., $X_aX_b$ is always presented across all permutations.
\begin{equation}
\left[\begin{array}{ccccccc}
    \mbf{1}_a&\mbf{1}_b&\txg{0}&\txg{0}&\txg{0}\\
    \mbf{1}_j&\txg{0}&\mbf{1}_i&\txg{0}&\txg{0}\\
    \txg{0}&\mbf{1}_c&\txg{0}&\mbf{1}_d&\txg{0}\\
    \txg{0}&\txg{0}&\mbf{1}_h&\txg{0}&\mbf{1}_g\\
    \txg{0}&\txg{0}&\txg{0}&\mbf{1}_e&\mbf{1}_f\\
    \end{array}    \right]
\equiv
\left[\begin{array}{ccccccc}
    \txg{0}&\mbf{1}_a&\txg{0}&\txg{0}&\mbf{1}_b\\
    \mbf{1}_f&\txg{0}&\mbf{1}_e&\txg{0}&\txg{0}\\
    \txg{0}&\mbf{1}_j&\txg{0}&\mbf{1}_i&\txg{0}\\
    \txg{0}&\txg{0}&\mbf{1}_d&\txg{0}&\mbf{1}_c\\
    \mbf{1}_g&\txg{0}&\txg{0}&\mbf{1}_h&\txg{0}\\
    \end{array}    \right]
\equiv
 \left[\begin{array}{ccccccc}
    \mbf{1}_a&\mbf{1}_b&\txg{0}&\txg{0}&\txg{0}\\
    \txg{0}&\mbf{1}_c&\mbf{1}_d&\txg{0}&\txg{0}\\
    \txg{0}&\txg{0}&\mbf{1}_e&\mbf{1}_f&\txg{0}\\
    \txg{0}&\txg{0}&\txg{0}&\mbf{1}_g&\mbf{1}_h\\
    \mbf{1}_j&\txg{0}&\txg{0}&\txg{0}&\mbf{1}_i\\
    \end{array}    \right]
    \label{eq:equiv}
\end{equation}
The second matrix is obtained by permuting the columns $(1,2,3,4,5)$ into $(2,5,4,3,1)$ and rows $(1,2,3,4,5)$ into $(1,3,4,5,2)$ from the first matrix. The third matrix is obtained by permuting the columns $(1,2,3,4,5)$ into $(1,2,5,3,4)$ and rows $(1,2,3,4,5)$ into $(1,5,2,4,3)$ from the first matrix. The gauge generators associated with the three matrices are the same, as follows:
\begin{equation}
\mathcal{G}=\<X_aX_b, X_jX_i, X_cX_d, X_hX_g, X_eX_f, Z_aZ_j, Z_bZ_c, Z_iZ_h, Z_dZ_e, Z_gZ_f\>.
\end{equation}

\section{Reduction of search space's cardinality in searching for optimal logical operators}
\label{app:reduce_log}
Recall that \cref{th:x-z-op} and \cref{th:x-z-op-dressed} establish how a bitstring parameterizes bare and dressed logical operators, respectively. Our goal is to construct the $m$ optimal pairs of logical operators, $\hat{X}$ and $\hat{Z}$. By `optimal pairs,' we refer to those minimizing the weight of $\hat{X}\hat{X}$ and $\hat{Z}\hat{Z}$ terms. However, for a code defined by an $m \times m$ $A$ matrix, there are $2^m$ possible configurations of bitstrings of length $m$ that can parameterize logical operators, resulting in an exponentially large search space. 

In this section, we exploit symmetries to reduce the search space and systematically exclude bitstrings that generate operators with locality greater than two. Let $\mathcal{S}_0$ denote the set of all possible bitstrings of length $m$, i.e., $\mathcal{S}_0=\{\vect{a}\in\{0,1\}^m\}$. The search space for logical operators of both $X$- and $Z$-type is given by $\mathcal{S}_0 \times \mathcal{S}_0$, which is exponentially large, as $|\mathcal{S}_0|= 2^m$.

Ideally, we want to achieve the highest code rate with purely $2$-local $\hat{X},\hat{Z},\hat{X}\hat{X}$ and $\hat{Z}\hat{Z}$ operators. However, since Ref.~\cite{marvian2019robust} has shown that the trapezoid code with $l=k$ has two-body interactions for every $\bar{X},\bar{Z},\hat{X}\hat{X}$ and $\hat{Z}\hat{Z}$, it is unavoidable for $l<k$ codes with higher rates to have higher than $2$-local $\hat{X}\hat{X}$ and $\hat{Z}\hat{Z}$ operators, although all $\bar{X},\bar{Z}$ operators are $2$-local.

Using symmetries, we apply two successive reductions: $\mathcal{S}_0\rightarrow\mathcal{S}_1\rightarrow\mathcal{S}_2$, which reduce the cardinality of the search space from $|\mathcal{S}_0 \times \mathcal{S}_0|=2^{2m}$ to $|\mathcal{S}_2 \times \mathcal{S}_2|=m^2(m-1)^2/4$, as summarized below in \cref{lem:XXZZ-size}. Using this result on $l<k$ codes, we search the space to find $m$ optimal pairs of $2$-local logical operators $\hat{X}$ and $\hat{Z}$, such that the weight of $\hat{X}\hat{X}$ and $\hat{Z}\hat{Z}$ terms is minimized.

\begin{mytheorem}
    \label{lem:XXZZ-size}
    The size of the search space for $2$-local logical operators is 
    $|\mathcal{S}_2 \times \mathcal{S}_2|=m^2(m-1)^2/4$, where $\mathcal{S}_2$ is a set of binary bitstrings that parameterize $2$-local logical operators.   
\end{mytheorem}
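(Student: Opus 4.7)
The plan is to carry out the two reductions $\mathcal{S}_0 \to \mathcal{S}_1 \to \mathcal{S}_2$ explicitly, by identifying the symmetries that leave a dressed logical operator's action on the logical subsystem invariant. My first step is to use \cref{th:x-z-op-dressed} to collapse the dependence on $\bar{\vect{x}}$: because each reversed-row operator $\bar{R}_i$ is itself an $X$-type gauge element by \cref{def:gauge} (every row of $A$ has even weight by \cref{lem:define-A}), the operators $Q^X(\vect{x},\bar{\vect{x}})$ are all gauge-equivalent for different $\bar{\vect{x}}$. So the gauge-equivalence class is determined by $\vect{x}$ alone, and we may identify $\mathcal{S}_0 = \{0,1\}^m$ per operator type.

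For the reduction $\mathcal{S}_0 \to \mathcal{S}_1$ I would invoke stabilizer equivalence via \cref{prop:stabs}: bitstrings $\vect{x}$ and $\vect{x}+\vect{s}$ with $\vect{s} \in \ker(A)$ parameterize operators differing only by a stabilizer, which acts trivially on the code. Since $\rank(A) = m-1$ and the all-ones vector lies in $\ker(A)$ (every column of $A$ has even weight, cf.\ \cref{lem:stab}), one gets $|\ker(A)| = 2$ and hence $|\mathcal{S}_1| = 2^{m-1}$.

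The reduction $\mathcal{S}_1 \to \mathcal{S}_2$ enforces $2$-locality. The central observation is that the coset of an $X$-type Pauli modulo the $X$-type gauge subgroup is classified by its \emph{row-parity vector} (the length-$m$ vector whose $i$-th entry is the parity of the number of physical $X$'s falling in row $i$ of $A$), and that the row-parity vector of $P^X(\vect{x})$ is exactly $A\vect{x}\bmod 2$. A coset contains a $2$-local non-gauge representative iff this vector has Hamming weight exactly $2$: weight $0$ yields a gauge element (excluded), and weight $\geq 3$ forces at least three distinct rows to carry an odd number of $X$'s, precluding $2$-locality. Using $\vect{1}^T A = \vect{0}$ and rank-nullity, the image of $A$ over $\mathbb{F}_2$ equals the $(m-1)$-dimensional subspace of even-weight vectors, which contains all $\binom{m}{2}$ weight-$2$ vectors. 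Each such vector corresponds to a unique $\mathcal{S}_1$-class, giving $|\mathcal{S}_2| = m(m-1)/2$. By the $X/Z$ duality of Bravyi's construction (rows $\leftrightarrow$ columns, $A \leftrightarrow A^T$, $\bar{R}\leftrightarrow \bar{C}$), the $Z$-type search space has the same cardinality, yielding $|\mathcal{S}_2 \times \mathcal{S}_2| = m^2(m-1)^2/4$.

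I expect the main obstacle to be the converse direction of the $2$-locality characterization — showing that every weight-$2$ row-parity vector is genuinely realized by a concrete $2$-local dressed logical operator, not merely a formal element of the image. This reduces to exhibiting, for each pair $i<j$ of rows, qubits $a$ in row $i$ and $b$ in row $j$ such that $X_a X_b$ commutes with both stabilizers (automatic, as $S_X$ is all-$X$ and each of $X_a, X_b$ anticommutes with $S_Z$ exactly once so the two anticommutations cancel) and lies outside $\mathcal{G}$ (ensured by $i\neq j$, so $X_a X_b$ is not an $X$-type gauge generator in the sense of \cref{def:gauge}). Since every row of $A$ contains at least one qubit, such $a,b$ always exist, completing the bijection and the counting.
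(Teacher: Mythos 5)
Your proof is correct, but it takes a genuinely different route from the paper's. The paper first derives an explicit combinatorial characterization of the admissible bitstrings (two families of ``rules'' obtained by tracking which column gauges can cancel the $Z$'s of a product of row operators), and then proves $|\mathcal{S}_2| = m(m-1)/2$ by induction on $m$ in \cref{lem:m-not-l}, counting the $2m+1$ new bitstrings gained when the matrix grows from $m\times m$ to $(m+2)\times(m+2)$. You instead classify gauge-equivalence classes of $X$-type Paulis by the row-parity vector $A\vect{x}$, observe that $2$-locality of the minimal non-gauge representative is equivalent to this syndrome having weight exactly $2$, and count weight-$2$ vectors in $\mathrm{Im}(A)$, which equals the even-weight subspace since $\rank(A)=m-1$ and $\vect{1}^T A=\vect{0}$. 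Your argument is shorter, avoids the induction, applies uniformly to odd and even $m$, and makes manifest why the count depends only on $m$ and not on $l$ (only $\rank(A)=m-1$ and the even row and column weights enter); the injectivity of $\vect{x}\mapsto A\vect{x}$ on $\mathcal{S}_1$ classes follows correctly from $\ker(A)=\{\vect{0},\vect{1}\}$. What the paper's longer route buys is the explicit enumeration of the admissible bitstrings, which it needs downstream to actually search for the $m-1$ commuting logical pairs; your coset argument establishes the cardinality and the existence of $2$-local representatives but does not by itself list them. One immaterial slip: you justify $\vect{1}\in\ker(A)$ by the even weight of the columns of $A$, but $(A\vect{1})_i$ sums the $i$'th row, so the relevant fact is the even row weight (both hold for trapezoid codes, so nothing breaks).
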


\begin{proof}
     Using \cref{lem:m-not-l}, there are $m(m-1)/2$ bitstrings that parameterize $2$-local $\hat{Z}$ operators and likewise $\hat{X}$ operators. The reduced search space is therefore $\mathcal{S}_2 \times \mathcal{S}_2$ and its cardinality is $|\mathcal{S}_2 \times \mathcal{S}_2|=m^2(m-1)^2/4$.
\end{proof}

The rest of this section is devoted to building up to and proving \cref{lem:m-not-l}. 

Note that odd-$m$ codes consistently exhibit a higher code rate compared to even-$m$ codes [\cref{eq:rate}], making them more desirable. Consequently, the results in this section focus on odd-$m$ codes. Readers interested in even-$m$ codes can apply analogous reasoning to derive the corresponding results.

\subsection{Reduction by stabilizer symmetry ($\mathcal{S}_0\rightarrow\mathcal{S}_1$)}
Let $\vect{x} = x_1x_2\dots x_m$ and $\vect{z} = z_1z_2\dots z_m$ represent length-$m$ binary bitstrings, where $x_i, z_i \in \{0, 1\}$. We demonstrate that it is sufficient to restrict our consideration to $\vect{x}$ with $x_1 = 0$ and $\vect{z}$ with $z_m = 0$. This restriction resolves the degeneracy arising from equivalence under multiplication by the stabilizers $S_X$ and $S_Z$, as demonstrated in \cref{lem:stab-equiv} below. Imposing this symmetry reduces the search space to $\mathcal{S}_1 \times \mathcal{S}_1 = \{\vect{x} \in \{0,1\}^m | x_1 = 0\} \times \{\vect{z} \in \{0,1\}^m | z_m = 0\}$, which is half the size of the original space: $ |\mathcal{S}_1| = |\mathcal{S}_0| / 2$.

\begin{mylemma}
\label{lem:stab-equiv}
    $\vect{x}$ ($\vect{z}$) and $\tilde{\vect{x}}=1^m\oplus\vect{x}$ $(\tilde{\vect{z}}=1^m\oplus\vect{z}$) parameterize $X$-type ($Z$-type) operators that are equivalent up to multiplication by a stabilizer. 
\end{mylemma}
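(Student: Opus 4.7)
The plan is to exhibit the equivalence directly by multiplying by the stabilizers identified in \cref{lem:stab}. Specifically, I would show that $P^X(\tilde{\vect{x}}) = S_X \cdot P^X(\vect{x})$ and $P^Z(\tilde{\vect{z}}) = S_Z \cdot P^Z(\vect{z})$, which implies the two parameterizations differ by a stabilizer and therefore act identically on the codespace.

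First, I would recall that \cref{lem:stab} established $S_X = \prod_{i=1}^m C_i$ and $S_Z = \prod_{i=1}^m R_i$ as valid stabilizers, i.e., in the notation of \cref{th:x-z-op}, $S_X = P^X(1^m)$ and $S_Z = P^Z(1^m)$. Next, I would observe two elementary facts about the column operators $C_j$: they are all products of single-qubit Pauli $X$ operators on disjoint qubits (disjoint because each physical qubit sits in exactly one column of $A$), so the $\{C_j\}$ pairwise commute, and each satisfies $C_j^2 = I$. The same holds for the row operators $\{R_j\}$ with $Z$ in place of $X$.

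Using these facts, the computation is a one-line calculation in $\mathbb{F}_2$:
\begin{equation}
S_X \cdot P^X(\vect{x}) = \prod_{j=1}^{m} C_j \cdot \prod_{j=1}^{m} C_j^{x_j} = \prod_{j=1}^{m} C_j^{1 \oplus x_j} = P^X(1^m \oplus \vect{x}) = P^X(\tilde{\vect{x}}),
\end{equation}
and analogously $S_Z \cdot P^Z(\vect{z}) = P^Z(\tilde{\vect{z}})$. Since $S_X, S_Z \in \mathcal{S}$, this shows that $P^X(\vect{x})$ and $P^X(\tilde{\vect{x}})$ (resp.\ $P^Z(\vect{z})$ and $P^Z(\tilde{\vect{z}})$) differ only by multiplication by a stabilizer, and hence are equivalent as logical/gauge operators acting on the codespace.

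There is no substantive obstacle here; the result is essentially a restatement of the fact that the stabilizer $S_X$ (resp.\ $S_Z$) corresponds in the bitstring representation of \cref{th:x-z-op} to the all-ones vector, so left-multiplication by it implements bitwise complementation. The only thing to be careful about is confirming that the column operators commute and square to the identity, which follows immediately from the column-disjointness of physical qubits and the definition of $C_j$ as a tensor product of single-qubit $X$'s; the same observation for $R_j$ handles the $Z$-type case. Once this is noted, the proof reduces to the displayed two-line calculation above.
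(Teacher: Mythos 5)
Your proof is correct and follows essentially the same route as the paper's: both identify $S_X=P^X(1^m)$ via \cref{lem:stab} and compute $S_X P^X(\vect{x})=\prod_j C_j^{1\oplus x_j}=P^X(\tilde{\vect{x}})$, with the $Z$-type case handled by the same substitution. Your explicit remark that the $C_j$ pairwise commute and square to the identity is a detail the paper leaves implicit, but it does not change the argument.
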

\begin{proof}
    Without loss of generality, consider an $X$-type operator, $P^X(\vect{x})$, parameterized by a bitstring $\vect{x} = x_1x_2\dots x_m$. Writing it in terms of column operators, $P^X(\vect{x})=\prod_{i=1}^m C_i^{x_i}$. Recall from \cref{lem:stab} that the stabilizer $S_X=\prod_{i=1}^m C_i$, i.e., $S_X=P^X(1^m)$.

    We can also write $P^X(\tilde{\vect{x}})=\prod_{i=1}^m C_i^{\tilde{x}_i}=\prod_{i=1}^m C_i^{1\oplus x_i}=\prod_{i=1}^m C_i\prod_{i=1}^m C_i^{x_i}=S_XP^X(\vect{x})$, which shows that $P^X(\vect{x})$ and $P^X(\tilde{\vect{x}})$ are equivalent up to a stabilizer multiplication. 
    
    The proof for $Z$-type operators follows by replacing the bitstring $\vect{x}$ by $\vect{z}$, $S_X$ by $S_Z$ and the column operator $C_i$ with the row operator $R_i$.
\end{proof}

\subsection{Reduction to $2$-local operators ($\mathcal{S}_1\rightarrow\mathcal{S}_2$)}

Constructing $2$-local operators is desirable since performing operations with higher-weight interactions is less practical. Here, we further constrain the search space to contain only bitstrings that parameterize operators that are $2$-local. \cref{lem:$2$-local-Z} and \cref{lem:$2$-local-X} provide a list of rules for constructing these bitstrings for $Z$- and $X$-type operators, respectively. \cref{app:exp-reduction} provides an example of valid sets of $\hat{Z}$ and $\hat{X}$ logical operators for the $[[16,6,8,2]]$ code constructed using these rules. 

For the convenience of proofs in this section, we rewrite, in \cref{eq:general-A-appB}, the structure of a $m\times m$ $A$ matrix from \cref{eq:general-A}. Likewise, \cref{lem:gauge-Z} explicitly lists out the $Z$ gauge generators, which follows directly from \cref{def:gauge}.

\begin{align}
\label{eq:general-A-appB}
&\qquad\quad A= \notag\\ 
&\begin{array}{llcccccllllll}
    \scriptstyle{1}&\vline&1&1&&&&&&&&&\\
    &\vline&1 &&1&&&&&&\\
    &\vline&\vdots &&&\ddots&&&&&\\
    \scriptstyle{2l}&\vline&1&&&&1&&&&&\\ 
    \scriptstyle{2l+1}&\vline&&1&&&&1&&&\\
    &\vline&&&\ddots&&&&\ddots&&&\\
    &\vline&&&&1&&&&1&\\
    \scriptstyle{m-1}&\vline&&&&&1&&&&1\\
    \scriptstyle{m}&\vline&&&&&&1&\cdots&1&1\\
    \hline
    &\vline&\scriptstyle{1}&&&&&\hspace{-.5cm}\scriptstyle{m-2l+1}&\hspace{.5cm}&&\hspace{-.03cm}\scriptstyle{m}&\\
\end{array}
\end{align}

\begin{mycorollary}
\label{lem:gauge-Z}
 The complete set of $Z$-type gauge generators of the gauge group $\mathcal{G}$ is: 
     \begin{itemize}
        \item first column: $Z_{i,1}Z_{i+1,1}$ for $1 \le i\le 2l-1$ 
        \item columns 2 to $m-2l$: $Z_{j-1,j}Z_{j-1+2l,j}$ for $2 \le j\le m-2l$
        \item columns $m-2l+1$ to $m-1$: $Z_{j-1,j}Z_{m,j}$ for $ m-2l+1\le j \le m-1$
    \end{itemize}
\end{mycorollary}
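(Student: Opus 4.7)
The claim is an immediate bookkeeping consequence of \cref{def:gauge}, which states that every pair of $1$-entries in the same column of $A$ contributes a $Z_iZ_{i'}$ gauge generator. The plan is to iterate over the columns of the trapezoid matrix specified in \cref{lem:define-A}, read off which rows contain a $1$ in each column, and then retain only a minimal generating subset, since a column supporting a chain of $s$ qubits needs only $s-1$ nearest-neighbor $ZZ$ generators to generate all $\binom{s}{2}$ pairwise products by telescoping.

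For column $1$, only the ``first column'' rule of \cref{lem:define-A} applies, so the $1$-entries occupy rows $1$ through $2l$. The nearest-neighbor pairs $Z_{i,1}Z_{i+1,1}$ for $1\le i\le 2l-1$ then generate the entire column's Abelian $ZZ$ subgroup, which reproduces the first bullet. For a column $j$ with $2\le j\le m-2l$, the superdiagonal rule places a $1$ at row $j-1$ (from $A_{j-1,j}=1$) and the lower-diagonal rule places a $1$ at row $j-1+2l$ (solving $i-2l+1=j$ for $i$, which satisfies $2l+1\le i\le m-1$ for the given range of $j$); no other rule touches this column, so the unique pair $Z_{j-1,j}Z_{j-1+2l,j}$ matches the second bullet.

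For a column $j$ with $m-2l+1\le j\le m-1$, the superdiagonal rule again contributes row $j-1$, the lower-diagonal attempt fails because the would-be row $i=j-1+2l$ exceeds the allowed range $i\le m-1$, and the last-row rule contributes row $m$ (since $j\le m$). Hence the only $ZZ$ pair is $Z_{j-1,j}Z_{m,j}$, matching the third bullet. Completeness follows from the observation that the four rules in \cref{lem:define-A} enumerate every $1$-entry of $A$, so no additional column-pair generator can exist, and minimality in each short column is automatic since each such column hosts only two qubits.

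I do not anticipate any substantive obstacle: the whole argument reduces to direct case analysis on the four rules defining the trapezoid $A$ matrix. The only subtlety worth flagging is the first column, where one must recognize that a chain of $2l$ qubits needs just $2l-1$ nearest-neighbor $ZZ$ generators rather than all $\binom{2l}{2}$ pairs—this is what justifies the restriction $1\le i\le 2l-1$ in the first bullet instead of listing every pair.
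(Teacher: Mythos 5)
Your case analysis of columns $1$ through $m-1$ is correct and matches the paper's reasoning (including the telescoping observation for the first column), but there is a genuine gap: you never examine column $m$, and your completeness claim fails there. By \cref{lem:define-A}, column $m$ contains two $1$-entries, at $(m-1,m)$ from the superdiagonal rule and $(m,m)$ from the last-row rule, so \cref{def:gauge} assigns it the generator $Z_{m-1,m}Z_{m,m}$. This pair does not appear in the corollary's list, and it is \emph{not} in the group generated by the listed $Z$-type generators alone: its support (column $m$) is disjoint from the support of every listed generator, so no product of them can reproduce it. Your assertion that ``no additional column-pair generator can exist'' is therefore false as stated, and the claim that the list is \emph{complete} is exactly the point that still needs an argument.

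The paper closes this gap explicitly: $Z_{m-1,m}Z_{m,m}$ may be omitted from the generating set because it equals the product of the remaining $Z$-type generators and the $Z$-stabilizer $S_Z=\prod_i R_i$ (the all-$Z$ operator of \cref{lem:stab}), which is itself among the generators of $\mathcal{G}$. Concretely, $S_Z$ restricted to each of columns $1$ through $m-1$ lies in the span of your listed generators (every such column has an even number of qubits), so multiplying $S_Z$ by those elements leaves exactly $Z_{m-1,m}Z_{m,m}$. You should add this step; without it the proof establishes only that the listed operators are valid generators, not that they suffice.
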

\begin{proof}
    Following \cref{def:gauge}, we construct $Z$-type gauge generators by writing out pairs of $Z_iZ_j$ in the same column of the $A$ matrix in \cref{eq:general-A-appB}.
    
    The first column has $1$'s in rows $1$ to $2l$. Since $\mathcal{G}$ is closed under multiplication, we only need to consider consecutive pairs of $(i,i+1)$ in this column. In the rest of the columns $i\in[2,m-1]$, the gauge generators are pairs of two $Z$ operators in the same columns. 
    
    Note that the gauge operator $Z_{m-1,m}Z_{m,m}$ in the last column is not a generator because it can be obtained from the multiplication of all the other generators and the $Z$-stabilizer. 
\end{proof}

In \cref{lem:$2$-local-Z}, we show the $2$-local reduction for $\hat{Z}$ operators, and later in \cref{lem:$2$-local-X}, we generalize this result to $\hat{X}$ operators.

\begin{mylemma}
\label{lem:$2$-local-Z}
    For a $[[4k+2l,2k,g,2]]$ trapezoid code, a $2$-local $\hat{Z}$ operator can be constructed from a bitstring $\vect{z}=z_1z_2\dots z_m$ if and only if the bitstring has $1$ entries according to one of the following rules and $0$ entries in the remaining positions.
    \begin{enumerate}
        \item[(1)] $\forall n\in[0,N]$: $z_{i+2nl}=1$ where $i\in[1,m-1]$, $N\in[0,M]$ and $M=\lfloor \frac{m-1-i}{2l}\rfloor$
        \item[(2)] $\forall n_{(i,j)}\in[0,N_{(i,j)}]$: $z_{i+2n_il}=z_{j+2n_jl}=1$ where $i,j\in[1,2l]$, $i<j$, $N_{(i,j)}\in[0,M_{(i,j)}]$, $M_{(i,j)}=\lfloor \frac{m-1-(i,j)}{2l}\rfloor$
    \end{enumerate}
\end{mylemma}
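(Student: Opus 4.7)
The plan is to rephrase 2-locality as a Hamming-weight bound on $A^T\vect{z}$, reduce that weight to a per-chain run-count via the trapezoid structure, and enumerate admissible configurations by case analysis.

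First, by \cref{def:gauge} every $Z$-type gauge generator is a pair of $Z$'s in a single column of $A$. Multiplying $P^Z(\vect{z})$ by gauges can therefore cancel $Z$'s in pairs within a column but cannot transport them between columns, so the gauge-reduced operator has exactly one $Z$ in column $c$ precisely when the parity $(A^T\vect{z})_c = \bigoplus_{j:A_{j,c}=1} z_j$ is odd, and the 2-local condition becomes $|A^T\vect{z}|\le 2$. By \cref{lem:stab-equiv} we may also fix $z_m = 0$, consistent with $i,j\le m-1$ in both rules.

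Next, partition the rows $\{1,\dots,m-1\}$ into $2l$ chains $C_r = \{r,r+2l,\dots,r+2K_r l\}$ with $K_r = \lfloor(m-1-r)/(2l)\rfloor$, and set $\zeta^{(r)}_n := z_{r+2nl}$. From \cref{lem:define-A}, column $1$ contains the first entry of every chain, each interior column $c\in[2,m-2l]$ contains two consecutive entries in one chain, and each end column $c\in[m-2l+1,m]$ contains the last entry of one chain together with row $m$. Using $z_m=0$,
\[
|A^T\vect{z}| \;=\; \Big(\textstyle\bigoplus_{r=1}^{2l}\zeta^{(r)}_0\Big) + \sum_{r=1}^{2l}\big(J(\zeta^{(r)}) + \zeta^{(r)}_{K_r}\big),
\]
where $J(\zeta) = \#\{n:\zeta_n\neq\zeta_{n+1}\}$. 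Extending $\zeta^{(r)}$ with a trailing $0$ and counting transitions gives the elementary identity $J(\zeta^{(r)}) + \zeta^{(r)}_{K_r} = 2k_r - \zeta^{(r)}_0$, with $k_r$ the number of maximal $1$-runs in $\zeta^{(r)}$. Combining with $\bigoplus_r \zeta^{(r)}_0 - \sum_r \zeta^{(r)}_0 = -2\lfloor s/2\rfloor$ (where $s := \sum_r \zeta^{(r)}_0$) yields the clean formula
\[
|A^T\vect{z}| \;=\; 2\Big(\textstyle\sum_r k_r - \lfloor s/2\rfloor\Big).
\]

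The 2-local condition then reads $\sum_r k_r \le 1 + \lfloor s/2\rfloor$, and since every chain with $\zeta^{(r)}_0 = 1$ has $k_r\ge 1$, also $\sum_r k_r \ge s$. This forces $s\in\{0,1,2\}$. Cases $s\in\{0,1\}$ with $\sum_r k_r = 1$ pick out a single chain with one contiguous run of $1$'s spanning chain-positions $n_1,\dots,n_2$; writing $i = r + 2n_1 l \in [1,m-1]$ and $N = n_2 - n_1 \le \lfloor(m-1-i)/(2l)\rfloor$ recovers Rule (1). The case $s=2,\,\sum_r k_r = 2$ picks out exactly two chains $C_i, C_j$ with $i<j\in[1,2l]$, each with one run anchored at chain-position $0$ (forced by $\zeta^{(r)}_0 = 1$ on both), with length bounds inherited from the $K_r$'s; this is Rule (2). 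Cases $s\ge 3$ are ruled out because $\sum_r k_r \ge s > 1 + \lfloor s/2\rfloor$. The forward direction follows by substituting each rule into the closed form and verifying $|A^T\vect{z}| = 2$. The main obstacle I anticipate is the chain-decomposition bookkeeping in the middle step -- classifying each column of the trapezoid $A$ into ``column $1$'', ``interior'', and ``end'' types and deriving the identity $J(\zeta^{(r)}) + \zeta^{(r)}_{K_r} = 2k_r - \zeta^{(r)}_0$ -- after which the case analysis is routine.
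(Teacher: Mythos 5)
Your proof is correct, and it takes a genuinely different route from the paper's. The paper argues both directions by explicitly exhibiting and enumerating gauge cancellations: for sufficiency it writes down the column gauges $Z_{i+2(n-1)l,\,i+1}Z_{i+2nl,\,i+1}$ that collapse each rule's bitstring to weight two, and for necessity it reasons case-by-case over which of the three gauge types of \cref{lem:gauge-Z} can participate in a cancellation, which is the least formal part of the published argument. You instead observe that $Z$-type gauges act independently within columns and only in pairs, so the minimal dressed weight is exactly the Hamming weight of $A^{T}\vect{z}$ over $\mathbb{F}_2$ (and $X$-type gauges can never lower the $Z$-weight), reducing $2$-locality to the algebraic condition $|A^{T}\vect{z}|\le 2$. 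Your chain decomposition of the rows, the column classification (which I checked against \cref{lem:define-A}: column $1$ holds the chain heads, interior columns hold consecutive chain entries, end columns pair a chain tail with row $m$), the run-counting identity $J(\zeta)+\zeta_{K}=2k_r-\zeta_0$, and the resulting closed form $|A^{T}\vect{z}|=2\bigl(\sum_r k_r-\lfloor s/2\rfloor\bigr)$ are all correct, and the case analysis $s\in\{0,1,2\}$ recovers Rules (1) and (2) exactly (with $z_m=0$ justified by \cref{lem:stab-equiv}, as in the paper). What your approach buys is a fully rigorous converse --- the paper's "consider all possible cancellations" step is replaced by an exact weight formula --- and, as a bonus, the formula immediately counts the admissible bitstrings, essentially giving \cref{lem:m-not-l} without the separate induction the paper uses. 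The cost is the bookkeeping you flagged, but that step checks out.
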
 

\begin{proof}
    ($\Rightarrow$) First, we show that if a $\vect{z}$ follows one of the rules, it parameterizes a $2$-local $\hat{Z}$. We  start by considering the special case of each rule, and then generalize it.

    Consider the special case of Rule (1) when $N=0$, a $\vect{z}$ has Hamming weight $1$, hence, $\hat{Z}=\bar{Z}$ is $2$-local by construction. For a general case with $N>0$, $\forall n\in[1,N]$, multiplying by the gauges $Z_{i+2(n-1)l,i+1}Z_{i+2nl,i+1}$ removes the $Z$ operators in the same column. The remaining operators are $Z_{i,1}$ and $Z_{i+2nl,i+2(n+1)l}$, which constructs a $2$-local $\hat{Z}$. 
    
    Next, consider the special case of Rule (2) when $N_i=N_j=0$, $\hat{Z}$ is $2$-local after multiplying by the gauges $Z_{i,1}Z_{j,1}$. For a general case with $N_{i}>0$, $\forall n_j\in[1,N_j]$, multiplying by the gauges $Z_{i+2(n_j-1)l,i+1}Z_{i+2n_jl,i+1}$ removes the $Z$ operators in the columns $i+1$. Then, we can multiply $Z$ gauges in every other $2l$ column according to $n_j$. The same cancellation method applies for $N_j>0$, replacing the index $i\rightarrow j,~j\rightarrow i$. The remaining operators are $Z_{i+2n_jl,i+2(n_j+1)l}$ and $Z_{j+2n_jl,j+2(n_j+1)l}$, which constructs a $2$-local $\hat{Z}$.
    
    ($\Leftarrow$) We  now show that if a $\hat{Z}$ is $2$-local, it must be parameterized by a bitstring that follows one of the rules. We  show this by considering all possible $Z$ cancellations by multiplying $\bar{Z}$ with all possible combinations of $Z$-type gauges in \cref{lem:gauge-Z}.
    
    Recall the reduced search space $\mathcal{S}_1=\{\vect{z}| z_m=0\}$. This means that $\{\vect{z}|z_m=1\}\not\subset \mathcal{S}_1$, i.e., $Z$ operators in the last row are not involved. Hence, we do not need to consider $2$-local $\hat{Z}$ that comes from $Z$ cancellations involving the third type of gauge operators in \cref{lem:gauge-Z}, since those gauges contain $Z_{m,j}$ which is in the last row of the $A$ matrix. 
    
    By construction, the $A$ matrix has two $Z$ physical operators in each remaining row. This implies that a bitstring $\vect{z}$ of Hamming weight $w$ parameterizes a $2w$-local $\bar{Z}$. To construct a $2$-local $\hat{Z}$, we must multiply gauge operators with a total of $2(w-1)$ overlapping physical operations.

    When $w=1$, the gauge operator is not required. This falls into the special case of Rule (1). To use the first type of gauge operators in \cref{lem:gauge-Z}, one requires overlaps in multiples of $2l$, i.e., the $(i+2nl)$'th rows, which is covered in the general case of Rule (1). To use the second type of gauge operators in \cref{lem:gauge-Z}, the overlap is constructed using the special case of Rule (2). When both types of gauge operators are used, one requires overlaps in multiples of $2l$, but starting with both rows $i$'th and $j$'th, hence, falls into the general case of Rule (2).

\end{proof}

To illustrate the procedures in \cref{lem:$2$-local-Z}, \cref{fig:rules} shows a diagram representing the special and general cases of both rules. The example is constructed for a $[[16,6,8,2]]$ code ($m=7,l=2$). The special case of Rule (1) is $2$-local by construction (top left), hence, does not require gauge cancellations. The special case of Rule (2) requires gauge cancellation in the first column (top right). The general case of each rule (bottom row) adopts the gauge cancellation from its special case and requires extra cancellations in each column with two $Z$ operators.

\begin{figure}[t]
\centering
\begin{tikzpicture}\label{fig:rule-1}
    \matrix (m)[
    matrix of math nodes,
    nodes in empty cells,
    left delimiter=\lbrack,
    right delimiter=\rbrack
    ] {
    1&1&&&&&\\
    1&&1&&&&\\
    1&&&1&&&\\
    1&&&&1&&\\
    &1&&&&1&\\
    &&1&&&&1\\
    &&&1&1&1&1\\
    } ;
    \node [below of= m-7-3, xshift=0.4cm, node distance = 0.5cm] {$\vect{z}=(0,1,0,0,0,0,0)$}; 
    \begin{scope}[rounded corners,fill=blue!80!white,fill opacity=0.1,draw=blue,draw opacity=0.6,thick]
    \filldraw (m-2-1.north west)  rectangle (m-2-3.south east);
   \end{scope}
   \begin{scope}[fill=red!80!white,fill opacity=0.1,draw=red,thick]
    \filldraw (m-2-1)  circle (8pt);
    \filldraw (m-2-3)  circle (8pt);
   \end{scope}
    \end{tikzpicture}
    \begin{tikzpicture}\label{fig:rule-2}
    \matrix (m)[
    matrix of math nodes,
    nodes in empty cells,
    left delimiter=\lbrack,
    right delimiter=\rbrack
    ] {
    1&1&&&&&\\
    1&&1&&&&\\
    1&&&1&&&\\
    1&&&&1&&\\
    &1&&&&1&\\
    &&1&&&&1\\
    &&&1&1&1&1\\
    } ;
    \node [below of= m-7-3, xshift=0.4cm, node distance = 0.5cm] {$\vect{z}=(1,0,1,0,0,0,0)$}; 
    \begin{scope}[rounded corners,fill=blue!80!white,fill opacity=0.1,draw=blue,draw opacity=0.6,thick]
    \filldraw (m-1-1.north west)  rectangle (m-1-2.south east);
    \filldraw (m-3-1.north west)  rectangle (m-3-4.south east);
   \end{scope}
    \begin{scope}[rounded corners,fill=orange!80!white,fill opacity=0.1,draw=orange,thick]
    \filldraw (m-1-1.north west)  rectangle (m-3-1.south east);
   \end{scope}
   \begin{scope}[fill=red!80!white,fill opacity=0.1,draw=red,thick]
    \filldraw (m-1-2)  circle (8pt);
    \filldraw (m-3-4)  circle (8pt);
   \end{scope}
    \end{tikzpicture}
    \begin{tikzpicture}\label{fig:rule-3}
    \matrix (m)[
    matrix of math nodes,
    nodes in empty cells,
    left delimiter=\lbrack,
    right delimiter=\rbrack
    ] {
    1&1&&&&&\\
    1&&1&&&&\\
    1&&&1&&&\\
    1&&&&1&&\\
    &1&&&&1&\\
    &&1&&&&1\\
    &&&1&1&1&1\\
    } ;
    \node [below of= m-7-3, xshift=0.4cm, node distance = 0.5cm] {$\vect{z}=(1,0,0,0,1,0,0)$}; 
    \begin{scope}[rounded corners,fill=blue!80!white,fill opacity=0.1,draw=blue,draw opacity=0.6,thick]
    \filldraw (m-1-1.north west)  rectangle (m-1-2.south east);
    \filldraw (m-5-2.north west)  rectangle (m-5-6.south east);
   \end{scope}
    \begin{scope}[rounded corners,fill=orange!80!white,fill opacity=0.1,draw=orange,thick]
    \filldraw (m-1-2.north west)  rectangle (m-5-2.south east);
   \end{scope}
   \begin{scope}[fill=red!80!white,fill opacity=0.1,draw=red,thick]
    \filldraw (m-1-1)  circle (8pt);
    \filldraw (m-5-6)  circle (8pt);
   \end{scope}
    \end{tikzpicture}
    \begin{tikzpicture}\label{fig:rule-4}
    \matrix (m)[
    matrix of math nodes,
    nodes in empty cells,
    left delimiter=\lbrack,
    right delimiter=\rbrack
    ] {
    1&1&&&&&\\
    1&&1&&&&\\
    1&&&1&&&\\
    1&&&&1&&\\
    &1&&&&1&\\
    &&1&&&&1\\
    &&&1&1&1&1\\
    } ;
    \node [below of= m-7-3, xshift=0.4cm, node distance = 0.5cm] {$\vect{z}=(1,1,0,0,1,1,0)$}; 
    \begin{scope}[rounded corners,fill=blue!80!white,fill opacity=0.1,draw=blue,draw opacity=0.6,thick]
    \filldraw (m-1-1.north west)  rectangle (m-1-2.south east);
    \filldraw (m-2-1.north west)  rectangle (m-2-3.south east);
    \filldraw (m-5-2.north west)  rectangle (m-5-6.south east);
    \filldraw (m-6-3.north west)  rectangle (m-6-7.south east);
   \end{scope}
    \begin{scope}[rounded corners,fill=orange!80!white,fill opacity=0.1,draw=orange,thick]
    \filldraw (m-1-1.north west)  rectangle (m-2-1.south east);
    \filldraw (m-2-3.north west)  rectangle (m-6-3.south east);
    \filldraw (m-1-2.north west)  rectangle (m-5-2.south east);
   \end{scope}
   \begin{scope}[fill=red!80!white,fill opacity=0.1,draw=red,thick]
    \filldraw (m-5-6)  circle (8pt);
    \filldraw (m-6-7)  circle (8pt);
   \end{scope}
    \end{tikzpicture}
    \caption{Bitstrings that parameterize $2$-local $\hat{Z}$ operators according to the rules in \cref{lem:$2$-local-Z}. The top row represents special cases, while the bottom row represents general cases. The left column corresponds to Rule (1), and the right column corresponds to Rule (2). The blue horizontal boxes are $\bar{Z}$ operators parameterized by the bitstrings $\vect{z}$. The orange vertical boxes are $Z$-type gauge operators multiplied to $\bar{Z}$ to obtain a $2$-local $\hat{Z}$. The red circles are the remaining $Z$ physical operators of $\hat{Z}$.}
    \label{fig:rules}
\end{figure}
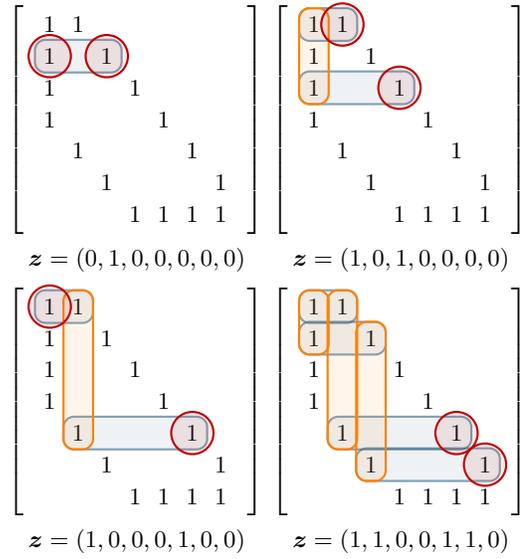

Having established the rules to construct $2$-local $\hat{Z}$ operators in \cref{lem:$2$-local-Z}, we move on to show how to apply this for $\hat{X}$ operators in \cref{lem:$2$-local-X}.

\begin{mylemma}\label{lem:$2$-local-X}
    For a $[[4k+2l,2k,g,2]]$ trapezoid code, a bitstring $\vect{x}$ parameterizes a $2$-local logical operator $\hat{X}$ if and only if it is inverted of a bitstring in \cref{lem:$2$-local-Z}, where the $i$'th component of a length-$m$ inverted bitstring $\vect{a}^{-1}$ is defined as $a_i^{-1}=a_{m-i}$, such that $a_i$ is the $i$'th component of $\vect{a}$.
\end{mylemma}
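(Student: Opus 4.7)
The plan is to exploit a structural symmetry of the trapezoid $A$ matrix that swaps the roles of rows and columns (hence of $Z$-type and $X$-type operators) while reversing bitstring indices. Concretely, I would first show that $A$ is invariant under the anti-transpose $\phi(i,j) = (m+1-j,\,m+1-i)$: the first column (rows $1$ to $2l$) maps to the last row (columns $m-2l+1$ to $m$), the last row maps back to the first column, the superdiagonal $\{(i,i+1)\}$ is sent to $\{(m-i,m+1-i)\}$ (still the superdiagonal), and the lower diagonal $\{(i,i-2l+1):2l+1\le i\le m-1\}$ is sent back to the lower diagonal after reindexing. Thus $\phi$ is an automorphism of $A$.

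Next, I would decompose $\phi$ as the composition of the index-reversing permutation $\pi: i\mapsto m+1-i$ applied to both rows and columns with the ordinary transpose $\tau: A\mapsto A^T$. By \cref{lem:equiv-A}, the permutation $\pi$ preserves the gauge group. The transpose $\tau$, on the other hand, swaps the row operators $R_i$ of \cref{th:row-col-op} (which build $Z$-type operators) with the column operators $C_i$ (which build $X$-type operators), and similarly maps $Z$-type gauge generators (pairs of $Z$'s in the same column) to $X$-type gauge generators (pairs of $X$'s in the same row). Composing the two, a bitstring $\vect{z}$ parameterizing $P^Z(\vect{z})=\prod_j R_j^{z_j}$ in $A$ corresponds under $\phi$ to the bitstring $\vect{x}$ with $x_i = z_{m+1-i}$ parameterizing $P^X(\vect{x})=\prod_j C_j^{x_j}$ in the same (self-dual) $A$; any gauge-operator multiplication that reduces the former to a $2$-local operator translates into a gauge-operator multiplication that reduces the latter to a $2$-local operator.

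Consequently, $\vect{x}$ parameterizes a $2$-local $\hat{X}$ if and only if its reversal $\vect{x}^{-1}$ parameterizes a $2$-local $\hat{Z}$, and by \cref{lem:$2$-local-Z} the latter holds precisely when $\vect{x}^{-1}$ satisfies one of the two rules stated there. Consistency with the reduced search space $\mathcal{S}_1$ is automatic, since $x_1 = z_m$ under reversal matches the conditions $x_1=0$ and $z_m=0$. The main obstacle is bookkeeping: verifying that all four structural pieces of $A$ map correctly under $\phi$, and tracking how the enumeration of $Z$-type gauges in \cref{lem:gauge-Z} transports to the analogous enumeration of $X$-type gauges, so that the cancellation argument in the proof of \cref{lem:$2$-local-Z} carries over without modification. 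A direct proof enumerating $X$-type cancellations by hand, mirroring the proof of \cref{lem:$2$-local-Z}, is possible but longer and hides the underlying symmetry.
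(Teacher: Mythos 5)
Your proposal is correct and takes essentially the same route as the paper: the paper's proof also rests on the observation that the trapezoid $A$ matrix is symmetric under reflection about the anti-diagonal (your anti-transpose $\phi$), which exchanges rows with reversed columns and hence $Z$-type with $X$-type operators, so the proof of \cref{lem:$2$-local-Z} transports verbatim to give the reversed bitstrings. Your extra step of factoring $\phi$ into the index-reversal permutation (handled by \cref{lem:equiv-A}) composed with the ordinary transpose is a slightly more explicit justification of the same symmetry argument, not a different approach.
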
 
\begin{proof}
    Note that the $A$ matrix is symmetric upon reflection about the main diagonal that runs from bottom left to top right. Perform the reflection about this diagonal and follow the proof in \cref{lem:$2$-local-Z} on the rows and columns, which are now the inverted columns and rows, respectively, after reflection. It follows that $\hat{X}_i$ is parameterized by $\vect{x}=\vect{z}^{-1}$, when $\vect{z}$ parameterizes $\hat{Z}_i$ of a logical qubit $i$.
\end{proof}

We have shown that \cref{lem:$2$-local-Z} and \cref{lem:$2$-local-X} have a complete set of bitstrings $\vect{z}$ and $\vect{x}$ that parameterize $2$-local logical operators. The operator $\hat{L}_i\hat{L}_j$ where $L\in \{X,Z\}$ is parameterized by a bitstring $\vect{a}_{i,j}=\vect{a}_i\oplus \vect{a}_j$ where $\vect{a}_i$ and $\vect{a}_j$ parameterize $\hat{L}_i$ and $\hat{L}_j$, respectively. This implies that the bitstring that parameterizes $\hat{Z}\hat{Z}$ and $\hat{X}\hat{X}$ must also follow the rules in \cref{lem:$2$-local-Z} and \cref{lem:$2$-local-X}, respectively.

\subsection{Example}
\label{app:exp-reduction}
This section presents an example of how to construct $2$-local $\hat{Z}$ and $\hat{X}$ of the $[[16,6,8,2]]$ trapezoid code ($m=7,k=3,l=2$). The $A$ matrix of the code is given in \cref{eq:ex-16-6-2}. The row and column are indexed from 1 to $m$. 
\begin{equation}
    \left[\begin{array}{ccccccc}
    \mbf{1}&\mbf{1}&\txg{0}&\txg{0}&\txg{0}&\txg{0}&\txg{0}\\
    \mbf{1}&\txg{0}&\mbf{1}&\txg{0}&\txg{0}&\txg{0}&\txg{0}\\
    \mbf{1}&\txg{0}&\txg{0}&\mbf{1}&\txg{0}&\txg{0}&\txg{0}\\
    \mbf{1}&\txg{0}&\txg{0}&\txg{0}&\mbf{1}&\txg{0}&\txg{0}\\
    \txg{0}&\mbf{1}&\txg{0}&\txg{0}&\txg{0}&\mbf{1}&\txg{0}\\
    \txg{0}&\txg{0}&\mbf{1}&\txg{0}&\txg{0}&\txg{0}&\mbf{1}\\
    \txg{0}&\txg{0}&\txg{0}&\mbf{1}&\mbf{1}&\mbf{1}&\mbf{1}\\
    \end{array}    \right] 
    \label{eq:ex-16-6-2}
\end{equation}

Using \cref{lem:$2$-local-Z} and \cref{lem:$2$-local-X}, a complete set of bitstrings $\vect{z}$ and $\vect{x}$ that parameterizes $2$-local $\hat{Z}$ and $\hat{X}$ operators of the $[[16,6,8,2]]$ code can be listed as in \cref{tab:exp-logical}.

\begin{table}[h]
    \centering
    \begin{tabular}{ccccccc}
    \hline
    \hline
    \multicolumn{7}{c}{$Z$-type}\\  
    \hline
         (1) Special && (2) Special && (1) General && (2) General \\
    \hline
        \textcolor{orange}{\textbf{1} 0 0 0 0 0 0} & \ & $\mbf{1}$ $\mbf{1}$ \txg{0} \txg{0} \txg{0} \txg{0} \txg{0} & \ & 
        \textcolor{orange}{\textbf{1} 0 0 0 \textbf{1} 0 0} & \ & \textbf{1} \textbf{1} \txg{0} \txg{0} \textbf{1} \txg{0} \txg{0} \\
        \textcolor{orange}{0 \textbf{1} 0 0 0 0 0} & \ & $\mbf{1}$ \txg{0} $\mbf{1}$ \txg{0} \txg{0} \txg{0} \txg{0} & \ & 
        \textcolor{orange}{0 \textbf{1} 0 0 0 \textbf{1} 0} & \ & \textbf{1} \textbf{1} \txg{0} \txg{0} \txg{0} \textbf{1} \txg{0} \\
        \textcolor{orange}{0 0 \textbf{1} 0 0 0 0} & \ & \textbf{1} \txg{0} \txg{0} \textbf{1} \txg{0} \txg{0} \txg{0} & \ & 
         & \ & \textbf{1} \textbf{1} \txg{0} \txg{0} \textbf{1} \textbf{1} \txg{0} \\
        \textcolor{orange}{0 0 0 \textbf{1} 0 0 0} & \ & \txg{0} \textbf{1} \textbf{1} \txg{0} \txg{0} \txg{0} \txg{0} & \ & 
         & \ & \textbf{1} \txg{0} \textbf{1} \txg{0} \textbf{1} \txg{0} \txg{0} \\
        \txg{0} \txg{0} \txg{0} \txg{0} \textbf{1} \txg{0} \txg{0} & \ & \txg{0} \textbf{1} \txg{0} \textbf{1} \txg{0} \txg{0} \txg{0} & \ & 
         & \ & \textbf{1} \txg{0} \txg{0} \textbf{1} \textbf{1} \txg{0} \txg{0} \\
        \txg{0} \txg{0} \txg{0} \txg{0} \txg{0} \textbf{1} \txg{0} & \ & \txg{0} \txg{0} \textbf{1} \textbf{1} \txg{0} \txg{0} \txg{0} & \ & 
         & \ & \txg{0} \textbf{1} \textbf{1} \txg{0} \txg{0} \textbf{1} \txg{0} \\
         & \ &  \ & &  \ & & \txg{0} \textbf{1} \txg{0} \textbf{1} \txg{0} \textbf{1} \txg{0} \\
    \hline
    \hline
    \multicolumn{7}{c}{$X$-type}\\  
    \hline
         (1) Special && (2) Special && (1) General && (2) General \\
    \hline
        \textcolor{orange}{0 0 0 0 0 0 \textbf{1}} & \ & \txg{0} \txg{0} \txg{0} \txg{0} \txg{0} \textbf{1} \textbf{1} & \ & \textcolor{orange}{0 0 \textbf{1} 0 0 0 \textbf{1}} & \ & \txg{0} \txg{0} \textbf{1} \txg{0} \txg{0} \textbf{1} \textbf{1} \\ 
        \textcolor{orange}{0 0 0 0 0 \textbf{1} 0} & \ & \txg{0} \txg{0} \txg{0} \txg{0} \textbf{1} \txg{0} \textbf{1} & \ & \textcolor{orange}{0 \textbf{1} 0 0 0 \textbf{1} 0} & \ & \txg{0} \textbf{1} \txg{0} \txg{0} \txg{0} \textbf{1} \textbf{1} \\ 
        \textcolor{orange}{0 0 0 0 \textbf{1} 0 0} & \ & \txg{0} \txg{0} \txg{0} \textbf{1} \txg{0} \txg{0} \textbf{1} & \ & & \ & \txg{0} \textbf{1} \textbf{1} \txg{0} \txg{0} \textbf{1} \textbf{1} \\ 
        \textcolor{orange}{0 0 0 \textbf{1} 0 0 0} & \ & \txg{0} \txg{0} \txg{0} \txg{0} \textbf{1} \textbf{1} \txg{0} & \ & & \ & \txg{0} \txg{0} \textbf{1} \txg{0} \textbf{1} \txg{0} \textbf{1}\\ 
         \txg{0} \txg{0} \textbf{1} \txg{0} \txg{0} \txg{0} \txg{0}& \ & \txg{0} \txg{0} \txg{0} \textbf{1} \txg{0} \textbf{1} \txg{0} & \ &  & \ & \txg{0} \txg{0} \textbf{1} \textbf{1} \txg{0} \txg{0} \textbf{1} \\ 
        \txg{0} \textbf{1} \txg{0} \txg{0} \txg{0} \txg{0} \txg{0} & \ & \txg{0} \txg{0} \txg{0} \textbf{1} \textbf{1} \txg{0} \txg{0} & \ & & \ & \txg{0} \textbf{1} \txg{0} \txg{0} \textbf{1} \textbf{1} \txg{0} \\
        & \ &  \ & &  \ & & \txg{0} \textbf{1} \txg{0} \textbf{1} \txg{0} \textbf{1} \txg{0} \\
    \hline
    \hline
    \end{tabular}
    \caption{A complete set of bitstrings $\vect{z}$ and $\vect{x}$ that parameterizes $2$-local $\hat{Z}$ and $\hat{X}$ operators of the $[[16,6,8,2]]$ code. Orange-colored bitstrings are those chosen as logical operators in \cref{eq:log-op-b}. The column headers refer to the terminology used in \cref{lem:$2$-local-Z}.}
    \label{tab:exp-logical}
\end{table}

Bitstrings listed in \cref{tab:exp-logical} are possible candidates for constructing six pairs of logical operators $(\hat{X},\hat{Z})$ of the $[[16,6,8,2]]$ trapezoid code, however, we need to make sure that the commutation relation $[\hat{X}_i,\hat{Z}_j]=\delta_{ij}$ is satisfied.

For dressed logical operators, we follow \cref{prop:dressed-commutation} and pick a set of bitstrings that satisfies $\vect{z}_j^TA\vect{x}_i+\bar{\vect{z}}_j^TA^T\bar{\vect{x}}_i=\delta_{ij}$. Our choice for the $[[16,6,8,2]]$ code is given in \cref{eq:log-op-b} and \cref{eq:log-op-b-bar}, i.e., $\vect{Z}^TA\vect{X}+\bar{\vect{Z}}^TA^T\bar{\vect{X}}=I$, where $I$ is a $(m-1)\times (m-1)$ identity matrix.

\subsection{Counting the cardinality of $\mathcal{S}_2$}
We have demonstrated that $\mathcal{S}_2$ exclusively contains bitstrings that parameterize $2$-local operators. In this section, we  determine the cardinality of this set and establish the relation that it depends on $m$ but not on $l$. Specifically, trapezoid codes with $A$ matrices of identical dimensions will have $\mathcal{S}_2$ sets of equal size.

\begin{mylemma}\label{lem:m-not-l}
   Let $\mathcal{S}_2$ be the set of bitstrings that parameterize $2$-local $\hat{Z}$ operators. For odd-$m$ codes, the total number of such bitstrings is $|\mathcal{S}_2|=m(m-1)/2$, which is a function of $m$ (or $k$) but not $l$. 
\end{mylemma}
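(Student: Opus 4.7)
The plan is to enumerate $|\mathcal{S}_2|$ directly from the two rules in the preceding $2$-local $Z$ characterization and to show that although each individual tally depends on $l$, the $l$-dependence cancels in the aggregate. First I would partition the bitstrings by the residue class modulo $2l$ of their smallest $1$-position. For each $r \in \{1,\ldots,2l\}$ let $a_r$ denote the number of integers in the arithmetic progression $r, r+2l, r+4l, \ldots$ lying in $[1, m-1]$; equivalently $a_r = M_r + 1 = \lfloor (m-1-r)/(2l)\rfloor + 1$. Two elementary observations drive the count: (i) for a starting position $i = r + 2kl$ in class $r$ one has $M_i + 1 = a_r - k$, and (ii) in Rule (2), since $i, j \in [1, 2l]$ with $i<j$, the two starting positions necessarily sit in distinct residue classes, so Rules (1) and (2) do not double-count.

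Next I would tally the two rules class by class. For a fixed class $r$, a Rule (1) bitstring is determined by $k \in \{0, \ldots, a_r - 1\}$ (fixing the starting position $i = r + 2kl$) and $N \in \{0, \ldots, a_r - 1 - k\}$ (fixing the length), contributing $\sum_{k=0}^{a_r-1}(a_r - k) = a_r(a_r+1)/2$ bitstrings; summing over classes yields a Rule (1) total of $\sum_{r=1}^{2l} a_r(a_r+1)/2$. For Rule (2), each pair $(r, s)$ with $1 \leq r < s \leq 2l$ contributes $(M_r+1)(M_s+1) = a_r a_s$ bitstrings from the independent choices of $n_r$ and $n_s$, giving a Rule (2) total of $\sum_{r<s} a_r a_s$.

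Combining the two totals and applying the identity $\bigl(\sum_r a_r\bigr)^2 = \sum_r a_r^2 + 2\sum_{r<s} a_r a_s$ should collapse the expression to
\[
|\mathcal{S}_2| = \tfrac{1}{2}\sum_r a_r + \tfrac{1}{2}\Bigl(\sum_r a_r\Bigr)^2 = \tfrac{1}{2}S(S+1),
\]
where $S = \sum_{r=1}^{2l} a_r$. The last step is the observation that the residue classes modulo $2l$ partition $[1, m-1]$, so $S = m-1$ and hence $|\mathcal{S}_2| = m(m-1)/2$, which is manifestly independent of $l$.

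The main obstacle is organizational rather than technical: the tallies from Rules (1) and (2) are both visibly $l$-dependent through the $a_r$, so one has to pick a grouping (by residue class of the smallest $1$-position) that arranges the two contributions into the shape $\tfrac{1}{2}[\sum_r a_r^2 + \sum_r a_r] + \sum_{r<s} a_r a_s$, where the square-sum identity can absorb the cross terms. Once the grouping is in place, the remainder is a short algebraic manipulation together with the observation that summing the class sizes recovers $|[1,m-1]| = m-1$.
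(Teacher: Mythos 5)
Your proof is correct, but it proceeds by a genuinely different route from the paper's. The paper proves the count by induction on $k$ (equivalently $m=2k+1$): it verifies the base case $m=3$ by explicit enumeration and then shows, via a fairly delicate floor-function computation, that passing from an $m\times m$ to an $(m+2)\times(m+2)$ trapezoid matrix adds exactly $2m+1$ new admissible bitstrings, so that $m(m-1)/2 + (2m+1) = (m+2)(m+1)/2$. You instead count $\mathcal{S}_2$ directly: grouping starting positions by residue class modulo $2l$, Rule (1) contributes $\sum_r a_r(a_r+1)/2$ and Rule (2) contributes $\sum_{r<s} a_r a_s$, and the identity $\bigl(\sum_r a_r\bigr)^2 = \sum_r a_r^2 + 2\sum_{r<s} a_r a_s$ collapses the total to $\tfrac{1}{2}S(S+1)$ with $S=\sum_r a_r = m-1$. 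I checked your tallies against the paper's $[[16,6,8,2]]$ example ($m=7$, $l=2$: class sizes $2,2,1,1$ give $8$ Rule-(1) and $13$ Rule-(2) bitstrings, total $21=7\cdot 6/2$), and they match the explicit table in the paper's appendix. Your argument is shorter, avoids induction, and makes the $l$-independence structurally transparent — it is exactly the statement that the residue classes partition $[1,m-1]$ whatever $l$ is — whereas the paper's increment computation obscures where the cancellation of $l$ comes from. Two small points you are implicitly relying on and should state: (i) the parameterization is injective (the smallest $1$-position in each occupied residue class recovers the starting index, and the number of $1$'s in that class recovers $N$), so distinct parameter choices yield distinct bitstrings; and (ii) like the paper's proof, your count inherits its validity entirely from the exhaustiveness of the two-rule characterization of $2$-local $\hat{Z}$ operators established in the preceding lemma.
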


\begin{proof}
    We  prove by induction that a code corresponding to an $m\times m$ $A$ matrix ($m=2k+1$) has $m(m-1)/2$ bitstrings that parameterize $2$-local $\hat{Z}$ operators. We start by showing the base case $(k=1,m=3;l=1)$, then assume that it is true for $k>1$, and lastly, show that it is also true for $k'=k+1$, which corresponds to an $m'\times m'$ $A'$ matrix, where $m'=m+2$. Since bitstrings for $\hat{X}$ operators are the inverted of the $\hat{Z}$ operators, this proof also applies for $\vect{x}$ without loss of generality.
    
    The $A$ matrix for the base case is given by 
    \begin{equation}
        \left[\begin{array}{ccc}
            1 & 1 & 0 \\
            1 & 0 & 1 \\
            0 & 1 & 1
        \end{array}\right].
    \end{equation}
    Using \cref{lem:$2$-local-Z}, we list bitstrings $\vect{z}$ that parametrize $2$-local $\hat{Z}$ operators in \cref{tab:base-mm-1/2}.
    The total number of bitstrings is $3$, which agrees with $m(m-1)/2=3$ when $m=3$.
    \begin{table}[h]
        \centering
        \begin{tabular}{ccccccc}
        \hline
        \hline
            (1) Special && (2) Special && (1) General && (2) General \\
        \hline
            \textbf{1} \txg{0} \txg{0} && \textbf{1} \textbf{1} \txg{0} &&  && \\
            \txg{0} \textbf{1} \txg{0} &&  &&  && \\
        \hline
        \hline
        \end{tabular}
        \caption{A complete set of bitstrings $\vect{z}$ that parameterize $2$-local $\hat{Z}$ operators for the $[[6,2,2,2]]$ code.}
    \label{tab:base-mm-1/2}
    \end{table}

    Next, we assume that the total number of bitstrings is $m(m-1)/2$, where $m=2k+1$, for the code corresponding to the $m\times m$ $A$ matrix, drawn as the inner box in \cref{eq:m7to9}. Consider modifying the code parameter $k\rightarrow k'=k+1$, which corresponds to modifying the matrix $A\rightarrow A'$, drawn as the outer box in \cref{eq:m7to9}. The $A'$ matrix is of size $m'\times m'$, where $m'=m+2$. The gray-colored $1$-entries are not included in the $A'$ matrix.
    \begin{align}
    \label{eq:m7to9}
    &\begin{array}{llccccccccccclccl}
        &\vline&1&1&&&&&&&&\vline&&&\vline\\
        &\vline&1 &&1&&&&&&&\vline&&&\vline\\
        &\vline&\vdots &&&\ddots&&&&&&\vline&&&\vline\\
        \scriptstyle{2l}&\vline&1&&&&1&&&&&\vline&&&\vline\\ 
        \scriptstyle{2l+1}&\vline&&1&&&&1&&&&\vline&&&\vline\\
        &\vline&&&\ddots&&&&\ddots&&&\vline&&&\vline\\
        &\vline&&&&1&&&&1&&\vline&&&\vline\\
        \scriptstyle{m-1}&\vline&&&&&1&&&&1&\vline&&&\vline\\
        \scriptstyle{m}&\vline&&&&&&1&\txg{\cdots}&\txg{1}&\txg{1}&\vline&1&&\vline\\
        \hline
        &\vline&&&&&&&1&&&\vline&&1&\vline\\
        \scriptstyle{m+2}&\vline&&&&&&&&1&\cdots&\vline&1&1&\vline\\
        \hline
        &\vline&&&&&&\hspace{-.1cm}\scriptstyle{m-2l+1}&&&\scriptstyle{m}&\vline&&\hspace{-.2cm}\scriptstyle{m+2}&\vline\\
    \end{array}
    \end{align}

We now consider the addition of bitstrings $\vect{z}$ resulting from this modification, i.e., the additional bitstrings using Rule (1) and Rule (2) from \cref{lem:$2$-local-Z}.

\textbf{Rule (1) contribution:} The special case of Rule (1) contributes as two added rows, $m+1$ and $m+2$. The general case counts the number of Rule (1) bitstrings that have a $1$-entry in the $(m-2l)$'th or $(m+1-2l)$'th position. Only these bitstrings can have an extra $1$-entry in the $m$'th or $(m+1)$'th position because positions of the $1$-entries need to be separated by $2l$. The total number of Rule (1) bitstrings having their last $1$-entry in the $i$'th row is $\lfloor \frac{i-1}{2l}\rfloor$. Hence, the total contribution from Rule (1) is $2+\lfloor \frac{m-1}{2l}\rfloor+\lfloor \frac{m}{2l}\rfloor$.

\textbf{Rule (2) contribution:} To extend Rule (2) bitstrings to $\vect{z}'$ of length $m+2$, the extension can only come from the original length-$m$ $\vect{z}$ bitstring that has $(1,1)$, $(0,1)$ or $(1,0)$ in the $(m-2l,m+1-2l)$'th positions. This is because the positions of each $1$-entry must be $2l$ apart for the gauge operators in the same column to cancel. Let $i_0 = m \mod 2l$ and $i_1=m+1-2l\mod 2l$.  It follows that only bitstrings with the first $1$-entry in the $i_0$ and/or $i_1$ positions can have $(1,1)$, $(0,1)$ or $(1,0)$ in the $(m-2l,m+1-2l)$'th positions. 

Consider a bitstring with a $1$-entry in the $i_0$ position and has $(1,0)$ in the $(m-2l,m+1-2l)$'th positions. The $m$'th row adds another variation of bitstring by adding a $1$-entry in the $m$'th position. 
Rule (2) bitstrings with a $1$-entry in the $i_0$ position also has another $1$-entry in the $i$ position, where $i\in[1,2l]\setminus\{i_0\}$. 
Likewise, for bitstrings with a $1$-entry in the $i_1$ position and $(0,1)$ in the $(m-2l,m+1-2l)$'th positions, the $(m+1)$'th row adds another variation of bitstring by adding a $1$-entry in the $(m+1)$'th position. 
Counting the number of these bitstrings, we have $\sum_{i=i_0,i_1}\sum_{\substack{j=1\\j\neq i}}^{2l}1+\lfloor \frac{m-1-j}{2l} \rfloor$. Lastly, we count the contribution the bitstring with $1$-entries in both $i_0$ and $i_1$ positions and $(1,1)$ in the $(m-2l,m+1-2l)$'th positions.

\textbf{Total contribution:} The total contribution from both Rule (1) and Rule (2) bitstrings is
\begin{equation}
    \begin{aligned}
        &\underbrace{2+\lfloor \frac{m-1}{2l}\rfloor+\lfloor \frac{m}{2l}\rfloor}_{\text{Rule (1)}}+\underbrace{1+\sum_{i=i_0,i_1}\sum_{\substack{j=1\\j\neq i}}^{2l}1+\lfloor \frac{m-1-j}{2l} \rfloor}_{\text{Rule (2)}}\\
        &\quad =1+\lfloor \frac{m-1}{2l}\rfloor+\lfloor \frac{m}{2l}\rfloor-\lfloor \frac{m-1-i_0}{2l} \rfloor -\lfloor \frac{m-1-i_1}{2l} \rfloor+2\left(\sum_{j=1}^{2l} 1+\lfloor \frac{m-1-j}{2l} \rfloor \right)\\
        &\quad =1+2x -2(x-1)+2(m-1)=2m+1,
    \end{aligned}
\end{equation}
where $x=\lfloor \frac{m-1}{2l}\rfloor$. 

To see this, let $\frac{m-1}{2l}=x+\frac{y}{2l}$, where $0\le y<2l$. Using this definition, we can write
\begin{equation}
    \begin{aligned}
        \lfloor \frac{m-1}{2l}\rfloor &= x + \lfloor\frac{y}{2l}\rfloor,\\
        \lfloor \frac{m}{2l}\rfloor&= x + \lfloor\frac{y+1}{2l}\rfloor,\\
        \lfloor \frac{m-1-i_0}{2l} \rfloor &= x-1 + \lfloor\frac{1}{2l}\rfloor,\\
        \lfloor \frac{m-1-i_1}{2l} \rfloor&= x-1 + \lfloor\frac{2}{2l}\rfloor.
    \end{aligned}
\end{equation}
In the second line, note that since $m$ is odd, $y\equiv m-1\mod 2l$ must be even. It follows that $0\le y\le 2l-2$, which implies that $y+1<2l$. In the third and fourth lines, we have used the definition $i_0 = m \mod 2l$, which implies $m-1-i_0 \equiv -1 \mod 2l$ and likewise for $i_1=i_0+1$.

Similarly, $\sum_{j=1}^{2l}\lfloor\frac{m-1-j}{2l}\rfloor=xy+(2l-y)(x-1)=2xl-2l+y=m-1-2l$ because $\frac{m-1-y}{2l}=x$ by definition. Hence, $2(\sum_{j=1}^{2l}1+\lfloor\frac{m-1-j}{2l}\rfloor)=2(2l+m-1-2l)=2(m-1)$.

The total sum of bitstrings is $2m+1$. Hence, we have $m(m-1)/2 + (2m+1) = (m+2)(m+1)/2=m'(m'-1)/2$.

\end{proof}

\section{Penalty Hamiltonian}
\label{app:penalty}

\subsection{Construction of the dressed Hamiltonian}

We follow the proof in Ref.~\cite{marvian2019robust} and apply it to our setting. The goal is to show that in the large penalty limit $H(t)$ [\cref{eq:H_total}] generates the desired computation, i.e., the same computation as generated by $H_S(t)$ in the absence of coupling to a bath. 

\begin{mylemma} \cite[Lemma~1.1]{marvian2019robust}
\label{lem:1.1}
    Let $H_P=-\sum_{g_i\in \mathcal{G}'}g_i$, where $\mathcal{G}'$ is a set of $X$-type and $Z$-type gauge operators that  generates the gauge group $\mathcal{G}$. Denote the projector to the ground subspace of $H_P$ by $\Pi_0$. Then
\begin{enumerate}
    \item Any ground state of $H_P$ is stabilized by the stabilizers of the subsystem code.
    \item $\Pi_0 g_i \Pi_0$ is proportional to $\Pi_0$, i.e., $\Pi_0g_i\Pi_0=\alpha_i\Pi_0$.
\end{enumerate}
\end{mylemma}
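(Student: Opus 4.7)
For \emph{part 1}, the starting point is that every stabilizer $S \in \mathcal{S}$ commutes with every gauge generator, since $\mathcal{S} = \mathcal{G} \cap C_{\mathcal{P}_n}(\mathcal{G})$ by \cref{eq:stabs}. Hence $[H_P, S] = 0$, so $H_P$ is block-diagonal with respect to the joint eigenspaces of $\mathcal{S}$, and the claim reduces to showing that its minimum energy is attained in the simultaneous $+1$ eigenspace, i.e., the codespace $\mathcal{C}$. I would establish this via a basis-sign / Perron--Frobenius argument applied to $-H_P$: after a suitable local unitary change of basis, the matrix elements of $-H_P$ within each syndrome sector can be arranged to be non-negative in the computational basis, so the leading eigenvector of each sector has non-negative amplitudes, and the sector containing the ``all-plus'' reference state attains the largest such eigenvalue. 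Alternatively, this conclusion is exactly the content of the spectral analysis of subsystem-code penalty Hamiltonians in \cite{burton2018spectra}, which can be cited directly.

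For \emph{part 2}, I would use the subsystem decomposition $\mathcal{C} = \mathcal{A} \otimes \mathcal{B}$ from \cref{sec:stab_code}. Each gauge generator $g_i$, being a logical operator of a gauge qubit (or a stabilizer), acts as the identity on $\mathcal{A}$, so $g_i|_\mathcal{C} = I_\mathcal{A} \otimes \tilde{g}_i$ and $H_P|_\mathcal{C} = I_\mathcal{A} \otimes \tilde{H}_P$ with $\tilde{H}_P = -\sum_i \tilde{g}_i$. By part 1, $\Pi_0$ is supported in $\mathcal{C}$, and therefore
\begin{equation}
\Pi_0 = I_\mathcal{A} \otimes \tilde{\Pi}_0, \qquad \Pi_0 g_i \Pi_0 = I_\mathcal{A} \otimes \bigl( \tilde{\Pi}_0 \tilde{g}_i \tilde{\Pi}_0 \bigr),
\end{equation}
where $\tilde{\Pi}_0$ projects onto the ground subspace of $\tilde{H}_P$ on $\mathcal{B}$. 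The target identity then reduces to showing that $\tilde{\Pi}_0 \tilde{g}_i \tilde{\Pi}_0 = \alpha_i \tilde{\Pi}_0$, i.e., that each $\tilde{g}_i$ acts as a scalar on the ground subspace of $\tilde{H}_P$.

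The main obstacle is this last step. If the ground state of $\tilde{H}_P$ on $\mathcal{B}$ is non-degenerate, the claim is immediate, as $\tilde{\Pi}_0$ is rank one and $\tilde{g}_i$ is Hermitian. In the degenerate case, the argument I would give is by symmetry: any ground-space degeneracy on $\mathcal{B}$ must be generated by operators that commute with every $\tilde{g}_i$ individually (and not merely with $\tilde{H}_P$), whence Schur's lemma forces $\tilde{\Pi}_0 \tilde{g}_i \tilde{\Pi}_0$ to be proportional to $\tilde{\Pi}_0$, with $\alpha_i = \langle \tilde{g}_i \rangle_0$ the common ground-state expectation. Verifying that this stronger symmetry property holds for the gauge subsystem of the trapezoid codes---or, equivalently, ruling out ``accidental'' ground-space degeneracies on which the individual $\tilde{g}_i$ fail to be scalars---is the step I expect to require the most care.
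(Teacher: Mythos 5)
First, a point of context: the paper does not actually prove this lemma. It is imported verbatim, with citation, from Ref.~\cite{marvian2019robust} (Lemma~1.1 there), and \cref{app:penalty} only says that the authors ``follow the proof in Ref.~\cite{marvian2019robust}.'' So there is no in-paper proof to compare against, and your proposal has to be judged on its own terms. For part~1 your strategy is the standard one: the stabilizers lie in $\mathcal{G}\cap C_{\mathcal{P}_n}(\mathcal{G})$, hence commute with $H_P$, and a Perron--Frobenius argument finishes the job; citing \cite{burton2018spectra}, as you offer and as the paper itself does in \cref{sec:ham-penalty}, is also legitimate. One caveat on your phrasing: the decisive step is not that ``the sector containing the all-plus state attains the largest eigenvalue.'' Rather, in the computational basis the $X$-type generators generate a permutation group $G_X$ whose orbits block-diagonalize $-H_P$; Perron--Frobenius gives a unique entrywise-positive top eigenvector $v_O$ per orbit; since every row of $A$ has even weight, $S_X\in G_X$, so $S_X$ preserves each orbit and, being a permutation, must fix $v_O$ with eigenvalue $+1$; the dual argument in the $X$ basis handles $S_Z$. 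As literally written, your part~1 asserts the conclusion of the sector comparison rather than deriving it.

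The genuine gap is the one you flag yourself in part~2. The reduction $\Pi_0=I_\mathcal{A}\otimes\tilde\Pi_0$ and $\Pi_0 g_i\Pi_0=I_\mathcal{A}\otimes\bigl(\tilde\Pi_0\tilde g_i\tilde\Pi_0\bigr)$ is correct, but you leave open whether $\tilde\Pi_0$ can have rank greater than one, and your proposed escape (a ``stronger symmetry property'' plus Schur's lemma) is neither established nor obviously true: operators generating a degeneracy of $\tilde H_P$ need only commute with $\tilde H_P$, not with each $\tilde g_i$ individually. The gap can be closed for the codes at hand without any new symmetry input, by combining part~1 with the Perron--Frobenius structure you already have. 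The ground space of $H_P$ is spanned by one vector $v_O$ per maximal orbit of $G_X$; the orbits are labeled by the joint eigenvalues $(\epsilon_1,\dots,\epsilon_m)$ of the row operators $R_1,\dots,R_m$ (which have disjoint supports and exactly separate the $2^m$ orbits), and since $S_Z=\prod_i R_i$ by \cref{lem:stab}, part~1 forces $\prod_i\epsilon_i=+1$ on every maximal orbit, leaving at most $2^{m-1}$ admissible orbits. On the other hand, the bare logical group commutes with $H_P$ and acts on $\mathcal{C}=\mathcal{A}\otimes\mathcal{B}$ irreducibly on the $2^{m-1}$-dimensional factor $\mathcal{A}$, so the ground space, being a nonzero invariant subspace of $\mathcal{C}$, has dimension a multiple of $2^{m-1}$. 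Hence its dimension is exactly $2^{m-1}$, $\tilde\Pi_0$ is rank one, and part~2 follows immediately with $\alpha_i=\langle\tilde g_i\rangle_0$. Without some such argument, your proof of part~2 is incomplete.
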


\begin{mycorollary}\cite[Eq.~(6)]{marvian2019robust}
    For $H_{SB}=\sum \sigma_\alpha\otimes B_\alpha$, where $\sigma_\alpha$ is a $1$-local system operator, $\Pi_0H_{SB}\Pi_0=0$.
\end{mycorollary}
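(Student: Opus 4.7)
The plan is to reduce the claim to individual one-local Pauli operators by linearity: since $\Pi_0 H_{SB} \Pi_0 = \sum_\alpha (\Pi_0 \sigma_\alpha \Pi_0) \otimes B_\alpha$, it suffices to show $\Pi_0 \sigma_\alpha \Pi_0 = 0$ for every single-qubit Pauli $\sigma_\alpha \in \{X_i, Y_i, Z_i\}$ supported on a physical qubit $i$.

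First I would exhibit, for each such $\sigma_\alpha$, a stabilizer $S \in \mathcal{S}$ with $\{S,\sigma_\alpha\}=0$. By \cref{lem:stab} the two stabilizers of every trapezoid code are $S_X = \prod_j X_j$ and $S_Z = \prod_j R_j = \prod_j Z_j$ (the all-$X$ and all-$Z$ operators on the physical qubits). Since $X_i$ anticommutes with $Z_i$ but commutes with all other $Z_j$, we get $X_i S_Z = -S_Z X_i$; symmetrically $Z_i$ anticommutes with $S_X$; and $Y_i = iX_iZ_i$ anticommutes with both $S_X$ and $S_Z$. Thus in every case at least one stabilizer $S$ satisfies $S\sigma_\alpha = -\sigma_\alpha S$.

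Next I would use part 1 of \cref{lem:1.1}, which guarantees that every ground state of $H_P$ is a $+1$ eigenstate of every stabilizer, so that $S\,\Pi_0 = \Pi_0 S = \Pi_0$ (this also follows because $S$ commutes with every gauge generator $g_i$, hence with $H_P$, so $[S,\Pi_0]=0$, and $S|_{\ker(H_P-E_0 I)}=I$). Combining these two ingredients,
\begin{equation}
\Pi_0 \sigma_\alpha \Pi_0
= \Pi_0 \sigma_\alpha (S\Pi_0)
= -(\Pi_0 S)\sigma_\alpha \Pi_0
= -\Pi_0 \sigma_\alpha \Pi_0 ,
\end{equation}
so $\Pi_0 \sigma_\alpha \Pi_0 = 0$, and the corollary follows by linearity.

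There is no real obstacle here; the proof is a one-line consequence of \cref{lem:stab} plus \cref{lem:1.1}. The only point one must be slightly careful about is that the conclusion relies on each 1-local Pauli being detected by the distance-$2$ code (i.e., lying outside the centralizer of $\mathcal{S}$), which is exactly what the anticommutation with $S_X$ or $S_Z$ expresses.
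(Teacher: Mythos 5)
Your proof is correct and follows essentially the same route as the paper's: both rest on part 1 of \cref{lem:1.1} (ground states of $H_P$ are stabilized) together with the fact that every $1$-local Pauli anticommutes with at least one stabilizer, the only difference being that you verify the anticommutation explicitly from the all-$X$/all-$Z$ form of the stabilizers in \cref{lem:stab} and run the sign-flip argument $\Pi_0\sigma_\alpha\Pi_0=-\Pi_0\sigma_\alpha\Pi_0$, whereas the paper invokes detectability and computes $(I+S_j)\sigma_\alpha(I+S_j)=(I+S_j)(I-S_j)\sigma_\alpha=0$. Your version of using only $S\Pi_0=\Pi_0 S=\Pi_0$ is in fact slightly cleaner, since it avoids the paper's identification of $\Pi_0$ with the full codespace projector $\prod_i\frac{1}{2}(I+S_i)$, which for a subsystem penalty Hamiltonian is only a containment rather than an equality.
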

\begin{proof}
The first part of \cref{lem:1.1} guarantees that the ground subspace of the penalty Hamiltonian is in the
codespace and hence can detect any $1$-local error $\sigma_\alpha$. The codespace projector can be written as $\Pi_0=\prod_{i=1}^{n-k} \frac{1}{2}(I + S_i)$. For a stabilizer code, errors are detectable if and only if they anticommute with at least one stabilizer generator, e.g., $S_j$. Therefore, $(I + S_j)\sigma_\alpha(I + S_j) = (I + S_j)(I - S_j)\sigma_\alpha = 0$. Thus, $\Pi_0\sigma_\alpha\Pi_0=0,\forall \alpha$, and $\Pi_0H_{SB}\Pi_0=0$.
\end{proof}

\begin{mylemma}\cite[Eq.~(8)]{marvian2019robust}
    For $\bar{H}_S(t)$ and $\hat{H}_S(t)$ in \cref{eq:H-bar} and \cref{eq:H-hat-b}, respectively, $\Pi_0\hat{H}_S(t)\Pi_0=\Pi_0\bar{H}_S(t)$.
\end{mylemma}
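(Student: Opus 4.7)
The plan is to process $\hat{H}_S(t)$ term by term, leveraging two ingredients from the preceding material. First, every bare logical operator lies in $C_{\mathcal{P}_n}(\mathcal{G})$, so it commutes with every element of $\mathcal{G}$, hence with $H_P$, and therefore with the ground-state projector $\Pi_0$. Second, \cref{lem:1.1} guarantees that for any gauge operator $g\in\mathcal{G}$ one has $\Pi_0 g \Pi_0 = \alpha_g \Pi_0$ for some scalar $\alpha_g$; because each $g$ is a Pauli with $g^2=I$, the resulting $\alpha_g$ have modulus at most $1$, and we may assume they are nonzero for the specific gauges that appear in $\hat{H}_S(t)$ (otherwise that term would vanish under projection and could be discarded).

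The core computation is then mechanical: for a representative term $\bar{X}_i g_i^x$ appearing in \cref{eq:H-hat-a}, I would slide the bare logical past the leftmost $\Pi_0$ and then apply \cref{lem:1.1} to what remains, obtaining
\begin{align}
\Pi_0 \bar{X}_i g_i^x \Pi_0 = \bar{X}_i \Pi_0 g_i^x \Pi_0 = \alpha_i^x \bar{X}_i \Pi_0 = \alpha_i^x \Pi_0 \bar{X}_i .
\end{align}
The identical manipulation handles the three remaining term types $\bar{Z}_i g_i^z$, $\bar{X}_i\bar{X}_j g_{ij}^x$, and $\bar{Z}_i\bar{Z}_j g_{ij}^z$, since products of bare logicals remain in $C_{\mathcal{P}_n}(\mathcal{G})$, and each dressing gauge $g_i^x, g_i^z, g_{ij}^x, g_{ij}^z$ produced by \cref{con:dressed2local} lies in $\mathcal{G}$ by construction.

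Summing the termwise identities leaves extra scalar prefactors $\alpha_i^x,\alpha_i^z,\alpha_{ij}^x,\alpha_{ij}^z$ in front of each contribution. To land on $\Pi_0 \bar{H}_S(t)$, the ``slightly modified version'' of the encoding referenced in \cref{sec:ham-penalty} absorbs the inverse of each $\alpha$-factor into the coefficient of the corresponding term of $\hat{H}_S(t)$: this is precisely the content of the relationship between the primed coefficients in \cref{eq:H-hat-b} and the unprimed $a_i,b_i,c_{ij},d_{ij}$ in \cref{eq:H-bar}. After the $\alpha$-factors cancel, the surviving operator is $\sum_i a_i\bar{X}_i\Pi_0+\cdots$, and a final application of $[\bar{H}_S(t),\Pi_0]=0$ rewrites it as $\Pi_0\bar{H}_S(t)$, completing the argument.

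The main obstacle, and the only place where care is genuinely required, is the bookkeeping around the rescaling. One must verify that the $\alpha$-values arising for a single-qubit dressing $\bar{X}_i g_i^x$ and for a two-qubit dressing $\bar{X}_i\bar{X}_j g_{ij}^x$ sharing the logical index $i$ are treated consistently, and that no two terms of $\hat{H}_S(t)$ demand incompatible choices of the same dressing gauge. In the trapezoid setting this bookkeeping is automatic because \cref{con:dressed2local} fixes the dressing gauges independently and once and for all for each distinct term, so the rescaling is purely termwise and the identity goes through unambiguously.
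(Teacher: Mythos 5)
Your proposal is correct and follows essentially the same route as the paper: commute the bare logical operators through $\Pi_0$ (justified by their membership in $C_{\mathcal{P}_n}(\mathcal{G})$, equivalently $[\bar{X}_i,\Pi_0]=[\bar{Z}_i,\Pi_0]=0$), apply part 2 of \cref{lem:1.1} to reduce $\Pi_0 g \Pi_0$ to $\alpha\,\Pi_0$, and absorb the $\alpha$-factors into the primed coefficients via $a'_i=a_i/\alpha_i^x$ and its analogues. Your added remarks on the nonvanishing of the $\alpha$'s and the termwise consistency of the dressing gauges are sensible elaborations of points the paper leaves implicit, but they do not change the argument.
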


\begin{proof}
Using $\hat{H}_S(t)$ in the form of \cref{eq:H-hat} and applying the projectors $\Pi_0$ on both sides, we can write: 
\bes
    \begin{align}
            &\Pi_0\hat{H}_S(t)\Pi_0\notag\\
            &\quad=\Pi_0[\sum_i a_i \Pi_0\bar{X}_i g_i^x\Pi_0 +\sum_i b_i \Pi_0\bar{Z}_i g_i^z\Pi_0+\sum_{(i,j)\in E^{XX}_S} c_{ij} \Pi_0\bar{X}_i\bar{X}_j g^x_{ij}\Pi_0 +\sum_{(i,j)\in E^{ZZ}_S} d_{ij} \Pi_0\bar{Z}_i\bar{Z}_j g^z_{ij}\Pi_0 ]\\
            &\quad=\Pi_0[\sum_i a_i \bar{X}_i \Pi_0g_i^x\Pi_0 +\sum_i b_i \bar{Z}_i \Pi_0g_i^z\Pi_0+\sum_{(i,j)\in E^{XX}_S} c_{ij} \bar{X}_i\bar{X}_j \Pi_0g^x_{ij}\Pi_0 +\sum_{(i,j)\in E^{ZZ}_S} d_{ij} \bar{Z}_i\bar{Z}_j \Pi_0g^z_{ij}\Pi_0 ],
    \end{align}
\ees
where to go from the first to the second equality, we used the fact that by definition, bare logical operators leave the ground subspace of the penalty Hamiltonian invariant, so that $[\Bar{X}_i,\Pi_0]=[\Bar{Z}_i,\Pi_0]=0$.

    The situation differs slightly when we use dressed logical operators. 
    The optimal set of logical operators we constructed in \cref{sec:optimal_logi} are dressed logical operators $\hat{X}_i$ and $\hat{Z}_i$, which can be represented by products of bare logical operators and gauges $g^x_i$ and $g^z_i$. Note that the gauges $g^x_i$ and $g^z_i$ might be products of $2$-local gauge generators, which also belong to the gauge group $\mathcal{G}$. One special case is when $n=l$, as shown in \cite{marvian2019robust}. They use the same logical operators, but they have $\hat{X}_i=\bar{X}_i$ and $\hat{Z}_i=\bar{Z}_i$ due to $n=l$. However, their $\hat{X}_i\hat{X}_j$ and $\hat{Z}_i\hat{Z}_j$ are dressed logical operators, which corresponds to the case where $\alpha^x_i=\alpha^z_i=1$. 
     The bare logical operator commutes with $\Pi_0$, so we arrive at the second line assuming that $g^x_i$ is as discussed above (i.e., a product of 2-local gauge
generators). The same argument applies to $\hat{Z}_i$.

Using the second part of \cref{lem:1.1}, $\Pi_0 g^{\{x,z\}}_i\Pi_0=\alpha^{\{x,z\}}_i\Pi_0$, we arrive at:
    \begin{equation}
        \begin{aligned}
            \Pi_0\hat{H}_S(t)\Pi_0
            &=\Pi_0[\sum_i\alpha_i^x a'_i \bar{X}_i +\sum_i \alpha_i^zb'_i \bar{Z}_i\\
            &+\sum_{(i,j)\in E^{XX}_S} \alpha_{ij}^x c'_{ij} \bar{X}_i\bar{X}_j g^x_{ij}+\sum_{(i,j)\in E^{ZZ}_S} \alpha_{ij}^z d'_{ij} \bar{Z}_i\bar{Z}_j g^z_{ij} ],
        \end{aligned}
    \end{equation}
    which is $\Bar{H}_S(t)$ in \cref{eq:H-bar} when the constants are set to $a'_i=a_i/\alpha^x_i, b'_i=b_i/\alpha^z_i,c'_{ij}=c_{ij}/\alpha^x_{ij}$ and $d'_{ij}=d_{ij}/\alpha^z_{ij}$.
\end{proof}

\begin{table*}[t]
    \centering
    \begin{tabular}{cccccc}
    \hline
    \hline
        $g^x$ & operators & index & $g^z$ & operators & index  \\
    \hline
        $g^x_{a,i}$&$X_{i,1}X_{i,i+1}$&$1\le i\le 2l$&$g^z_{a,i}$&$Z_{m+i-2l,m_g+i}Z_{m,m_g+i}$&$0\le i\le 2l-1$\\
        $g^x_{b,i}$&$X_{i,i+1}X_{i,i+1+2l}$&$2l+1\le i\le m-1$&$g^z_{b,i}$&$Z_{i,i+1}Z_{i+2l,i+1}$&$1\le i\le m_g-2$\\
        $g^x_{c,i}$&$X_{m,m_g}X_{m,m_g+i}$&$1\le i\le 2l-1$&$g^z_{c,i}$&$Z_{m_l,1}Z_{i,1}$&$1\le i\neq m_l\le 2l$\\
    \hline
    \hline
    \end{tabular}
    \caption{A complete set of $X$-type and $Z$-type gauge operators, denoted by $g^{(x,z)}_{(a,b,c),i}$, and their corresponding physical operators on the $A$ matrix. There is a total of $n_g=m-2+2l$ gauges for each $X$- and $Z$-type operator.}
    \label{tab:gauge-from-phys}
\end{table*}

\subsection{Simplification of the penalty gap calculation}
\label{app:gap}

Let each qubit be indexed according to its row and column. For an $A$ matrix of size $m\times m$ parameterizing the code $[[4k+2l,2k,g,2]]$, the penalty Hamiltonian $H_P=-\sum_{g_i\in \mathcal{G}}g_i$ can be written explicitly as
\begin{equation}\label{eq:ham-penalty}
    H_P=-\sum_{i=1}^{n_g}g^x_i-\sum_{i=1}^{n_g}g^z_i,
\end{equation} 
where $n_g=m-2+2l$ is the number of gauge generators of each type. To see this, consider the $A$ matrix and note that for $X$-type gauge generators, the first $m-1$ rows each contributes one $g^x_i$ because there are two $X$ operators in each of the first $m-1$ row. The last row has $2l$ operators, hence, it contributes $2l-1$ non-repeating gauge generators. The total is then $(m-1)+(2l-1)=m-2+2l$. A similar argument holds for the $Z$-type generators. The physical $Z$ operators composing the $Z$-type gauge generators are given in \cref{lem:gauge-Z}. 

In this subsection, we use the notation $g^{(x,z)}_{(a,b,c),i}$ to label the gauge generators according to their shape and position in the $A$ matrix. An example of this indexing is drawn in \cref{fig:g-XZ-abc} on the $A$ matrix of a $[[16,6,8,2]]$ code. The definition is given in \cref{tab:gauge-from-phys}, where there are $2l$, $m-1-2l$, and $2l-1$ operators of type $a$, $b$, and $c$, respectively, which adds up to $m-2+2l=n_g$ gauges as required. Let $m_g=m+1-2l$ be the column index of the leftmost $1$-entry of the last row of $A$.

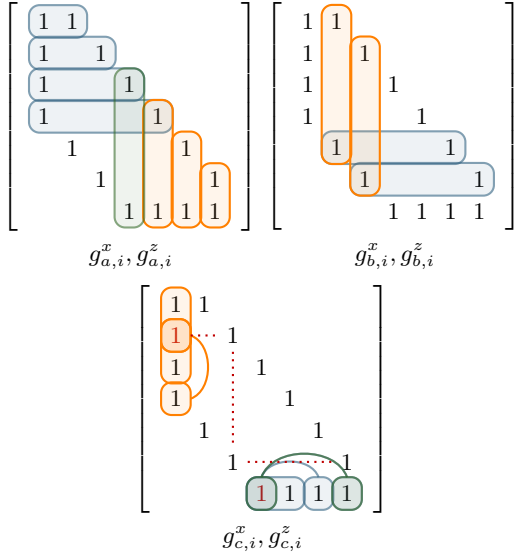
\begin{figure}[h!]
\centering
\begin{tikzpicture}
    \matrix (m)[
    matrix of math nodes,
    nodes in empty cells,
    left delimiter=\lbrack,
    right delimiter=\rbrack
    ] {
    1&1&&&&&\\
    1&&1&&&&\\
    1&&&1&&&\\
    1&&&&1&&\\
    &1&&&&1&\\
    &&1&&&&1\\
    &&&1&1&1&1\\
    } ;
    \node [below of= m-7-3, xshift=0.4cm, node distance = 0.5cm] {$g^x_{a,i},g^z_{a,i}$}; 
    \begin{scope}[rounded corners,fill=blue!80!white,fill opacity=0.1,draw=blue,draw opacity=0.6,thick]
    \filldraw (m-1-1.north west)  rectangle (m-1-2.south east);
    \filldraw (m-2-1.north west)  rectangle (m-2-3.south east);
    \filldraw (m-3-1.north west)  rectangle (m-3-4.south east);
    \filldraw (m-4-1.north west)  rectangle (m-4-5.south east);
   \end{scope}
   \begin{scope}[rounded corners,fill=orange!80!white,fill opacity=0.1,draw=orange,thick]
    \filldraw (m-4-5.north west)  rectangle (m-7-5.south east);
    \filldraw (m-5-6.north west)  rectangle (m-7-6.south east);
    \filldraw (m-6-7.north west)  rectangle (m-7-7.south east);
   \end{scope}
   \begin{scope}[rounded corners,fill=green!80!white,fill opacity=0.1,draw=green,draw opacity=0.6,thick]
    \filldraw (m-3-4.north west)  rectangle (m-7-4.south east);
   \end{scope}
    \end{tikzpicture}
    \begin{tikzpicture}\label{fig:rule-2}
    \matrix (m)[
    matrix of math nodes,
    nodes in empty cells,
    left delimiter=\lbrack,
    right delimiter=\rbrack
    ] {
    1&1&&&&&\\
    1&&1&&&&\\
    1&&&1&&&\\
    1&&&&1&&\\
    &1&&&&1&\\
    &&1&&&&1\\
    &&&1&1&1&1\\
    } ;
    \node [below of= m-7-3, xshift=0.4cm, node distance = 0.5cm] {$g^x_{b,i},g^z_{b,i}$}; 
    \begin{scope}[rounded corners,fill=blue!80!white,fill opacity=0.1,draw=blue,draw opacity=0.6,thick]
    \filldraw (m-5-2.north west)  rectangle (m-5-6.south east);
    \filldraw (m-6-3.north west)  rectangle (m-6-7.south east);
   \end{scope}
    \begin{scope}[rounded corners,fill=orange!80!white,fill opacity=0.1,draw=orange,thick]
    \filldraw (m-1-2.north west)  rectangle (m-5-2.south east);
    \filldraw (m-2-3.north west)  rectangle (m-6-3.south east);
   \end{scope}
    \end{tikzpicture}
    \begin{tikzpicture}\label{fig:rule-3}
    \matrix (m)[
    matrix of math nodes,
    nodes in empty cells,
    left delimiter=\lbrack,
    right delimiter=\rbrack
    ] {
    1&1&&&&&\\
    \textcolor{red}{1}&&1&&&&\\
    1&&&1&&&\\
    1&&&&1&&\\
    &1&&&&1&\\
    &&1&&&&1\\
    &&&\textcolor{red}{1}&1&1&1\\
    } ;
    \node [below of= m-7-3, xshift=0.4cm, node distance = 0.5cm] {$g^x_{c,i},g^z_{c,i}$}; 
    \begin{scope}[rounded corners,fill=blue!80!white,fill opacity=0.1,draw=blue,draw opacity=0.6,thick]
    \filldraw (m-7-4.north west)  rectangle (m-7-5.south east);
    \filldraw (m-7-4.north west)  rectangle (m-7-4.south east);
    \filldraw (m-7-6.north west)  rectangle (m-7-6.south east);
    \filldraw (m-7-7.north west)  rectangle (m-7-7.south east);
    \draw (m-7-4.north) to[out=70,in=110] (m-7-6.north);
    \draw (m-7-4.north) to[out=70,in=110] (m-7-7.north);
   \end{scope}
    \begin{scope}[rounded corners,fill=orange!80!white,fill opacity=0.1,draw=orange,thick]
    \filldraw (m-1-1.north west)  rectangle (m-2-1.south east);
    \filldraw (m-2-1.north west)  rectangle (m-2-1.south east);
    \filldraw (m-2-1.north west)  rectangle (m-3-1.south east);
    \filldraw (m-4-1.north west)  rectangle (m-4-1.south east);
    \draw (m-2-1.east) to[out=-20,in=20] (m-4-1.east);
   \end{scope}
   \begin{scope}[rounded corners,fill=green!90!white,fill opacity=0.1,draw=green,draw opacity=0.6,thick]
    \filldraw (m-7-4.north west)  rectangle (m-7-4.south east);
    \filldraw (m-7-7.north west)  rectangle (m-7-7.south east);
    \draw (m-7-4.north) to[out=70,in=110] (m-7-7.north);
   \end{scope}
   \begin{scope}[draw=red,thick]
    \draw[dotted] (m-2-1)  -- (m-2-3) -- (m-6-3) -- (m-6-7) ;
   \end{scope}
    \end{tikzpicture}
    \caption{Gauge operators for evaluating the penalty Hamiltonian. Blue (orange) boxes are $X$-type ($Z$-type) operators. The green boxes are the two operators chosen to be written in terms of the product of other gauges in \cref{eq:chosen-g}.}
   \label{fig:g-XZ-abc}
\end{figure}

\begin{table*}[]
    \centering
    \begin{tabular}{ccc}
    \hline
    \hline
        row index for $\hat{X}_i,\hat{Z}_i$ & $X$-type gauges & $Z$-type gauges  \\
    \hline
        $1\le i\neq m_l\le 2l$&$g^x_{a,i}$&$g^z_{a,2l-1}g^z_{c,i}g^z_b(n_l)$\\
        $i= m_l$&$g^x_{a,i}$&$g^z_{a,2l-1}g^z_b(n_l)$\\
        $2l+1\le i\le m-2$&$g^x_{b,i}$&$g^z_{a,2l-1}g^z_{c,\fM(i)}g^z_b(n_l)\prod_{n=1}^{\fN_{\uparrow}(i)} g^z_{b,i-2nl}$\\
        $i=m-1$&$g^x_{b,i}$&$g^z_{a,2l-1}$\\
        $m\le i\le n_g-1$&$g^x_{c,i-m+1}$&$g^z_{a,i-m+1}g^z_{a,2l-1}g^z_{c,\fM(i+1)}g^z_b(n_l)\prod_{n=2}^{\fN_{\uparrow}(i+1)}g^z_{b,i+1-2nl}$\\
    \hline
    \hline
    \end{tabular}
    \caption{The set of reduced operators in terms of gauge operators whose physical operators are in \cref{tab:gauge-from-phys}. Here $g^z_b(n_l)=\prod_{n=0}^{n_l} g^z_{b,m_l+2nl}$.}
    \label{tab:gauge-to-log}
\end{table*}

The most straightforward way to calculate the penalty gap of $H_P$ in \cref{eq:ham-penalty} is to evaluate $H_P$ in the entire Hilbert space of $n=4k+2l$ qubits, which scales as $\mathcal{O}(2^n)$ and quickly becomes intractable. However, it turns out that we can alleviate this problem by constructing a new, reduced operator space from the gauge operators, which removes one of the gauge operators. As a result, evaluating $H_P$ in the correspondingly reduced Hilbert space quadratically reduces the complexity to $\mathcal{O}(2^{n/2})$.

\cref{tab:gauge-to-log} defines the reduced operators in terms of gauge operators $g^{(x,z)}_{(a,b,c),i}$. There are $n_g-1$ reduced operators for each type ($X$ and $Z$). To simplify the notation, we define the following functions. The row and column number is counted from $1$ to $m$.
\bes
\begin{align}
   \fM(x)&=x\mod (2l),\\
   \fN_{\downarrow}(x)&=\lfloor\frac{m-1-x}{2l}\rfloor-1,\\
    \fN_{\uparrow}(x)&=\lfloor\frac{x}{2l}\rfloor .
\end{align}
\ees
We also define two variables that appear often: $m_l=\fM(m-1)$ and $n_l=\fN_{\downarrow}(m_l)$. Without loss of generality, we choose to define the $X$-type operators in a simpler format than the $Z$-type operators. 

Using the reduced operators defined in \cref{tab:gauge-to-log}, we can invert the relation and rewrite the gauge operators in terms of the reduced operators instead. \cref{tab:log-to-gauge} shows the gauge operators written in terms of the new reduced operators. There are $n_g$ gauge operators but $n_g-1$ reduced operators, hence, only $n_g-1$ gauge operators are defined in terms of the reduced operators. The two remaining gauge operators can be written as the product of the stabilizer and other logical operators, so that the dimension of the Hilbert space can be reduced by half, i.e.,
\begin{equation}\label{eq:chosen-g}
    g^x_{n_g}=S^X \prod_{i=1}^{n_g-1}\hat{X}_i\quad\mathrm{and}\quad g^z_{n_g}=S^Z \hat{Z}_{m-2l}\prod_{i=n_g-1}^{m}\hat{Z}_i,
\end{equation}
where these (arbitrary) chosen gauge operators are colored in green in \cref{fig:g-XZ-abc}.

\begin{table*}[t]
    \centering
    \begin{tabular}{cccccc}
    \hline
    \hline
        $g^x$ & operators & index & $g^z$ & operators & index  \\
    \hline
        $g^x_{a,i}$&$\hat{X}_{i}$&$1\le i\le 2l$&$g^z_{a,i}$&$\hat{Z}_{m-1+i}\hat{Z}_{m+i-2l}$&$1\le i\le 2l-2$\\
        &&&&$\hat{Z}_{m-1}$&$i=2l-1$\\
        $g^x_{b,i}$&$\hat{X}_{i}$&$2l+1\le i\le m-1$&$g^z_{b,i}$&$\hat{Z}_i\hat{Z}_{i+2l}$&$1\le i\le m_g-2$\\
        $g^x_{c,i}$&$\hat{X}_{i+m-1}$&$1\le i\le 2l-1$&$g^z_{c,i}$&$\hat{Z}_{m_l}\hat{Z}_i$&$1\le i\neq m_l\le 2l$\\
    \hline
    \hline
    \end{tabular}
    \caption{The gauge operators in terms of the new set of reduced operators.}
    \label{tab:log-to-gauge}
\end{table*}

Using the gauge operators in this form, we can rewrite the penalty Hamiltonian in \cref{eq:ham-penalty} as
\begin{equation}
    \begin{aligned}
        H_P
        &=-\sum_{i=1}^{n_g-1}\hat{X}_{i} -S^X \prod_{i=1}^{n_g-1}\hat{X}_i-S^Z \hat{Z}_{m-2l}\prod_{i=n_g-1}^{m}\hat{Z}_i\\
        &\qquad\qquad-\hat{Z}_{m-1}-\sum_{\substack{i=1\\i\neq m_l}}^{2l}\hat{Z}_{i}\hat{Z}_{m_l} - \sum_{i=1}^{m_g-2}\hat{Z}_{i}\hat{Z}_{i+2l}- \sum_{i=1}^{2l-2}\hat{Z}_{m-1+i}\hat{Z}_{m+i-2l},
    \end{aligned}
\end{equation}
which has a quadratically smaller Hilbert space dimension compared to \cref{eq:ham-penalty} and is the form we used to calculate the penalty gap in \cref{fig:gap}.

\section{Scaling of penalty Hamiltonian gap}
\label{app:scaling}

To determine the scaling of the data presented in \cref{fig:gap}, we fit the results to a power-law model 
\begin{equation}
    y_\text{power}=a_\text{power} x^{-\nu_\text{power}},
\end{equation}
and to an exponential model
\begin{equation}
    y_\text{expo}=a_\text{expo} e^{-\nu_\text{expo} x} .
\end{equation}
Here $y_\text{model}$ is the numerically computed penalty gap and $x_\text{model}$ is the code parameter $m$ ($l$) on the left (right) of \cref{fig:gap}. The fitting parameters are $a_\text{model}$ and $\nu_\text{model}$, where $\nu_\text{model}$ is the scaling parameter that we are most interested in.

\subsection{Penalty gap as a function of $m$}

\begin{figure*}[t]
\centering
\begin{tikzpicture}[scale=0.7]
        \begin{scope}
            \node (1) at (0,0) {1};
            \node (2) at (1,0) {2};
            \node (5) at (2,0) {5};
            \node (6) at (3,0) {6};
            \node (9) at (4,0) {9};
            \node (10) at (5,0) {10};
            \node (13) at (6,0) {13};
            
            \node (3) at (0,-1) {3};
            \node (4) at (1,-1) {4};
            \node (7) at (2,-1) {7};
            \node (8) at (3,-1) {8};
            \node (11) at (4,-1) {11};
            \node (12) at (5,-1) {12};
            \node (14) at (6,-1) {14};
        \end{scope}

        \begin{scope}[draw=orange, very thick]
            \draw[-] (1) -- (2);
            \draw[-] (3) -- (4);
            \draw[dotted] (10) -- (13);
            \draw[dotted] (12) -- (14);
            \draw[-, opacity=0.3] (10) -- (13);
            \draw[-, opacity=0.3] (12) -- (14);
        \end{scope}

        \begin{scope}[draw=blue, very thick]
            \draw[-] (5) -- (6);
            \draw[-] (7) -- (8);
            \draw[-] (9) -- (10);
            \draw[-] (11) -- (12);
            \draw[dotted] (2) -- (5);
            \draw[dotted] (4) -- (7);
            \draw[dotted] (6) -- (9);
            \draw[dotted] (8) -- (11);
            \draw[-, opacity=0.3] (2) -- (5);
            \draw[-, opacity=0.3] (4) -- (7);
            \draw[-, opacity=0.3] (6) -- (9);
            \draw[-, opacity=0.3] (8) -- (11);
        \end{scope}

        \begin{scope}[draw=red, very thick]
            \draw[dotted] (1) -- (3);
            \draw[-, opacity=0.3] (1) -- (3);
            \draw[-] (13) -- (14);
        \end{scope}
        
        \node (l) at (3,-2) {$l=1$} ;
    \end{tikzpicture}
    \qquad\qquad
    \begin{tikzpicture}[scale=0.7]
        \begin{scope}
            \node (1) at (0,1) {1};
            \node (2) at (1,1) {2};
            \node (9) at (2,1) {9};
            \node (10) at (3,1) {10};
            \node (15) at (4,1) {15};
            
            \node (3) at (0,0) {3};
            \node (4) at (1,0) {4};
            \node (11) at (2,0) {11};
            \node (12) at (3,0) {12};
            \node (16) at (4,0) {16};
            
            \node (5) at (0,-1) {5};
            \node (6) at (2,-1) {6};
            \node (13) at (4,-1) {13};
            
            \node (7) at (0,-2) {7};
            \node (8) at (2,-2) {8};
            \node (14) at (4,-2) {14};
        \end{scope}

        \begin{scope}[draw=orange, very thick]
            \draw[-] (1) -- (2);
            \draw[-] (3) -- (4);
            \draw[-] (5) -- (6);
            \draw[-] (7) -- (8);
            \draw[dotted] (10) -- (15);
            \draw[dotted] (12) -- (16);
            \draw[dotted] (6) -- (13);
            \draw[dotted] (8) -- (14);
            \draw[-, opacity=0.3] (10) -- (15);
            \draw[-, opacity=0.3] (12) -- (16);
            \draw[-, opacity=0.3] (6) -- (13);
            \draw[-, opacity=0.3] (8) -- (14);
        \end{scope}

        \begin{scope}[draw=blue, very thick]
            \draw[-] (9) -- (10);
            \draw[-] (11) -- (12);
            \draw[dotted] (2) -- (9);
            \draw[dotted] (4) -- (11);
            \draw[-, opacity=0.3] (2) -- (9);
            \draw[-, opacity=0.3] (4) -- (11);
        \end{scope}

        \begin{scope}[draw=red, very thick]
            \draw[dotted] (1) -- (3) -- (5) -- (7);
            \draw[-, opacity=0.3] (1) -- (3) -- (5) -- (7);
            \draw[-] (15) -- (16) -- (13) -- (14);
        \end{scope}
        
        \node (l) at (2,-3) {$l=2$} ;
    \end{tikzpicture}
    \qquad\qquad
    \begin{tikzpicture}[scale=0.7]
        \begin{scope}
            \node (1) at (0,2) {1};
            \node (2) at (1,2) {2};
            \node (13) at (2,2) {13};
            
            \node (3) at (0,1) {3};
            \node (4) at (1,1) {4};
            \node (14) at (2,1) {14};
            
            \node (5) at (0,0) {5};
            \node (6) at (1,0) {6};
            \node (15) at (2,0) {15};
            
            \node (7) at (0,-1) {7};
            \node (8) at (1,-1) {8};
            \node (16) at (2,-1) {16};

            \node (9) at (0,-2) {9};
            \node (10) at (1,-2) {10};
            \node (17) at (2,-2) {17};
            
            \node (11) at (0,-3) {11};
            \node (12) at (1,-3) {12};
            \node (18) at (2,-3) {18};
        \end{scope}

        \begin{scope}[draw=orange, very thick]
            \draw[-] (1) -- (2);
            \draw[-] (3) -- (4);
            \draw[-] (5) -- (6);
            \draw[-] (7) -- (8);
            \draw[-] (9) -- (10);
            \draw[-] (11) -- (12);
            \draw[dotted] (2) -- (13);
            \draw[dotted] (4) -- (14);
            \draw[dotted] (6) -- (15);
            \draw[dotted] (8) -- (16);
            \draw[dotted] (10) -- (17);
            \draw[dotted] (12) -- (18);
            \draw[-, opacity=0.3] (2) -- (13);
            \draw[-, opacity=0.3] (4) -- (14);
            \draw[-, opacity=0.3] (6) -- (15);
            \draw[-, opacity=0.3] (8) -- (16);
            \draw[-, opacity=0.3] (10) -- (17);
            \draw[-, opacity=0.3] (12) -- (18);
        \end{scope}

        \begin{scope}[draw=red, very thick]
            \draw[dotted] (1) -- (3) -- (5) -- (7) -- (9) -- (11);
            \draw[-, opacity=0.3] (1) -- (3) -- (5) -- (7) -- (9) -- (11);
            \draw[-] (13) -- (14) -- (15) -- (16) -- (17) -- (18);
        \end{scope}
        
        \node (l) at (1,-4) {$l=3$} ;
    \end{tikzpicture}
    \qquad
    \begin{tikzpicture}[scale=0.7]
        \begin{scope}
            \node (1) at (0,0) {1};
            \node (2) at (1,0) {2};
            \node (5) at (2,0) {5};
            \node (6) at (3,0) {6};
            \node (9) at (4,0) {9};
            \node (10) at (5,0) {10};
            \node (13) at (6,0) {13};
            \node (14) at (7,0) {14};
            \node (17) at (8,0) {17};
            
            \node (3) at (0,-1) {3};
            \node (4) at (1,-1) {4};
            \node (7) at (2,-1) {7};
            \node (8) at (3,-1) {8};
            \node (11) at (4,-1) {11};
            \node (12) at (5,-1) {12};
            \node (15) at (6,-1) {15};
            \node (16) at (7,-1) {16};
            \node (18) at (8,-1) {18};
        \end{scope}

        \begin{scope}[draw=orange, very thick]
            \draw[-] (1) -- (2);
            \draw[-] (3) -- (4);
            \draw[dotted] (14) -- (17);
            \draw[dotted] (16) -- (18);
            \draw[-, opacity=0.3] (14) -- (17);
            \draw[-, opacity=0.3] (16) -- (18);
        \end{scope}

        \begin{scope}[draw=blue, very thick]
            \draw[-] (5) -- (6);
            \draw[-] (7) -- (8);
            \draw[-] (9) -- (10);
            \draw[-] (11) -- (12);
            \draw[-] (13) -- (14);
            \draw[-] (15) -- (16);
            \draw[dotted] (2) -- (5);
            \draw[dotted] (4) -- (7);
            \draw[dotted] (6) -- (9);
            \draw[dotted] (8) -- (11);
            \draw[dotted] (10) -- (13);
            \draw[dotted] (12) -- (15);
            \draw[-, opacity=0.3] (2) -- (5);
            \draw[-, opacity=0.3] (4) -- (7);
            \draw[-, opacity=0.3] (6) -- (9);
            \draw[-, opacity=0.3] (8) -- (11);
            \draw[-, opacity=0.3] (10) -- (13);
            \draw[-, opacity=0.3] (12) -- (15);
        \end{scope}

        \begin{scope}[draw=red, very thick]
            \draw[dotted] (1) -- (3);
            \draw[-, opacity=0.3] (1) -- (3);
            \draw[-] (17) -- (18);
        \end{scope}
        
        \node (l) at (4,-2) {$l=1$} ;
    \end{tikzpicture}
    \quad
    \begin{tikzpicture}[scale=0.7]
        \begin{scope}
            \node (1) at (0,1) {1};
            \node (2) at (1,1) {2};
            \node (9) at (2,1) {9};
            \node (10) at (3,1) {10};
            \node (17) at (4,1) {15};
            
            \node (3) at (0,0) {3};
            \node (4) at (1,0) {4};
            \node (11) at (2,0) {11};
            \node (12) at (3,0) {12};
            \node (18) at (4,0) {18};
            
            \node (5) at (0,-1) {5};
            \node (6) at (1,-1) {6};
            \node (13) at (2,-1) {13};
            \node (14) at (3,-1) {14};
            \node (19) at (4,-1) {19};
            
            \node (7) at (0,-2) {7};
            \node (8) at (1,-2) {8};
            \node (15) at (2,-2) {15};
            \node (16) at (3,-2) {16};
            \node (20) at (4,-2) {20};
        \end{scope}

        \begin{scope}[draw=orange, very thick]
            \draw[-] (1) -- (2);
            \draw[-] (3) -- (4);
            \draw[-] (5) -- (6);
            \draw[-] (7) -- (8);
            \draw[dotted] (10) -- (17);
            \draw[dotted] (12) -- (18);
            \draw[dotted] (14) -- (19);
            \draw[dotted] (16) -- (20);
            \draw[-, opacity=0.3] (10) -- (17);
            \draw[-, opacity=0.3] (12) -- (18);
            \draw[-, opacity=0.3] (14) -- (19);
            \draw[-, opacity=0.3] (16) -- (20);
        \end{scope}

        \begin{scope}[draw=blue, very thick]
            \draw[-] (9) -- (10);
            \draw[-] (11) -- (12);
            \draw[-] (13) -- (14);
            \draw[-] (15) -- (16);
            \draw[dotted] (2) -- (9);
            \draw[dotted] (4) -- (11);
            \draw[dotted] (6) -- (13);
            \draw[dotted] (8) -- (15);
            \draw[-, opacity=0.3] (2) -- (9);
            \draw[-, opacity=0.3] (4) -- (11);
            \draw[-, opacity=0.3] (6) -- (13);
            \draw[-, opacity=0.3] (8) -- (15);
        \end{scope}

        \begin{scope}[draw=red, very thick]
            \draw[dotted] (1) -- (3) -- (5) -- (7);
            \draw[-, opacity=0.3] (1) -- (3) -- (5) -- (7);
            \draw[-] (17) -- (18) -- (19) -- (20);
        \end{scope}
        
        \node (l) at (2,-3) {$l=2$} ;
    \end{tikzpicture}
    \quad
    \begin{tikzpicture}[scale=0.7]
        \begin{scope}
            \node (1) at (0,2) {1};
            \node (2) at (1,2) {2};
            \node (13) at (2,2) {13};
            \node (14) at (3,2) {14};
            \node (21) at (4,2) {21};
            
            \node (3) at (0,1) {3};
            \node (4) at (1,1) {4};
            \node (15) at (2,1) {15};
            \node (16) at (3,1) {16};
            \node (22) at (4,1) {22};
            
            \node (5) at (0,0) {5};
            \node (6) at (2,0) {6};
            \node (17) at (4,0) {17};
            
            \node (7) at (0,-1) {7};
            \node (8) at (2,-1) {8};
            \node (18) at (4,-1) {18};

            \node (9) at (0,-2) {9};
            \node (10) at (2,-2) {10};
            \node (19) at (4,-2) {19};
            
            \node (11) at (0,-3) {11};
            \node (12) at (2,-3) {12};
            \node (20) at (4,-3) {20};
        \end{scope}

        \begin{scope}[draw=orange, very thick]
            \draw[-] (1) -- (2);
            \draw[-] (3) -- (4);
            \draw[-] (5) -- (6);
            \draw[-] (7) -- (8);
            \draw[-] (9) -- (10);
            \draw[-] (11) -- (12);
            \draw[dotted] (14) -- (21);
            \draw[dotted] (16) -- (22);
            \draw[dotted] (6) -- (17);
            \draw[dotted] (8) -- (18);
            \draw[dotted] (10) -- (19);
            \draw[dotted] (12) -- (20);
            \draw[-, opacity=0.3] (14) -- (21);
            \draw[-, opacity=0.3] (16) -- (22);
            \draw[-, opacity=0.3] (6) -- (17);
            \draw[-, opacity=0.3] (8) -- (18);
            \draw[-, opacity=0.3] (10) -- (19);
            \draw[-, opacity=0.3] (12) -- (20);
        \end{scope}

        \begin{scope}[draw=blue, very thick]
            \draw[-] (13) -- (14);
            \draw[-] (15) -- (16);
            \draw[dotted] (2) -- (13);
            \draw[dotted] (4) -- (15);
            \draw[-, opacity=0.3] (2) -- (13);
            \draw[-, opacity=0.3] (4) -- (15);
        \end{scope}

        \begin{scope}[draw=red, very thick]
            \draw[dotted] (1) -- (3) -- (5) -- (7) -- (9) -- (11);
            \draw[-, opacity=0.3] (1) -- (3) -- (5) -- (7) -- (9) -- (11);
            \draw[-] (21) -- (22) -- (17) -- (18) -- (19) -- (20);
        \end{scope}
        
        \node (l) at (2,-4) {$l=3$} ;
    \end{tikzpicture}
    \quad
    \begin{tikzpicture}[scale=0.7]
        \begin{scope}
            \node (1) at (0,2) {1};
            \node (2) at (1,2) {2};
            \node (17) at (2,2) {17};
            
            \node (3) at (0,1) {3};
            \node (4) at (1,1) {4};
            \node (18) at (2,1) {18};
            
            \node (5) at (0,0) {5};
            \node (6) at (1,0) {6};
            \node (19) at (2,0) {19};
            
            \node (7) at (0,-1) {7};
            \node (8) at (1,-1) {8};
            \node (20) at (2,-1) {20};

            \node (9) at (0,-2) {9};
            \node (10) at (1,-2) {10};
            \node (21) at (2,-2) {21};
            
            \node (11) at (0,-3) {11};
            \node (12) at (1,-3) {12};
            \node (22) at (2,-3) {22};

            \node (13) at (0,-4) {13};
            \node (14) at (1,-4) {14};
            \node (23) at (2,-4) {23};

            \node (15) at (0,-5) {15};
            \node (16) at (1,-5) {16};
            \node (24) at (2,-5) {24};
        \end{scope}

        \begin{scope}[draw=orange, very thick]
            \draw[-] (1) -- (2);
            \draw[-] (3) -- (4);
            \draw[-] (5) -- (6);
            \draw[-] (7) -- (8);
            \draw[-] (9) -- (10);
            \draw[-] (11) -- (12);
            \draw[-] (13) -- (14);
            \draw[-] (15) -- (16);
            \draw[dotted] (2) -- (17);
            \draw[dotted] (4) -- (18);
            \draw[dotted] (6) -- (19);
            \draw[dotted] (8) -- (20);
            \draw[dotted] (10) -- (21);
            \draw[dotted] (12) -- (22);
            \draw[dotted] (14) -- (23);
            \draw[dotted] (16) -- (24);
            \draw[-, opacity=0.3] (2) -- (17);
            \draw[-, opacity=0.3] (4) -- (18);
            \draw[-, opacity=0.3] (6) -- (19);
            \draw[-, opacity=0.3] (8) -- (20);
            \draw[-, opacity=0.3] (10) -- (21);
            \draw[-, opacity=0.3] (12) -- (22);
            \draw[-, opacity=0.3] (14) -- (23);
            \draw[-, opacity=0.3] (16) -- (24);
        \end{scope}

        \begin{scope}[draw=red, very thick]
            \draw[dotted] (1) -- (3) -- (5) -- (7) -- (9) -- (11) -- (13) -- (15);
            \draw[-, opacity=0.3] (1) -- (3) -- (5) -- (7) -- (9) -- (11) -- (13) -- (15);
            \draw[-] (17) -- (18) -- (19) -- (20) -- (21) -- (22) -- (23) -- (24);
        \end{scope}
        
        \node (l) at (1,-6) {$l=4$} ;
    \end{tikzpicture}
    \caption{Gauge connectivity for $k=3$ and $k=4$ codes according to \cref{tab:gauge-from-phys} and \cref{fig:g-XZ-abc}. Solid (dashed) lines are $XX$ ($ZZ$) gauge operators. Orange, blue, and red colors denote $g^{(x,z)}_{a,i},g^{(x,z)}_{b,i}$, and $g^{(x,z)}_{c,i}$ in the notation in \cref{tab:gauge-from-phys}, respectively. The first row of this figure is the same configuration as the second row in \cref{tikz:gauge_graphs-xz-color} with a different color code.}
    \label{tikz:gauge_graphs}
\end{figure*}
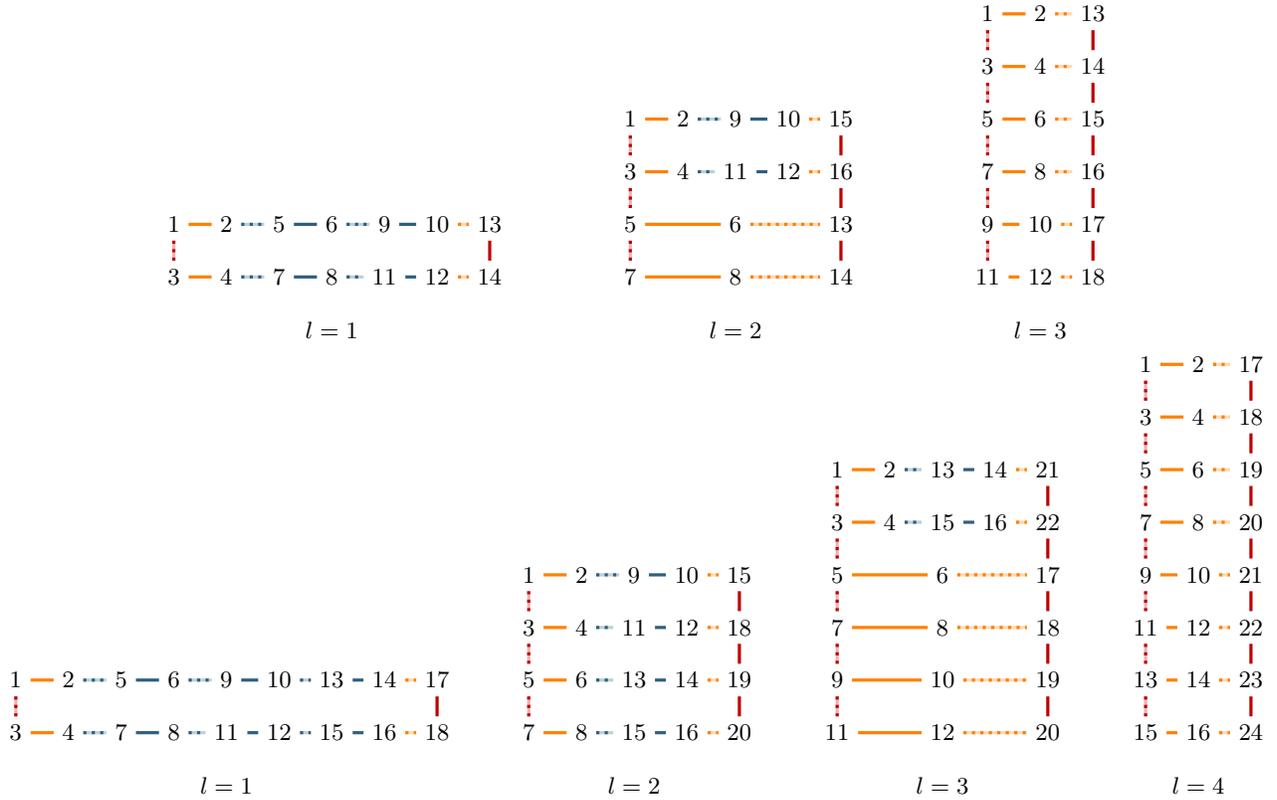

\begin{table*}[t]
\centering
\begin{tabular}{|c|c|c|c|c|c|c|c|c|}
\hline
$l$ & $a_\text{power}$ & $a_\text{power}$ (SE) & $\nu_\text{power}$ & $\nu_\text{power}$ (SE) & $a_\text{expo}$ & $a_\text{expo}$ (SE) & $\nu_\text{expo}$ & $\nu_\text{expo}$ (SE) \\
\hline
1 & 1.683 & 0.008407 & \cellcolor{gray!25}1.032 & 0.003665 & 1.036 & 0.09426 & 0.1966 & 0.01909 \\
2 & 1.107 & 0.04963 & \cellcolor{gray!25}1.032 & 0.02311 & 0.3899 & 0.01495 & 0.1234 & 0.004746 \\
3 & 1.419 & 0.1472 & 1.377 & 0.04799 & 0.256 & 0.005904 & \cellcolor{gray!25}0.1371 & 0.002445 \\
4 & 1.545 & 0.2443 & 1.728 & 0.06736 & 0.1322 & 0.003552 & \cellcolor{gray!25}0.1478 & 0.002448 \\
5 & 1.565 & 0.3098 & 2.098 & 0.0791 & 0.05851 & 0.001213 & \cellcolor{gray!25}0.1576 & 0.001647 \\
6 & 1.405 & 0.1859 & 2.432 & 0.05058 & 0.02522 & 0.0009054 & \cellcolor{gray!25}0.1699 & 0.002589 \\
\hline
\end{tabular}
\caption{Fitting parameters $a_\text{model}$ and $\nu_\text{model}$ for the penalty gap as a function of $m$ for fixed $l$ values from \cref{fig:gap} (left) and their standard error (SE). Grey boxes denote the values of the fit parameters corresponding to the better fits, as determined by the AIC and $R^2$ values reported in \cref{tab:each-l-by-m}.}
\label{tab:params-by-m}
\end{table*}

\begin{table}[h!]
\centering
\begin{tabular}{|c|c|c|c|c|}
\hline
$l$ & AIC power & AIC expo & $R^2$ power & $R^2$ expo \\
\hline
1 & \cellcolor{gray!25}-165.6 & -50.15 & \cellcolor{gray!25}0.9999 & 0.9631 \\
2 & \cellcolor{gray!25}-125.3 & -110.9 & \cellcolor{gray!25}0.9983 & 0.996 \\
3 & -122.1 & \cellcolor{gray!25}-144.6 & 0.9967 & \cellcolor{gray!25}0.9993 \\
4 & -131.3 & \cellcolor{gray!25}-155 & 0.9969 & \cellcolor{gray!25}0.9995 \\
5 & -140.1 & \cellcolor{gray!25}-169.2 & 0.998 & \cellcolor{gray!25}0.9999 \\
6 & -119.1 & \cellcolor{gray!25}-123.7 & 0.9998 & \cellcolor{gray!25}0.9999 \\
\hline
\end{tabular}
\caption{Akaike Information Criterion (AIC) and $R^2$ values for each $l$ as a function of $m$ reported to 
$4$ significant figures. The better values are highlighted in gray, i.e., lower is better for AIC, and higher is better for $R^2$.}
\label{tab:each-l-by-m}
\end{table}

Recall that $m$ denotes the dimension of the matrix $A$. An $A$ matrix of dimension $m \times m$ can support a family of $k$ codes, where $k = \lfloor m/2 \rfloor$. These codes are indexed by $l$, ranging from $l=1$ to $l = k$.

\cref{tab:params-by-m} presents the fitting parameters for both models, along with their associated standard errors as determined by the NonlinearModelFit function in Mathematica. \cref{tab:each-l-by-m} provides the Akaike Information Criterion (AIC) and $R^2$ values for both models. A lower AIC and higher $R^2$ values indicate a better fit to the data between the two models.

The results indicate a transition in behavior from $l=2$ to $l=3$, shifting from power-law decay to exponential decay of the penalty gap. 

\begin{table}[h!]
\centering
\begin{tabular}{|c|c|c|c|c|}
\hline
$l'$ & $a_\text{expo}$ & $a_\text{expo}$ (SE) & $\nu_\text{expo}$ & $\nu_\text{expo}$ (SE) \\
\hline
0 & 2.053 & 0.07242 & 0.4482 & 0.009973 \\
1 & 2.101 & 0.1999 & 0.3767 & 0.01671 \\
2 & 2.886 & 0.4887 & 0.3623 & 0.02195 \\
3 & 4.434 & 1.275 & 0.3573 & 0.02964 \\
4 & 7.864 & 2.807 & 0.3628 & 0.03058 \\
5 & 14.13 & 7.021 & 0.3651 & 0.03644 \\
\hline
\end{tabular}
\caption{Fitting parameters $a_\text{expo}$ and $\nu_\text{expo}$ for the penalty gap as a function of $m$ for different $l=k-l'$ values from \cref{fig:gap} (middle) and their standard error (SE).}
\label{tab:params-by-m-lp}
\end{table}

We also consider $l$ as a function of $k$, i.e., the scaling of the penalty gap when $l=k-l'$ where we fix $l'=0,1,2,\dots$. The gap scaling in this case exhibits an exponential decay (rather than a power law decay) as shown in \cref{fig:gap} (middle). We provide the fitting parameters for $l=k-l'$ where $l'=0,1,\dots,5$ in \cref{tab:params-by-m-lp}.

\subsection{Penalty gap as a function of $l$}

\begin{table*}[t]
\centering
\begin{tabular}{|c|c|c|c|c|c|c|c|c|}
\hline
$m$ & $a_\text{power}$ & $a_\text{power}$ (SE) & $\nu_\text{power}$ & $\nu_\text{power}$ (SE) & $a_\text{expo}$ & $a_\text{expo}$ (SE) & $\nu_\text{expo}$ & $\nu_\text{expo}$ (SE) \\
\hline
10 & 0.1645 & 0.01971 & 0.9866 & 0.2072 & 0.2675 & 0.02311 & \cellcolor{gray!25}0.5036 & 0.04889 \\
11 & 0.1494 & 0.01798 & 1.016 & 0.2136 & 0.2478 & 0.0218 & \cellcolor{gray!25}0.522 & 0.05077 \\
12 & 0.1375 & 0.01576 & 1.123 & 0.2089 & 0.2346 & 0.01798 & \cellcolor{gray!25}0.5541 & 0.04504 \\
13 & 0.1268 & 0.01444 & 1.147 & 0.2127 & 0.2196 & 0.01707 & \cellcolor{gray!25}0.5691 & 0.04645 \\
14 & 0.1179 & 0.01263 & 1.217 & 0.2048 & 0.2078 & 0.0146 & \cellcolor{gray!25}0.5887 & 0.04264 \\
15 & 0.1099 & 0.01163 & 1.244 & 0.2082 & 0.1974 & 0.01409 & \cellcolor{gray!25}0.6071 & 0.04418 \\
16 & 0.1027 & 0.0105 & 1.277 & 0.2079 & 0.1886 & 0.01297 & \cellcolor{gray!25}0.6286 & 0.0435 \\
17 & 0.0961 & 0.01009 & 1.263 & 0.2196 & 0.1798 & 0.0132 & \cellcolor{gray!25}0.6437 & 0.0473 \\
18 & 0.09057 & 0.009235 & 1.292 & 0.2195 & 0.1728 & 0.0123 & \cellcolor{gray!25}0.6629 & 0.04664 \\
19 & 0.08533 & 0.00907 & 1.26 & 0.235 & 0.1651 & 0.01334 & \cellcolor{gray!25}0.6733 & 0.05379 \\
20 & 0.08091 & 0.008369 & 1.289 & 0.2349 & 0.1596 & 0.01239 & \cellcolor{gray!25}0.6921 & 0.0525 \\
\hline
\end{tabular}
\caption{Fitting parameters $a_\text{model}$ and $\nu_\text{model}$ as a function of $l$ for different $m$ values and their standard error (SE).}
\label{tab:params-by-l}
\end{table*}

The penalty gap can also be analyzed as a function of $l$ for fixed values of $m$. \cref{tab:params-by-l} and \cref{tab:each-m-by-l} present fits analogous to those in the previous subsection, but are now fitted as a function of $l$.

\begin{table}[h!]
\centering
\begin{tabular}{|c|c|c|c|c|}
\hline
$m$ & AIC power & AIC expo & $R^2$ power & $R^2$ expo \\
\hline
10 & -20.84 & \cellcolor{gray!25}-29.47 & 0.9703 & \cellcolor{gray!25}0.9947 \\
11 & -21.77 & \cellcolor{gray!25}-30.41 & 0.9696 & \cellcolor{gray!25}0.9946 \\
12 & -28.51 & \cellcolor{gray!25}-40.06 & 0.9619 & \cellcolor{gray!25}0.9944 \\
13 & -29.58 & \cellcolor{gray!25}-41.06 & 0.9619 & \cellcolor{gray!25}0.9944 \\
14 & -37.05 & \cellcolor{gray!25}-51.19 & 0.957 & \cellcolor{gray!25}0.9943 \\
15 & -38.22 & \cellcolor{gray!25}-52.24 & 0.9575 & \cellcolor{gray!25}0.9943 \\
16 & -39.68 & \cellcolor{gray!25}-54.05 & 0.9598 & \cellcolor{gray!25}0.9948 \\
17 & -33.94 & \cellcolor{gray!25}-46 & 0.966 & \cellcolor{gray!25}0.9955 \\
18 & -35 & \cellcolor{gray!25}-47.34 & 0.9675 & \cellcolor{gray!25}0.9958 \\
19 & -28.7 & \cellcolor{gray!25}-38.32 & 0.9736 & \cellcolor{gray!25}0.9961 \\
20 & -29.51 & \cellcolor{gray!25}-39.44 & 0.9747 & \cellcolor{gray!25}0.9965 \\
\hline
\end{tabular}
\caption{Akaike Information Criterion (AIC) and $R^2$ values for each $m$ as a function of $l$. The exponential model is always a better fit.}
\label{tab:each-m-by-l}
\end{table}

In contrast to the gap scaling as a function of $m$, which exhibited a transition from power law to exponential, the decay as a function of $l$ is exponential for all values of $m$, as seen in \cref{tab:each-m-by-l}. 

Fits for smaller values of $m$ are not provided due to the limited number of data points. Specifically, an $A$ matrix of size $m\times m$ can accommodate at most $\lfloor m/2 \rfloor$ codes within the family (e.g., for $m=9$, there are $\lfloor 9/2 \rfloor=4$ codes in the family). We only provide fits for $m$ values with at least $5$ data points.

\section{Mixed Integer Program for Optimization}
\label{app:MIPO}

The mixed integer program is inspired by Ref.~\cite{nannicini2022optimal}, but we ignore the temporal sequencing of gates. We define the following binary variables:
\begin{align}
    w_{q,i}=1 \text{ iff } q\in V \text{ resides at node } i\in V_{\mathrm{m}} .
\end{align}
We then have the following two constraints:
\begin{enumerate}
    \item $\sum_{j\in V_{\mathrm{m}}} w_{q,j} = 1$ $\forall q\in V$ to make sure each qubit is located at exactly one node,
    \item $\sum_{q\in V} w_{q,j} = 1$ $\forall j\in V_{\mathrm{m}}$ to make sure each node can host exactly one qubit.
\end{enumerate}
We calculate the Manhattan distance table from every node to every node using shortest path algorithms and denote the table by $M$, where $M_{ij}=m(i,j)$. The cost function is:
\begin{align}
    \sum_{(p,q)\in E}\sum_{i,j\in V_{\mathrm{m}}} w_{p,i} \times w_{q,j} \times M_{ij} .
\end{align}
Then we can use the following mixed integer programming to define the problem:
\bes
\begin{align}
    \min\limits_{w} & \sum_{(p,q)\in E}\sum_{i,j\in V_{\mathrm{m}}} w_{p,i} \times w_{q,j} \times M_{ij} \\
    \mathrm{s.t.} &~ \sum_{j\in V_{\mathrm{m}}} w_{q,j} = 1 ~ \forall q\in V \\
    &~ \sum_{q\in V} w_{q,j} = 1 ~ \forall j\in V_{\mathrm{m}}
\end{align}
\ees

Since the variable $w$ is binary, we can further reduce the problem to a linear program by defining another binary variable:
\begin{align}
    y_{(p,q)(i,j)} = 1 \text{ iff } (p,q)\in E \text{ with }&p\in V\text{ locate at }i\in V_{\mathrm{m}} \notag\\
    \text{and } & q\in V\text{ locate at }j\in V_{\mathrm{m}}
\end{align}
Thus, we have the following constraints:
\begin{enumerate}
    \item $\sum\limits_{i,j\in V_{\mathrm{m}}} y_{(p,q)(i,j)}=1$ $\forall (p,q) \in E$ to make sure each edge is implemented exactly once 
    \item $y_{(p,q)(i,j)}=w_{p,i}\times w_{q,j}$
\end{enumerate}
Using McCormick's inequality, we can reduce the second constraint to:
\begin{align}
    \{w_{p,i},w_{q,j}\} \geq y_{(p,q)(i,j)} \geq \{0,w_{p,i}+w_{q,j}-1\}
\end{align}
Then the linear program is:
\bes
\begin{align}
    \min\limits_{w} & \sum_{(p,q)\in E}\sum_{i,j\in V_{\mathrm{m}}} y_{(p,q)(i,j)} \times M_{ij} \\
    \mathrm{s.t.} &~ \sum_{j\in V_{\mathrm{m}}} w_{q,j} = 1 ~ \forall q\in V \\
    &~ \sum_{q\in V} w_{q,j} = 1 ~ \forall j\in V_{\mathrm{m}} \\
    &~ \sum_{i,j\in V_{\mathrm{m}}} y_{(p,q)(i,j)}=1 ~ \forall (p,q) \in E \\
    &~ y_{(p,q)(i,j)} \leq w_{(p,i)} ~ \forall (p,q)\in E, \forall i,j\in V_{\mathrm{m}} \\
    &~ y_{(p,q)(i,j)} \leq w_{(q,j)} ~ \forall (p,q)\in E, \forall i,j\in V_{\mathrm{m}} \\
    &~ y_{(p,q)(i,j)} \geq w_{(p,i)}+w_{(q,j)}-1 ~ \forall (p,q)\in E, \forall i,j\in V_{\mathrm{m}}
\end{align}
\ees

\end{document}